\newtheorem{theorem}{Theorem}
\newtheorem*{theorem*}{Theorem}
\newtheorem{corollary}{Corollary}
\newtheorem{lemma}{Lemma}
\newtheorem{proposition}{Proposition}
\newtheorem{fact}{Fact}
\newtheorem{task}{Task}
\theoremstyle{definition}
\newtheorem{dfn}{Definition}
\theoremstyle{remark}
\newtheorem{remark}{Remark}
\newcommand{\Hilb}[1]{\mathcal{H}_{#1}}
\newcommand{\Vect}{\mathbb{C}^{d}}
\newcommand{\Linear}[1]{\mathsf{L}(#1)}
\newcommand{\Herm}[1]{\mathsf{H}(#1)}
\newcommand{\Density}[1]{\mathsf{D}(#1)}
\newcommand{\Unit}[1]{\mathsf{U}(#1)}
\newcommand{\pure}[1]{\ket{#1}\bra{#1}}
\let\origaddcontentsline\addcontentsline
\newif\ifTOCsuppress
\renewcommand{\addcontentsline}[3]{
  \ifTOCsuppress
    \edef\tempa{#1}\def\tempb{toc}
    \ifx\tempa\tempb
    \else
      \origaddcontentsline{#1}{#2}{#3}%
    \fi
  \else
    \origaddcontentsline{#1}{#2}{#3}%
  \fi
}
\newcommand{\TOCwriteoff}{\TOCsuppresstrue}
\newcommand{\TOCwriteon}{\TOCsuppressfalse}
\newcommand{\beginsupplement}{%
  \setcounter{section}{0}\renewcommand{\thesection}{S\arabic{section}}%
  \setcounter{figure}{0}\renewcommand{\thefigure}{S\arabic{figure}}%
  \setcounter{table}{0}\renewcommand{\thetable}{S\arabic{table}}%
}
\begin{document}
\title{Learning Enables Exponential-to-Polynomial Sampling Overhead Scaling in\\ Quantum Divide-and-Conquer for Tree-Structured Circuits}

\author{Hiroyuki Harada\orcidlink{0009-0007-6873-4915}}
\email{hiro.041o\_i7@keio.jp}
\affiliation{Graduate School of Science and Technology, Keio University, Hiyoshi 3-14-1, Kohoku, Yokohama 223-8522, Japan}

\author{Kaito Wada\orcidlink{0000-0003-4976-4530}}
\affiliation{Graduate School of Science and Technology, Keio University, Hiyoshi 3-14-1, Kohoku, Yokohama 223-8522, Japan}
\affiliation{International Center for Elementary Particle Physics, University of Tokyo, 7-3-1 Hongo, Bunkyo-ku, Tokyo 113-0033, Japan}

\author{Naoki Yamamoto\orcidlink{0000-0002-8497-4608}}
\affiliation{Department of Applied Physics and Physico-Informatics, Keio University, Hiyoshi 3-14-1, Kohoku, Yokohama 223-8522, Japan}
\affiliation{Quantum Computing Center, Keio University, Hiyoshi 3-14-1, Kohoku, Yokohama 223-8522, Japan}

\author{Suguru Endo\orcidlink{0000-0002-2317-3751}}
\email{suguru.endou@ntt.com}
\affiliation{NTT Computer and Data Science Laboratories, NTT Inc., Musashino 180-8585, Japan}
\affiliation{NTT Research Center for Theoretical Quantum Information, NTT Inc. 3-1 Morinosato Wakanomiya, Atsugi, Kanagawa, 243-0198, Japan}

\date{\today}


\begin{abstract}
Quantum circuit cutting and knitting are divide-and-conquer methods that enable large-scale quantum computations on hardware with limited qubit resources and connectivity by decomposing a target computation into smaller local experiments.
Existing methods, however, typically incur a sampling overhead that grows exponentially with the number of cut locations, leaving open whether this exponential-in-cuts barrier is intrinsic even for nontrivial structured circuit families.
Here we show that this barrier is not universal by introducing a learning-based cutting protocol tailored to the target observable.
At each cut, the protocol locally learns a Heisenberg-picture effective observable that captures the downstream information relevant to the final measurement and uses it to construct an observable-adaptive cut.
This replaces the multiplicative variance amplification of conventional cutting with additive bias accumulation controlled by local learning accuracy.
We apply this framework to finite tree-structured circuits, a fundamental setting for both tree-based quantum divide-and-conquer and classical tensor-network simulation. 
For any finite rooted tree with $K$ cut edges (equivalently, cut wires) and cut-system dimension at most $d$, the protocol estimates the target expectation value within additive error $\epsilon$ with high probability using $\tilde{\mathcal{O}}(d^3K^3/\epsilon^2)$ measurements, including the local learning cost, thereby achieving polynomial measurement scaling in both the number of cuts and the cut dimension.
Moreover, for two-layer trees with $R$ cut edges, we prove an information-theoretic exponential separation between our learning-based protocol and learning-free wire-cutting protocols based on pre-specified randomized cutting rules: even with arbitrary classical post-processing, any such learning-free protocol requires $\Omega((d+1)^R/\epsilon^2)$ measurements, whereas our protocol uses $\tilde{\mathcal O}(d^3R^3/\epsilon^2)$.
These results identify local learning, rather than the tree structure alone, as the key mechanism driving the exponential-to-polynomial reduction in sampling overhead, opening a path toward more scalable quantum divide-and-conquer computation and virtual quantum simulation.
\end{abstract}

\maketitle

\TOCwriteoff

\section{Introduction}

\subsection{Background}

Quantum computing has emerged as a promising paradigm capable of addressing computational problems that are intractable for classical computers~\cite{10.1145/237814.237866,doi:10.1137/S0097539795293172}.
The anticipated applications span a wide range of domains, including quantum chemistry~\cite{peruzzo2014variational,kandala2017hardware}, materials science~\cite{bauer2020quantum,ma2020quantum}, machine learning~\cite{Moll_2018,Havlíček2019,Amaro2022,fuller2021approximate}, and even finance~\cite{orus2019quantum,herman2022survey}. 
Despite these prospects and rapid progress in quantum hardware and algorithms, near-term quantum processors remain severely constrained by physical noise and engineering limitations, which restrict usable qubit resources, connectivity, coherence times, and gate performance. 
In particular, limited qubit resources are expected to persist into the early stages of fault-tolerant quantum computing, as quantum error correction will introduce substantial overhead in the number of physical qubits required per logical qubit.

A widely discussed route to scalability is modular and distributed quantum computation~\cite{doi:10.1126/science.aam9288,RevModPhys.87.1379,jnane2022multicore}. 
Instead of scaling a single monolithic processor, one interconnects multiple modules and realizes global computations by coordinating local quantum processing across modules. 
In the long run, high-fidelity inter-module entanglement links and fast quantum communication could enable powerful distributed architectures; in the near term, however, such interconnects may not be available at scale. 
In this interim regime, classically assisted quantum divide-and-conquer protocols, collectively known as \textit{circuit knitting}, can play an important role for near-term devices~\cite{BARRAL2025100747}.

Historically, quantum hardware limitations have spurred the development of a broad range of circuit knitting techniques~\cite{PhysRevX.6.021043,PRXQuantum.3.010309,PRXQuantum.3.010346,PhysRevLett.127.040501,peng2021simulating}.
The common idea is to decompose a quantum computational task into multiple smaller subcircuits that fit on size-limited quantum hardware, execute them locally while allowing classical communication between processors, 
and reconstruct the desired quantity via classical post-processing~\cite{scholten2024assessing}. 
For instance, quantum circuit cutting methods have been actively explored, including \textit{wire cutting (timelike cuts)}~\cite{peng2021simulating} and \textit{gate cutting (spacelike cuts)}~\cite{Mitarai_2021}. 
These cutting methods typically rely on the quasiprobability decomposition (QPD) framework~\cite{PhysRevX.8.031027,PhysRevLett.115.070501,PhysRevLett.119.180509}: first, one represents the target nonlocal operations $\mathcal{T}$, such as an identity channel (i.e., a wire) or an entangling gate, as a linear combination of implementable local operations $\mathcal{E}_i$;
\begin{equation}
    \mathcal{T}= \sum_i c_i \mathcal{E}_i.
\end{equation}
Based on this equation, one replaces each target operation $\mathcal{T}$ with randomly sampled local operations $\mathcal{E}_i$, and then reweights the final measurement outcomes with $\gamma=\sum_i|c_i|>1$ to reconstruct the desired quantity.
By cutting selected nonlocal operations in this way, circuit cutting enables the emulation of larger quantum circuits on limited quantum processors. Recent experiments have also demonstrated practical benefits of circuit cutting, e.g., alleviating the effect of noise~\cite{bechtold2023investigating,PhysRevLett.130.110601} and relaxing connectivity constraints~\cite{carrera2024combining}.

Despite their appeal, existing divide-and-conquer approaches typically face a key practical bottleneck: the number of measurements required to achieve a desired accuracy grows exponentially as the computation is decomposed into more pieces.
In the case of conventional quantum circuit cutting, for example, rescaling factors $\gamma$ introduced in the reconstruction step inflate the estimator variance. Moreover, because these factors compound multiplicatively across cut locations, the sampling overhead typically grows exponentially with the number of cuts~\cite{PRXQuantum.6.010316}.
More precisely, let $K$ denote the number of cut locations, and let $\gamma_k$ denote the rescaling factor associated with the decomposition used at the $k$-th cut. 
A standard analysis then implies that, to estimate the expectation value of a bounded observable to additive error $\epsilon$ with high probability, the sufficient number of measurements scales as
\begin{equation}
    \mathcal{O}\left(\frac{\prod_{k=1}^{K}\gamma_k^2}{\epsilon^2} \right).
\end{equation}

This exponential-in-cuts scaling has motivated extensive efforts to reduce the measurement overhead of quantum circuit cutting. 
Prior work has largely pursued two main directions: reducing the number of cut locations $K$ via classical optimization~\cite{10.1145/3445814.3446758,tomesh2023divide,brandhofer2023optimal}, and reducing the rescaling factor $\gamma$ by developing improved decompositions.
For wire cutting, Ref.~\cite{peng2021simulating} introduces an early protocol using mid-circuit Pauli measurements with $\gamma=d^2$, where $d$ is the dimension of the wire. 
Subsequent works developed decompositions with smaller $\gamma$ and showed that allowing classical communication across the cut can further reduce the rescaling factor to $\gamma=2d-1$~\cite{Lowe2023fastquantumcircuit,brenner2023optimal,harada2025doubly,pednault2023alternative,PRXQuantum.6.010316}. Similar progress has also been made for gate cutting: following early prototypes for decomposing specific two-qubit gates (e.g., CNOT)~\cite{Mitarai_2021}, numerous studies have sought low-overhead decompositions for general two-qubit gates~\cite{Mitarai2021overheadsimulating,10236453,Ufrecht2023cuttingmulticontrol,PRXQuantum.6.010316,schmitt2024cutting}. 

Nevertheless, these developments also clarify the intrinsic limitations of QPD-based circuit cutting.
For wire cutting, it has been proven that $\gamma=d^2$ is optimal when classical communication (CC) is not allowed, whereas $\gamma = 2d-1$ is optimal when it is allowed~\cite{brenner2023optimal}.
Consequently, the standard analysis inevitably leads to an exponential-in-cuts sample-complexity upper bound, e.g., when cutting $K$ wires of dimension $d$, the sufficient number of measurements scales as
\begin{eqnarray}
     \mathcal{O}\left(\frac{d^{4K}}{\epsilon^2}\right)~\text{[without CC]},~~~ \mathcal{O}\left(\frac{(2d-1)^{2K}}{\epsilon^2}\right)~\text{[with CC]}.\nonumber
\end{eqnarray}

This naturally raises a simple but fundamental question: \textit{Is an exponential dependence on the number of cuts an unavoidable cost of divide-and-conquer approaches?}
This question is closely related to the longstanding belief that circuit cutting/knitting protocols incur a measurement overhead that is exponential in the number of cuts, as discussed in several recent reviews~\cite{BARRAL2025100747,scholten2024assessing,knorzer2025distributed}.
Indeed, such an exponential barrier would likely be unavoidable for arbitrary cut locations in arbitrary circuit structures.
A general cutting procedure that removed this exponential dependence for arbitrary circuits, while incurring only polynomial total overhead, would allow one to recursively decompose arbitrary quantum circuits into small local pieces, leading to an efficient classical simulation strategy (see Ref.~\cite{marshall2023all} for a more detailed discussion).
Importantly, however, this argument does not rule out the possibility that the exponential-in-cuts measurement overhead can be avoided for specific cut locations in structured circuit families, with any additional measurements required to construct the cuts counted in the total measurement count.
This leads to the following refined question:

\vspace{10pt}
\noindent\textit{Can one provably avoid exponential-in-cuts measurement overhead for nontrivial structured circuit families, within the standard circuit-cutting/knitting setting of local quantum experiments with classical communication between processors?}
\vspace{10pt}

This question is also natural in light of the historical development of quantum divide-and-conquer methods. 
In fact, this structural perspective was already present in the early theoretical work on quantum circuit cutting~\cite{peng2021simulating}, which connected circuit graph structure to the cost of cluster simulation through a tensor-network contraction viewpoint.
Beyond circuit cutting, a variety of other circuit-knitting strategies, including hybrid tensor networks (HTNs)~\cite{PhysRevLett.127.040501,PRXQuantum.6.010320,Harada2025densitymatrix,kanno2024quantum}, Deep VQE~\cite{PRXQuantum.3.010346,PhysRevResearch.3.043121}, and entanglement forging~\cite{PRXQuantum.3.010309,huembeli2022entanglement}, have also pursued divide-and-conquer strategies primarily for tree-structured circuits or ansatzes, rather than arbitrary circuit structures
\footnote{More broadly, several subsequent works have also challenged this issue by substantially relaxing the device requirements~\cite{huang2025estimating,PhysRevX.13.041057}}. 
Despite this structural progress, it remains unclear whether nontrivial structured circuit families can avoid exponential-in-cuts measurement overhead, even in tree-structured settings.
From another perspective, tree-like structures also arise naturally in settings beyond circuit knitting. 
For example, recent coherent parallel quantum algorithms use star-like circuit structures in which a global quantum task is decomposed into several lower-depth branches~\cite{oshio2026nearheisenberglimitedparallelamplitudeestimation,Martyn2025parallelquantum}.
Taken together, these observations make tree-structured circuits a natural setting for
asking whether the exponential-in-cuts overhead is an intrinsic feature of quantum divide-and-conquer methods.
Moreover, trees are the simplest nontrivial family in which the number of cuts can grow while each rooted subtree is connected to the rest of the computation through a single edge.

\subsection{Overview of results}

\begin{figure*}[t]
    \centering
    \includegraphics[width=\textwidth]{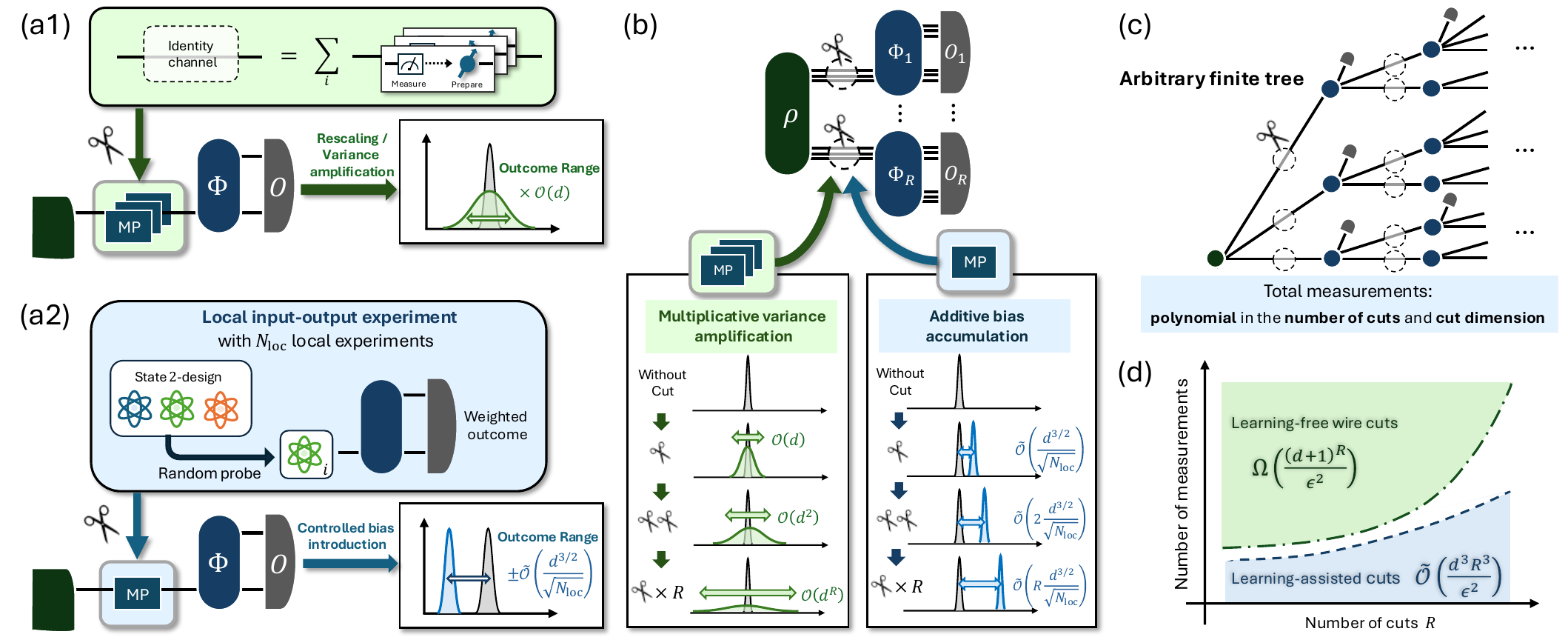}
    \vspace{-10pt}
    \caption{
    Overview of the learning-assisted circuit-cutting framework.
    (a1) Conventional wire cutting replaces the identity channel by a fixed measure-and-prepare decomposition; the associated rescaling enlarges the outcome range and leads to variance amplification.
    (a2) Our approach uses local input-output experiments to learn the downstream information relevant to the observable and constructs an observable-adaptive cut, replacing rescaling by a controlled additive bias.
    (b) Across multiple cuts, conventional rescaling factors accumulate multiplicatively, whereas the learning-assisted construction yields additive bias accumulation controlled by local learning accuracy. 
    The displayed ranges schematically indicate single-shot outcome ranges for conventional cutting and bias scales for the learning-assisted construction.
    (c) The same bottom-up idea extends to arbitrary finite tree-structured circuits, giving total measurement cost polynomial in the number of cuts and the cut dimension.
    (d) For two-layer trees, this yields polynomial scaling in the number of cuts, in contrast to the exponential lower bound for learning-free wire cutting.}
    \label{fig:introduction_figure}
\end{figure*}

In this work, we show that the exponential-in-cuts barrier is not universal. 
Specifically, we answer the refined question above in the affirmative by
giving a constructive learning-based protocol for estimating expectation
values of tree-structured quantum circuits. This is achieved within the
standard circuit-cutting/knitting setting: the algorithm uses only local
quantum experiments and classical communication, without assuming full
classical descriptions of the local components. Its total measurement
cost, including the local experiments used to construct the cuts, is polynomial in the actual number of cuts and in the cut dimension. 
Fig.~\ref{fig:introduction_figure} summarizes the main mechanism and the resulting scaling separation.

To achieve polynomial scaling, we first introduce a learning-enhanced cutting primitive. 
The key observation is that conventional cutting methods are designed to be generic and observable-agnostic: they use a fixed decomposition of the target channel, chosen independently of the final observable of interest (Fig.~\ref{fig:introduction_figure}(a1)).
However, our goal is only to estimate the expectation value of a fixed observable, not to simulate the target channel in full generality.
This naturally motivates an observable-adaptive cutting strategy, in which one locally learns the downstream information relevant to the final observable and uses it to adaptively construct the cut (Fig.~\ref{fig:introduction_figure}(a2)).
\begin{enumerate}
\setcounter{enumi}{0}
    \item \textbf{\textit{An observable-adaptive wire cut with unit rescaling.}}
    Consider a cut followed by a downstream quantum channel $\Phi$ and a
    target observable $O$ measured at its output.
    We show that, once the Heisenberg-evolved observable
    $O_\Phi:=\Phi^\dagger(O)$ is known, the cut wire can be replaced by an
    observable-adaptive measure-and-prepare channel with unit rescaling,
    while preserving the relevant downstream expectation value for arbitrary
    inputs. Moreover, if only an approximate classical description of
    $O_\Phi$ with operator-norm error at most $\eta$ is available, the
    resulting cut still has unit rescaling and incurs an additive
    bias of at most $2\eta$ in that expectation value
    (Theorems~\ref{thm:existence_of_no_cost_cut}
    and~\ref{thm:zero_cost_timelike_cut_with_an_approximated_unitary}).
\end{enumerate}
Intuitively, Result~1 shows that the more accurately we know the downstream information relevant to the final observable, the smaller the bias introduced by the cut becomes. Of course, the primitive is not yet
constructive, since the required Heisenberg-evolved observable is typically not known in advance. This naturally leads to our second
ingredient.
\begin{enumerate}
\setcounter{enumi}{1}
    \item \textbf{\textit{Randomized learning of Heisenberg-evolved observables.}}
    To make the above primitive constructive, we develop randomized learning protocols that estimate $O_\Phi$ directly from local input-output experiments on the unknown downstream channel $\Phi$, with finite-sample guarantees in operator norm.
    Specifically, using probe states drawn from a state 2-design, when the cut system has dimension $d$, $O_\Phi$ can be estimated
    within operator-norm error $\eta$ with high probability using
    \begin{equation}
        N_{\rm loc}
        =
        \tilde{\mathcal{O}}\left(\frac{d^3}{\eta^2}\right)
    \end{equation}
    local input-output experiments
    (Theorem~\ref{thm:error_bounds_for_each_x}).
\end{enumerate}

Taken together, Results~1 and 2 show that one can construct an
observable-adaptive cut whose bias can be reduced by increasing the
number of local input-output experiments. In particular, after
$N_{\rm loc}$ such experiments, the bias introduced by the resulting cut scales as $\tilde{\mathcal O}(\sqrt{d^3/N_{\rm loc}})$.
A key advantage of this construction is that it replaces the multiplicative variance amplification of conventional cutting methods with a locally controllable additive bias; see Fig.~\ref{fig:introduction_figure}(b).
Since the bias at each cut can be reduced by improving the corresponding local learning accuracy, we can distribute the total error budget over the cuts and keep the accumulated
bias under control, without introducing a multiplicative accumulation of rescaling factors. 
This mechanism enables the following polynomial-scaling circuit-knitting protocol for tree-structured quantum
circuits.
\begin{enumerate}
\setcounter{enumi}{2}
    \item \textbf{\textit{Polynomial-scaling circuit knitting for tree-structured circuits.}}
    For two-layer trees with maximum branching factor $R$ and cut
    dimension at most $d$ on each edge, we prove that the expectation value of a product observable can be estimated within additive error
    $\epsilon$ with high probability using
    \begin{equation}
        \tilde{\mathcal{O}}\left(\frac{d^3R^3}{\epsilon^2}\right)
    \end{equation}
    total measurements, including the local learning cost
    (Theorem~\ref{thm:2-layer}).
\end{enumerate}
The mechanism behind this scaling is as follows. For each cut location, we perform local learning so that the resulting observable-adaptive cut introduces only a small bias $\eta$. 
On two-layer trees, these local biases accumulate additively, so it suffices to choose $\eta=\mathcal{O}(\epsilon/R)$. 
Summing the local learning cost over the $R$ cuts and the final estimation cost on the local devices gives the total complexity above. 
We also complement the analytical bounds with numerical runtime estimates in the two-layer setting; see Sec.~\ref{sec:numerics}.

We then extend the same idea to arbitrary finite tree-structured circuits (Fig.~\ref{fig:introduction_figure}(c)). 
Let $K$ be the actual number of cut edges (i.e., wires), equivalently the number of non-root clusters, and let $d$ be an upper bound on the cut-system dimension, or equivalently the bond dimension of each cut edge.
The learning-based cutting protocol estimates the expectation value of a product observable within additive error $\epsilon$ with high probability using
\begin{equation}
    \tilde{\mathcal O}\!\left(
    \frac{d^3K^3}{\epsilon^2}
    \right)
\end{equation}
measurements (Theorem~\ref{thm:tree_simulation_general}). More generally, if the cut dimensions are non-uniform, the bound becomes
\begin{equation}
    \mathcal O\!\left(
    \frac{K^2}{\epsilon^2}
    \sum_{e\in E_{\rm cut}}d_e^3
    \ln\!\left(
    \frac{Kd_e}{\delta}
    \right)
    \right)
\end{equation}
where $E_{\rm cut}$ is the set of cut edges and $d_e$ is the bond dimension of cut edge $e$
\footnote{In the formal statement, we use the equivalent vertex notation $d_v$, where each non-root vertex $v$ is identified with its incoming cut edge.}.
Thus, this result shows that arbitrary finite tree-structured circuits can be cut with total measurement cost polynomial in the actual number of cuts $K$ and in the cut dimension.
The protocol achieves this by recursively learning, at each cluster, only the downstream information relevant to the target observable, thereby compressing each processed subtree into a single Heisenberg-evolved observable on the parent edge. 
By allocating the local learning accuracies across the tree, the accumulated bias remains controlled without introducing a multiplicative rescaling factor across the cuts.

One may still wonder whether this polynomial scaling is a consequence of the tree structure alone, and whether a conventional learning-free cutting method could achieve comparable scaling on the same family. 
We rule this out by proving an information-theoretic lower bound on the measurement cost of learning-free wire cutting, as illustrated in Fig.~\ref{fig:introduction_figure}(d).
\begin{enumerate}
\setcounter{enumi}{3}
    \item \textbf{\textit{Exponential separation from learning-free wire cutting.}}
    For the same two-layer tree task as in Result~3, we prove an information-theoretic lower bound for learning-free wire-cutting methods. 
    Here, learning-free means that each cut is performed using a fixed, pre-specified randomized wire-cutting rule; this class includes standard QPD-based wire-cutting schemes, including optimal wire cutting with classical communication. We show that any such method requires
    \begin{equation}
        \Omega\left(\frac{(d+1)^R}{\epsilon^2}\right)
    \end{equation}
    measurements to achieve constant success probability
    (Theorem~\ref{thm:learning-free-wire-cut-lower-bound}).
\end{enumerate}
This shows that the advantage of our protocol is not merely a consequence of restricting attention to tree-structured circuits. 
Rather, it comes from the learning-based, observable-adaptive construction of the cuts. In standard learning-free wire cutting, the measurement and preparation rules at each cut are fixed in advance rather than chosen from learned downstream information.
In our protocol, the relevant downstream information is first learned and then used to choose the cut basis adaptively. 
Thus, the lower bound shows that the polynomial scaling of Result~3 is not obtained from the tree structure alone, but from the adaptive use of learned downstream information.

Finally, we discuss how our results relate to several major lines of prior work.
\begin{enumerate}
\setcounter{enumi}{4}
    \item \textbf{\textit{Relation to prior work and broader applications.}}
    We clarify how our expectation-value-level construction differs from standard QPD-based cutting and why it is consistent with channel-level QPD lower bounds and recent circuit-knitting no-go results~\cite{PhysRevA.111.012433,marshall2023all}. 
    We also compare our observable-adaptive learning protocol with the graph-dependent cluster-simulation framework of Ref.~\cite{peng2021simulating}.
    We further contrast our local-experiment access model with classical treewidth-based tensor-network simulation~\cite{doi:10.1137/050644756}. 
    Finally, we show that our randomized learning framework gives finite-shot guarantees for a representative hybrid tree tensor-network contraction~\cite{PhysRevLett.127.040501,Harada2025densitymatrix,PRXQuantum.6.010320}, suggesting applications beyond circuit cutting (Sec.~\ref{sec:related_works}).
\end{enumerate}
As part of this comparison, we examine the cut-number dependence of the measurement overhead in the graph-dependent cluster-simulation framework of Ref.~\cite{peng2021simulating}.
By revisiting the finite-shot analysis underlying that framework, we derive an explicit measurement-overhead bound for representative rooted-tree instances in terms of the number of cut edges and the cut dimensions.
This reformulation makes the cut-number dependence explicit: the graph-dependent approach of Ref.~\cite{peng2021simulating} exhibits polynomial dependence on the actual number of cuts for specific tree families, such as paths and bounded-branching trees, but this polynomial-in-cuts behavior is not uniform over all finite trees. High-degree trees, such as stars, can still lead to exponential dependence.

The remainder of this paper is organized as follows.
In Sec.~\ref{sec:preliminaries}, we briefly review the background of conventional quantum circuit cutting.
In Sec.~\ref{sec:existence_theorems}, we introduce the observable-adaptive wire-cut primitives with unit or near-unit rescaling.
In Sec.~\ref{sec:learning_protocol}, we develop randomized learning protocols for estimating the Heisenberg-evolved observables needed to construct these cuts.
In Sec.~\ref{sec:poly_circuit_knitting_tree}, we apply the resulting framework to tree-structured quantum circuits and prove polynomial measurement scaling in the number of cuts.
In Sec.~\ref{sec:exponential_separation}, we prove an exponential separation in the number of cuts between our learning-based protocol and conventional learning-free cutting methods in the two-layer setting.
In Sec.~\ref{sec:numerics}, we numerically compare quantum runtime estimates in the two-layer case with prior approaches.
In Sec.~\ref{sec:related_works}, we discuss the relationship between our results and the major lines of related work. 
Finally, in Sec.~\ref{sec:coclusion}, we present our conclusions and discuss future directions.

\section{Preliminaries}\label{sec:preliminaries}

\subsection{Notation}\label{sec:notation}
We use the following notation in this work. 
For a quantum system $A$, $\Hilb{A}$ denotes the finite-dimensional Hilbert space of $A$. 
Throughout this paper, we consider qubit systems, and when writing the dimension as $d$, we assume $d=2^n$ for some number of qubits $n$.
For a $d$-dimensional complex vector space $\Vect$, we write $\Linear{\Vect}$ for the set of linear operators acting on $\Vect$, $\Herm{\Vect}$ for the set of Hermitian operators in $\Linear{\Vect}$, $\Density{\Vect}$ for the set of density operators in $\Herm{\Vect}$, and $\Unit{\Vect}$ for the unitary group in $\Linear{\Vect}$. 
We denote by $I_d$ the identity operator on $\mathbb{C}^d$, and by $\mathrm{id}^n$ the identity channel on $\Linear{\Vect}$; when the dimension is clear from context, we simply write $I$ or $\mathrm{id}$. 
For $d \in \mathbb{N}$, we use the shorthand $[d] := \{1,2,\dots,d\}$. All matrix norms $\|\cdot\|_{p}$ are taken to be Schatten $p$-norms.

\subsection{Quantum circuit cut}\label{sec:time_like_cut}

A quantum circuit cut is a technique that partitions a large quantum circuit into a collection of smaller subcircuits that can be executed on available quantum hardware~\cite{Mitarai_2021,peng2021simulating}. 
Such a partition can be implemented through the technique called quasiprobability simulation, which has been extensively employed across a wide range of quantum information and computation tasks, including quantum error mitigation~\cite{RevModPhys.95.045005,PhysRevLett.119.180509,PhysRevX.8.031027,PhysRevLett.119.180509,PhysRevX.8.031027}, classical simulation of quantum systems~\cite{PhysRevLett.115.070501, PhysRevLett.118.090501, doi:10.1098/rspa.2019.0251, PRXQuantum.2.010345}, and quantum circuit cutting. 
The key technical idea of the quasiprobability simulation is to express the target operation as a linear combination of implementable ones, referred to as a quasiprobability decomposition (QPD). 
Suppose that, due to physical limitations or experimental costs, the available set of quantum operations is restricted to a family $F$, while the desired operation $\mathcal{T}$ lies outside this set. 
Then, we represent $\mathcal{T}$ as a linear combination of operations $\{\mathcal{E}_i\}_i\subseteq F$:
\begin{equation}\label{eq:QPD}
    \mathcal{T} = \sum_i a_i \mathcal{E}_i,
\end{equation}
where the coefficients $a_i$ are real numbers that may take negative values. 
Building on this decomposition, quasiprobability simulation provides an unbiased estimator of the circuit output under $\mathcal{T}$ using only the circuit outputs under $\{\mathcal{E}_i\}_i$ (see Appendix~\ref{sec:QPD} for more detailed background). 
In the case of a quantum circuit cut, two types of QPDs are applied to entangling operations.
The first technique replaces the multi-qubit identity channel (i.e., wire) with a linear combination of channels consisting of a measurement followed by a state preparation, possibly together with classical post-processing.
This procedure is referred to as a wire cut, or equivalently, a timelike cut, as it effectively separates quantum operations along the temporal direction of the circuit~\cite{peng2021simulating,Lowe2023fastquantumcircuit,brenner2023optimal,harada2025doubly,pednault2023alternative,PRXQuantum.6.010316}.
The second technique, known as a gate cut or spacelike cut, decomposes an entangling unitary operation into a linear combination of local unitaries acting on spatially separated subsystems~\cite{Mitarai_2021,Mitarai2021overheadsimulating,10236453,Ufrecht2023cuttingmulticontrol,PRXQuantum.6.010316,schmitt2024cutting}.
In the following, we provide the key background of the wire cut, as it is more directly relevant to our main result.

In the wire cut, the target operation to be replaced is the $n$-qubit identity channel, denoted by $\mathrm{id}^n$.
The identity channel is defined as the completely positive and trace-preserving (CPTP) map that transmits any input exactly to itself without alteration. Formally, $\mathrm{id}^n: \Linear{\mathbb{C}^d} \rightarrow \Linear{\mathbb{C}^d}$ is defined by
\begin{equation}\label{eq:def_of_identity_channel}
    \mathrm{id}^n(X)=X \qquad \forall X \in \Linear{\mathbb{C}^d}.
\end{equation}
In contrast, the implementable operations are restricted to quantum channels that can be simulated using only local operations and classical communication (LOCC) between devices.
Specifically, we consider a special class of quantum channels that can be realized through the following sequence of steps: 
(i) the sender performs a measurement on the input state $\rho$, 
(ii) communicates the measurement outcome to the receiver via a classical channel, and 
(iii) the receiver prepares a quantum state conditioned on the received outcome. 
Such channels are known as \textit{measure-and-prepare} (MP) channels (or equivalently, \textit{entanglement-breaking} channels)~\cite{doi:10.1142/S0129055X03001709}, and they take the general form $\sum_{\mu} \mathrm{tr}(E_\mu \rho) \sigma_\mu$ where $\sigma_\mu$ are quantum states, and $\{E_\mu\}_\mu$ denotes a positive operator-valued measure (POVM) satisfying $\sum_\mu E_\mu=I$ and $E_\mu\geq0$ for all $\mu$. 
Moreover, when classical post-processing is allowed, the measurement
outcome $\mu$ can be assigned a scalar weight $c_\mu$, typically $c_\mu\in\{\pm1\}$. 
It is then convenient to absorb this classical weight into the linear representation
\begin{equation}\label{def:virtual_mp}
    \mathcal M(\rho)
    =
    \sum_\mu
    c_\mu \operatorname{tr}(E_\mu\rho)\sigma_\mu .
\end{equation}
Strictly speaking, when some $c_\mu$ are negative, $\mathcal M$ is not itself a CPTP channel.
Operationally, it is implemented by an MP channel followed by classical post-processing. 
We refer to such linear representations as post-processed MP maps.
In wire cutting, this channel is used as the decomposition basis.
In summary, a wire cut represents the identity channel as a linear combination of post-processed MP maps:
\begin{equation}\label{eq:QPD_of_timelike_cut}
    \mathrm{id}^n = \sum_i a_i \mathcal{M}_i.
\end{equation}

In this way, the QPD-based circuit cut provides an unbiased estimator of the desired expectation value, but the statistical error is amplified by a rescaling (multiplicative) factor of $\gamma:=\sum_{i}|a_i|$. According to Hoeffding's inequality~\cite{hoeffding1963probability}, to estimate the expectation value with an additive error $\epsilon$ with high probability, it suffices to perform 
\begin{equation}
    N=\mathcal{O}\left(\frac{\gamma^2}{\epsilon^2}\right)
\end{equation}
iterations. Hence, the sampling overhead scales quadratically with the $\ell_1$ norm of the coefficients. In practice, partitioning a quantum circuit to fit within available hardware typically requires multiple applications.
Let $\gamma_k$ denote the rescaling factor associated with the $k$-th replacement.
Since the estimator weights multiply across independent decompositions, the overall sampling overhead scales as $\prod_{k=1}^{K}\gamma_k^2$.

Thus, it is of practical importance to minimize the rescaling factor $\gamma$, and a number of approaches have been proposed to reduce the $\gamma$ in wire-cutting protocols~\cite{Lowe2023fastquantumcircuit,brenner2023optimal,harada2025doubly,pednault2023alternative,PRXQuantum.6.010316}. 
However, in standard channel-level wire cutting, it is known that
$\gamma$ is fundamentally lower bounded by $2d-1$, as established in the context of quantum circuit cutting~\cite{brenner2023optimal} and independently in the resource theory of quantum memory~\cite{yuan2021universal}. 
Similar nonunit rescaling factors also arise in QPD-based gate cuts for entangling operations~\cite{10236453,schmitt2024cutting,PRXQuantum.6.010316}. 
Consequently, cutting a quantum circuit at multiple locations based on QPDs inevitably induces an exponential measurement overhead, because the rescaling factors multiply across cuts. For example, when $K$ $d$-dimensional wires are
cut and an expectation value is estimated within additive error $\epsilon$, Hoeffding's inequality gives the standard sufficient measurement scaling $\mathcal{O}(d^{2K}/\epsilon^2)$~\cite{Lowe2023fastquantumcircuit,brenner2023optimal,harada2025doubly,pednault2023alternative,PRXQuantum.6.010316}. 
This exponential overhead poses a major bottleneck for the practical application of quantum circuit cutting in near-term quantum computation.

\section{Observable-adaptive rescaling-free circuit cuts}\label{sec:existence_theorems}

As reviewed in Sec.~\ref{sec:time_like_cut}, QPD represents a target operation as a linear combination of experimentally implementable operations, $\mathcal{T}=\sum_{i}a_i \mathcal{E}_i$, with the rescaling factor quantified by $\sum_i |a_i|$.
However, considering that the goal is merely to recover the expectation value of an observable, i.e., $\mathrm{tr}[O\mathcal{T}(\rho)]=\sum_i a_i \mathrm{tr}[O\mathcal{E}_i(\rho)]$, such a full channel-level decomposition is unnecessarily expressive in general. 
In this section, we demonstrate that in wire cutting the rescaling factor can be reduced dramatically by relaxing the requirement from a channel-level decomposition to an expectation-value-level decomposition; indeed, we prove the existence of a decomposition whose rescaling factor is unity in Sec.~\ref{subsec:existence}.
To connect this existence result to practical protocols, in Sec.~\ref{subsec:approximate} we further investigate approximate decompositions that inherit similar structural properties.

\subsection{Existence of a rescaling-free wire cut}\label{subsec:existence}

In Sec.~\ref{sec:time_like_cut}, we introduced the identity channel as the map that transmits any quantum state to itself without alteration, i.e., $\mathrm{id}^n(X)=X$.
While this operational definition is straightforward, it is often more insightful to characterize the identity channel in terms of its invariance under the Hilbert-Schmidt inner product.
Specifically, a linear map $\mathcal{E}$ coincides with the identity channel $\mathrm{id}^n$ defined in Eq.~\eqref{eq:def_of_identity_channel} if and only if 
\begin{equation}\label{eq:identity_equiv_def}
    \mathrm{tr}[YX] = \mathrm{tr}[Y \mathcal{E}(X)]\quad\text{$\forall X,Y \in \Linear{\mathbb{C}^d}$}
\end{equation}
\begin{proof}
For any fixed $X\in \Linear{\mathbb{C}^d}$, define the linear functional 
\begin{equation}
    f_X(Y):=\mathrm{tr}[Y(\mathcal{E}(X)-X)].
\end{equation}
By the assumption, $f_X(Y)=0$ for all $Y\in \Linear{\mathbb{C}^d}$. Since the Hilbert-Schmidt inner product is nondegenerate, this implies $\mathcal{E}(X)-X=0$. 
As this holds for every $X\in \Linear{\mathbb{C}^d}$, we conclude $\mathcal{E}=\mathrm{id}^n$.
\end{proof}

\begin{figure}[t]
\centering
\begin{center}
 \includegraphics[width=85mm]{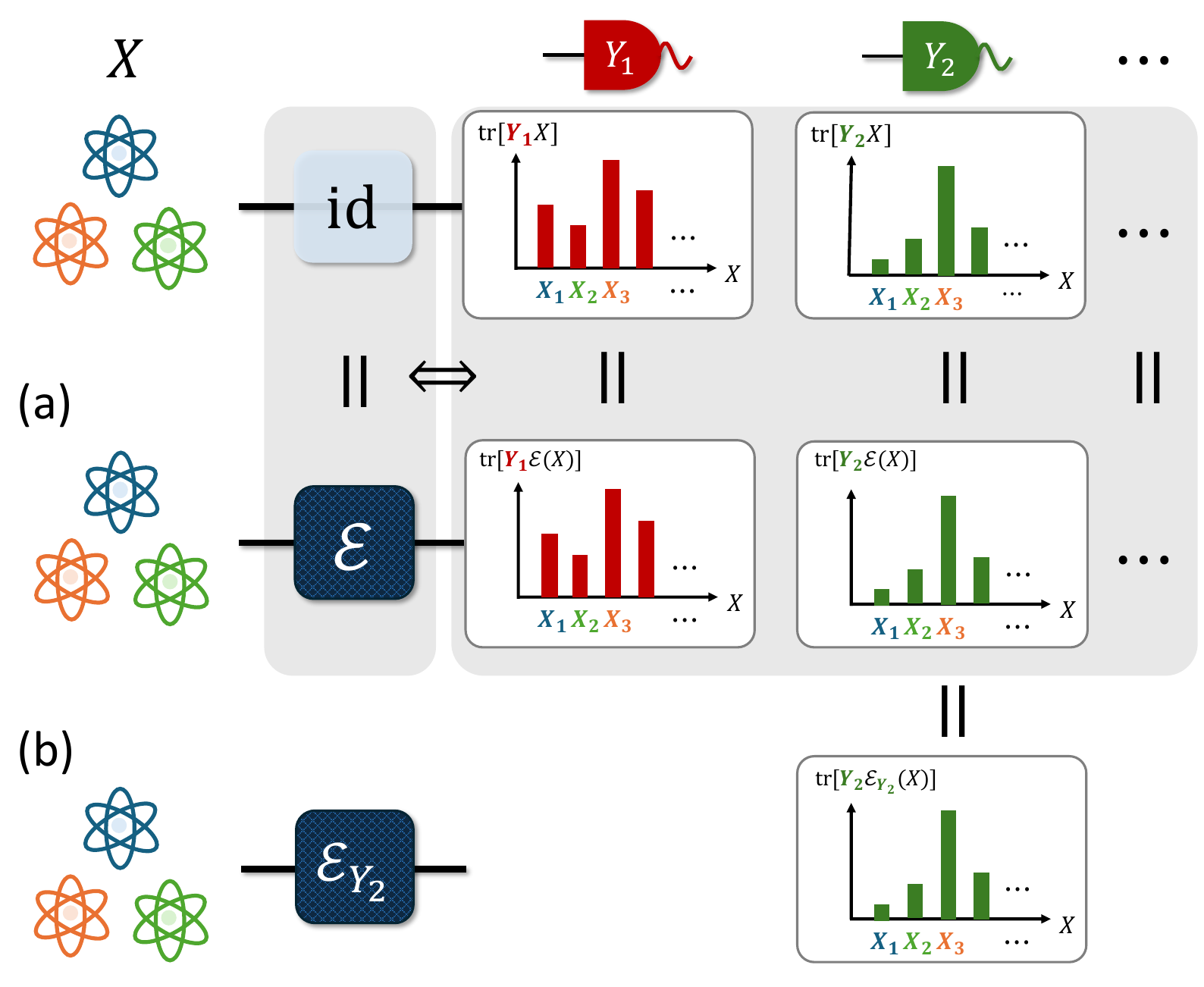}
\end{center}
\vspace{-10pt}
\caption{(a) Characterization of the identity channel $\mathrm{id}$ in terms of the invariance under the Hilbert-Schmidt inner product (Eq.~\eqref{eq:identity_equiv_def}). A linear map $\mathcal{E}$ coincides with $\mathrm{id}$ if and only if it preserves the expectation value of any observable $Y$ for any input $X$. (b) Schematic figure of the map $\mathcal{E}_{Y_2}$ defined in Eq.~\eqref{eq:identity_constrained}. We focus on the decomposition of the map $\mathcal{E}_{Y_2}$ that preserves the expectation value of a fixed observable $Y_2$ for any input $X$. \label{fig:def_of_identity}}
\end{figure}

Eq.~\eqref{eq:identity_equiv_def} thus indicates that $\mathrm{id}^n$ is the unique channel that preserves the expectation value of every observable $Y$ for any (possibly unknown) input operator $X$ (Fig.~\ref{fig:def_of_identity}(a)).
From the perspective of wire cuts, this interpretation clarifies the origin of the sampling overhead $\gamma^2 = (2d - 1)^2$, which scales quadratically with the input dimension $d$ of the system on which $\mathrm{id}^n$ acts: reproducing the outcome statistics of all observables requires an MP decomposition that recovers the full information of the input quantum state, and such completeness inevitably entails a large sampling overhead $\gamma^2$.

This observation naturally raises the following question: 
\textit{If one is only interested in reproducing the expectation values of a fixed observable for arbitrary unknown inputs, can the rescaling factor $\gamma$ be reduced below $2d-1$?}
Formally, consider an MP decomposition of the form $\mathcal{E}_{Y} = \sum_{i} a_i \mathcal{M}_i$ with $\mathcal{M}_i \in \mathrm{MP}$ and $a_i\in \mathbb{R}$.  
We ask whether, for a fixed $Y \in \Linear{\mathbb{C}^d}$,
\begin{equation}\label{eq:identity_constrained}
    \mathrm{tr}[YX] = \mathrm{tr}[Y \mathcal{E}_Y(X)] \quad \forall\, X \in \Linear{\mathbb{C}^d},
\end{equation}
the minimal $\gamma:= \sum_i |a_i|$ can be made smaller than $2d - 1$; see Fig.~\ref{fig:def_of_identity}(b) for the graphical representation of the map $\mathcal{E}_{Y}$.
The following theorem provides an affirmative answer to this question.

\begin{theorem}[Existence of a rescaling-free wire cut]\label{thm:existence_of_no_cost_cut}
Let $\Phi:\Linear{\mathbb{C}^{d_{\rm in}}} \to \Linear{\mathbb{C}^{d_{\rm out}}}$ be a CPTP map, and let $O \in \Herm{\mathbb{C}^{d_{\rm out}}}$ be a Hermitian operator. 
Then there exists a MP channel $\mathcal{M}_{\rm id}:\Linear{\mathbb{C}^{d_{\rm in}}} \to \Linear{\mathbb{C}^{d_{\rm in}}}$ such that, for any $X\in \Linear{\mathbb{C}^{d_{\rm in}}}$,
\begin{equation}
    \mathrm{tr}\left[O \,\Phi (X)\right] = \mathrm{tr}\left[ O\, (\Phi \circ \mathcal{M}_{\rm id}) (X) \right].
\end{equation}
An explicit construction of $\mathcal{M}_{\rm id}$ is given by
\begin{equation}\label{eq:gamma_optimal_timelike_cut}
    \mathcal{M}_{\rm id}(X) = \sum_{j} \mathrm{tr}\left[V \pure{j} V^{\dagger} X\right] V\pure{j} V^{\dagger},
\end{equation}
where $\{\ket{j}\}$ denotes the computational basis, and $V$ is a unitary that diagonalizes $O_{\Phi}:=\Phi^{\dagger}(O)$, i.e., \begin{equation}\label{eq:eigenvalue_decomposition_of_O}
    O_{\Phi}:= \Phi^{\dagger}(O) = V D V^{\dagger}.
\end{equation}
\end{theorem}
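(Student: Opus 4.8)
The plan is to verify directly that the explicitly constructed map $\mathcal{M}_{\rm id}$ in Eq.~\eqref{eq:gamma_optimal_timelike_cut} is a legitimate MP channel and that composing $\Phi$ with it leaves the relevant expectation value unchanged. The key observation is that $\mathcal{M}_{\rm id}$ is built entirely out of the eigenbasis of the Heisenberg-evolved observable $O_\Phi = \Phi^\dagger(O)$, so the ``information bottleneck'' intuition from the surrounding discussion becomes an exact identity rather than an approximation.

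First I would check that $\mathcal{M}_{\rm id}$ has the MP (entanglement-breaking) form $\sum_\mu \mathrm{tr}(E_\mu X)\sigma_\mu$: here $E_\mu = V\pure{j}V^\dagger$ and $\sigma_\mu = V\pure{j}V^\dagger$, which are rank-one projectors; since $\{\ket{j}\}$ is an orthonormal basis and $V$ is unitary, $\sum_j V\pure{j}V^\dagger = V I V^\dagger = I$, so $\{E_j\}$ is a valid PVM and each $\sigma_j$ is a valid state. Hence $\mathcal{M}_{\rm id}\in\mathrm{MP}$, and note its rescaling factor is $\sum_j 1 = 1$ since all coefficients $c_\mu$ are $+1$ — this is what ``rescaling-free'' means.

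Next I would establish the expectation-value identity. Using the adjoint, $\mathrm{tr}[O\,\Phi\circ\mathcal{M}_{\rm id}(X)] = \mathrm{tr}[O_\Phi\,\mathcal{M}_{\rm id}(X)]$ where $O_\Phi = \Phi^\dagger(O)$. Substituting the definition of $\mathcal{M}_{\rm id}$ gives $\sum_j \mathrm{tr}[V\pure{j}V^\dagger X]\,\mathrm{tr}[O_\Phi V\pure{j}V^\dagger]$. Now invoke the eigendecomposition $O_\Phi = VDV^\dagger$ with $D$ diagonal, so $\mathrm{tr}[O_\Phi V\pure{j}V^\dagger] = \bra{j}D\ket{j} = D_{jj}$, and therefore the sum becomes $\sum_j D_{jj}\,\mathrm{tr}[V\pure{j}V^\dagger X] = \mathrm{tr}\big[\big(\sum_j D_{jj} V\pure{j}V^\dagger\big) X\big] = \mathrm{tr}[VDV^\dagger X] = \mathrm{tr}[O_\Phi X] = \mathrm{tr}[O\,\Phi(X)]$. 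This closes the argument for every $X\in\Linear{\mathbb{C}^{d_{\rm in}}}$. The only subtlety to handle carefully is that $O_\Phi$ is guaranteed Hermitian (so that a diagonalizing unitary $V$ exists): since $O$ is Hermitian and $\Phi$ is CPTP, its adjoint $\Phi^\dagger$ is positive and unital, hence Hermiticity-preserving, so $O_\Phi^\dagger = \Phi^\dagger(O^\dagger) = \Phi^\dagger(O) = O_\Phi$.

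I do not anticipate a genuine obstacle here — the proof is essentially a one-line algebraic collapse once the right eigenbasis is chosen. The main conceptual point worth emphasizing in the writeup is \emph{why} this works: the map $\mathcal{M}_{\rm id}$ only needs to preserve the diagonal entries of the input in the $V$-basis, i.e., the information probed by $O_\Phi$, rather than the full input operator, which is why no rescaling is incurred; contrast this with the standard wire cut, which must reconstruct all of $X$ and consequently pays $\gamma = 2d-1$. If I wanted to be thorough I would also remark that the construction depends on $\Phi$ and $O$ only through the eigenbasis of $\Phi^\dagger(O)$, which is precisely the object that the tomography protocol of Sec.~\ref{sec:learning_protocol} is designed to estimate, motivating the approximate version in Sec.~\ref{subsec:approximate}.
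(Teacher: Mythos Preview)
Your proof is correct and follows essentially the same approach as the paper: both reduce the claim to showing that $\mathcal{M}_{\rm id}$ preserves the pairing with $O_\Phi$, with the paper phrasing this via self-adjointness of $\mathcal{M}_{\rm id}$ and the fixed-point identity $\mathcal{M}_{\rm id}(O_\Phi)=O_\Phi$, while you unroll the same computation directly in the eigenbasis. Your additional explicit checks that $\mathcal{M}_{\rm id}$ is a valid MP channel and that $O_\Phi$ is Hermitian are welcome details the paper leaves implicit.
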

The proof of Theorem~\ref{thm:existence_of_no_cost_cut} is provided in Appendix~\ref{sec:proof_of_existence_of_no_cost_cut}. We note that, although $\Phi$ is defined as a CPTP map in the theorem statement for generality, a concrete instance of $\Phi$ can take the form
\begin{equation}
    \Phi(X) = U( X \otimes \pure{0^m} ) U^{\dagger}
\end{equation}
where $U$ is a unitary and $m \in \mathbb{Z}_{\geq 0}$ denotes the number of ancilla qubits; see Fig.~\ref{fig:exist_theorems} (a).

Theorem~\ref{thm:existence_of_no_cost_cut} establishes the existence of a rescaling-free wire cut:
for any observable $O$, there exists a MP channel $\mathcal{M}_{\rm id}$ that leaves the expectation value $\mathrm{tr}[O\,\Phi(X)]$ invariant for all input operators $X$. 
In other words, there exists a wire cut whose rescaling factor is unity, i.e., $\gamma=1$, in contrast to the conventional wire cut whose overhead scales as $\gamma = 2d - 1$ (Fig.~\ref{fig:exist_theorems} (b)(c)).
The key ingredient underlying the construction of the MP channel $\mathcal{M}_{\rm id}$ is the unitary $V$ that diagonalizes the effective observable $O_{\Phi} := \Phi^{\dagger}(O)$. 
Operationally, $\mathcal{M}_{\rm id}$ projects the input onto the eigenbasis of $O_{\Phi}$, measures in that basis, and then re-prepares the corresponding eigenstate; thereby transmitting precisely the information relevant to $O_{\Phi}$.

\subsection{Approximate version of the rescaling-free wire cut}\label{subsec:approximate}

It should be emphasized that Theorem~\ref{thm:existence_of_no_cost_cut} is an existence result. 
The explicit construction of 
$\mathcal{M}_{\rm id}$ requires knowledge of the unitary $V$ that diagonalizes 
$O_\Phi$. 
However, exact knowledge of $O_{\Phi}$, i.e., an error-free classical description, is typically unavailable unless $\Phi$ admits a classical 
description or possesses a known structure (e.g., Clifford circuits).
Nevertheless, even without complete knowledge of $O_\Phi$, having only an imperfect approximation $\tilde{O}_\Phi$ with bounded error is sufficient to construct a rescaling-free wire cut
that faithfully reproduces the desired expectation values up to a bounded bias.
Theorem~\ref{thm:zero_cost_timelike_cut_with_an_approximated_unitary} 
formalizes this statement.

\begin{figure}[t]
\centering
\begin{center}
 \includegraphics[width=85mm]{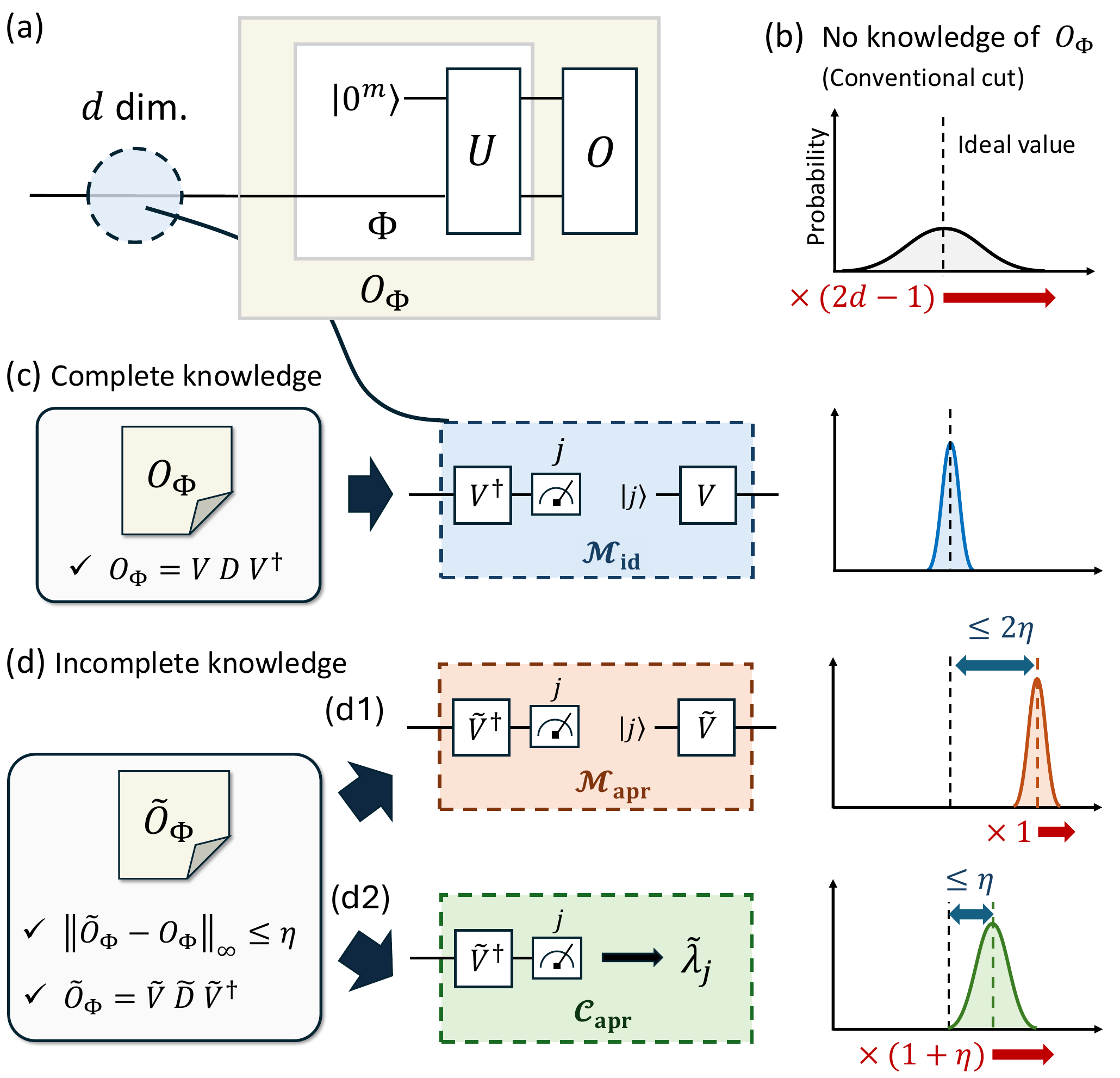}
\end{center}
\vspace{-15pt}
\caption{\label{fig:exist_theorems}
(a) Original quantum circuit before decomposition. 
Here, $\Phi$ denotes a CPTP map, and $O$ is a Hermitian operator satisfying $\|O \|_{\infty}\leq1$. 
We consider a decomposition of the identity channel enclosed by the blue dashed circle. 
(b) When the conventional wire cut in Eq.~\eqref{eq:QPD_of_timelike_cut} is applied to the dashed circuit in (a), a rescaling factor $2d-1$ is required, which consequently increases the variance.
(c) There exists an MP channel $\mathcal{M}_{\rm id}$ that preserves the unbiasedness of the ideal expectation value without introducing any additional rescaling factor, i.e., without increasing the variance (Theorem~\ref{thm:existence_of_no_cost_cut}). 
(d1) When the dashed circle is replaced by an MP channel $\mathcal{M}_{\rm apr}$ constructed from an approximate effective observable $\tilde{O}_{\Phi}$ satisfying $\| \tilde{O}_{\Phi} - O_{\Phi} \|_{\infty}\leq \eta$, the bias in the expectation value is bounded by $2\eta$, and no additional rescaling factor is required (Theorem~\ref{thm:zero_cost_timelike_cut_with_an_approximated_unitary}).
(d2) When a classical post-processing function $\mathcal{C}_{\rm apr}$ constructed from $\tilde{O}_{\Phi}$ is used, the bias is bounded by $\eta$, and the additional rescaling factor is at most $1+\eta$ (Theorem~\ref{thm:zero_cost_proccessing}).
}
\end{figure}

\begin{theorem}
[Approximate rescaling-free wire cut]
\label{thm:zero_cost_timelike_cut_with_an_approximated_unitary} 
Let $\Phi:\Linear{\mathbb{C}^{d_{\rm in}}} \to \Linear{\mathbb{C}^{d_{\rm out}}}$ be a CPTP map, and let $O \in \Herm{\mathbb{C}^{d_{\rm out}}}$ be a Hermitian operator.
Denote by $O_{\Phi}:=\Phi^{\dagger}(O)$ the corresponding effective observable, and let $\tilde{O}_{\Phi}$ be its Hermitian approximation satisfying $\|\tilde{O}_{\Phi} - O_{\Phi} \|_{\infty} \leq \eta$. 
Define the MP channel $\mathcal{M}_{\rm apr}:\Linear{\mathbb{C}^{d_{\rm in}}} \to \Linear{\mathbb{C}^{d_{\rm in}}}$ given by 
\begin{equation} 
    \mathcal{M}_{\rm apr}(X) := \sum_{j} \mathrm{tr} \left[ \tilde{V} \pure{j} \tilde{V}^{\dagger} X \right] \tilde{V} \pure{j} \tilde{V}^{\dagger}, 
\end{equation} 
where $\{\ket{j}\}$ denotes the computational basis, and $\tilde{V}$ is a unitary that diagonalizes $\tilde{O}_{\Phi}$ 
(i.e., $\tilde{O}_{\Phi}=\tilde{V}\tilde{D} \tilde{V}^{\dagger}$). 
Then for any  $X\in \Linear{\mathbb{C}^{d_{\rm in}}}$,
\begin{equation}
    \Bigl| \mathrm{tr}[O (\Phi \circ \mathcal{M}_{\rm apr}) (X)] - \mathrm{tr}[O\Phi(X)] \Bigr| \leq 2\eta \|X \|_{1}.
\end{equation}
\end{theorem}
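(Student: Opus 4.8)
The plan is to move to the Heisenberg picture and recognize $\mathcal{M}_{\rm apr}$ as a pinching channel, after which everything follows from contractivity of pinchings and the Schatten-norm Hölder inequality. First I would use the adjoint identity $\mathrm{tr}[O\,\Phi(Y)]=\mathrm{tr}[\Phi^{\dagger}(O)\,Y]=\mathrm{tr}[O_\Phi\,Y]$, valid for every $Y\in\Linear{\mathbb{C}^{d_{\rm in}}}$, to rewrite the quantity of interest as
\begin{equation}
    \bigl|\mathrm{tr}[O\,\Phi\circ\mathcal{M}_{\rm apr}(X)]-\mathrm{tr}[O\,\Phi(X)]\bigr|
    =\bigl|\mathrm{tr}\bigl[\bigl(\mathcal{M}_{\rm apr}^{\dagger}(O_\Phi)-O_\Phi\bigr)X\bigr]\bigr|.
\end{equation}
By Hölder's inequality this is at most $\|\mathcal{M}_{\rm apr}^{\dagger}(O_\Phi)-O_\Phi\|_\infty\,\|X\|_1$, so the whole problem reduces to showing $\|\mathcal{M}_{\rm apr}^{\dagger}(O_\Phi)-O_\Phi\|_\infty\le 2\epsilon$.

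The key structural observation is that, writing $P_j:=\tilde V\pure{j}\tilde V^{\dagger}$ for the rank-one spectral projectors of $\tilde O_\Phi$, the map $\mathcal{M}_{\rm apr}(X)=\sum_j\mathrm{tr}[P_jX]\,P_j=\sum_j P_jXP_j$ is self-adjoint as a superoperator, so $\mathcal{M}_{\rm apr}^{\dagger}=\mathcal{M}_{\rm apr}$ is precisely the pinching onto the eigenbasis of $\tilde O_\Phi$. Two properties of this map do the work: it is unital and CPTP, hence contractive in the operator norm, $\|\mathcal{M}_{\rm apr}(Z)\|_\infty\le\|Z\|_\infty$; and it fixes $\tilde O_\Phi$, since $\tilde O_\Phi$ is diagonal in that basis, i.e.\ $\mathcal{M}_{\rm apr}(\tilde O_\Phi)=\tilde O_\Phi$. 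A triangle-inequality split then gives
\begin{equation}
    \|\mathcal{M}_{\rm apr}(O_\Phi)-O_\Phi\|_\infty
    \le\|\mathcal{M}_{\rm apr}(O_\Phi-\tilde O_\Phi)\|_\infty+\|\tilde O_\Phi-O_\Phi\|_\infty
    \le 2\,\|O_\Phi-\tilde O_\Phi\|_\infty\le 2\epsilon,
\end{equation}
using contractivity on the first term and $\mathcal{M}_{\rm apr}(\tilde O_\Phi)=\tilde O_\Phi$ together with the hypothesis on the second. Combining this with the Hölder bound above closes the argument.

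I do not expect a genuine obstacle here: the proof is short once the pinching picture is in place, and the only real content is identifying $\mathcal{M}_{\rm apr}^{\dagger}$ with a pinching and invoking its contractivity and its fixed-point property at $\tilde O_\Phi$. The mildest points needing care are that $X$ need not be Hermitian or positive (harmless, since the matrix Hölder inequality holds for arbitrary operators) and that $\tilde O_\Phi$ may have degenerate eigenvalues so $\tilde V$ is not unique (also harmless, since any diagonalizing unitary furnishes a basis in which $\tilde O_\Phi$ is diagonal, which is all that is used). As a consistency check, setting $\epsilon=0$ forces $\tilde V=V$ and recovers the exact rescaling-free wire cut of Theorem~\ref{thm:existence_of_no_cost_cut}.
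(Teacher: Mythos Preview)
Your proof is correct and essentially identical to the paper's own argument: both pass to the Heisenberg picture via $\Phi^{\dagger}$, use that $\mathcal{M}_{\rm apr}$ is a self-adjoint pinching map with $\tilde O_\Phi$ as a fixed point, and combine contractivity with the triangle inequality and H\"older to obtain the $2\epsilon\|X\|_1$ bound.
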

The proof of Theorem~\ref{thm:zero_cost_timelike_cut_with_an_approximated_unitary} can be found in Appendix~\ref{sec:proof-zero_cost_timelike_cut_with_an_approximated_unitary}.
In particular, for any input state $\rho \in\Density{\mathbb{C}^{d_{\rm in}}}$, this theorem guarantees the bounded bias as
\begin{equation}
    \left|  \mathrm{tr}[O (\Phi \circ \mathcal{M}_{\rm apr}) (\rho)] - \mathrm{tr}[O\Phi(\rho)] \right| \leq 2\eta.
\end{equation}
This argument is schematically depicted in Fig.~\ref{fig:exist_theorems}(d1).

We remark that the above strategy assumes that one has access to a reasonably accurate classical description of the effective observable $O_{\Phi}$. Hence, in order for Theorem~\ref{thm:zero_cost_timelike_cut_with_an_approximated_unitary} to work meaningfully, it is required to construct a learning protocol for $O_{\Phi}$, with controlled accuracy in the operator norm. In the next section, we establish a practical learning procedure that enables such estimation. 

Theorem~\ref{thm:zero_cost_timelike_cut_with_an_approximated_unitary} naturally motivates a constructive strategy for realizing rescaling-free wire cuts in practice. 
For each cut location satisfying the conditions of Theorem~\ref{thm:zero_cost_timelike_cut_with_an_approximated_unitary}, one may proceed as follows: (i) estimate the effective observable $O_{\Phi}^{(k)}$ of the target subcircuit and obtain an estimator $\tilde{O}_{\Phi}^{(k)}$, (ii) diagonalize $\tilde{O}_{\Phi}^{(k)}$ to obtain the corresponding unitary $\tilde{V}^{(k)}$ such that $\tilde{O}_{\Phi}^{(k)}=\tilde{V}^{(k)} \tilde{D}^{(k)} \tilde{V}^{(k)\dagger}$, and (iii) construct the local MP channel $\mathcal{M}_{\rm apr}^{(k)}$ from this basis. 
By controlling the estimation error of each $\tilde{O}_{\Phi}^{(k)}$ so that the accumulated bias remains negligible, one can achieve a faithful global reconstruction of the circuit’s output expectation value while maintaining a rescaling factor of unity. 

From an implementation point of view, however, constructing such an observable-adaptive cut requires additional resources. 
To build $\mathcal M_{\rm apr}$, one first needs an approximate classical description $\tilde O_\Phi$ of the effective observable $O_\Phi$.
In Sec.~\ref{sec:main_performance_guarantee}, we show how to obtain such an approximation from local input-output experiments on $\Phi$, with finite-sample operator-norm guarantees. 
Once $\tilde O_\Phi$ has been obtained, one must diagonalize the $d_{\rm in}\times d_{\rm in}$ matrix $\tilde O_\Phi$, which costs $\mathcal O(d_{\rm in}^3)$ classical time. 
Implementing the resulting basis-change unitary may require $\mathcal O(d_{\rm in}^2)$ elementary gates in the worst case~\cite{PhysRevLett.92.177902,app12020759}.
Thus, when only a single cut is applied, standard optimal wire-cutting methods may remain preferable, since they are tomography-free and can be implemented with $\mathcal O(\mathrm{polylog}(d_{\rm in}))$ elementary gates~\cite{harada2025doubly,pednault2023alternative,PRXQuantum.6.010316}.

However, the distinctive advantage of the present construction is that, by removing the per-cut rescaling factor, it shifts the main trade-off from multiplicative amplification of the variance to additive bias control.
This change becomes particularly significant when multiple cuts are applied.
Conventional QPD-based wire cuts accumulate rescaling factors $\gamma=\mathcal O(d_{\rm in})$, so the variance typically scales as $\gamma^{2K}=\mathcal O(d_{\rm in}^{2K})$, leading to an exponential increase in the required number of measurements. 
In contrast, the observable-adaptive construction introduces local biases whose sizes can be controlled through learning. 
Using the finite-sample operator-norm guarantees for the learning protocols established in Sec.~\ref{sec:main_performance_guarantee}, we can choose the learning accuracy at each cut, or equivalently assign local bias budgets, so that the accumulated bias remains below the target additive error, i.e., the error parameter $\epsilon$ in the final expectation-value estimate. 
In Sec.~\ref{sec:poly_circuit_knitting_tree}, we carry out this bias allocation for tree-structured quantum circuits and obtain polynomial
measurement scaling of the form $\tilde{\mathcal{O}}(\mathrm{poly}(d_{\rm in},K)/\epsilon^2)$. 

Up to this point, we have considered constructing an approximate rescaling-free wire cut based on Theorem~\ref{thm:zero_cost_timelike_cut_with_an_approximated_unitary}. 
In this framework, we utilize only the diagonalizing unitary $\tilde{V}$ obtained from the approximate observable, while we have not exploited the corresponding approximate eigenvalues $\tilde{\lambda}_j$ in $\tilde{D}$.
Nevertheless, these eigenvalues may contain additional information about the effective observable $O_{\Phi}$ that can be leveraged to improve estimation accuracy.
Indeed, as shown in the following theorem, the approximate eigenvalues $\tilde{\lambda}_j$ allow us to construct an estimator with a smaller bias than that obtained in Theorem \ref{thm:zero_cost_timelike_cut_with_an_approximated_unitary}.

\begin{theorem}[Post-processing from an approximate effective observable]\label{thm:zero_cost_proccessing}
Let $\Phi:\Linear{\mathbb{C}^{d_{\rm in}}} \to \Linear{\mathbb{C}^{d_{\rm out}}}$ be a CPTP map, and let $O \in \Herm{\mathbb{C}^{d_{\rm out}}}$ be a Hermitian operator.
Denote by $O_{\Phi}:=\Phi^{\dagger}(O)$ the corresponding effective observable, and let $\tilde{O}_{\Phi}$ be its Hermitian approximation satisfying $\|\tilde{O}_{\Phi} - O_{\Phi} \|_{\infty} \leq \eta$. Define the classical post-processing function
\begin{equation}\label{eq:post-processing_functional}
    \mathcal{C}_{\rm apr}(X) := \sum_{j} \mathrm{tr} \left[ \tilde{V} \pure{j} \tilde{V}^{\dagger} X \right] \tilde{\lambda}_{j},
\end{equation}
where $\{\ket{j}\}$ denotes the computational basis, $\tilde{V}$ is a unitary that diagonalizes $\tilde{O}_{\Phi}$ (i.e., $\tilde{O}_{\Phi}=\tilde{V}\tilde{D} \tilde{V}^{\dagger}$), and $\tilde{D}:=\sum_j \tilde{\lambda}_j \pure{j}$. 
Then, for any  $X\in \Linear{\mathbb{C}^{d_{\rm in}}}$,
\begin{eqnarray}
    \Bigl|  \mathcal{C}_{\rm apr}(X) - \mathrm{tr}[O\Phi(X)] \Bigr| 
    &\leq& \eta \|X \|_{1},\\
    \max_{j} |\tilde{\lambda}_j| \leq \|O \|_{\infty}+\eta.\label{eq:maximal_norm} 
\end{eqnarray}
\end{theorem}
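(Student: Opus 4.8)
The plan is to recognize that the post-processing functional $\mathcal{C}_{\rm apr}$ is, despite its appearance, an exact linear functional of $X$, and then to reduce both claims to H\"older's inequality and elementary properties of the adjoint channel. First I would simplify: writing $\tilde{P}_j := \tilde{V}\pure{j}\tilde{V}^{\dagger}$ and pulling the scalar $\tilde{\lambda}_j$ inside the trace,
\[
\mathcal{C}_{\rm apr}(X) = \sum_j \tilde{\lambda}_j \,\mathrm{tr}[\tilde{P}_j X] = \mathrm{tr}\!\left[\Big(\textstyle\sum_j \tilde{\lambda}_j \tilde{P}_j\Big) X\right] = \mathrm{tr}[\tilde{V}\tilde{D}\tilde{V}^{\dagger} X] = \mathrm{tr}[\tilde{O}_{\Phi} X].
\]
This identity is the conceptual heart of the statement: in contrast to the MP channel $\mathcal{M}_{\rm apr}$ of Theorem~\ref{thm:zero_cost_timelike_cut_with_an_approximated_unitary}, whose state-re-preparation step effectively keeps only the diagonal of the \emph{true} effective observable in the $\tilde{V}$-eigenbasis, re-weighting the outcomes by $\tilde{\lambda}_j$ reconstructs $\tilde{O}_{\Phi}$ itself at the level of expectation values, which is what removes one of the two error contributions and improves the bias bound from $2\epsilon$ to $\epsilon$.

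Given this, for the first inequality I would use $\mathrm{tr}[O\Phi(X)] = \mathrm{tr}[\Phi^{\dagger}(O)X] = \mathrm{tr}[O_{\Phi}X]$, so that the bias equals $|\mathrm{tr}[(\tilde{O}_{\Phi}-O_{\Phi})X]|$; applying H\"older's inequality for Schatten norms, $|\mathrm{tr}[AB]| \le \|A\|_{\infty}\|B\|_1$, together with the hypothesis $\|\tilde{O}_{\Phi}-O_{\Phi}\|_{\infty}\le\epsilon$ yields the bound $\epsilon\|X\|_1$. For the eigenvalue bound, since $\tilde{V}$ is unitary we have $\max_j|\tilde{\lambda}_j| = \|\tilde{D}\|_{\infty} = \|\tilde{O}_{\Phi}\|_{\infty}$, and by the triangle inequality this is at most $\|O_{\Phi}\|_{\infty}+\epsilon$. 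It then remains to check $\|O_{\Phi}\|_{\infty}\le\|O\|_{\infty}$, for which I would invoke that $\Phi^{\dagger}$ is completely positive and unital (the latter because $\Phi$ is trace preserving): for Hermitian $O$ with $\|O\|_{\infty}=c$ one has $-cI \preceq O \preceq cI$, hence $-cI = -c\,\Phi^{\dagger}(I)\preceq \Phi^{\dagger}(O)\preceq c\,\Phi^{\dagger}(I) = cI$, giving $\|O_{\Phi}\|_{\infty}\le c$. Combining the two estimates gives Eq.~\eqref{eq:maximal_norm}.

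I do not anticipate a real obstacle here: the argument is short once the collapse in the first step is noticed. The only points needing a little care are (i) that collapse itself, which is precisely the mechanism responsible for the improvement over Theorem~\ref{thm:zero_cost_timelike_cut_with_an_approximated_unitary}, and (ii) the unital completely-positive-map contraction $\|\Phi^{\dagger}(O)\|_{\infty}\le\|O\|_{\infty}$ used in the eigenvalue bound; both are standard.
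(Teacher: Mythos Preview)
Your proposal is correct and follows essentially the same approach as the paper: both collapse $\mathcal{C}_{\rm apr}(X)$ to $\mathrm{tr}[\tilde{O}_{\Phi}X]$, apply H\"older's inequality for the bias bound, and use the unitality and positivity of $\Phi^{\dagger}$ together with the triangle inequality for the eigenvalue bound. The only cosmetic difference is that the paper states the one-sided inequality $\Phi^{\dagger}(O)\le\|O\|_{\infty}I$ whereas you write out both sides, but the content is identical.
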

The proof of Theorem~\ref{thm:zero_cost_proccessing} is provided in Appendix~\ref{sec:proof_of_zero_cost_processing}.
Similar to Theorem~\ref{thm:zero_cost_timelike_cut_with_an_approximated_unitary}, for any input state $\rho \in \Density{\mathbb{C}^{d_{\rm in}}}$ and a bounded observable $O$ satisfying $\|O\|_{\infty}\leq 1$, the theorem ensures that 
\begin{equation}
    \Bigl| \mathcal{C}_{\rm apr}(\rho) - \mathrm{tr}[O\Phi(\rho)] \Bigr| 
    \leq \eta
\end{equation}
and
\begin{equation}
    \max_{j} |\tilde{\lambda}_j| \leq 1+\eta.
\end{equation}
Fig.~\ref{fig:exist_theorems} (d2) shows the graphical expression of this statement.

Now, comparing Theorem~\ref{thm:zero_cost_timelike_cut_with_an_approximated_unitary} and Theorem~\ref{thm:zero_cost_proccessing}, we can see the trade-off between the two approaches.
Using the approximate MP channel $\mathcal{M}_{\rm apr}$ of Theorem~\ref{thm:zero_cost_timelike_cut_with_an_approximated_unitary}, one can estimate the expectation value with a unit rescaling factor (i.e., without additional multiplicative factor) but with an increased bias $2\eta$.
In contrast, the post-processing function $\mathcal{C}_{\rm apr}$ of Theorem~\ref{thm:zero_cost_proccessing} achieves a smaller bias $\eta$, albeit with a non-unity rescaling factor bounded by $1+\eta$, i.e., a non-unity multiplicative factor. 
At first sight, applying $\mathcal C_{\rm apr}^{(k)}$ at many cut locations may appear to reintroduce a multiplicative overhead, because each local post-processing function takes values bounded in magnitude by $1+\eta$. 
However, as shown in Sec.~\ref{sec:poly_circuit_knitting_tree}, choosing the local learning accuracy of order $\eta/R$ keeps the resulting product of ranges bounded by a
constant. 
Thus the two approaches have comparable sample efficiency when the effective observables are learned to the same accuracy.

In terms of ease of implementation, the post-processing construction of Theorem~\ref{thm:zero_cost_proccessing} 
offers a practical advantage.
Once a sufficiently accurate approximation $\tilde{O}_{\Phi}$ is available, the remaining estimation can be performed by measuring in the learned eigenbasis followed by classical post-processing, without implementing the downstream channel $\Phi$ or directly measuring $O$.
In addition, whereas the MP-channel construction of Theorem~\ref{thm:zero_cost_timelike_cut_with_an_approximated_unitary} requires a feedforward operation to realize $\mathcal{M}_{\rm apr}$, the post-processing construction of Theorem~\ref{thm:zero_cost_proccessing} does not.

Lastly, we note that Appendix~\ref{apsec:relation_to_wire_cut} provides a detailed discussion of how Theorem~\ref{thm:existence_of_no_cost_cut} relates to existing QPD-based wire-cutting methods.

\section{Learning Heisenberg-evolved observables under unknown quantum channels}\label{sec:learning_protocol}

In the previous section, Theorems~\ref{thm:zero_cost_timelike_cut_with_an_approximated_unitary} and~\ref{thm:zero_cost_proccessing} established that, given an accurate estimate of $O_\Phi$, one can construct either an approximately rescaling-free wire cut $\mathcal{M}_{\rm apr}$ or a post-processing function $\mathcal{C}_{\rm apr}$ with a unit or near-unit rescaling factor.
These constructions fundamentally rely on the knowledge of the effective observable $O_{\Phi}$, which corresponds to the Heisenberg-evolved version of the observable $O$ under the unknown quantum channel $\Phi$. 
In general, however, the channel $\Phi$ is available only through local quantum experiments, and a full classical description of $\Phi$ is not assumed.
In this section, we develop randomized learning protocols for estimating $O_{\Phi}$ directly from input-output experiments on the unknown channel.
The goal is not to learn the full channel $\Phi$, but to learn the single effective observable needed to construct the observable-adaptive cut.
We formulate this learning task in Sec.~\ref{sec:main_problem_setup}, introduce three concrete learning protocols in Sec.~\ref{sec:main_leaerning_protocol}, and provide finite-sample operator-norm guarantees in Sec.~\ref{sec:main_performance_guarantee}.

\subsection{Problem setup}\label{sec:main_problem_setup}

Suppose we are given a known Hermitian operator $O\in \Herm{\mathbb{C}^{d_{\rm out}}}$ and an unknown CPTP map $\Phi: \Linear{\mathbb{C}^{d_{\rm in}}} \to \Linear{\mathbb{C}^{d_{\rm out}}}$.
When the input system consists of $n$ qubits, we write $d_{\rm in}=2^n$.
We do not assume access to a full classical matrix representation of $\Phi$. Instead, $\Phi$ is available through a local quantum device: one can prepare an input state, apply the channel, and measure the output.
The observable $O$ admits the spectral decomposition
\begin{equation}\label{eq:svd_of_O}
    O = W \Lambda W^{\dagger} 
\end{equation}
where $\Lambda := \sum_{j} \nu_j\ket{j}\bra{j}$ is diagonal in the computational basis $\{\ket{j}\}$, and $W$ is a unitary operator.
Our goal is to obtain a classical description of $O_{\Phi}=\Phi^{\dagger}(O) \in \Herm{\mathbb{C}^{d_{\rm in}}}$, where $\Phi^{\dagger}$ denotes the adjoint map of $\Phi$. 
Formally, given a target accuracy $\eta > 0$, we aim to construct an estimator $\hat{O}_{\Phi}$ satisfying
\begin{equation}\label{eq:main_definition_of_effective_observable}
    \| \hat{O}_{\Phi} - O_{\Phi}\|_{\infty} \leq \eta
\end{equation}
with probability at least $1-\delta$. 

The learning primitive needed here is more targeted than full channel
tomography~\cite{chuang1997prescription,PhysRevLett.78.390} or measurement
tomography~\cite{PhysRevLett.93.250407,zambrano2025fast}.
To see this, set $\Pi_j := W|j\rangle\langle j|W^\dagger$, so that $O=\sum_j \nu_j\Pi_j$.
Applying $\Phi$ and measuring $O$ induces an
effective POVM $\{F_j:=\Phi^\dagger(\Pi_j)\}_j$ on the input system, but
the observable-adaptive cut only requires the single effective observable
\begin{equation}
    O_\Phi=\Phi^\dagger(O)=\sum_j \nu_j F_j .
\end{equation}
Thus, we learn $O_\Phi$ directly from local input-output experiments on
$\Phi$, without reconstructing $\Phi$ or the entire induced POVM. 
The essential requirement for circuit cutting is an operator-norm guarantee:
it is uniform over all upstream input states and can be inserted directly
into the cutting-error analysis. 

\subsection{Learning protocols}\label{sec:main_leaerning_protocol}

We introduce three protocols for learning the unknown observable $O_{\Phi}$.
Before presenting the specific procedures, we outline the common structure shared by all of them.
To obtain an estimator $\hat{O}_{\Phi}$ satisfying Eq.~\eqref{eq:main_definition_of_effective_observable}, we need to collect measurement data from the circuit composed of the unknown quantum channel $\Phi$ followed by the measurement of $O$.
We obtain such data by probing the circuit with an appropriately chosen ensemble of input states (i.e., a probability distribution over pure states):
\begin{equation}
    \mathcal{S} := \{ (p_i,\ket{\psi_i}) \}, 
\end{equation}
where $p_i$ denotes the sampling probability associated with the pure states $\ket{\psi_i}$.
As will be discussed below and in Appendix~\ref{sec:quantum-t-design}, the probe states $\ket{\psi_i}$ in all of our schemes can be prepared efficiently on a quantum device, and their classical descriptions can be stored efficiently in classical memory.
Given $\mathcal{S}$, we perform the following randomized procedure consisting of random sampling of input states, circuit measurement, and data storage:
\begin{enumerate}
    \item \textbf{Random sampling}: 
    Draw an input state $\ket{\psi_i}$ from $\mathcal{S}$ according to the probability distribution $p_i$.
    
    \item \textbf{Measurement}: 
    Prepare $\ket{\psi_i}$ on the input register, evolve it under $\Phi$, and measure the output using the observable $O$. Operationally, this corresponds to applying the unitary rotation $W^{\dagger}$ immediately before measurement in the computational basis $\{\ket{j}\}$.
    Denote the observed outcome by $j$, associated with the eigenvalue $\nu_j$ of $O$.
    
    \item \textbf{Data storage}: 
    Store the pair $(\ket{\psi_i}, \nu_j)$ in classical memory. 
\end{enumerate}
Repeating the above procedure $N$ times, we obtain an empirical dataset
\begin{equation}
    \mathcal{D}_N := \{(\ket{\psi_{i_k}}, \nu_{j_k})\}_{k=1}^N
\end{equation}
where $i_k$ and $j_k$ respectively denote the indices of the sampled input state and the corresponding measurement outcome in the $k$-th trial. 
With a suitable choice of the ensemble $\mathcal S$ and a sufficiently
large sample size $N$, the empirical dataset $\mathcal{D}_N$ encodes enough information to reconstruct $O_{\Phi}$.
Thus, by performing classical post-processing on $\mathcal{D}_N$, which depends on the choice of ensemble, we obtain an estimate of the classical description of $O_{\Phi}$. 

In the following, we present three concrete realizations of this framework.
The first two protocols employ ensembles that form (i) a state 2-design (Sec.~\ref{sec:main_2-design}), and (ii) the uniform tensor product of single-qubit stabilizer states (Sec.~\ref{sec:main_stab}).
Both protocols can be described within the framework presented above. 
We present these two realizations because each has its own strengths and weaknesses in terms of sample efficiency and ease of implementation.
Protocol (ii) is more hardware-friendly, as its input-state-preparation can be executed with a circuit depth of at most two, using only single-qubit Clifford gates.
In contrast, implementing the protocol (i) using a Clifford ensemble requires $\mathcal{O}(n^2/\log n)$ Clifford gates.
However, protocol (i) offers better sample efficiency, as established in Theorem~\ref{thm:error_bounds_for_each_x}.
We also introduce a third protocol in Sec.~\ref{sec:main_pauli}, which instead employs the uniform distribution over $n$-qubit Pauli operators and their eigenstates.
The input states are similar to those of the stabilizer protocol, but the inversion step is different.

We briefly describe how the unbiased estimators and their error bounds are derived.
To derive the unbiased estimators, we view Steps 1--3 as a randomized linear map from $O_{\Phi}$ to a classical matrix-valued output. By computing the expectation of this map and applying an appropriate linear inversion~\cite{Guta_2020,zambrano2025fast}, we obtain unbiased estimators of $O_{\Phi}$.
For the error analysis, we use the matrix Bernstein inequality~\cite{tropp2015introduction}.
The necessary background, tools for the derivation, and detailed proofs are provided in Appendix~\ref{apsec:learning_observable}.

\subsubsection{2-design ensemble}\label{sec:main_2-design}

Let $\mathcal{S}_{\text{2-dgn}}:=\{ (p_i,\ket{\psi_i}) \}$ be a state ensemble forming a state 2-design, i.e., it satisfies
\begin{equation}
    \sum_{i} p_i (\pure{\psi_i})^{\otimes t} = \int_{\phi} d\phi (\pure{\phi})^{\otimes t},\quad t=1,2
\end{equation}
where the integral is taken with respect to the normalized Haar measure on the unit sphere of $\mathbb{C}^{d_{\rm in}}$. 
As concrete examples, maximal mutually unbiased bases (MUBs) provide state 2-designs in prime-power dimensions, including $d_{\rm in}=2^n$. 
SIC-POVMs, when available, also give state 2-designs.
Moreover, the set $\{U\ket{0^n} : U\in \mathrm{Cl}(2^n)\}$, where $\mathrm{Cl}(2^n)$ denotes the $n$-qubit Clifford group, also forms a state 2-design, since $\mathrm{Cl}(2^n)$ is known to form a unitary 3-design~\cite{webb2016clifford3design} (see Appendix~\ref{sec:quantum-t-design} for additional background). 
In addition, their classical descriptions can be stored and sampled efficiently.

Using $\mathcal{S}_{\text{2-dgn}}$, we perform the following procedure for $k=1,...,N$:
\begin{enumerate}
    \item \textbf{Random sampling}: 
    Sample a pure state $\ket{\psi_{i_k}}$ from $\mathcal{S}_{\text{2-dgn}}$.
    
    \item \textbf{Measurement}: 
    Prepare $\ket{\psi_{i_k}}$, apply $\Phi$, and measure the observable $O$.
    Let $j_k$ be the observed outcome associated with eigenvalue $\nu_{j_k}$.
    
    \item \textbf{Data storage}: 
    Store the pair $(\ket{\psi_{i_k}},\nu_{j_k})$ in classical memory.
\end{enumerate}
After repeating this procedure $N$ times, we classically post-process each pair $(\ket{\psi_{i_k}},\nu_{j_k})$ as
\begin{equation}
    \hat{\omega}_{\text{2-dgn}}^{(k)}:=\nu_{j_k} \left( d_{\rm in}(d_{\rm in}+1)\pure{\psi_{i_k}} - d_{\rm in}I_{d_{\rm in}} \right).
\end{equation}
By averaging these matrices, we obtain
\begin{eqnarray}
    \hat{O}_{\Phi,\text{2-dgn}} 
    &:=& \frac{1}{N}\sum_{k=1}^{N} \hat{\omega}_{\text{2-dgn}}^{(k)}\\
    &=& \frac{1}{N}\sum_{k=1}^{N} \nu_{j_k} \left( d_{\rm in}(d_{\rm in}+1)\pure{\psi_{i_k}} - d_{\rm in}I_{d_{\rm in}} \right),\notag\\
\end{eqnarray}
which is an unbiased estimator of $O_{\Phi}$, i.e., $\mathbb{E}[\hat{O}_{\Phi,\text{2-dgn}}]=O_{\Phi}$. 
A proof of its unbiasedness is given in Appendix~\ref{apsec:const_2-design}.

\subsubsection{Tensor product single-qubit stabilizer states}\label{sec:main_stab}

We consider the input ensemble $\mathcal{S}_{\text{stab}}$, defined as the uniform distribution over all $n$-fold tensor products of single-qubit stabilizer states. 
Let $\ket{\pm}$ and $\ket{\pm i}$ denote the eigenstates of the Pauli operators $\rm X$ and $\rm Y$, respectively, and define the set of single-qubit stabilizer states as  
\begin{equation}
    \mathcal{Q}_{\text{stab}} :=\left\{\ket{0},\ket{1},\ket{+},\ket{-},\ket{+i},\ket{-i}\right\}.
\end{equation}
The ensemble $\mathcal{S}_{\text{stab}}$ is then formally given by
\begin{equation}
    \mathcal{S}_{\text{stab}}:=\left\{\left(\frac{1}{6^n}, \bigotimes_{s=1}^{n}\ket{u^s}\right): \ket{u^s} \in \mathcal{Q}_{\rm stab}\right\}
\end{equation}

Using $\mathcal{S}_{\text{stab}}$, 
we perform the following protocol for $k=1,\ldots,N$:
\begin{enumerate}
    \item \textbf{Random sampling}: 
    Sample a product state $\bigotimes_{s=1}^{n}\ket{u^s_k}$ from $\mathcal{S}_{\text{stab}}$.
    
    \item \textbf{Measurement}: 
    Prepare the state $\bigotimes_{s=1}^{n}\ket{u^s_k}$, apply $\Phi$, and measure the observable $O$.
    Let $j_k$ be the observed outcome corresponding to eigenvalue $\nu_{j_k}$.
    
    \item \textbf{Data storage}: 
    Store the pair $\left(\bigotimes_{s=1}^{n}\ket{u^s_k},\nu_{j_k}\right)$ in classical memory.
\end{enumerate}
After $N$ iterations, we perform classical post-processing on each pair $\left(\bigotimes_{s=1}^{n}\ket{u^s_k},\nu_{j_k}\right)$ and define
\begin{equation}\label{eq:estimator_stab}
     \hat{\omega}_{\text{stab}}^{(k)}:=\nu_{j_{k}}\bigotimes_{s=1}^{n} \left(6\pure{u_{k}^s}-2I \right).
\end{equation}
The empirical estimator is obtained as the sample average
\begin{eqnarray}
    \hat{O}_{\Phi,\text{stab}} 
    &:=& \frac{1}{N}\sum_{k=1}^{N} \hat{\omega}_{\text{stab}}^{(k)}\\
    &=& \frac{1}{N} \sum_{k=1}^{N} \nu_{j_{k}}\bigotimes_{s=1}^{n} \left(6\pure{u_{k}^s}-2I \right),
\end{eqnarray}
which is an unbiased estimator of $O_{\Phi}$.
A proof of its unbiasedness can be found in Appendix~\ref{apsec:const_stab}.

\subsubsection{$n$-qubit Pauli operator}\label{sec:main_pauli}

In the protocol based on a 2-design (Sec.~\ref{sec:main_2-design}) or a uniform tensor product of single-qubit stabilizer states (Sec.~\ref{sec:main_stab}), the unknown observable $O_{\Phi}$ is estimated by random sampling of quantum states from the corresponding ensemble. 
Here, we instead consider an alternative model where we randomly sample an $n$-qubit Pauli operator and one of its eigenstates as the input state. 

We begin by recalling the single-qubit Pauli operators and their eigenvalue decompositions:
\begin{equation}
    P_{i} = \sum_{e\in[2]} c_{i,e}\,\pure{v_{i,e}} \qquad i\in [4],
\end{equation}
where we set $P_1,P_2,P_3,P_{4}$ to be the identity and the Pauli operators $X$, $Y$, and $Z$, respectively. Their eigenstates and corresponding eigenvalues are given by
\begin{align}
\begin{alignedat}{10}
\ket{v_{1,1}} &= \ket{0}, &\quad c_{1,1} &= +1, &\quad \ket{v_{1,2}} &= \ket{1}, &\quad c_{1,2} &= +1, \\
\ket{v_{2,1}} &= \ket{+}, &\quad c_{2,1} &= +1, &\quad \ket{v_{2,2}} &= \ket{-}, &\quad c_{2,2} &= -1, \\
\ket{v_{3,1}} &= \ket{+i}, &\quad c_{3,1} &= +1, &\quad \ket{v_{3,2}} &= \ket{-i}, &\quad c_{3,2} &= -1, \\
\ket{v_{4,1}} &= \ket{0}, &\quad c_{4,1} &= +1, &\quad \ket{v_{4,2}} &= \ket{1}, &\quad c_{4,2} &= -1.
\end{alignedat}
\end{align}
From these local definitions, we define the $n$-qubit Pauli ensemble as
\begin{equation}
    \mathcal{S}_{\rm Pauli} := \left\{ \left( \frac{1}{4^n},\, 
    P_{\bm{i}} := \bigotimes_{s=1}^{n} P_{i_s} \right) : \bm{i}\in [4]^n
    \right\},
\end{equation}
where $\bm{i}:=(i_1,\dots,i_n)$. 
Unlike $\mathcal{S}_{\text{2-dgn}}$ and $\mathcal{S}_{\text{stab}}$, $\mathcal{S}_{\text{Pauli}}$ is an operator ensemble, namely the uniform ensemble over $n$-qubit Pauli operators, rather than a state ensemble.
Each $n$-qubit Pauli operator admits the eigenvalue decomposition
\begin{equation}
\begin{split}
\label{eq:eigenvalue_decompositions_of_n_pauli}
    P_{\bm{i}}&:=\sum_{\substack{\bm{e}
    :=(e_1,...,e_n)\in[2]^n}} c_{\bm{i},\bm{e}}\pure{v_{\bm{i},\bm{e}}},\\
    c_{\bm{i},\bm{e}} 
    &:= \prod_{s=1}^{n} c_{i_s,e_s}, \qquad
    \ket{v_{\bm{i},\bm{e}}}:= \bigotimes_{s=1}^{n} \ket{v_{i_s,e_s}}.
\end{split}
\end{equation}

We now consider the following sampling procedure, repeated for $k=1,...,N$, where a Pauli operator and one of its eigenstates are chosen at random:
\begin{enumerate}
    \item \textbf{Random sampling}: 
    Sample a Pauli operator $P_{\bm{i}_k}$ uniformly at random, and sample one of its eigenstates $\ket{v_{\bm{i}_k,\bm{e}_k}}$ uniformly.
    \item \textbf{Measurement}: 
    Prepare $\ket{v_{\bm{i}_k, \bm{e}_k}}$, evolve it under $\Phi$, and measure the observable $O$. Denote the measurement outcome by $j_k$.
    \item \textbf{Data storage}: 
    Store the triple $\left( P_{\bm{i}_k},\ket{v_{\bm{i}_k,\bm{e}_k}}, j_k \right)$ in classical memory.
\end{enumerate}
After repeating the above procedure $N$ times, we classically post-process each stored triple $\left( P_{\bm{i}_k},\ket{v_{\bm{i}_k,\bm{e}_k}}, j_k \right)$ as
\begin{equation}\label{eq:estimator_pauli}
    \hat{\omega}_{\text{pauli}}^{(k)} := 4^n \nu_{j_k} c_{\bm{i}_k,\bm{e}_k} P_{\bm{i}_k}.
\end{equation}
By averaging over these matrices, we obtain the empirical estimator
\begin{eqnarray}
    \hat{O}_{\Phi,\text{pauli}}
    &:=& \frac{1}{N}\sum_{k=1}^{N} 
    \hat{\omega}_{\text{pauli}}^{(k)} \\
    &=& \frac{1}{N} \sum_{k=1}^{N} 4^n \nu_{j_k} c_{\bm{i}_k,\bm{e}_k} P_{\bm{i}_k}.
\end{eqnarray}
The resulting estimator is unbiased with respect to $O_{\Phi}$; see Appendix~\ref{apsec:const_pau} for a detailed proof.

\subsection{Performance guarantees}\label{sec:main_performance_guarantee}
We establish a performance guarantee for the empirical estimator $\hat{O}_{\Phi,x}$ for each $x\in\{\text{2-dgn, stab, pauli}\}$, introduced in Sec.~\ref{sec:main_leaerning_protocol}.

\begin{theorem}[Performance guarantee for $\hat{O}_{\Phi,x}$]
\label{thm:error_bounds_for_each_x}
Let $\Phi:\Linear{\mathbb{C}^{d_{\rm in}}} \to \Linear{\mathbb{C}^{d_{\rm out}}}$ be an unknown CPTP map, and let $O \in \Herm{\mathbb{C}^{d_{\rm out}}}$ be a known Hermitian operator.
Define $O_\Phi:=\Phi^\dagger(O)$.
Let $\hat{O}_{\Phi,x}$ be the estimator defined in Sec.~\ref{sec:main_leaerning_protocol} for $x\in\{\text{2-dgn},\text{stab},\text{pauli}\}$.
Then, for each $x$, it suffices to take $N$ at least as follows to guarantee $\|\hat{O}_{\Phi,x}-O_{\Phi}\|_{\infty}\le \eta$ with probability at least $1-\delta$.

\noindent{$x=\text{2-dgn}$:}
\begin{equation}
    \frac{2 \max\{1,\|O\|^2_{\infty}\}(d_{\rm in}^2+1) (d_{\rm in}+1+\eta/3) } {\eta^2} \ln\left(\frac{2d_{\rm in}}{\delta}\right),\notag
\end{equation}
\noindent{$x=\text{stab}$:}
\begin{equation}
    \frac{2 \max\{1,\|O\|_{\infty}^2\}(d_{\rm in}^{3.33}+1+(\eta/3)(d_{\rm in}^2+1)) } {\eta^2} \ln{\left(\frac{2d_{\rm in}}{\delta}\right)},\notag
\end{equation}
\noindent{$x=\text{pauli}$:}
\begin{equation}
    \frac{2 \max\{1,\|O\|_{\infty}^2\}(d_{\rm in}^4+1+(\eta/3)(d_{\rm in}^2+1)) } {\eta^2} \ln{\left(\frac{2d_{\rm in}}{\delta}\right)}.\notag
\end{equation}
\end{theorem}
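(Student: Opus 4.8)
The plan is to treat each estimator $\hat{O}_{\Phi,x}$ as an empirical average $\frac{1}{N}\sum_{k=1}^{N}\hat{\omega}_x^{(k)}$ of i.i.d.\ Hermitian matrices whose common mean is $O_\Phi$ (unbiasedness is already asserted in Sec.~\ref{sec:main_leaerning_protocol}, with the elementary verifications deferred to the appendix), and then to apply the matrix Bernstein inequality~\cite{tropp2015introduction} to the centered summands $Z_k := \hat{\omega}_x^{(k)} - O_\Phi$. Matrix Bernstein for $d_{\rm in}\times d_{\rm in}$ Hermitian matrices needs two inputs: a uniform norm bound $\|Z_k\|_\infty \le R_x$ and a variance bound $\bigl\|\sum_{k}\mathbb{E}[Z_k^2]\bigr\|_\infty = N\bigl\|\mathbb{E}[Z_1^2]\bigr\|_\infty \le N v_x$. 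With these it yields $\Pr\!\bigl[\,\|\hat{O}_{\Phi,x}-O_\Phi\|_\infty \ge \epsilon\,\bigr] \le 2 d_{\rm in}\exp\!\bigl(-\tfrac{N\epsilon^2/2}{\,v_x + R_x\epsilon/3\,}\bigr)$, and setting the right-hand side to $\delta$ and solving for $N$ produces $N = \tfrac{2(v_x + R_x\epsilon/3)}{\epsilon^2}\ln(d_{\rm in}/\delta)$, which is exactly the claimed form once the pairs $(v_x,R_x)$ are evaluated.

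For the norm bound, in every protocol $|\nu_{j_k}|\le\|O\|_\infty$ and $\|O_\Phi\|_\infty = \|\Phi^\dagger(O)\|_\infty \le \|O\|_\infty$ (since $\Phi^\dagger$ is unital and positive), so by the triangle inequality $\|Z_k\|_\infty$ is controlled by the operator norm of the data-dependent matrix in the post-processing rule. A one-line eigenvalue computation gives $\|d_{\rm in}(d_{\rm in}+1)\pure{\psi}-d_{\rm in}I_{d_{\rm in}}\|_\infty = d_{\rm in}^2$, $\bigl\|\bigotimes_{s}(6\pure{u^s}-2I)\bigr\|_\infty = 4^n = d_{\rm in}^2$, and $\|4^n P_{\bm i}\|_\infty = 4^n = d_{\rm in}^2$, so that $R_x = \mathcal{O}\!\bigl(\|O\|_\infty(d_{\rm in}^2+1)\bigr)$ for all three ensembles.

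The variance bound is the core of the argument. Since $Z_1$ is Hermitian, $0\preceq\mathbb{E}[Z_1^2] = \mathbb{E}[(\hat{\omega}_x^{(1)})^2] - O_\Phi^2 \preceq \mathbb{E}[(\hat{\omega}_x^{(1)})^2]$, so it suffices to bound $\|\mathbb{E}[(\hat{\omega}_x^{(1)})^2]\|_\infty$. For $x=\text{2-dgn}$ I would square the post-processing matrix, evaluate the expectation using the state $2$-design moment identity $\mathbb{E}_\psi[\mathrm{tr}(\pure{\psi}A)\pure{\psi}] = (A+\mathrm{tr}(A)I)/(d_{\rm in}(d_{\rm in}+1))$ with $A=\Phi^\dagger(O^2)$ (conditioning on the input state turns the measurement factor $\nu_{j_1}^2$ into $\mathrm{tr}[\pure{\psi}\Phi^\dagger(O^2)]$), and then invoke $\|\Phi^\dagger(O^2)\|_\infty\le\|O\|_\infty^2$ and $\mathrm{tr}[\Phi^\dagger(O^2)]\le d_{\rm in}\|O\|_\infty^2$; this gives $v_{\text{2-dgn}} = \mathcal{O}\!\bigl(\max\{1,\|O\|_\infty^2\}(d_{\rm in}^2+1)(d_{\rm in}+1)\bigr)$. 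For $x=\text{pauli}$ the computation is immediate: $(\hat{\omega}_{\text{pauli}}^{(1)})^2 = 16^n\,\nu_{j_1}^2\,c_{\bm i_1,\bm e_1}^2\,P_{\bm i_1}^2 = 16^n\nu_{j_1}^2 I \preceq 16^n\|O\|_\infty^2 I$, hence $v_{\text{pauli}} = \mathcal{O}(\|O\|_\infty^2 d_{\rm in}^4)$.

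The stabilizer ensemble is the step I expect to be the main obstacle, precisely because the $n$-fold tensor product of the single-qubit stabilizer $2$-design is \emph{not} an $n$-qubit $2$-design, so no global moment identity is available to expand $\mathbb{E}[(\hat{\omega}_{\text{stab}}^{(1)})^2]$. My plan is to avoid needing one: bound $\nu_{j_1}^2\le\|O\|_\infty^2$ pointwise and use that each single-qubit factor $(6\pure{u}-2I)^2 = 12\pure{u}+4I$ is positive semidefinite, so $(\hat{\omega}_{\text{stab}}^{(1)})^2 \preceq \|O\|_\infty^2\bigotimes_{s}(6\pure{u_1^s}-2I)^2$; taking expectations (the $u_1^s$ being independent and uniform) and applying the single-qubit $1$-design identity $\mathbb{E}_u[\pure{u}]=I/2$ factorwise yields $\mathbb{E}[(\hat{\omega}_{\text{stab}}^{(1)})^2]\preceq\|O\|_\infty^2\,(12\cdot\tfrac12+4)^n I = \|O\|_\infty^2\,10^n I$, i.e.\ $v_{\text{stab}} = \mathcal{O}(\|O\|_\infty^2 d_{\rm in}^{\log_2 10})$ with $\log_2 10 \approx 3.33$. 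Substituting each $(v_x,R_x)$ into the Bernstein tail and solving for $N$ reproduces the three displayed bounds; the $\max\{1,\|O\|_\infty^2\}$ prefactor appears because it simultaneously dominates the $\|O\|_\infty^2$ entering $v_x$ and the $\|O\|_\infty$ entering $R_x$. The genuinely delicate point is controlling the correlation between the measurement outcome $\nu_{j_k}$ and the random input in the stabilizer case: the positive-semidefiniteness of the per-qubit squares is exactly what makes the crude bound $\nu_{j_k}^2\le\|O\|_\infty^2$ sufficient and sidesteps the absence of a global design.
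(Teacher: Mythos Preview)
Your proposal is correct and follows the same matrix-Bernstein framework as the paper: center the i.i.d.\ single-shot matrices, bound the uniform norm $R_x$ and the second moment $v_x$, and invert the tail. Your norm bounds and your treatment of the $2$-design and Pauli variances match the paper's (the paper computes $\mathbb{E}[\omega_{\text{pauli}}^2]=8^n\mathrm{tr}[\Phi^\dagger(O^2)]I$ exactly before bounding, but lands on the same $d_{\rm in}^4\|O\|_\infty^2$).

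The one genuine difference is the stabilizer variance. The paper keeps the correlation between the outcome and the input: it writes $\mathbb{E}[\omega_{\rm stab}^2]$ with the factor $\mathrm{tr}[\Phi^\dagger(O^2)\pure{u}]$ intact, expands $\bigotimes_s(3\pure{u^s}+I)$ over subsets $\alpha\subseteq[n]$ using the per-qubit $2$-design identity, obtains $2^n\sum_{\alpha}2^{|\alpha|}\,\mathrm{tr}_\alpha[\Phi^\dagger(O^2)]\otimes I_\alpha$, and only then bounds term-by-term to reach $10^n\|O\|_\infty^2$. Your route---bound $\nu_j^2\le\|O\|_\infty^2$ pointwise, use that each $(6\pure{u^s}-2I)^2=12\pure{u^s}+4I\succeq 0$ so the tensor product is PSD, then take the factorized expectation via the $1$-design property---is strictly simpler and yields the identical bound. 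What the paper's longer calculation buys is an exact expression for the second moment before the final inequality, which could in principle be sharpened for structured $\Phi^\dagger(O^2)$; what your argument buys is that it never needs the global-design structure at all, exactly the point you flag as the obstacle.

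One small fix: you quote Bernstein with a $2d_{\rm in}$ prefactor, which would produce $\ln(2d_{\rm in}/\delta)$ rather than the stated $\ln(d_{\rm in}/\delta)$. Since your summands are Hermitian, use the Hermitian form (prefactor $d_{\rm in}$, as in the paper's Lemma~\ref{lemma:matrix_bernstein_inequality}) to recover the exact constants.
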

For a detailed proof, see Appendix~\ref{appsec:performance_guarantee}.

\section{Polynomial-scaling circuit knitting for tree quantum circuits}\label{sec:poly_circuit_knitting_tree}

In this section, we turn the single-cut constructions of Theorems~\ref{thm:zero_cost_timelike_cut_with_an_approximated_unitary} and~\ref{thm:zero_cost_proccessing} into full circuit-knitting protocols for tree-structured circuits. 
Using the observable-learning procedures of Sec.~\ref{sec:main_leaerning_protocol}, we show that the measurement cost is polynomial in the actual number of cut edges.
In Sec.~\ref{sec:main_tree_setup}, we formulate the general problem setup.
In Sec.~\ref{sec:2-layer-tree}, we first analyze the two-layer star case, which illustrates how local learning turns the multiplicative rescaling overhead of wire cutting into a controllable additive bias.
In Sec.~\ref{sec:main-multi-layer}, we extend the analysis to arbitrary finite rooted trees, including nonuniform branching and nonuniform cut dimensions.

\subsection{General problem setup}\label{sec:main_tree_setup}

\begin{figure}[t]
\centering
\begin{center}
 \includegraphics[width=85mm]{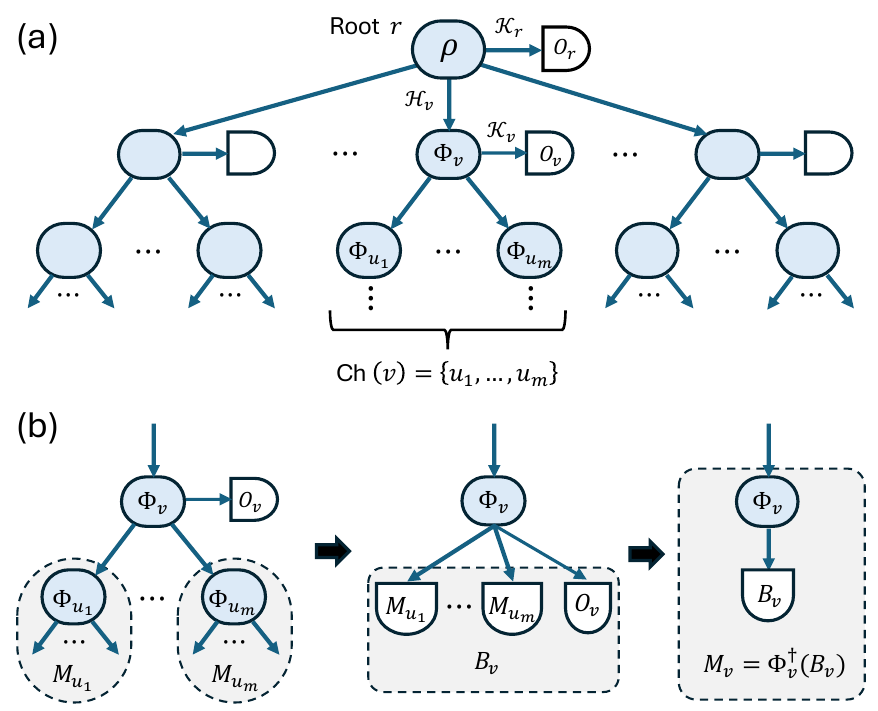}
\end{center}
\vspace{-15pt}
\caption{
General tree setting and local bottom-up recursion.
(a) A rooted tree of local components: the root carries the input state $\rho$, while each non-root vertex $v$ represents a local CPTP map $\Phi_v$ with incoming cut system $\mathcal H_v$.
The children of $v$ are denoted by $\mathrm{Ch}(v)=\{u_1,\ldots,u_m\}$.
(b) Local Heisenberg-picture recursion at a vertex $v$.
}\label{fig:notation}
\end{figure}

We formulate tree-structured circuit cutting on a finite rooted tree.
Let $T=(V,E)$ be a finite rooted tree with a unique root $r$, and write
\begin{equation}
    V^\circ:=V\setminus\{r\},\qquad K:=|V^\circ|.
\end{equation}
Here, $T$ represents the coarse-grained tree associated with the chosen cut locations. 
Each edge of $T$ carries a cut system connecting two neighboring local components.
Thus, $K$ is the number of selected cuts, equivalently the number of non-root local components.
For each non-root vertex $v\in V^\circ$, let $\mathcal H_v$ denote the Hilbert space carried by the edge from the parent
of $v$ to $v$, and let
\begin{equation}
    d_v:=\dim\mathcal H_v,\qquad d_v\le d .
\end{equation}
For each vertex $v\in V$, we allow a terminal output register $\mathcal K_v$ equipped with a known Hermitian observable
\begin{equation}
    O_v\in \Herm{\mathcal{K}_v},
    \qquad 
    \|O_v\|_\infty \leq 1 .
\end{equation}
A typical example of $O_{v}$ is a Pauli string such as $X \otimes Y \otimes \cdots \otimes Z$.
If no terminal observable is attached to $v$, we set $\mathcal K_v=\mathbb C$ and $O_v=1$.
This convention allows branches to
terminate at different depths and also allows observables to be attached to intermediate clusters.

Let $\mathrm{Ch}(v)$ denote the set of children of $v$; in particular, $\mathrm{Ch}(v)=\emptyset$ when $v$ is a leaf.
Each non-root
vertex $v\in V^\circ$ (circuit node in Fig.~\ref{fig:notation}(a)) is associated with an unknown CPTP map
\begin{equation}
    \Phi_v: \mathsf{L}\!\left(\mathcal{H}_v\right) 
    \to
    \mathsf{L}
    \left(
    \mathcal K_v\otimes \bigotimes_{u\in{\rm Ch}(v)}\mathcal{H}_u \right).
\end{equation} 
The empty tensor product is understood as $\mathbb C$.
The root node corresponds to an unknown quantum state
\begin{equation}
    \rho\in \mathsf{D}\!\left( \mathcal K_r\otimes \bigotimes_{u\in{\rm Ch}(r)}\mathcal H_u \right).
\end{equation} 

Let $\rho_{\rm tree}$ be the state generated by composing these local maps along the tree, before the final measurement of the terminal observables. 
Let
\begin{equation}
    O:=\bigotimes_{v\in V}O_v
\end{equation}
be the corresponding product observable. The goal is to estimate
\begin{equation}
    \mu:={\rm tr}[O\rho_{\rm tree}]
\end{equation}
under the constraint that we can access only local devices implementing $\rho$ and the maps $\{\Phi_v\}_{v\in V^\circ}$, together with local measurements required by the protocol.

Here we focus on a single product observable, which is a standard setting in circuit-cutting analyses of expectation-value estimation and includes individual Pauli strings. More general observables of the form $H=\sum_\alpha a_\alpha O^{(\alpha)}$, with $O^{(\alpha)}=\bigotimes_{v\in V}O_v^{(\alpha)}$, can be handled by linearity: one first allocates the measurement budget across the relevant product terms, for example in proportion to $|a_\alpha|$, then runs the protocol for each term and combines the resulting estimates with the coefficients $a_\alpha$.

For later use, we introduce the corresponding downstream and effective observables in the Heisenberg picture; see Fig.~\ref{fig:notation}(b). For each non-root vertex $v\in V^\circ$, define recursively the downstream observable
\begin{equation}
    B_v
    :=
    O_v\otimes
    \bigotimes_{u\in{\rm Ch}(v)} M_u ,
\end{equation}
with the convention that the empty tensor product is $1$, and set
\begin{equation}
    M_v:=\Phi_v^\dagger(B_v).
\end{equation}
Here, $B_v$ acts on the output space of $\Phi_v$, whereas $M_v$ acts on the incoming edge Hilbert space $\mathcal H_v$.
At the root, define
\begin{equation}
    B_r
    :=
    O_r\otimes
    \bigotimes_{u\in{\rm Ch}(r)} M_u .
\end{equation}
Then, the target expectation value can equivalently be written as
\begin{equation}
    \mu={\rm tr}[\rho B_r].
\end{equation}

\subsection{2-layer tree-structured quantum circuit}\label{sec:2-layer-tree}

\subsubsection{Estimation problem (2-layer)}

As a simple example illustrating the basic idea, we consider the two-layer, or star-shaped, special case. 
In this case, the root $r$ has $R$ children, which we label by $k=1,...,R$. Thus,
\begin{equation}
    V^{\circ}=\left\{1,\dots,R \right\},
    \qquad
    K=R.
\end{equation}
For simplicity, as in the standard two-layer setting, we assume that no terminal observable is attached to the root, i.e., $\mathcal{K}_r=\mathbb{C}$ and $O_r =1$. Each child node $k$ is a leaf and corresponds to a local CPTP map
\begin{equation}
    \Phi_k: \Linear{\mathcal{H}_k} \rightarrow \Linear{\mathcal{K}_k},
\end{equation}
where $\dim \mathcal H_k \leq d$. 
The root corresponds to an unknown quantum state $\rho\in \Density{\bigotimes_{k=1}^{R} \mathcal H_k}$ and the state generated by the two-layer tree is 
\begin{equation}
    \rho_{\rm tree} = \left( \Phi_1 \otimes \cdots \otimes \Phi_R \right)(\rho).
\end{equation}
We consider a product observable 
\begin{equation}
    O:=O_1 \otimes \cdots \otimes O_R,
    \qquad
    \|O_k\|_{\infty}\leq1\quad (k=1,...,R),
\end{equation}
and our goal is to estimate $\mu:=\mathrm{tr}[O\rho_{\rm tree}]$ within additive error $\epsilon\in(0,1]$ with success probability at least $1-\delta$. 
We assume that the available quantum hardware is strictly local in the circuit-knitting sense: we can access devices implementing the root state $\rho$ and each local map $\Phi_k$, but we do not require a coherent device implementing the full state $\rho_{\rm tree}$.

We note that, in the notation of Sec.~\ref{sec:main_tree_setup}, each child $k$ is a leaf. Thus, the downstream observable at $k$ is simply $B_k=O_k$
and the corresponding effective observable on the incoming edge is
\begin{equation}
    M_k:=\Phi_k^\dagger(O_k).
\end{equation}
Therefore, the root downstream observable is $B_r=\bigotimes_{k=1}^R M_k$ and
\begin{equation}\label{eq:mu_another}
    \mu=
    {\rm tr}\!\left[
    \left(
    \bigotimes_{k=1}^R M_k
    \right)\rho
    \right].
\end{equation}

\subsubsection{Estimation strategies}
We now analyze how this estimation task can be solved using observable-adaptive cutting protocols. 
We first examine Protocol A, the measure-and-prepare cut protocol derived from Theorem~\ref{thm:zero_cost_timelike_cut_with_an_approximated_unitary}. Recall that Theorem~\ref{thm:zero_cost_timelike_cut_with_an_approximated_unitary} guarantees the existence of an approximate MP channel $\mathcal{M}_{\rm apr}$ whose induced bias is controlled by the operator-norm error of an estimated effective observable. 
Combining this with the learning protocol in Sec.~\ref{sec:main_leaerning_protocol}, we obtain the following procedure.

\vspace{0.6em}
\noindent\textbf{Protocol A: measure-and-prepare implementation.}

\begin{description}
    \item[(a-1)] Estimate the effective observable $M_{k}:=\Phi_{k}^{\dagger}(O_{k})$ for each local subsystem $k=1,...,R$ using $N_{\mathrm{a},k}$ shots, and obtain an estimator $\tilde{M}_{k}$.
    \item[(a-2)] Diagonalize each estimator as 
    \begin{equation}
        \tilde{M}_k = \tilde{V}_k \tilde{D}_k \tilde{V}_k^{\dagger},
    \end{equation} 
    and construct the MP channel
    \begin{equation}
        \mathcal{M}_{\mathrm{apr}}^{(k)}(\bullet) := \sum_j \mathrm{tr} \left[ \tilde{V}_k \pure{j} \tilde{V}_k^{\dagger} \bullet \right] \tilde{V}_k \pure{j} \tilde{V}^{\dagger}_k,
    \end{equation}
    where $\{\ket{j}\}$ is the computational basis of $\mathcal{H}_k$.
    \item[(a-3)] Replace the identity channels connecting the root state $\rho$ and the local map $\Phi_k$ 
    by $\mathcal{M}_{\mathrm{apr}}^{(k)}$ for each $k=1,...,R$. 
    Execute the resulting quantum circuit for $N_{\mathrm{a},0}$ shots, and take the empirical mean $\hat{\mu}_{\rm a}$ as the estimator of $\mu$.
\end{description}
Step~\textbf{(a-3)} is implemented by measure-and-prepare operations across the cuts and therefore does not require coherent operations between the root device and the devices implementing $\Phi_k$. The estimator $\hat\mu_{\rm a}$, however, is biased because the MP channels are constructed from approximate effective observables. 
Thus the learning accuracies in Step~\textbf{(a-1)} must be chosen so that the accumulated bias is at most $\mathcal O(\epsilon)$.

Next, we consider Protocol~B, the post-processing protocol derived from Theorem~\ref{thm:zero_cost_proccessing}.
In this approach, the approximate eigenvalues of $\tilde M_k$ are also used, and the final estimation is performed by measuring the root state $\rho$ directly with classical post-processing.

\vspace{0.6em}
\noindent\textbf{Protocol B: post-processing implementation.}

\begin{description}
    \item[(b-1)] Estimate the effective observable $M_{k}:=\Phi_{k}^{\dagger}(O_{k})$ for each local subsystem $k=1,...,R$ using $N_{\mathrm{b},k}$ shots, and obtain an estimator $\tilde{M}_{k}$.
    \item[(b-2)] Diagonalize each estimator as
    \begin{equation}
        \tilde{M}_k:=\tilde{V}_k \tilde{D}_k \tilde{V}_k^{\dagger},\qquad
        \tilde{D}_k:=\sum_j \tilde{\lambda}_{k,j} \pure{j},
    \end{equation}
    and construct the measurement with classical post-processing
    \begin{equation}
        \mathcal{C}_{\mathrm{apr}}^{(k)}(\bullet) := \sum_j \mathrm{tr} \left[ \tilde{V}_k \pure{j} \tilde{V}^{\dagger}_k \bullet \right] \tilde{\lambda}_{k,j}.
    \end{equation}
    \item[(b-3)] Measure the root state $\rho$ in the product basis specified by $\{\tilde V_k\}_{k=1}^{R}$, weight each outcome $(j_1,...,j_R)$ by
    \begin{equation}
        \prod_{k=1}^{R} \tilde\lambda_{k,j_k},
    \end{equation}
    and take the empirical mean $\hat{\mu}_{\rm b}$ over $N_{\rm b,0}$ shots as the estimator.
\end{description}
The conditional mean of the estimator in Step~\textbf{(b-3)} is
\begin{equation}
    {\rm tr}\!\left[
    \left(
    \bigotimes_{k=1}^R \tilde M_k
    \right)\rho
    \right].
\end{equation}
Thus, this protocol also has a bias, now coming from replacing $M_k$ by $\tilde M_k$.
In addition, the range of the single-shot output is governed by $\|\bigotimes_k \tilde M_k\|_\infty$. 
The analysis below shows that by choosing the local learning accuracy as $\mathcal{O}(\epsilon/R)$, this range remains bounded by a constant, so no exponential sampling overhead appears in the two-layer protocol.

\subsubsection{Measurement complexity analysis}
In what follows, we upper bound the number of measurements required by Protocols A and B.
Let
\begin{equation}
    N_{{\rm a,tot}}
    =
    \sum_{k=0}^R N_{{\rm a},k},
    \qquad
    N_{{\rm b,tot}}
    =
    \sum_{k=0}^R N_{{\rm b},k},
\end{equation}
denote the total number of measurements used by Protocols A and B, respectively.

\begin{theorem}[Two-layer tree circuits]
\label{thm:2-layer}
Protocols A and B estimate $\mu={\rm tr}[O\rho_{\rm tree}]$ up to additive error $\epsilon\in(0,1]$ with probability at least $1-\delta$, provided that the
numbers of measurements satisfy, for $k=1,\ldots,R$,
\begin{equation}
\begin{split}
    N_{{\rm a},k}
    =
    N_{{\rm b},k}
    &=
    \mathcal O\!\left(
    \frac{d^3R^2}{\epsilon^2}
    \ln\!\left(
    \frac{Rd}{\delta}
    \right)
    \right),\\
    N_{{\rm a},0}
    =
    N_{{\rm b},0}
    &=
    \mathcal O\!\left(
    \frac{1}{\epsilon^2}
    \ln\!\left(
    \frac{1}{\delta}
    \right)
    \right).
\end{split}
\end{equation}
Consequently,
\begin{equation}
    N_{{\rm a,tot}}
    =
    N_{{\rm b,tot}}
    =
    \mathcal O\!\left(
    \frac{d^3R^3}{\epsilon^2}
    \ln\!\left(
    \frac{Rd}{\delta}
    \right)
    \right).
\end{equation}
\end{theorem}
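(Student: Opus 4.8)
The plan is to split the estimation error of each protocol into a deterministic \emph{bias}, arising from replacing the $R$ inter-cluster wires by the approximate objects $\mathcal{M}_{\rm apr}^{(k)}$ (or $\mathcal{C}_{\rm apr}^{(k)}$), and a \emph{statistical} fluctuation, arising from the finite sample average in Step~\textbf{(a-3)} (or \textbf{(b-3)}). The bias is controlled by a hybrid (telescoping) argument over the $R$ cut locations; the fluctuation by Hoeffding's inequality; the cost of building accurate estimators $\tilde M_k$ by the performance guarantee Theorem~\ref{thm:error_bounds_for_each_x} (specifically its $2$-design instance, which yields the $d^3$ dependence); and a union bound over the $R+1$ randomized subroutines assembles the pieces.

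For the bias, write $M_k:=\Phi_k^{\dagger}(O_k)$ and use that $O=\bigotimes_k O_k$ and $\Phi^{[1]}=\bigotimes_k\Phi_k$ are product maps, so $\braket{O}_{\rho_{\rm tree}}=\mathrm{tr}\!\bigl[\bigl(\bigotimes_k M_k\bigr)\rho\bigr]$. For the protocol based on Theorem~\ref{thm:zero_cost_timelike_cut_with_an_approximated_unitary}, let $H_m:=\mathrm{tr}\!\bigl[\bigl(\bigotimes_{j\le m}\mathcal{M}_{\rm apr}^{(j)\dagger}(M_j)\otimes\bigotimes_{j>m}M_j\bigr)\rho\bigr]$, so that $H_0=\braket{O}_{\rho_{\rm tree}}$ and $H_R$ is the mean value targeted by $\hat\mu_{\rm a}$. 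The telescoping difference $H_{m-1}-H_m$ equals $\mathrm{tr}\bigl[(M_m-\mathcal{M}_{\rm apr}^{(m)\dagger}(M_m))\,\tau_m\bigr]$, where $\tau_m$ is the partial contraction of $\rho$ against the other (already replaced or still-intact) factors; applying Theorem~\ref{thm:zero_cost_timelike_cut_with_an_approximated_unitary} with $X=\tau_m$, $\Phi=\Phi_m$, $O=O_m$ bounds this by $2\epsilon_k\|\tau_m\|_1$, where $\epsilon_k$ is the operator-norm accuracy of $\tilde M_k$. The elementary estimate $\|\tau_m\|_1\le1$ then closes the bound: adjoints of CPTP maps and of MP channels are unital, hence operator-norm contractions, so $\|M_j\|_\infty\le\|O_j\|_\infty\le1$ and $\|\mathcal{M}_{\rm apr}^{(j)\dagger}(M_j)\|_\infty\le1$, while $\|\mathrm{tr}_{\neq m}[(C\otimes I)\rho]\|_1\le\|C\|_\infty$ for any operator $C$ on the complementary subsystems. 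Thus the total bias is at most $2\sum_k\epsilon_k$, which we force below $\epsilon/2$ by setting $\epsilon_k=\Theta(\epsilon/R)$. The identical argument handles the Theorem~\ref{thm:zero_cost_proccessing}-based protocol with $\tilde M_j$ in place of $\mathcal{M}_{\rm apr}^{(j)\dagger}(M_j)$: one uses the Hölder bound $|\mathrm{tr}[(M_m-\tilde M_m)\tau_m]|\le\epsilon_k\|\tau_m\|_1$ together with $\|\tilde M_j\|_\infty\le1+\epsilon_j$ from Eq.~\eqref{eq:maximal_norm}, and absorbs the harmless product $\prod_j(1+\epsilon_j)\le e^{\sum_j\epsilon_j}=\mathcal{O}(1)$ into constants.

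For the statistical term, each shot of Step~\textbf{(a-3)} returns the product of the measured eigenvalues of the $O_k$'s, a random variable in $[-1,1]$, so Hoeffding gives that $N_{\rm a,0}=\mathcal{O}(\epsilon^{-2}\ln(1/\delta_0))$ shots suffice for deviation $\le\epsilon/2$ with probability $1-\delta_0$; each shot of Step~\textbf{(b-3)} lies in $[-\prod_k(1+\epsilon_k),\prod_k(1+\epsilon_k)]\subseteq[-\mathcal{O}(1),\mathcal{O}(1)]$ by Eq.~\eqref{eq:maximal_norm}, giving the same order for $N_{\rm b,0}$. For the tomography, to guarantee $\|\tilde M_k-M_k\|_\infty\le\epsilon_k=\Theta(\epsilon/R)$ with probability $1-\delta_k$ on the $k$-th local device (whose input register has $n\le\log_2 d$ qubits, hence $d_{\rm in}\le d$, and $\|O_k\|_\infty\le1$), Theorem~\ref{thm:error_bounds_for_each_x} for the $2$-design ensemble requires $N_{\rm a,k}=\mathcal{O}(d^3\epsilon_k^{-2}\ln(d/\delta_k))=\mathcal{O}(d^3R^2\epsilon^{-2}\ln(d/\delta_k))$. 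Taking $\delta_k=\delta/(R+1)$ for $k=0,1,\dots,R$ and union-bounding over the $R$ tomographies and the final estimation gives, on the good event, $|\hat\mu-\braket{O}_{\rho_{\rm tree}}|\le\epsilon/2+\epsilon/2=\epsilon$ with probability $\ge1-\delta$, together with the stated per-step counts and their sum; the bookkeeping is literally the same for the Theorem~\ref{thm:zero_cost_proccessing}-based protocol, which is why $N_{\rm a,k}=N_{\rm b,k}$ and $N_{\rm a,0}=N_{\rm b,0}$.

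I expect the main obstacle to be the bias step, and specifically the verification that the operator $\tau_m$ fed into Theorem~\ref{thm:zero_cost_timelike_cut_with_an_approximated_unitary} has trace norm at most $1$ — that is, that the (generally non-positive, unnormalized) partial contraction of the global state against the already-replaced factors does not blow up as more wires are cut. This relies on pairing the unitality/contractivity of the relevant adjoint maps with the partial-trace inequality above, and on the fact that Theorem~\ref{thm:zero_cost_timelike_cut_with_an_approximated_unitary} carries a $\|X\|_1$ prefactor (so it applies to arbitrary operators, not merely states) — which is precisely what lets the telescoping over cuts close with only additive, rather than compounding, overhead.
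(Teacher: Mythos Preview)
Your proposal is correct and follows essentially the same strategy as the paper: split the error into bias plus statistical fluctuation, control the bias by a telescoping argument over the $R$ cut locations using Theorem~\ref{thm:zero_cost_timelike_cut_with_an_approximated_unitary} (respectively Theorem~\ref{thm:zero_cost_proccessing}), control the fluctuation by Hoeffding, invoke the $2$-design case of Theorem~\ref{thm:error_bounds_for_each_x} for the tomography cost, and union-bound over the $R+1$ subroutines.

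The only stylistic difference is the frame in which you telescope. The paper stays entirely in the Heisenberg picture: it bounds $\bigl\|\bigotimes_k\mathcal{M}_{\rm apr}^{(k)}(M_k)-\bigotimes_k M_k\bigr\|_\infty$ directly via Lemma~\ref{lemma:bound_the_op_norm_of_products}, using the pinching-contractivity estimate $\|\mathcal{M}_{\rm apr}^{(k)}(M_k)-M_k\|_\infty\le 2\eta$ (which is the content of the proof of Theorem~\ref{thm:zero_cost_timelike_cut_with_an_approximated_unitary}) together with $\|\mathcal{M}_{\rm apr}^{(l)}(M_l)\|_\infty\le\|M_l\|_\infty\le1$. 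You instead telescope in a dual Schr\"odinger-type picture, introducing the partial contraction $\tau_m$ and invoking Theorem~\ref{thm:zero_cost_timelike_cut_with_an_approximated_unitary} as a black box with $X=\tau_m$, which forces you to check the trace-norm bound $\|\tau_m\|_1\le\|C\|_\infty\le1$. Both routes yield the same $2R\eta$ (respectively $R\xi(1+\xi)^{R-1}\le\mathcal{O}(R\xi)$) bias bound; the paper's version avoids the intermediate $\tau_m$ object, while yours is slightly more modular in that it reuses Theorem~\ref{thm:zero_cost_timelike_cut_with_an_approximated_unitary} verbatim rather than re-deriving its operator-norm core. For protocol~(b) the paper is a bit more explicit about the conditional application of Hoeffding on the good-tomography event (since the range $[-\prod_k(1+\epsilon_k),\prod_k(1+\epsilon_k)]$ depends on the learned $\tilde M_k$), but your sketch already anticipates this and the bound $\prod_k(1+\epsilon_k)=\mathcal{O}(1)$ is exactly what the paper uses.
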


We remark that the above measurement complexity assumes the use of the learning procedure
based on a $2$-design ensemble (Sec.~\ref{sec:main_2-design}). If we replace
it with tensor-product single-qubit stabilizer states
(Sec.~\ref{sec:main_stab}) or $n$-qubit Pauli operators
(Sec.~\ref{sec:main_pauli}), the total number of measurements becomes
\begin{equation}
    N_{{\rm a,tot}}
    =
    N_{{\rm b,tot}}
    =
    \mathcal O\!\left(
    \frac{d^{3.33}R^3}{\epsilon^2}
    \ln\!\left(
    \frac{Rd}{\delta}
    \right)
    \right)
\end{equation}
and
\begin{equation}
    N_{{\rm a,tot}}
    =
    N_{{\rm b,tot}}
    =
    \mathcal O\!\left(
    \frac{d^4R^3}{\epsilon^2}
    \ln\!\left(
    \frac{Rd}{\delta}
    \right)
    \right),
\end{equation}
respectively.

Both Protocols A and B fit the circuit-knitting/cutting setting. They require access only to the local device preparing the root state $\rho$ and to the local devices implementing $\{\Phi_k\}_{k=1}^R$. 
No coherent implementation of the
full state $\rho_{\rm tree}$ is required; the devices are combined only through classical communication and classical post-processing.

In the following, we sketch the proof of Theorem~\ref{thm:2-layer}; a detailed
version is provided in Appendix~\ref{sec:proof_2_layer}.

\begin{proof}[Proof sketch of Theorem~\ref{thm:2-layer}]
We first analyze Protocol A and then Protocol B.

\vspace{0.5em}
\noindent\textbf{1. Analysis of Protocol A.}
\vspace{0.1em}

Assume that the learned effective observables in Step~\textbf{(a-1)} satisfy 
\begin{equation}
    \|\tilde M_k-M_k\|_\infty\le \eta_{\rm a},
    \qquad
    k=1,\ldots,R.
\end{equation}
Under this assumption, the bias of Step~\textbf{(a-3)} is
\begin{equation}
    {\rm Bias}_{\rm a}
    :=
    \left|
    {\rm tr}\!\left[
    O
    \left(
    \bigotimes_{k=1}^R
    \Phi_k\circ \mathcal M_{\rm apr}^{(k)}
    \right)(\rho)
    \right]
    - \mu \right|.
\end{equation}
Using the telescoping identity
(Lemma~\ref{lemma:bound_the_op_norm_of_products}) and
Theorem~\ref{thm:zero_cost_timelike_cut_with_an_approximated_unitary}, we can express the bias solely in terms of the quantity $\eta_{\rm a}$ as
\begin{equation}
    {\rm Bias}_{\rm a} \leq 2R\eta_{\rm a}.
\end{equation}
Thus, by choosing $\eta_{\rm a}=\frac{9\epsilon}{20R}$, the total bias ${\rm Bias}_{\rm a}$ is at most $\frac{9\epsilon}{10}$.
As a consequence, after implementing the MP channels $\mathcal{M}_{\rm apr}^{(k)}$ at $R$ locations, an additional estimation with accuracy $\frac{\epsilon}{10}$ suffices to ensure that the overall estimation error is at most $\epsilon$:~\footnote{The $\frac{9\epsilon}{10}$--$\frac{\epsilon}{10}$ split is only for concreteness and for consistency with the numerical estimates in Sec.~\ref{sec:numerics}; any fixed constant split gives the same asymptotic scaling.}
\begin{eqnarray}
    \left| \hat{\mu}_{\rm a} - \mu \right|
    \leq \mathrm{Bias}_{\rm a} + \frac{\epsilon}{10} \leq \epsilon.
\end{eqnarray}
We have identified sufficient events under which the final estimator
$\hat\mu_{\rm a}$ achieves additive error at most $\epsilon$: 
the sufficient event for Step~\textbf{(a-1)} is
\begin{equation}
    \bigcap_{k=1}^R
    \left\{
    \|\tilde M_k-M_k\|_\infty\le \eta_{\rm a}
    \right\},\quad \eta_{\rm a}=\frac{9\epsilon}{20R}
\end{equation}
and the sufficient event for Step~\textbf{(a-3)} is
\begin{equation}
    E_{0}^{\rm a}:=\left\{\left|
    \hat\mu_{\rm a}
    -
    {\rm tr}\!\left[
    O
    \left(
    \bigotimes_{k=1}^R
    \Phi_k\circ \mathcal M_{\rm apr}^{(k)}
    \right)(\rho)
    \right]
    \right|
    \le
    \frac{\epsilon}{10} \right\}.
\end{equation}
The last statistical event $E_{0}^{\rm a}$ is conditioned on the learned observables $\{\tilde M_k\}$, but each single-shot outcome in Step~\textbf{(a-3)} is
bounded in $[-1,1]$, because the final measured observable is $O=\bigotimes_{k=1}^{R} O_k$ with $\|O_k\|_\infty\leq1$. 
Thus, Hoeffding's inequality holds uniformly over the learning outcomes, and the conditioning can be removed by the tower property.

By assigning failure probability $\frac{\delta}{2R}$ to each of the $R$ learning
events and $\frac{\delta}{2}$ to the final statistical event, Theorem~4 gives
\begin{equation}
    N_{{\rm a},k}
    =
    \mathcal O\!\left(
    \frac{d^3R^2}{\epsilon^2}
    \ln\!\left(
    \frac{Rd}{\delta}
    \right)
    \right),
\end{equation}
and Hoeffding's inequality gives
\begin{equation}
    N_{{\rm a},0}
    =
    \mathcal O\!\left(
    \frac{1}{\epsilon^2}
    \ln\!\left(
    \frac{1}{\delta}
    \right)
    \right).
\end{equation}

\vspace{0.5em}
\noindent\textbf{2. Analysis of Protocol B.}
\vspace{0.1em}

Assume that the learned effective observables in Step~\textbf{(b-1)} satisfy
\begin{equation}
    \|\tilde M_k-M_k\|_\infty\le \eta_{\rm b},
    \qquad
    k=1,\ldots,R.
\end{equation}
Under this assumption, the bias of Step~\textbf{(b-3)} is
\begin{equation}
    {\rm Bias}_{\rm b}
    :=
    \left|
    {\rm tr}\!\left[
    \left(
    \bigotimes_{k=1}^R\tilde M_k
    \right)\rho
    \right]
    -
    \mu
    \right|.
\end{equation}
Since
\begin{equation}
    \mu
    =
    {\rm tr}\!\left[
    \left(
    \bigotimes_{k=1}^R M_k
    \right)\rho
    \right],
\end{equation}
the telescoping identity (Lemma~\ref{lemma:bound_the_op_norm_of_products}) gives
\begin{equation}
    {\rm Bias}_{\rm b}
    \leq
    R\eta_{\rm b}(1+\eta_{\rm b})^{R-1}.
\end{equation}
Using the inequality $(1+x)^r\le 1+(e-1)rx$ for
$0\le x\le 1/r$ (Lemma~\ref{lemma:bound_exponential}), we find that choosing $\eta_{\rm b}=\epsilon/[2R(e-1)]$ yields
\begin{equation}
    {\rm Bias}_{\rm b}\le \frac{\epsilon}{2}.
\end{equation}
Thus, it is sufficient to perform the estimation with accuracy $\epsilon/2$ in Step~\textbf{(b-3)} to ensure that the overall estimation error is at most $\epsilon$:
\begin{eqnarray}
    \left| \hat{\mu}_{\rm b} - \mu \right| 
    &\leq& \mathrm{Bias}_{\rm b}+\frac{\epsilon}{2} \leq \epsilon.
\end{eqnarray}

We have thus identified sufficient events under which the final estimator $\hat{\mu}_{\rm b}$ achieves additive error at most $\epsilon$: the sufficient event for Step~\textbf{(b-1)} is
\begin{equation}
    T
    :=
    \bigcap_{k=1}^R
    \left\{
    \|\tilde M_k-M_k\|_\infty\le \eta_{\rm b}
    \right\},\quad \eta_{\rm b}=\frac{\epsilon}{2R(e-1)}
\end{equation}
and the final statistical event for Step~\textbf{(b-3)} is
\begin{equation}
    E_0^{\rm b}
    :=
    \left\{
    \left|
    \hat\mu_{\rm b}
    -
    {\rm tr}\!\left[
    \left(
    \bigotimes_{k=1}^R\tilde M_k
    \right)\rho
    \right]
    \right|
    \le
    \frac{\epsilon}{2}
    \right\}.
\end{equation}
On the event $T$, since
$\|M_k\|_\infty=\|\Phi_k^\dagger(O_k)\|_\infty\le \|O_k\|_\infty\le 1$, we have $\|\tilde M_k\|_\infty \leq \|M_k\|_\infty+\|\tilde M_k-M_k\|_\infty\leq1+\eta_{\rm b}$.
Therefore,
\begin{equation}
\label{eq:norm_upper_bound_two_layer_revised}
    \left\|
    \bigotimes_{k=1}^R\tilde M_k
    \right\|_\infty
    \le
    (1+\eta_{\rm b})^R
    \le
    1+\frac{\epsilon}{2}
    \le
    \frac{3}{2}.
\end{equation}
Thus, conditioned on the learning outcomes and on $T$, each single-shot
output in Step~\textbf{(b-3)} is bounded by a constant. Hoeffding's inequality and the tower property yield
\begin{equation}
    {\rm Pr}\!\left[
    (E_0^{\rm b})^c\cap T
    \right]
    \le
    2\exp\!\left(
    -\frac{N_{{\rm b},0}\epsilon^2}{18}
    \right).
\end{equation}

By assigning failure probability $\frac{\delta}{2R}$ to each of the $R$ learning
events and $\frac{\delta}{2}$ to the final statistical event on $T$, Theorem~\ref{thm:error_bounds_for_each_x} gives
\begin{equation}
    N_{{\rm b},k}
    =
    \mathcal O\!\left(
    \frac{d^3R^2}{\epsilon^2}
    \ln\!\left(
    \frac{Rd}{\delta}
    \right)
    \right),
\end{equation}
and Hoeffding's inequality gives
\begin{equation}
    N_{{\rm b},0}
    =
    \mathcal O\!\left(
    \frac{1}{\epsilon^2}
    \ln\!\left(
    \frac{1}{\delta}
    \right)
    \right).
\end{equation}
This proves the claimed measurement bounds for both protocols.
\end{proof}

Finally, we comment on the implementation cost of $\mathcal{C}_{\rm apr}^{(k)}$ and $\mathcal{M}_{\rm apr}^{(k)}$.
Both protocols require a classical diagonalization of the learned effective observable $\tilde M_k$, which acts on $\mathcal H_k$ and has dimension at most $d$. 
This diagonalization determines the basis unitary $\tilde V_k$, which is used to implement the MP channel $\mathcal{M}^{(k)}_{\rm apr}$ in Protocol A and to perform the measurement with classical post-processing $\mathcal C^{(k)}_{\rm apr}$ in Protocol B.
Since the classical diagonalization cost is $\mathcal O(d^3)$ per subsystem, the total cost is $\mathcal O(d^3R)$ for the two-layer tree. 
In the worst case, the basis-change unitary $\tilde V_k$ can be implemented using $\mathcal O(d^2)$ elementary gates~\cite{PhysRevLett.92.177902,app12020759}.

\subsection{General tree-structured quantum circuits}\label{sec:main-multi-layer}

We extend the two-layer protocol to the general finite tree setting formulated in Sec.~\ref{sec:main_tree_setup}. The relevant complexity parameter is the actual number $K$ of cut edges, equivalently the number of non-root clusters. 
This formulation naturally includes trees with non-uniform branching, branches terminating at different depths, and terminal observables attached to intermediate clusters.

\begin{theorem}[General tree-structured circuits; main-text statement]
\label{thm:tree_simulation_general}
Consider the finite-rooted tree setup of Sec.~\ref{sec:main_tree_setup}. 
Let $K:=|E|$ be the number of cut edges\footnote{
If only a selected subset \(E_{\rm cut}\) of edges in an underlying rooted-tree circuit is cut, the uncut connected components can be regarded as local components.
This yields a tree of the form considered here, and the theorem applies with \(K=|E_{\rm cut}|\).
}, 
and let $d$ be an upper bound on the bond dimensions of all cut edges. 
Then there exists a learning-based bottom-up protocol that estimates $\mu={\rm tr}[O\rho_{\rm tree}]$ within additive error $\epsilon\in(0,1]$ with probability at least $1-\delta$ using
\begin{equation}
    \mathcal O\!\left(
    \frac{d^3K^3}{\epsilon^2}
    \ln\!\left(
    \frac{Kd}{\delta}
    \right)
    \right)
\end{equation}
measurements.
\end{theorem}

\begin{figure}[t]
\centering
\begin{center}
 \includegraphics[width=85mm]{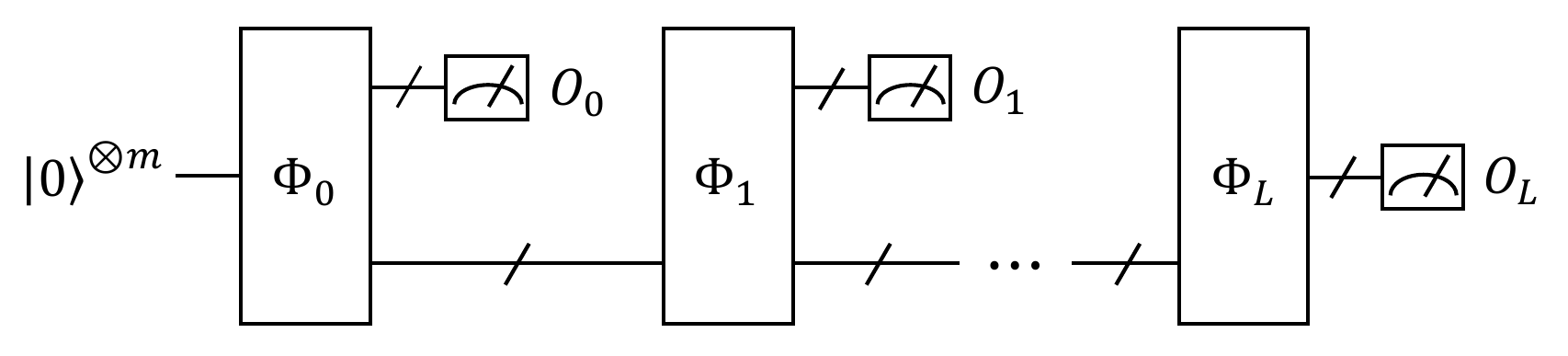}
\end{center}
\vspace{-10pt}
\caption{MPS-like tree. Each $\Phi_i$ is an unknown CPTP map, and adjacent maps $\Phi_i$ and $\Phi_{i+1}$ are connected by a wire of dimension at most $d$. Here, $m\in \mathbb{N}$, and $O_0,\ldots,O_L$ are Hermitian operators satisfying $\|O_i\|_{\infty} \leq 1$.\label{fig:mps}}
\end{figure}

More generally, if the bond dimensions are non-uniform and denoted by
$\{d_v\}_{v\in V^\circ}$, the measurement cost is
\begin{equation}
    \tilde{\mathcal O}\!\left(
    \frac{K^2}{\epsilon^2}
    \sum_{v\in V^\circ}d_v^3
    \right).
\end{equation}
This more general statement is given and proved in Appendix~\ref{sec:proof_multi_layer}.
Thus, the measurement cost is polynomial in the actual number of cuts.\footnote{For a complete $R$-ary tree of tree depth $L$ with $R\ge2$, $K=\sum_{l=1}^L R^l=\Theta(R^L)$, so Theorem~\ref{thm:tree_simulation_general} gives $\tilde{\mathcal O}(d^3K^3/\epsilon^2)$ measurements for regular multi-layer trees as a special case. For a chain, or MPS-like, tree with
$L$ cut edges (Fig.~\ref{fig:mps}), $K=L$, and the theorem gives
$\mathcal O(L^3d^3\epsilon^{-2}\ln(Ld/\delta))$ measurements.}
This should be compared with conventional optimal wire cutting with classical communication, for which the standard sufficient measurement bound is
\begin{equation}
    \mathcal O\!\left(
    \frac{(2d-1)^{2K}}{\epsilon^2}
    \right),
\end{equation}
due to the multiplicative accumulation of the rescaling factor $2d-1$ across cuts.
In the two-layer case, we prove in the next section that an exponential number of measurements in the number of cuts is also necessary for conventional learning-free wire-cutting methods. 

We now sketch the main idea for proving Theorem~\ref{thm:tree_simulation_general}.

\begin{proof}[Proof sketch of Theorem~\ref{thm:tree_simulation_general}]

The protocol proceeds bottom-up along the tree.
For a non-root node $v$, suppose that the learned effective observables
$\{\tilde M_u\}_{u\in{\rm Ch}(v)}$ of its children have already been constructed. 
We form the downstream observable
\begin{equation}
    \tilde B_v
    :=
    O_v\otimes
    \bigotimes_{u\in{\rm Ch}(v)}\tilde M_u,
\end{equation}
and use the learning protocol of Theorem~\ref{thm:error_bounds_for_each_x} to estimate
\begin{equation}
    \Phi_v^\dagger(\tilde B_v).
\end{equation}
The resulting estimator is stored as $\tilde M_v$. 
After all non-root nodes have been processed, we form
\begin{equation}
    \tilde B_r
    :=
    O_r\otimes
    \bigotimes_{u\in{\rm Ch}(r)}\tilde M_u
\end{equation}
and estimate ${\rm tr}[\rho\tilde B_r]$ by measuring the root state $\rho$ in an eigenbasis of $\tilde B_r$, with outcomes weighted by the corresponding eigenvalues.

We assign a local accuracy $\eta_v$ to each non-root cluster $v\in V^\circ$.
Let $T_v$ be the subtree rooted at $v$, including $v$ itself, and set
\begin{equation}
    A_v:=\sum_{w\in T_v}\eta_w,
    \qquad
    A_{\rm tot}:=\sum_{v\in V^\circ}\eta_v .
\end{equation}
On the event that all local learning steps succeed, the following relations hold:
\begin{equation}
    \|\tilde M_v-M_v\|_\infty
    \le
    e^{A_v}-1,
    \qquad
    \|\tilde M_v\|_\infty
    \le
    e^{A_v}.
\end{equation}
This is proven by induction from the leaves upward. The detailed statement and proof can be found in Proposition~\ref{prop:subtree_error_general_tree}.
Then, the corresponding root bound is
\begin{equation}\label{eq:root_norm_bound}
    \|\tilde B_r-B_r\|_\infty
    \le
    e^{A_{\rm tot}}-1.
\end{equation}
Thus, choosing
\begin{equation}
    \eta_v
    =
    \frac{\ln(1+\epsilon/2)}{K}
    \qquad
    (v\in V^\circ)
\end{equation}
ensures that the deterministic bias is at most $\epsilon/2$.

The same bound also controls the norm of the intermediate observables.
If the learning steps below $v$ have succeeded, 
\begin{eqnarray}
    \|\tilde B_v\|_\infty
    &\leq&
    \|O_v\|_\infty
    \prod_{u\in{\rm Ch}(v)}\|\tilde M_u\|_\infty \\
    &\leq&
    \prod_{u\in{\rm Ch}(v)}e^{A_u}
    =
    e^{\sum_{u\in{\rm Ch}(v)}A_u}\\
    &\leq&
    e^{A_{\rm tot}}\\
    &\leq&
    1+\frac{\epsilon}{2}
    \le
    \frac32 .
\end{eqnarray}
Therefore, the observable used in each local learning step has operator norm bounded by a constant. 
Assigning failure probability $\delta_v=\delta/(2K)$ to each local learning step, Theorem~\ref{thm:error_bounds_for_each_x} gives
\begin{equation}
    N_v
    =
    \mathcal O\!\left(
    \frac{d_v^3}{\eta_v^2}
    \ln\!\left(
    \frac{d_v}{\delta_v}
    \right)
    \right)
    =
    \mathcal O\!\left(
    \frac{d_v^3K^2}{\epsilon^2}
    \ln\!\left(
    \frac{Kd_v}{\delta}
    \right)
    \right)
\end{equation}
measurements at node $v$. 
Summing over all non-root clusters gives
\begin{equation}
    \mathcal O\!\left(
    \frac{K^2}{\epsilon^2}
    \sum_{v\in V^\circ}d_v^3
    \ln\left(\frac{Kd_v}{\delta}\right)
    \right)
\end{equation}
measurements for the non-root learning stages. Here, using $d_v\leq d$, this is bounded by
\begin{equation}
    \mathcal O\!\left(
    \frac{K^3d^3}{\epsilon^2}
    \ln\left(\frac{Kd}{\delta}\right)
    \right).
\end{equation}

On the global success event, the same norm bound \eqref{eq:root_norm_bound} gives
\begin{equation}
    \|\tilde B_r\|_\infty\le 3/2.
\end{equation}
Thus, Hoeffding's inequality implies that
\begin{equation}
    \mathcal O\!\left(
    \frac{1}{\epsilon^2}
    \ln \left( \frac{1}{\delta} \right)
    \right)
\end{equation}
additional root measurements estimate ${\rm tr}[\rho\tilde B_r]$ within error $\epsilon/2$. 
Combining this final statistical error with the deterministic bias gives total error at most $\epsilon$.

The only remaining issue is that $\tilde B_v$ is random because it depends on the estimates below $v$. 
In the formal proof, we condition on the data obtained before processing $v$, apply the local learning guarantee on the event that the strict descendants of $v$ have succeeded, and then remove the conditioning by the tower property. 
This gives the stated success probability;
see Appendix~\ref{sec:proof_multi_layer} for details.
\end{proof}

We finally comment on the implementation of the local basis measurements in the general-tree protocol. 
At each non-root node $v$, the only new diagonalization is that of the learned effective observable $\tilde M_v$, which acts on the incoming cut-edge Hilbert space $\mathcal H_v$ of dimension $d_v$. 
The observable used when processing a node has the tensor-product form
\begin{equation*}
\tilde B_v = O_v \otimes \bigotimes_{u\in{\rm Ch}(v)} \tilde M_u ,
\end{equation*}
so the protocol does not require diagonalizing a full tensor-product operator over the entire processed subtree. 
The corresponding local basis measurements must be implemented on the chosen local devices.

\section{Exponential separation from learning-free wire cutting}\label{sec:exponential_separation}

The polynomial upper bound in Sec.~\ref{sec:poly_circuit_knitting_tree} shows that, for tree-structured circuits, the sampling overhead can be made polynomial in the number of cuts by constructing the cuts from learned downstream effective observables. 
This raises the natural question of whether the same polynomial-in-cuts scaling follows from the tree structure alone, or whether the learning step is also essential. 
We answer this question in the simple two-layer setting by comparing our learning-based protocol with a corresponding class of learning-free protocols for cutting the same inter-cluster wires.

To make this comparison precise, we first formulate the two-layer expectation-estimation task under the local-access constraint of circuit knitting. 
This formulation keeps the same root state $\rho$, downstream local channels $\{\Phi_k\}_{k=1}^{R}$, and product observable $O$ as in the two-layer analysis of Sec.~\ref{sec:poly_circuit_knitting_tree}, while making explicit that $\rho$ and $\{\Phi_k\}_{k=1}^{R}$ are accessed only through separate local devices, with only classical communication allowed between these devices.
\begin{task}[Two-layer tree expectation-value estimation with local access]
\label{task:two-layer-tree-estimation}
Let
\begin{equation}
    \rho_{\rm targ}
    :=
    (\Phi_1\otimes\cdots\otimes\Phi_R)(\rho),
\end{equation}
where $\rho$ is an arbitrary unknown state on $\bigotimes_{k=1}^R \mathbb{C}^d$, and each $\Phi_k:\mathsf{L}(\mathbb{C}^d)\to \mathsf{L}(\mathbb{C}^{d'})$ is an arbitrary unknown CPTP map
with $d\leq d'$. 
We are given a known product observable
\begin{equation}
    O=\bigotimes_{k=1}^R O_k,
    \qquad
    O_k \in \mathsf{H}(\mathbb{C}^{d'}),
    \qquad
    \|O_k\|_\infty\le 1 .
\end{equation}
The goal is to estimate $\mu := \mathrm{tr}[O\rho_{\rm targ}]$ within additive error $\epsilon$ with high probability, using only a root device that prepares the state $\rho$, local devices that implement the channels $\{\Phi_k\}_{k=1}^R$, classical communication between these devices, and classical post-processing of the collected data.
\end{task}
\noindent In this formulation, the wires to be cut are the $R$ $d$-dimensional systems that would coherently connect the root system carrying $\rho$ to the inputs of the downstream channels $\{\Phi_k\}_{k=1}^R$ in the uncut two-layer circuit. Both approaches considered below cut these same links.

First, our learning-based protocol solves Task~\ref{task:two-layer-tree-estimation} by estimating $M_k := \Phi_k^\dagger(O_k)$ for each $k=1,\ldots,R$, and then constructing observable-adaptive cuts from these learned effective observables. 
By Theorem~\ref{thm:2-layer}, for any target additive error $\epsilon\in(0,1]$ and any failure probability $\delta\in(0,1)$, this protocol estimates $\mu$ within additive error $\epsilon$ with probability at least $1-\delta$ using
\begin{equation}
    \mathcal{O}\!\left(
    \frac{d^3R^3}{\epsilon^2}
    \ln\!\left(\frac{Rd}{\delta}\right)
    \right)
\end{equation}
total measurements, including the local learning cost.

We next define the corresponding learning-free protocol class.
Since each inter-cluster link to be cut is represented by an identity channel, the relevant learning-free procedure is a wire-cutting protocol.
As reviewed in Sec.~\ref{sec:time_like_cut}, standard QPD-based wire cutting represents a $d$-dimensional identity channel $\mathrm{id}_d$ by a pre-specified linear combination of MP channels:
\begin{equation}
    \mathrm{id}_d=\sum_{\lambda} a_{\lambda} \mathcal{M}_{\lambda},
    \qquad 
    \mathcal{M}_\lambda(\bullet)
    =
    \sum_y \mathrm{tr}(E_{\lambda,y} \bullet)\sigma_{\lambda,y}.
\end{equation}
Here $\{E_{\lambda,y}\}_y$ is a POVM on the cut system and $\sigma_{\lambda,y}$ is the corresponding state prepared on the downstream side.
Operationally, one samples $\lambda$ with probability $|a_\lambda|/\gamma$, where $\gamma=\sum_\lambda |a_\lambda|$, performs the corresponding MP channel $\mathcal{M}_{\lambda}$, and includes the factor $\gamma\,\mathrm{sgn}(a_\lambda)$ in the classical estimator. 

\begin{algorithm}[bp]
\caption{Learning-free wire-cutting template for two-layer trees}
\label{alg:learning-free-wire-cut}

\KwIn{
Local access to devices implementing $\rho$ and
$\{\Phi_k\}_{k=1}^R$;
known observables $\{O_k\}_{k=1}^R$;
shot budget $N$;
fixed wire-cutting rule $\mathsf{WC}$.
}

\KwOut{
An estimate $\hat{\mu}$ of $\mu$.
}

\For{$i=1,\ldots,N$}{
    Sample a cutting label $\ell_i$ according to $p_\ell$\;
    
    Measure $\rho$ with the POVM
    \[
        \left\{
        E_{\ell_i,y_i}
        =
        \bigotimes_{k=1}^R
        E^{(k)}_{\ell_i,y_i^k}
        \right\}_{y_i},
    \]
    obtaining $y_i=(y_i^1,\ldots,y_i^R)$\;
    
    Prepare
    \[
        \bigotimes_{k=1}^R
        \sigma^{(k)}_{\ell_i,y_i^k}
    \]
    and evolve it under $\bigotimes_{k=1}^R\Phi_k$\;
    
    Measure each output system in an eigenbasis of $O_k$, obtaining $z_i=(z_i^1,\ldots,z_i^R)$\;
}

\Return{
\[
    \hat{\mu}
    =
    \mathsf{Post}\bigl((\ell_i,y_i,z_i)_{i=1}^N\bigr);
\]}
\end{algorithm}

For the present comparison, the important point is that the cutting rule -- namely, the sampling distribution over cutting labels $\lambda$, the associated measure-and-prepare operations, and the classical post-processing -- is fixed in advance, before any shots are taken.
We use ``learning-free'' in this pre-specified sense. The cutting rule may be randomized and may include predetermined feed-forward within each shot, such as preparing a state depending on the measurement outcome at the cut, but it is not updated using the measurement outcomes from earlier shots. For Task~\ref{task:two-layer-tree-estimation}, we formalize such a rule as
\begin{equation}
    \mathsf{WC}
    =
    \left(
    p_\ell,
    \{E^{(k)}_{\ell,y}\},
    \{\sigma^{(k)}_{\ell,y}\},
    \mathsf{Post}
    \right)
\end{equation}
Here $p_\ell$ is a distribution over possibly composite \mbox{cutting} labels, so that a single label $\ell$ can specify the pre-sampled cutting choices at all cut locations.
For each label $\ell$, $\{E_{\ell,y}^{(k)}\}_y$ is a POVM on the $k$-th cut system, $\sigma_{\ell,y}^{(k)}$ is the corresponding state prepared at the input of the $k$-th downstream channel, and $\mathsf{Post}$ is an arbitrary classical post-processing map applied to the classical transcript over all shots. 
The rule $\mathsf{WC}$ may depend on the known classical inputs of the task, including the observables $\{O_k\}_{k=1}^R$, but it is fixed before the shots are taken. This class includes standard QPD-based wire-cutting schemes~\cite{peng2021simulating,Lowe2023fastquantumcircuit,brenner2023optimal,harada2025doubly,pednault2023alternative,PRXQuantum.6.010316}, including optimal wire cutting schemes, but does not include the observable-adaptive learning step used in our protocol. 

Given a fixed rule $\mathsf{WC}$ of this form, applying it to Task~\ref{task:two-layer-tree-estimation} gives the learning-free wire-cutting procedure summarized in Algorithm~\ref{alg:learning-free-wire-cut}.
We compare our learning-based protocol with all procedures generated in this way.
The polynomial upper bound above and the lower bound for this learning-free class, stated in the theorem below, yield an information-theoretic exponential separation between learning-based and learning-free wire cutting; see Fig.~\ref{fig:introduction_figure}(d).

\begin{theorem}[Exponential separation from learning-free wire cutting]
\label{thm:learning-free-wire-cut-lower-bound}
Consider Task~\ref{task:two-layer-tree-estimation} with target additive error $0<\epsilon \leq 1/3$.

\smallskip
\noindent\textup{\textbf{Upper bound.}}
For any failure probability $\delta\in(0,1)$, the learning-based
protocol estimates $\mu$ within additive error $\epsilon$ with probability at least $1-\delta$ using
\begin{equation}
    \mathcal{O}\!\left(
    \frac{d^3R^3}{\epsilon^2}
    \ln\!\left(\frac{Rd}{\delta}\right)
    \right)
\end{equation}
measurements in total, including the measurements used in the local learning steps.

\smallskip
\noindent\textup{\textbf{Lower bound.}}
Any learning-free wire-cutting protocol described by Algorithm~\ref{alg:learning-free-wire-cut} that estimates $\mu$ within additive error $\epsilon$ with constant success probability strictly larger than $1/2$ for all instances requires
\begin{equation}
    \Omega\!\left(
    \frac{(d+1)^R}{\epsilon^2}
    \right)
\end{equation}
measurements.
\end{theorem}

The measurement count in Theorem~\ref{thm:learning-free-wire-cut-lower-bound} follows the standard shot-count convention used in circuit-cutting analyses. Namely, it counts independent local experimental samples, or shots, generated by the protocol. A single local sample may include one local circuit execution together with the measurements prescribed by the protocol, including any mid-circuit measurements, but elementary measurement operations inside that local experiment are not counted separately.

The same exponential separation holds under the more fine-grained local-query count, where each use of an available local device is counted. The learning-based protocol retains the same polynomial scaling in local-query count, since the samples used for local learning and final estimation are already included in the displayed bound up to lower-order local-query factors. For the learning-free procedure in Algorithm~\ref{alg:learning-free-wire-cut}, each round uses at least one local device query, and under per-device accounting it uses one root query and $R$ downstream queries. Hence the lower bound on the number of rounds immediately implies an
$\Omega((d+1)^R/\epsilon^2)$ lower bound on total local queries as well.
Under per-device accounting, each round contributes an additional factor
of $R+1$, which is only polynomial in the number of cuts.
Thus the exponential-in-cuts separation is unaffected by this stricter
resource accounting.

Theorem~\ref{thm:learning-free-wire-cut-lower-bound} shows that the polynomial-in-cuts scaling of the learning-based protocol is not a consequence of the two-layer tree structure alone. Even for the same local-access task and the same inter-cluster links, any learning-free wire-cutting protocol whose rule is fixed before the shots are taken still requires exponentially many measurements in the number of cuts. The exponential-to-polynomial improvement therefore comes from the learning step: the protocol first identifies the downstream effective observables $M_k=\Phi_k^\dagger(O_k)$ and then uses this learned information to construct observable-adaptive cuts.

Finally, we briefly outline the proof of Theorem~\ref{thm:learning-free-wire-cut-lower-bound}.
The upper bound follows directly from Theorem~\ref{thm:2-layer}, so we focus on the lower-bound argument.
The full proof is given in Appendix~\ref{sec:lower-bound-qcc}.

\begin{proof}[Proof sketch of Theorem~\ref{thm:learning-free-wire-cut-lower-bound}]
The lower-bound argument is based on a reduction from expectation-value estimation to a binary discrimination problem.
We construct a hard distribution over instances indexed by a hidden bit $x\in\{0,1\}$, such that estimating the target expectation value to accuracy $\epsilon$ allows one to identify $x$ with constant success probability. 
Concretely, let $d=2^n$, let $\bar Z^{(k)}:=Z^{\otimes n}$ on the $k$-th cut system, set $\bar Z:=\bigotimes_{k=1}^R \bar Z^{(k)}$, and define
\begin{equation}
    \tau_x
    :=
    \frac{1}{d^R}
    \left(
        I + (-1)^x 3\epsilon\,\bar Z
    \right),\qquad x=0,1.
\end{equation}
The two states $\tau_0$ and $\tau_1$ differ only in the sign of the full-rank Pauli observable $\bar Z$.
A hard instance is generated by drawing independent Haar-random local unitaries
$U_1,\ldots,U_R$, with $U_k\in\mathsf{U}(\mathbb{C}^d)$ for each $k$, and fixing them for all $N$ shots.
With $U:=\bigotimes_{k=1}^R U_k$, we choose
\begin{equation}
    \rho_x := U\tau_x U^\dagger,
    \qquad
    \Phi_k(\bullet):=U_k^\dagger \bullet U_k,
    \qquad
    O_k:=\bar Z^{(k)} .
\end{equation}
Here, the downstream channels undo the random local changes of basis, and the target expectation value becomes
\begin{equation}
    \mu_x
    =
    \operatorname{tr}[\bar Z\tau_x]
    =
    (-1)^x 3\epsilon.
\end{equation}
Thus, any protocol that estimates $\mu_x$ within additive error $\epsilon$ with constant success probability can distinguish the hidden
bit $x$ with constant success probability.

For a learning-free wire-cutting protocol, however, the measurements $\{E_{\ell,y}^{(k)}\}$ in cuts are chosen without using learned estimates of the downstream effective observables. 
Let $T_N := (\ell_i,y_i,z_i)_{i=1}^N$ denote the classical data collected before the final post-processing.
Averaging over the independent Haar-random local bases
$U_1,\ldots,U_R$ used in the hard distribution, while keeping the learning-free wire-cutting rule fixed, shows that
\begin{equation}
    I(x:T_N)
    \le
    \mathcal O\!\left(
        \frac{N\epsilon^2}{(d+1)^R}
    \right).
\end{equation}
Since the final estimate is obtained from $T_N$ by arbitrary classical post-processing, the data-processing inequality implies that no post-processing can increase the information about $x$.
On the other hand, since $\mu_x=(-1)^{x} 3\epsilon$, an estimate within additive error $\epsilon$ determines the sign of $\mu_x$, and therefore the hidden bit $x$, with constant success probability. 
By Fano's inequality, this implies that 
\begin{equation}
    \Omega(1)\leq I(x:T_N).
\end{equation}
Combining the two bounds gives
\begin{equation}
    N
    =
    \Omega\!\left(
    \frac{(d+1)^R}{\epsilon^2}
    \right).
\end{equation}
\end{proof}

We note that the restriction $0<\epsilon\le 1/3$ is only used to keep the states in the proof sketch positive semidefinite when the two possible expectation values are chosen as $\pm 3\epsilon$. 
For larger constant $\epsilon$, using a constant separation instead gives $\Omega((d+1)^R)$, which is equivalent to the displayed bound up to a universal constant in this regime.

\section{Numerical simulation}\label{sec:numerics}

\begin{figure*}[pt]
    \centering
    \includegraphics[width=\textwidth]{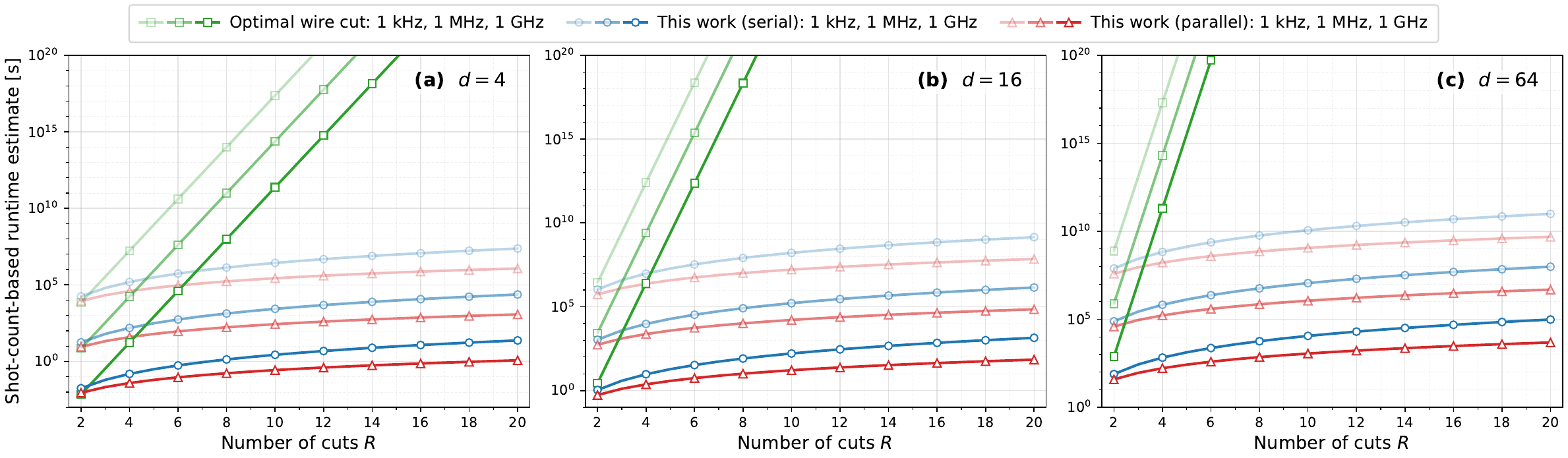}
    \caption{
    Shot-count-based runtime estimates for two-layer star circuits with cut dimensions $d=4,16,64$.
    We use $\epsilon=0.05$ and $\delta=0.05$, and convert a shot count $N$ to a runtime estimate by $T=N/f$, where $f$ is the effective sampling rate.
    Green curves correspond to optimal wire cutting, while blue and red curves correspond to our protocol with serial and parallel local learning, respectively.
    For each method, lighter to darker curves indicate $f=1\,\mathrm{kHz}$, $1\,\mathrm{MHz}$, and $1\,\mathrm{GHz}$.
    The vertical axis is truncated at $10^{20}$ seconds.
    }
    \label{fig:runtime-comparison}
\end{figure*}

\begin{figure*}[pt]
    \centering
    \includegraphics[width=\textwidth]{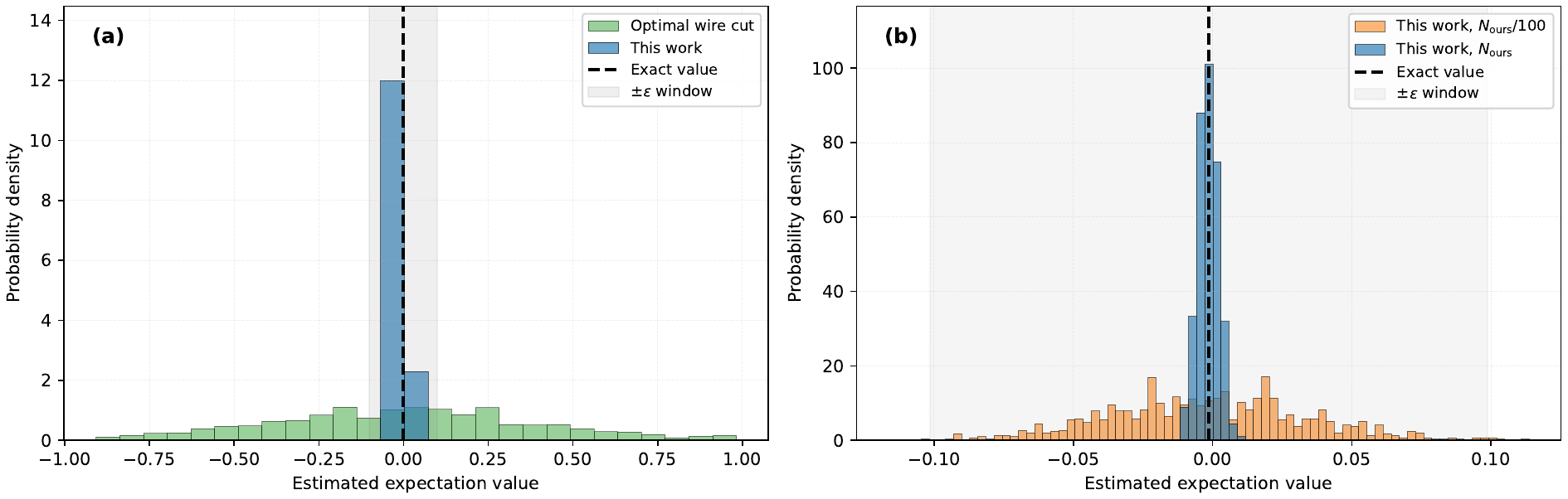}
    \caption{
    Finite-shot distributions for a fixed two-layer circuit instance generated from random local circuits, with $d=4$, $R=4$, $\epsilon=0.1$, and $\delta=0.1$.
    (a) Optimal wire cutting and our protocol are compared using the same total shot budget $N_{\rm ours}$, given by the proof-based allocation for our protocol.
    (b) For our protocol, the proof-based allocation $N_{\rm ours}$ is
    compared with a $1/100$-scaled allocation. Histograms are normalized as probability densities. The dashed vertical line marks the exact value, and the shaded region marks the
    $\pm\epsilon$ window.
    \label{fig:finite-shot-histogram}}
\end{figure*}

To illustrate the implications of the analytical bounds, we perform two numerical simulations. 
The first simulation evaluates the shot-count bounds, including their constant factors, for two-layer star circuits and compares the resulting runtime estimates with optimal wire cutting. 
This shows how the asymptotic separation appears for concrete choices of the cut dimension $d$, the number of cuts $R$, the target additive error $\epsilon$, and the failure probability $\delta$. The second simulation fixes an explicit randomly generated circuit instance and repeats the
estimation procedures many times under finite shot budgets. This allows us to compare the finite-shot distributions of optimal wire cutting and our
protocol under the same total shot budget, and to examine the effect of reducing
the proof-based allocation for our protocol.

For the first simulation, we consider two-layer circuits with $R$ cuts and cut dimension $d$.
For comparison, we use optimal wire cutting with rescaling factor $\gamma=2d-1$~\cite{brenner2023optimal,harada2025doubly,pednault2023alternative,PRXQuantum.6.010316}. 
The corresponding sufficient shot count is
\begin{equation}
    N_{\rm conv}
    =
    \left\lceil
    \frac{2(2d-1)^{2R}}{\epsilon^2}
    \ln\left(\frac{2}{\delta}\right)
    \right\rceil ,
\end{equation}
which guarantees additive error $\epsilon$ with failure probability at
most $\delta$ by Hoeffding's inequality.
For our learning-based protocol, we use the explicit constant-factor allocation of Protocol A in the proof of Theorem~\ref{thm:2-layer}. 
Namely, each local learning step uses the 2-design bound of Theorem~\ref{thm:error_bounds_for_each_x} with accuracy $\eta_a=9\epsilon/(20R)$ and failure probability $\delta/(2R)$, while the final estimation step uses Hoeffding's inequality with target error $\epsilon/10$ and failure probability $\delta/2$. This gives
\begin{equation}
    N_{\rm ours}
    =
    R N_{\mathrm{a},k} + N_{\mathrm{a},0},
\end{equation}
where
\begin{align}
    N_{\mathrm{a},k}
    &=
    \left\lceil
    \frac{800R^2}{81\epsilon^2}
    (d^2+1)
    \left(
    d+1+\frac{3\epsilon}{20R}
    \right)
    \ln\left(\frac{4Rd}{\delta}\right)
    \right\rceil, \\
    N_{\mathrm{a},0}
    &=
    \left\lceil
    \frac{200}{\epsilon^2}
    \ln\left(\frac{4}{\delta}\right)
    \right\rceil .
\end{align}
We also consider a fully parallel implementation in which the $R$ local learning tasks (Step~\textbf{(a-1)}) are run simultaneously, so that the corresponding wall-clock shot count is
\begin{equation}
    N_{\rm ours}^{\rm parallel}
    =
    N_{\mathrm{a},k}+N_{\mathrm{a},0}.
\end{equation}
To express these shot counts as runtime estimates, we use the simple proxy $T=N/f$, where $f$ is the effective sampling rate. 
In Fig.~\ref{fig:runtime-comparison}, we take $\epsilon=0.05$ and $\delta=0.05$, and compare $f=1~{\rm kHz}$, $1~{\rm MHz}$, and $1~{\rm GHz}$.

Fig.~\ref{fig:runtime-comparison} shows the resulting runtime estimates for $d=4,16,64$. The green curves, corresponding to optimal wire cutting, increase rapidly with the number of cuts $R$, and quickly exceed the
plotted range as either $R$ or $d$ grows. 
This behavior reflects the multiplicative quasiprobability rescaling factor $(2d-1)^{2R}$.
In contrast, the blue and red curves for our learning-based protocol grow much more slowly with $R$ over the same parameter range. 
The gap between the blue and red curves reflects the possible reduction in wall-clock runtime when the local learning tasks can be parallelized.

We next examine the finite-shot behavior of the estimators on a fixed explicit two-layer circuit instance. 
We take $d=4$ and $R=4$, so that each cut system consists of two qubits. 
The root circuit and the four local subsystem circuits are generated from fixed random seeds and kept fixed throughout the simulation. 
The local subsystem circuits act on the cut system together with additional ancilla qubits, and the terminal observables are chosen as random Pauli strings. 
For our protocol, the local effective observables are learned using a complete set of MUBs, viewed as a state 2-design. 
For $d=4$, this gives $d+1=5$ MUBs and $d(d+1)=20$ probe states. In the finite-shot simulation, we compute the exact Born probabilities for the fixed circuits and sample measurement outcomes from these probabilities. 
We repeat the full estimation procedure $1000$ times to obtain the histograms in Fig.~\ref{fig:finite-shot-histogram}.

Fig.~\ref{fig:finite-shot-histogram} shows the finite-shot distributions obtained from repeated independent runs on this fixed instance. 
In Fig.~\ref{fig:finite-shot-histogram}(a), optimal wire cutting and our
protocol are given the same total shot budget $N_{\rm ours}$. 
With this budget, the wire-cut estimator has a broad distribution due to quasiprobability rescaling, whereas the estimator of our protocol is concentrated near the exact value. 
Fig.~\ref{fig:finite-shot-histogram}(b) shows the effect of reducing the shot budget of our protocol. 
The distribution obtained with $N_{\rm ours}/100$ is broader than that obtained with $N_{\rm ours}$, as expected, but it remains centered near the exact value for this instance.
This comparison suggests that the constant-factor sufficient allocation used in the proof may be conservative for this fixed instance, and that sharper instance-dependent analyses of the learning step could lead to smaller shot estimates.

\section{Relation to prior work and applications}\label{sec:related_works}

We place our results in the context of several related approaches.
In Sec.~\ref{main:quantum_circuit_cutting}, we compare our learning-based approach with quantum circuit-cutting
methods, including QPD-based decompositions, structure-aware cut optimizations, and the cluster-simulation framework of Peng \textit{et al.}~\cite{peng2021simulating}. For the latter, we translate its finite-shot reconstruction bound into our tree notation and compare the scaling across several representative tree structures.
In Sec.~\ref{main:tree_dc_methods}, we relate our learning-based protocol to tree-based divide-and-conquer methods, such as hybrid tensor networks~\cite{PhysRevLett.127.040501}, Deep VQE~\cite{PRXQuantum.3.010346}, and entanglement forging~\cite{PRXQuantum.3.010309}, and explain how the same
learning viewpoint can be useful for these approaches by providing finite-shot guarantees for representative
tree-tensor contractions.
In Sec.~\ref{main:fundamental_limit}, we explain why our polynomial scaling is consistent with recent lower bounds on circuit knitting, especially for Refs.~\cite{PhysRevA.111.012433,marshall2023all}.
In Sec.~\ref{main:tensor_network}, we also compare our setting with classical tensor-network simulation in Ref.~\cite{doi:10.1137/050644756} from the viewpoints of access models, treewidth, and the resource being counted.

\subsection{Quantum circuit cutting}\label{main:quantum_circuit_cutting}

\subsubsection{QPD-based approaches}
We first discuss the relation between our learning-based cut and standard QPD-based wire-cutting methods. 
Although the general form of quasiprobability simulation was already reviewed in Sec.~\ref{sec:preliminaries}, we briefly recall it here because the difference between channel-level and expectation-value-level decompositions will also be important in the discussion below.
In the standard QPD setting, given a target operation $\mathcal{T}$ and a set $\mathcal{F}$ of implementable operations, one considers a channel-level decomposition
\begin{equation}\label{eq:channel-level_decomp}
    \mathcal{T} = \sum_i a_i \mathcal{E}_i,
\end{equation}
where $\mathcal{E}_i\in \mathcal{F}$ and $a_i \in \mathbb{R}$.
This decomposition must hold regardless of the input state and the later part of the circuit. 
As a result, it gives an unbiased estimator that works in arbitrary circuits, but the estimator carries the normalization $\gamma:=\sum_i|a_i|$, so the sampling overhead is increased by a factor of $\gamma^2$. 
Once $\mathcal{T}$ and $\mathcal{F}$ are fixed, the smallest possible $\gamma$ is determined by this channel-simulation task. 
For the wire cutting, where $\mathcal{T}=\mathrm{id}^n$ and $\mathcal{F}$ is the set of MP channels, the lower bound $\gamma\geq2d-1$ is known~\cite{yuan2021universal,brenner2023optimal}, and decompositions achieving this bound have been established~\cite{brenner2023optimal,harada2025doubly,pednault2023alternative,PRXQuantum.6.010316}.

Our result does not contradict the lower bound, because the object being decomposed is different. 
Standard QPD-based wire cutting requires an exact decomposition of the identity channel, independently of which observable is measured later.
By contrast, our learning-based cut only aims to reproduce the action of the identity channel for a fixed downstream circuit component.
For a fixed downstream channel $\Phi$ and a fixed observable $O$, we consider
\begin{equation}\label{eq:exp-level_decomp}
    \mathrm{tr}\left[O\,\Phi\circ \mathrm{id}(X)\right]
    =
    \sum_i a_i\,\mathrm{tr}\!\left[O\,(\Phi\circ \mathcal{E}_i)(X)\right],
    \quad
    \forall X .
\end{equation}
Theorem~\ref{thm:existence_of_no_cost_cut} shows that, for this fixed pair $(\Phi,O)$, there exists a single MP channel whose effective rescaling factor is $1$. However, constructing such an MP channel requires the exact classical description of $O_{\Phi}:=\Phi^{\dagger}(O)$, which is not available in practice. 
Theorem~\ref{thm:zero_cost_timelike_cut_with_an_approximated_unitary}, together with the learning protocols in Sec.~\ref{sec:learning_protocol}, gives a constructive version of this idea: 
after $N$ learning steps, the resulting observable-adaptive cut still has unit rescaling, while the error appears as a bias of $\tilde{\mathcal{O}}(\sqrt{d^3/N})$.
Therefore, compared with standard QPD-based wire cutting, our method changes the basic trade-off: instead of an unbiased but variance-amplified channel-level simulation, it gives an observable-dependent simulation with no variance amplification from the cut itself, at the cost of a controllable bias.

The difference between channel-level and expectation-value-level decompositions also helps interpret recent structure-aware methods~\cite{10025537,10196555,10313822}. 
For example, Ref.~\cite{10313822} seeks to reduce the practical cost of wire cutting by discarding Pauli-based cutting operations in the standard mid-circuit Pauli decomposition~\cite{peng2021simulating} that do not affect the target expectation-value calculation. 
In our framework, this can be reinterpreted as an expectation-value-level decomposition for a fixed circuit instance around the cut, rather than as a universal channel-level decomposition. 
The key difference from our method is that such approaches remain within a fixed family of Pauli-based cutting operations, whereas our approach allows a general MP channel and constructs the cut adaptively from the learned downstream effective observable. 
We refer the reader to Appendix~\ref{apsec:relation_to_wire_cut} for a more detailed discussion.

\subsubsection{Cluster-simulation framework by Peng \textit{et al.}}
We next discuss the relation to the cluster-simulation framework of Peng \textit{et al.}~\cite{peng2021simulating}. 
Their framework represents a quantum circuit as a tensor network and uses a Pauli-based edge-cutting decomposition to partition a large circuit into cluster circuits executable on smaller quantum devices.
Within this framework, they establish two types of cluster-simulation guarantees. 
First, for general clustered circuits, they obtain a simulator whose cost grows exponentially with the total number of inter-cluster qubit wires that are cut. 
This simulator is closely related to the quasiprobability-simulation viewpoint underlying QPD-based circuit cutting, where the sampling overhead is governed by the rescaling factors associated with the cut decompositions.
Second, they develop a graph-dependent simulation approach in which the cost can be controlled by structural properties of the tensor network induced by the clustering.
The latter result is more directly related to our work, because it addresses how circuit structure can reduce the simulation cost associated with circuit cutting. 
In this subsection, we therefore review this graph-dependent approach and compare it with our learning-based protocol.

To set up the comparison, we recall the tensor-network representation underlying the cluster-simulation method of Peng \textit{et al.}
In their framework, a quantum circuit $C$, together with the final classical post-processing $f$, is represented as a tensor network. If $y$ denotes the measurement outcome of $C$, the full contraction yields the desired expectation value $\mathbb{E}_yf(y)$. 
The tensor network is then partitioned into clusters, and the inter-cluster wires are cut using the Pauli-based edge-cutting decomposition:
\begin{equation}
    \mathrm{id}(\bullet)
    =
    \sum_{s=1}^{8}
    c_s\,\operatorname{tr}[O_s\,\bullet]\,\sigma_s ,
    \qquad
    c_s\in\left\{+\frac12,-\frac12\right\},
    \label{eq:peng_edge_cut_schematic}
\end{equation}
where $O_s\in\{I,X,Y,Z\}$ is a single-qubit Pauli observable, and $\sigma_s$ is one of the single-qubit states appearing in the decomposition. 
Equivalently, each label $s\in[8]$ specifies both the Pauli observable measured on the upstream side and the state prepared on the downstream side. 
Thus, cutting one qubit wire introduces an eight-valued cutting label.

After applying this decomposition to all inter-cluster wires, each cluster is converted into a tensor whose indices are the cutting labels on the wires incident to that cluster. 
Let $E_{\rm cut}(x)$ be the set of cut wires incident to cluster $x$, and define
\begin{equation}
    s_x := (s_e)_{e\in E_{\rm cut}(x)} .
\end{equation}
The corresponding tensor entry, denoted here by $a_x(s_x)$, is estimated by running the local cluster circuit with the measurement and preparation choices specified by $s_x$. 
Once all such local tensors are estimated, the original expectation value is reconstructed by the classical tensor contraction
\begin{equation}
    \mathbb{E}_{y}f(y)=\sum_{\{s_e\}}
    \prod_x a_x(s_x),
    \label{eq:peng_induced_tn}
\end{equation}
with the coefficients from Eq.~\eqref{eq:peng_edge_cut_schematic} absorbed into the local tensors. 
In practice, the entries $a_x(s_x)$ are estimated by finite-shot experiments, and the final contracted value is obtained by contracting the estimated local tensors. 
This is the sense in which the method of Ref.~\cite{peng2021simulating} has a tomography-like aspect: it estimates local tensor entries experimentally and controls the error of the final contracted value through entrywise accuracy guarantees.

To see this, let $\kappa$ be the number of clusters and let $n_{\rm max}$ be the maximum number of cut qubit-wires incident to a cluster.
Then the total number $D$ of local tensor entries to be estimated is bounded, schematically, by
\begin{equation}
    D 
    \leq
    \kappa 8^{n_{\rm max}}.
    \label{eq:peng_D}
\end{equation}
The entrywise perturbation analysis of \cite{peng2021simulating} shows that it is sufficient to estimate each tensor entry to accuracy
\begin{equation}
    \epsilon_{\rm entry}
    =
    \frac{\epsilon}{(e-1)D}
    \label{eq:peng_entry_accuracy}
\end{equation}
in order to guarantee that the contracted tensor-network value is accurate up to $\epsilon$. 
Since each entry is estimated as the mean of bounded random variables, Hoeffding's inequality gives the sufficient per-entry shot count
\begin{equation}
    N_{\rm entry}
    =
    \frac{2\ln(6D)}{\epsilon_{\rm entry}^2}
\end{equation}
Here, the success probability is taken to be $2/3$; equivalently, the total failure probability over all $D$ tensor entries is bounded by $1/3$ by a union bound.
Therefore, estimating all $D$ entries requires
\begin{equation}
    D\,N_{\rm entry}
    =\mathcal O\left( \frac{D^3 \ln(6D)}{\epsilon^2} \right)
    \label{eq:peng_quantum_executions1}
\end{equation}
local quantum executions/measurements.

At a high level, both approaches use finite-shot local experiments to estimate classical quantities whose errors are propagated through a tree contraction.
In this broad sense, the entrywise reconstruction of Ref.~\cite{peng2021simulating} can also be viewed as learning, in a fixed Pauli cutting basis, the cluster response for each combination of incident cut labels.
The difference lies in how boundary information is represented and propagated.
In the method of Ref.~\cite{peng2021simulating}, this response is represented by a tensor with one index for each incident cut wire, so a high-degree cluster gives rise to a high-order tensor in the fixed Pauli cutting basis.
By contrast, our protocol is observable-adaptive.
After the descendants of a node have been processed, all information relevant to the fixed product observable is compressed into a single Heisenberg-picture effective observable on the incoming edge.
More explicitly, for a non-root node $v$, the object passed upward is
\begin{equation}
    M_v
    =
    \Phi_v^\dagger
    \left(
        O_v
        \otimes
        \bigotimes_{u\in{\rm Ch}(v)} M_u
    \right),
    \label{eq:effective_observable_tree_peng}
\end{equation}
which acts only on the incoming cut space $\mathcal H_v$.
The local learning step estimates $M_v$ in operator norm and passes the estimate $\tilde M_v$ to the parent.
Thus, although the local observable used at $v$ contains the child messages $\{M_u\}_{u\in{\rm Ch}(v)}$, the message passed upward is a single observable-dependent operator, rather than a fixed-basis tensor indexed by all incident cut labels.

\begin{figure*}[pt]
    \centering
    \includegraphics[width=\textwidth]{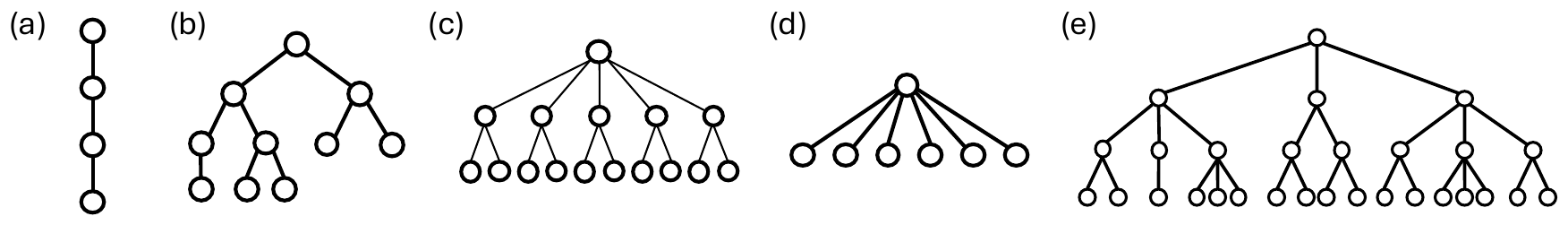}
    \vspace{-2em}
    \caption{
    Schematic illustrations of the tree families compared in Table~\ref{tab:peng_comparison}.
    (a)--(e) correspond to path, bounded-branching, bounded-dimension, two-layer star-shaped, and multiple-layer trees, respectively.
    Each vertex represents a local circuit component associated with a cluster, and each edge represents a cut system, or equivalently a cut wire, between neighboring clusters.}
    \label{fig:trees}
\end{figure*}

\begin{table*}[tp]
\centering
{
\scriptsize
\setlength{\tabcolsep}{2.4pt}
\renewcommand{\arraystretch}{2}
\resizebox{\textwidth}{!}{%
\begin{tabular}{|cc|c|ccc|ccc|}
\hline 
\multicolumn{2}{|c|}{Tree structure}
&
Num. of cuts
&
\multicolumn{3}{c|}{Entrywise reconstruction~\cite{peng2021simulating}}
&
\multicolumn{3}{c|}{This work
(Theorem~\ref{thm:tree_simulation_general})}
\\
\multicolumn{1}{|c}{Figure}
&
$(R,L,d)$
&
$K$
&
\multicolumn{1}{c}{Complexity}
&
Poly. in $K$?
&
Poly. in $d$?
&
\multicolumn{1}{c}{Complexity}
&
Poly. in $K$?
&
Poly. in $d$?
\\ 
\hline \hline

\multicolumn{2}{|c|}{Path tree (MPS-like)}
&
\multirow{2}{*}{$L$}
&
\multirow{2}{*}{$\displaystyle 
\mathcal{O}\!\left(
\frac{K^3 d^{18}}{\epsilon^2}\ln(Kd^6)
\right)$}
&
\multirow{2}{*}{Yes}
&
\multirow{2}{*}{Yes}
&
\multirow{2}{*}{$\displaystyle 
\mathcal{O}\!\left(
\frac{K^3 d^3}{\epsilon^2}\ln(Kd)
\right)$}
&
\multirow{2}{*}{Yes}
&
\multirow{2}{*}{Yes}
\\
Fig.~\ref{fig:trees}(a)
&
$(1,L,d)$
&&&&&&&
\\ 
\hline

\multicolumn{2}{|c|}{Bounded-branching tree}
&
\multirow{2}{*}{$\sum_{l=1}^{L}R^l$}
&
\multirow{2}{*}{%
$\displaystyle 
\mathcal{O}\!\left(
\frac{K^3 d^{\mathcal{O}(1)}}{\epsilon^2}
\ln(Kd^{\mathcal{O}(1)})
\right)$}
&
\multirow{2}{*}{Yes}
&
\multirow{2}{*}{Yes}
&
\multirow{2}{*}{$\displaystyle 
\mathcal{O}\!\left(
\frac{K^3d^3}{\epsilon^2}\ln(Kd)
\right)$}
&
\multirow{2}{*}{Yes}
&
\multirow{2}{*}{Yes}
\\
Fig.~\ref{fig:trees}(b)
&
$(\mathcal{O}(1),L,d)$
&&&&&&&
\\ 
\hline

\multicolumn{2}{|c|}{Bounded-dimension tree}
&
\multirow{2}{*}{$\sum_{l=1}^{L} R^l$}
&
\multirow{2}{*}{%
$\displaystyle 
\mathcal{O}\!\left(
\frac{K^3 2^{\mathcal{O}(K^{1/L})}}{\epsilon^2}\ln(K)
\right)$}
&
\multirow{2}{*}{%
\begin{tabular}[c]{@{}c@{}}
Yes if 
{\scriptsize $L \gtrsim \frac{\log K}{\log\log K}$}\\[-1pt]
No otherwise
\end{tabular}
}
&
\multirow{2}{*}{%
\begin{tabular}[c]{@{}c@{}}
Trivial\\[-1pt]
{\scriptsize $d=\mathcal O(1)$}
\end{tabular}
}
&
\multirow{2}{*}{$\displaystyle 
\mathcal{O}\!\left(
\frac{K^3}{\epsilon^2}\ln(K)
\right)$}
&
\multirow{2}{*}{Yes}
&
\multirow{2}{*}{%
\begin{tabular}[c]{@{}c@{}}
Trivial\\[-1pt]
{\scriptsize $d=\mathcal O(1)$}
\end{tabular}
}
\\
Fig.~\ref{fig:trees}(c)
&
$(R,L,\mathcal{O}(1))$
&&&&&&&
\\ 
\hline

\multicolumn{2}{|c|}{Two-layer tree (Star-shaped)}
&
\multirow{2}{*}{$R$}
&
\multirow{2}{*}{$\displaystyle 
\mathcal{O}\!\left(
\frac{K^3d^{9K}}{\epsilon^2}\ln(Kd^{3K})
\right)$}
&
\multirow{2}{*}{No}
&
\multirow{2}{*}{No}
&
\multirow{2}{*}{$\displaystyle 
\mathcal{O}\!\left(
\frac{K^3d^3}{\epsilon^2}\ln(Kd)
\right)$}
&
\multirow{2}{*}{Yes}
&
\multirow{2}{*}{Yes}
\\
Fig.~\ref{fig:trees}(d)
&
$(R,1,d)$
&&&&&&&
\\ 
\hline

\multicolumn{2}{|c|}{Multiple-layer tree}
&
\multirow{2}{*}{$\sum_{l=1}^{L} R^l$}
&
\multirow{2}{*}{%
$\displaystyle 
\mathcal{O}\!\left(
\frac{K^3d^{9(R+1)}}{\epsilon^2}\ln(Kd^{3(R+1)})
\right)$
}
&
\multirow{2}{*}{%
\begin{tabular}[c]{@{}c@{}}
No
\end{tabular}
}
&
\multirow{2}{*}{%
\begin{tabular}[c]{@{}c@{}}
No
\end{tabular}
}
&
\multirow{2}{*}{$\displaystyle 
\mathcal{O}\!\left(
\frac{K^3d^3}{\epsilon^2}\ln(Kd)
\right)$}
&
\multirow{2}{*}{Yes}
&
\multirow{2}{*}{Yes}
\\
Fig.~\ref{fig:trees}(e)
&
$(R,L,d)$
&&&&&&&
\\ 
\hline
\end{tabular}
}
}
\vspace{-0.5em}
\caption{
Comparison of the total number of local experimental samples, equivalently local quantum executions or measurement shots, required by the entrywise reconstruction of Peng et al.~\cite{peng2021simulating} and by our protocol. Here, $R$ denotes an upper bound on the branching number, $L$ denotes the tree depth, and $d$ denotes an upper bound on the cut-edge dimension.
``Poly. in $K$?'' asks whether the displayed cost is polynomial in the actual number of cut edges $K$, and ``Poly. in $d$?'' asks whether the degree of $d$ is bounded independently of the other growing tree parameters.
In the bounded-branching case, the $\mathcal O(1)$ exponent in $d^{\mathcal O(1)}$ is independent of $K$ but depends on the fixed branching bound; for branching bound $R\geq 2$, Eq.~\eqref{eq:peng_quantum_executions2} gives a factor $d^{9(R+1)}\geq d^{27}$.}
\label{tab:peng_comparison}
\end{table*}

This distinction leads to different scaling on high-degree trees. 
For the learning-based approach, Theorem~\ref{thm:tree_simulation_general} gives, for an arbitrary finite rooted tree with $K$ cut edges and incoming cut dimensions $\{d_v\}_{v\in V^\circ}$, the measurement bound
\begin{equation}
    N_{\rm learn}
    =
     \mathcal O\!\left(
        \frac{K^2}{\epsilon^2}
        \sum_{v\in V^\circ} d_v^3 \ln(Kd_{v})
    \right).
    \label{eq:ours_general_tree_peng}
\end{equation}
For the graph-dependent approach, in the tree setting we have $\kappa=K+1$, and the maximum-degree parameter $n_{\rm max}$ corresponds to
\begin{equation}
    n_{\rm max}
    =
    \max_v
    \log_2\!\left(
        d_v\prod_{u\in{\rm Ch}(v)}d_u
    \right),
\end{equation}
where the incoming factor $d_v$ is omitted for the root. Thus, we can translate the entrywise reconstruction bound~\eqref{eq:peng_quantum_executions1} into our tree notation as follows:
\begin{equation}
    \mathcal O\left(
    \max_{v}
    \frac{K^3
    \left(d_{v}\prod_{u\in{\rm Ch}(v)}d_u\right)^9}
    {\epsilon^2}
    \ln\!\left[K \biggl( d_v\prod_{u\in{\rm Ch}(v)}d_u \biggr)^3 \right]
    \right).
    \label{eq:peng_quantum_executions2}
\end{equation}
Based on this translated bound, Table~\ref{tab:peng_comparison} summarizes the measurement cost of the entrywise reconstruction in Ref.~\cite{peng2021simulating} and of our protocol for representative tree structures.
 
First, after translating the graph-dependent reconstruction bound of Ref.~\cite{peng2021simulating} into the tree parameters used here and explicitly tracking the finite-shot cost of estimating all tensor entries, we find polynomial dependence on $K$ for representative tree families, including path trees (Fig.~\ref{fig:trees}(a)) and bounded-branching trees (Fig.~\ref{fig:trees}(b)).
However, even in these polynomial regimes, there remains an important trade-off. 
The entrywise reconstruction of Ref.~\cite{peng2021simulating} is experimentally simple because it uses a fixed Pauli cutting basis and does not require diagonalizing an observable-adaptive basis. 
On the other hand, this fixed-basis reconstruction carries a much larger dependence on the cut dimension $d$. 
In particular, the $d^{\mathcal O(1)}$ dependence in the bounded-branching case corresponds to a fixed degree determined by the branching bound: for branching bound $R\ge2$, Eq.~\eqref{eq:peng_quantum_executions2} gives a factor $d^{9(R+1)}\ge d^{27}$.
In the path-tree case, the dimension factor is $d^{18}$ for the entrywise reconstruction, whereas it is $d^3$ for our learning-based protocol; for a qubit cut, this changes the factor from $2^{18}\approx 2.6 \times 10^5$ to $8$.

Second, when the branching number $R$ is allowed to grow, the entrywise reconstruction must estimate high-order boundary tensors indexed by all incident cutting labels, leading to factors such as $d^{9K}$ for the two-layer star (Fig.~\ref{fig:trees}(d)) and $d^{9(R+1)}$ for multi-layer trees with variable branching (Fig.~\ref{fig:trees}(e)). 
In contrast, our protocol compresses each processed subtree into a single observable-adaptive effective observable on the incoming edge, and therefore avoids estimating all entries of such fixed-basis boundary tensors.
As a consequence, Theorem~\ref{thm:tree_simulation_general} gives a polynomial finite-shot guarantee in $K$ and $d$ for arbitrary finite trees.

\subsection{Tree-based divide-and-conquer methods}\label{main:tree_dc_methods}

Tree-structured divide-and-conquer methods have also been developed beyond the standard circuit-cutting setting. 
A particularly broad formulation is provided by the hybrid tensor network (HTN) formalism~\cite{PhysRevLett.127.040501,Harada2025densitymatrix,PRXQuantum.6.010320}, where classical tensors and quantum tensors
\footnote{Here, a classical tensor refers to an ordinary tensor whose entries are explicitly stored and contracted by classical computation. 
A quantum tensor refers to a tensor represented by the amplitudes of quantum states prepared on a quantum device; for example, a family of states $\{\ket{\psi^i}=\sum_j \psi^i_j\ket{j}\}_i$ defines a tensor $\psi^i_j$.}
are combined within a single tensor-network ansatz.
This formalism provides a common language for several divide-and-conquer ansatzes, including Deep VQE~\cite{PRXQuantum.3.010346} and entanglement forging~\cite{PRXQuantum.3.010309}; see also Ref.~\cite{Harada2025densitymatrix}.
When the local tensors are known exactly, expectation-value estimation in these methods reduces to a tree contraction of local quantities.
In quantum-device implementations, however, these local quantities are obtained from finite-shot measurement data, and their statistical errors can propagate nontrivially through the contraction.
Below, we explain how the Heisenberg-picture learning primitive developed in this work gives a finite-shot stability guarantee for a representative hybrid tree tensor-network contraction.

To make this connection concrete, consider a two-layer hybrid tree tensor network (HTTN) state of the form
\begin{equation}\label{eq:httn_state}
    \ket{\psi_{\rm HT}}
    =
    \frac{1}{c}
    \sum_{i_1,\ldots,i_R}
    \phi_{i_1,\ldots,i_R}
    \ket{\psi_{1}^{i_1}} \otimes \cdots \otimes \ket{\psi_{R}^{i_R}},
\end{equation}
where $c$ is the normalization factor.
Here, $i_k\in[\chi_k]$ labels the bond index connecting the $k$-th leaf tensor to the parent tensor, and $\ket{\phi}:=\sum_{i_1,\ldots,i_R}\phi_{i_1,\ldots,i_R}\ket{i_1,\ldots,i_R}$ denotes the parent tensor written as a vector on the bond-index space.
The state $\ket{\psi_k^{i_k}}\in\mathcal H_k$ denotes the physical state associated with the $k$-th leaf tensor and bond index $i_k$, where $\mathcal H_k\simeq(\mathbb C^2)^{\otimes n_k}$ for an $n_k$-qubit leaf tensor.
For a product observable $O:= \bigotimes_{k=1}^{R} O_k$, with $\|O_k\|_\infty\leq 1$, the expectation value on $\ket{\psi_{\rm HT}}$ reduces to a contraction on the bond-index space:
\begin{equation}\label{eq:httn_exp_value}
    \braket{O}_{\psi_{\rm HT}} = \frac{\braket{\phi|\bigotimes_{k=1}^{R}M_k|\phi}}{\braket{\phi|\bigotimes_{k=1}^{R} S_k|\phi}},
\end{equation}
where the local operators are
\begin{equation}\label{eq:httn_local_tensor}
    (M_k)_{i_k,i'_k} := \braket{\psi_k^{i_k}|O_k|\psi_k^{i'_k}},
    \qquad
    (S_k)_{i_k,i'_k} := \braket{\psi_k^{i_k}|\psi_k^{i'_k}}.
\end{equation}
In the HTTN calculation, the target value $\braket{O}_{\psi_{\rm HT}}$ is evaluated in three steps. 
First, one computes Eq.~\eqref{eq:httn_local_tensor} on classical or quantum devices, obtaining the local operators $M_k$ and $S_k$. 
Second, one evaluates the expectation values of $\bigotimes_{k=1}^R M_k$ and $\bigotimes_{k=1}^R S_k$ on the parent tensor $\ket{\phi}$. 
Finally, one combines the two results according to Eq.~\eqref{eq:httn_exp_value}.

In practical HTTN calculations, however, the local operators $M_k$ and $S_k$ associated with quantum tensors are not available exactly, because they must be estimated from finite-shot measurement data.
The finite-shot issue is visible in the numerator of Eq.~\eqref{eq:httn_exp_value}. 
Suppose that the parent-level contraction uses estimates $\tilde M_k=M_k+\Delta M_k$ with 
$\|M_k\|_\infty\le1$, and define $\Delta_{\rm max}=\max_{k} \| \Delta M_k \|_{\infty}$. 
Then the observable contracted at the parent level can have norm
\begin{equation}
    \left\|
    \bigotimes_{k=1}^{R}\tilde M_k
    \right\|_\infty
    \leq \prod_{k}
    (1+ \| \Delta M_k \|_{\infty}) \leq (1+\Delta_{\rm max})^R.
\end{equation}
Thus, without operator-norm control and $R$-dependent accuracy allocation, local statistical fluctuations can be amplified by the product structure into an exponentially large single-shot range at the parent level.
Indeed, prior HTN works introduced Pauli-basis measurement/contraction rules for constructing operators such as $M_k$ and $S_k$~\cite{PhysRevLett.127.040501,Harada2025densitymatrix,PRXQuantum.6.010320}; some of these works also studied statistical noise numerically~\cite{PRXQuantum.6.010320} or briefly discussed statistical fluctuations in contracted observables~\cite{Harada2025densitymatrix}. 
However, an end-to-end finite-shot guarantee for tree contractions has not been established: such a guarantee would require controlling the local operator-norm errors $\|\Delta M_k\|_{\infty}$ induced by finite-shot data and allocating the local accuracies so that both the parent-level range and the accumulated bias remain bounded.

Our learning protocol provides such a guarantee in this representative setting: applied to each local contraction, it produces estimates $\tilde M_k$ with operator-norm accuracy $\eta=\mathcal O(\epsilon/R)$, keeping both the accumulated bias and the parent-level single-shot range bounded.
As a simple example, consider the type-(i) quantum tensors in the terminology of Ref.~\cite{Harada2025densitymatrix}. 
Suppose that the parent tensor is a quantum tensor represented by the normalized state $\ket{\phi}$, and that all $R$ leaf tensors are type-(i) quantum tensors. For simplicity, assume that each such leaf tensor has $n$ physical qubits and bond dimension $\chi=2^b$. Then it can be written as
\begin{equation}
    \ket{\psi_k^{i_k}}
    =
    Q_k
    \left(
        \ket{i_k}\otimes\ket{0}^{\otimes(n-b)}
    \right),
    \qquad
    i_k\in\{0,1\}^b,
\end{equation}
where $\ket{i_k}$ denotes the corresponding $b$-qubit computational-basis state and $Q_k$ is an $n$-qubit unitary.
For such type-(i) tensors, $S_k=I_\chi$, and the local contraction in Eq.~\eqref{eq:httn_local_tensor} can be written as the Heisenberg-evolved observable $M_k=\Phi_k^\dagger(O_k)$, where
\begin{equation}
    \Phi_k(\sigma)
    =
    Q_k
    \left(
        \sigma \otimes (\ket{0}\bra{0})^{\otimes(n-b)}
    \right)
    Q_k^\dagger .
\end{equation}
This is exactly the effective observable learned by our protocol.
Thus, in this type-(i) setting, the denominator in Eq.~\eqref{eq:httn_exp_value} reduces to $\braket{\phi|\phi}=1$.
Choosing the per-tensor learning accuracy as $\mathcal{O}(\epsilon/R)$ yields an $\mathcal O(\epsilon)$ bias contribution from the $R$ learned local effective observables.
After the learning stage, the remaining parent-tensor expectation value can be estimated by measuring $\bigotimes_{k=1}^R \tilde M_k$ on $\ket{\phi}$.
Since $\prod_{k=1}^R \|\tilde M_k\|_\infty \leq \left(1+\mathcal O(\epsilon/R)\right)^R = \mathcal{O}(1)$, the single-shot outcome is bounded by a constant, and $\mathcal{O}(1/\epsilon^2)$ final measurements are sufficient by Hoeffding's inequality.
Combining this with the observable-learning guarantee of Sec.~\ref{sec:learning_protocol}, the total number of measurements scales as
\begin{equation}
    \tilde{\mathcal O}
    \left(
        \frac{\chi^{3}R^3}{\epsilon^2}
    \right).
\end{equation}

More generally, the density-matrix formulation of HTNs in Ref.~\cite{Harada2025densitymatrix} provides a way to associate several types of quantum tensors with corresponding circuit or expansion-map representations. 
This suggests a possible route to extend our learning-based viewpoint beyond type-(i) tensors: once a quantum tensor is rewritten in such a representation, one may seek to learn the effective operator induced on the bond index using the protocol developed in Sec.~\ref{sec:learning_protocol}.
A complete sample-complexity analysis for these tensor types is left for future work, since it requires controlling normalization factors, possible postselection or success-probability overheads, and error propagation through the ratio form in Eq.~\eqref{eq:httn_exp_value}.

This perspective also clarifies how two consequences of Theorem~\ref{thm:existence_of_no_cost_cut} connect two apparently different divide-and-conquer viewpoints. 
Theorem~\ref{thm:zero_cost_timelike_cut_with_an_approximated_unitary} follows the circuit-cutting route: the learned downstream effective observable is used to choose an observable-adaptive measure-and-prepare replacement of the cut wire, in the spirit of expectation-value-level cut optimizations. 
Theorem~\ref{thm:zero_cost_proccessing}, in contrast, is closer to an HTTN-style contraction viewpoint. 
It uses the same learned effective observable through its eigenbasis and eigenvalues to define a measurement with classical post-processing, analogous to compressing a downstream quantum tensor into an effective operator on the bond index.

Thus, these two theorems can be viewed as two operational uses of the same Heisenberg-picture object, linking learning-based circuit cutting and HTTN-style contraction.

\subsection{Fundamental limit of circuit knitting}\label{main:fundamental_limit}

Several recent works have clarified fundamental limitations of circuit knitting and circuit cutting~\cite{PhysRevA.111.012433,marshall2023all}. In this subsection, we explain why our polynomial-scaling results do not contradict these lower bounds.

\subsubsection{Entanglement-cost lower bound}

We first discuss the entanglement-cost lower bounds of Ref.~\cite{PhysRevA.111.012433}. 
This work studies circuit knitting as a channel-simulation problem for a bipartite quantum channel $\mathcal{N}_{AB\to A'B'}$. Given a set $\mathcal{F}$ of allowed bipartite operations, such as LOCC, separable (SEP), or positive partial transpose (PPT) operations, the one-shot $\gamma$ factor is defined by the smallest $\ell_1$-norm of a channel-level quasiprobability decomposition:
\begin{equation}
\begin{split}
    \gamma_{\mathcal{F}}(\mathcal{N})
    :=
    \min
    \Bigg\{
        \sum_{j=1}^{m} |\alpha_j| :\
        &\mathcal{N}
        =
        \sum_{j=1}^{m} \alpha_j \mathcal{M}_j,\ 
        \mathcal{M}_j\in\mathcal{F},\\
        &\alpha_j\in \mathbb{R}, m\geq 1
    \Bigg\}.
\end{split}
\end{equation}
This is a universal channel-level quantity: 
the decomposition must reproduce the channel $\mathcal{N}$ itself, independently of the input state and of the observable measured later~\footnote{Here, ``channel-level'' means that the same decomposition is valid as an identity of linear maps, independently of the surrounding circuit, not that the QPD physically implements $\mathcal{N}$. Since standard QPDs generally involve negative coefficients, their operational use is also typically limited to expectation-value estimation.}.

Ref.~\cite{PhysRevA.111.012433} also considers an asymptotic variant of this task, called parallel cutting. 
Suppose that the same nonlocal channel $\mathcal{N}$ is used $n$ times. 
Instead of cutting the $n$ copies independently, one may regard $\mathcal{N}^{\otimes n}$ as a single large bipartite channel and optimize its QPD jointly.
Then, the effective per-use overhead is $\gamma_{\mathcal{F}}^{(n)}(\mathcal{N}):=\left[\gamma_{\mathcal{F}_n}(\mathcal{N}^{\otimes n})
\right]^{1/n}$, where $\mathcal{F}_n$ denotes the corresponding free operations acting on the $n$-copy systems. 
The regularized overhead, or regularized $\gamma$ factor, is the asymptotic value
\begin{equation}
    \gamma_{\mathcal{F}}^{\infty}(\mathcal{N})
    :=
    \lim_{n\to\infty}
    \gamma_{\mathcal{F}}^{(n)}(\mathcal{N}).
\end{equation}
Since cutting the $n$ copies independently is a special case of parallel cutting, one has $\gamma^{\infty}_{\mathcal F}(\mathcal N)\le \gamma_{\mathcal F}(\mathcal N)$.

The main result of Ref.~\cite{PhysRevA.111.012433} shows that this asymptotic per-use overhead is still lower bounded by entanglement-theoretic quantities. In particular, for the PPT- and SEP-assisted decompositions, one has
\begin{equation}
    \gamma^{\infty}_{\rm PPT}(\mathcal{N})
    \geq
    2^{E^{\rm PPT}_{C,0}(\mathcal{N})},
    \qquad
    \gamma^{\infty}_{\rm SEP}(\mathcal{N})
    \geq
    2^{E^{\rm SEP}_{C,0}(\mathcal{N})},
\end{equation}
where $E^{\mathcal F}_{C,0}(\mathcal{N})$ is the exact parallel entanglement cost of the channel under the corresponding free operation set. 
Since LOCC operations are contained in SEP operations, the SEP-assisted bound also implies a lower bound for LOCC-assisted circuit knitting. These results show that a bipartite channel with large entanglement cost cannot, in general, be simulated by local operations and classical post-processing without an exponentially large sampling overhead, even in the asymptotic parallel-cutting setting.

Our result does not contradict this limitation because it addresses a different simulation task. We do not claim that, in the channel-level sense considered above, either a general nonlocal channel or the identity channel admits a universal decomposition with low-$\gamma$. 
Instead, we relax the requirement to the expectation-value level. For a fixed downstream circuit component and target observable, our decomposition is only required to preserve the final expectation value of interest, and is adaptively built from information learned about the downstream effective observable; see the discussion around Eq.~\eqref{eq:channel-level_decomp}. 
When this effective observable is learned approximately, the resulting error appears as a controllable bias rather than a multiplicative rescaling factor. 
Thus, while Ref.~\cite{PhysRevA.111.012433} gives lower bounds on universal channel-level simulation cost, our result concerns a task-dependent, observable-adaptive guarantee for a fixed expectation-value task.

\subsubsection{Complexity-theoretic limitation}

A complementary limitation was studied in Ref.~\cite{marshall2023all} from a complexity-theoretic perspective. 
This work focuses on \textit{cut-local} circuit cutting schemes, namely schemes whose modifications are confined to the cut locations, such as partition-crossing gates.
It distinguishes two regimes. 
First, if the input state and measurement are fixed, then a small decomposition can formally exist, but an efficient classical algorithm that finds such decompositions for arbitrary circuits would imply $\mathrm{BPP}=\mathrm{BQP}$. 
Second, if the same decomposition is required to work for all inputs and measurements, then polynomially many cut-local terms are insufficient in general, even without computational assumptions.

Our protocol is different from the setting considered in this no-go result.
In the first regime, Ref.~\cite{marshall2023all} already allows the decomposition to depend on the fixed input state and measurement, but asks whether such decompositions can be found efficiently for arbitrary circuits by a classical algorithm. 
Our protocol does not provide such a classical procedure. 
It is restricted to tree-structured circuits, and the cut is obtained by running local quantum experiments to learn the relevant downstream effective observable, with this learning cost included in the sample complexity. 
The local subcircuits are still executed on quantum devices, so the protocol does not imply an efficient classical simulation of arbitrary quantum computation. 
For the second regime, our decomposition is also not universal. That is, it is tailored to a fixed downstream component and target observable, rather than being a single decomposition that works for all inputs and measurements.

Taken together, these lower bounds show that exponential barriers remain for universal channel-level circuit knitting and for cut-local decompositions of arbitrary circuits. 
Our result shows that, for tree-structured circuits and fixed expectation-value tasks, such barriers can be avoided by using local learning to construct observable-adaptive cuts.

\subsection{Classical tensor-network simulation}\label{main:tensor_network}

Our tree-structured setting is closely related to classical tensor-network simulation on low-treewidth graphs. 
A central result in this direction is due to Markov and Shi~\cite{doi:10.1137/050644756}. 
They showed that a quantum circuit with $T$ gates and underlying circuit graph $G_C$ can be simulated deterministically in time
\begin{equation}
    T^{\mathcal O(1)}
    \exp\!\left[
        \mathcal O\!\left(\operatorname{tw}(G_C)\right)
    \right],
\end{equation}
where $\operatorname{tw}(G_C)$ is the treewidth of the full gate-level circuit graph. 
Thus, when an explicit tensor-network description of the full circuit is available, circuits of logarithmic treewidth, and in particular circuits whose full gate-level graph is a tree, are classically tractable.

Our result concerns a different access model. 
Classical treewidth-based simulation contracts known tensors on a classical computer. 
By contrast, our protocol treats each local component as an experimentally accessible quantum channel and uses local quantum experiments to learn only the effective-observable information that must cross each cut. 
Thus, our polynomial scaling should not be interpreted as a classical hardness statement for tree-structured circuits. 
It is instead a measurement-complexity guarantee in a circuit-knitting setting where local quantum devices are available, full classical descriptions of the local components are not assumed, and global coherent execution is unavailable.

A useful way to see the distinction is to separate the coarse-grained interaction graph from the full gate-level graph. 
Let $H$ denote the coarse-grained graph whose vertices are the local components and whose edges are the cut wires, and let $G_{\rm full}$ denote the full gate-level circuit graph obtained by expanding every local component into its internal gates. 
In our setting, $H$ is a tree, so $\operatorname{tw}(H)=1$. 
However, this does not imply that $\operatorname{tw}(G_{\rm full})=1$. 
Indeed, if $G_i$ denotes the internal gate-level graph of the $i$-th local component, then $G_i$ is a subgraph of $G_{\rm full}$. 
By monotonicity of treewidth under taking subgraphs,
\begin{equation}
    \operatorname{tw}(G_{\rm full})
    \geq
    \max_i \operatorname{tw}(G_i).
\end{equation}
Therefore, a circuit can be tree-structured after coarse graining into local components while the full gate-level graph relevant to Markov-Shi simulation still has large treewidth. 
Equivalently, one may coarse-grain each component into a single tensor and obtain a tree tensor network, but then a classical contraction method requires an explicit description of each coarse-grained tensor.

Our protocol avoids this tensor-materialization step. 
It executes each local component on a quantum device and learns only the boundary information relevant to the target expectation value. 
In this sense, our method can be viewed as an experimental analogue of tree tensor contraction: classical contraction propagates explicit tensors, whereas our protocol propagates learned effective observables in the Heisenberg picture. 
The structural dependence is similar in spirit, but the resource being counted is different. 
Classical tensor-network simulation counts classical contraction time for known tensors, while our protocol counts the number of local quantum experiments needed to learn and use the relevant boundary information. 
The sample complexity of our protocol depends on the boundary dimension and the tree structure, rather than directly on the internal gate-level treewidth or non-Clifford complexity of each local component, although each shot still requires executing the corresponding local quantum circuit.

\section{Conclusion and discussions}\label{sec:coclusion}

In this work, we showed that the exponential-in-cuts measurement overhead of circuit cutting/knitting is not a universal consequence of divide-and-conquer.
For arbitrary finite tree-structured quantum circuits, we introduced an observable-adaptive cutting method for estimating expectation values of target observables.
The method constructs each cut from the Heisenberg-evolved observable relevant to the target measurement.
When this observable is learned approximately from local input-output experiments, the resulting cut has unit rescaling and incurs only a controlled additive bias.
By allocating the local learning accuracies across the tree, we obtained circuit-knitting protocols whose total measurement cost, including the local learning cost, is polynomial in the actual number of cut edges, the cut dimensions, and the inverse target precision.
This guarantee is uniform over the tree topology, rather than restricted to paths or bounded-branching trees.
We also proved an information-theoretic exponential separation from learning-free wire-cutting protocols in the two-layer setting.
This separation shows that the polynomial scaling does not follow from the tree structure alone, but from using learned downstream information to choose the cuts adaptively.

Conceptually, the result changes how the cost accumulates across cuts.
Conventional QPD-based cutting methods keep the estimator unbiased, but their rescaling factors multiply across cuts and amplify the variance.
Our construction removes the per-cut rescaling factor and replaces it with additive biases that can be reduced locally by improving the learning accuracy.
Thus, the central trade-off is shifted from multiplicative variance amplification to additive bias allocation.

An immediate question is how far this mechanism extends beyond trees.
One possible route is to combine standard gate cuts~\cite{10236453,Ufrecht2023cuttingmulticontrol,PRXQuantum.6.010316,schmitt2024cutting} or wire cuts~\cite{brenner2023optimal,harada2025doubly,PRXQuantum.6.010316,pednault2023alternative} with the present tree protocol.
For example, one may first decompose a general circuit into a coarse-grained cluster graph and then apply observable-adaptive learning along a tree decomposition of that graph.
In terms of its scaling behavior, such a hybrid approach would resemble classical tensor-network simulation of quantum circuits~\cite{doi:10.1137/050644756}: one expects polynomial dependence on the number of clusters, with a possible exponential dependence on an appropriate coarse-grained width parameter, such as the width of a tree decomposition.
Making this analogy precise in the local-experiment access model is an open problem.
Developing lower bounds would also be valuable for identifying which graph parameters, or which patterns of inter-cluster information flow, can force exponential overhead even in the presence of learning-assisted cuts.

Two aspects of the local learning step remain to be better understood.
First, our present operator-norm bounds are derived using matrix Bernstein-type inequalities and therefore inherit logarithmic factors in the cut dimension.
In related randomized tomography settings, such factors can sometimes be removed by using more refined proof techniques for sufficiently symmetric measurement ensembles~\cite{Guta_2020}.
It would be useful to determine whether the same is true for learning Heisenberg-evolved observables in our local input-output access model.
More generally, understanding whether the present $d^3$ dependence can be improved remains an important question.

Second, the current bound treats the downstream component as a general quantum channel.
In many relevant settings, however, the Heisenberg-evolved boundary observable may have a more compact description.
Examples include limited non-Clifford complexity or low-magic structure~\cite{g53f-z8cr}, bounded-gate-complexity dynamics~\cite{PRXQuantum.5.040306}, locality, and restricted noise models.
In such cases, one may not need to learn a general $d$-dimensional observable from scratch.
Identifying these structure-dependent regimes, while keeping track of both measurement cost and classical post-processing cost, would be an important step toward more practical observable-adaptive cutting protocols.

Another implementation question is how observable-adaptive cutting interacts with qubit-reuse compilation based on mid-circuit measurement and reset~\cite{PhysRevX.13.041057}.
Qubit reuse reduces device-size requirements in a way that is complementary to the measurement-overhead reduction studied in this work.
In tree-structured circuits, combining the two could trade larger local experiments and larger local learning tasks for fewer cut edges.
Quantifying this trade-off may be useful for applying observable-adaptive cutting to modular, distributed, or early fault-tolerant architectures.

Finally, our results point to a broader design principle for suppressing accumulated sampling overhead in virtual quantum-simulation techniques~\cite{PhysRevX.9.031013,PhysRevLett.132.050203,PhysRevA.109.022403,yao2024optimal,PhysRevLett.132.110203,yamamoto2024virtualentanglementpurificationnoisy}. In many virtual protocols, including circuit cutting, the main obstacle is not a single large overhead, but the multiplicative build-up of rescaling factors over many virtual operations or cut locations. In our setting, this accumulation is controlled by using local experimental resources to learn downstream effective observables, thereby replacing per-cut rescaling with a controllable additive bias. Other virtual-simulation protocols use different local resources, such as local coherent processing, to reduce the rescaling cost of individual virtual steps~\cite{wada2025state}. These examples suggest a common design principle: when additional local resources can be used to reduce or remove local rescaling factors, the global sampling cost may avoid multiplicative blow-up. Identifying the regimes in which this trade-off is possible may be useful for virtual simulation, quantum error mitigation~\cite{RevModPhys.95.045005,PhysRevLett.119.180509,PhysRevX.8.031027}, and hybrid quantum-classical simulation~\cite{endo2021hybrid}.

\begin{acknowledgements}
We thank Kosuke Mitarai and Bo Yang for helpful discussions.
This work was supported by JST [Moonshot R\&D] Grant No.~JPMJMS2061; MEXT Q-LEAP, Grant No.~JPMXS0120319794 and JPMXS0118067285; JST CREST Grant No.~JPMJCR23I4, and No.~JPMJCR25I4. H.H. was supported by JSPS KAKENHI Grant Number JP26KJ1951. K.W. was supported by JSPS KAKENHI Grant Number JP24KJ1963 and JST ASPIRE Grant Number JPMJAP2316.
\end{acknowledgements}

\bibliography{bib}

\clearpage
\appendix
\onecolumngrid
\beginsupplement

\TOCwriteon

\begin{center}
  \Large \textbf{Appendices}
\end{center}

\addtocontents{toc}{\protect\setcounter{tocdepth}{3}}

\tableofcontents

\vspace{4mm}\noindent
We outline the structure of the appendices.
In Appendix~\ref{sec:QPD}, we provide a brief review of the quasiprobability simulation framework underlying quantum circuit cutting methods.
In Appendix~\ref{apsec:rescaling-free_wirecut}, we present the proofs of Theorems~\ref{thm:existence_of_no_cost_cut}--\ref{thm:zero_cost_proccessing} on the rescaling-free wire cut introduced in the main text.
In Appendix~\ref{apsec:learning_observable}, we first introduce useful tools and concepts for deriving tomography methods for observables evolved under an unknown quantum channel. Using these tools, we formulate learning procedures and provide their rigorous performance guarantees.
In Appendix~\ref{apsec:cluster_simulation}, we analyze the sample complexity upper bound of simulating tree-structured quantum circuits on local devices using our learning-based protocols.
In Appendix~\ref{sec:lower-bound-qcc}, we focus on two-layer tree-structured quantum circuits and perform an information-theoretic analysis that proves an exponential separation in the number of cuts between our learning-based protocol and the learning-free wire-cutting class considered in Sec.~\ref{sec:exponential_separation}.
In Appendix~\ref{apsec:relation_to_wire_cut}, we discuss the relationship between Theorem~\ref{thm:existence_of_no_cost_cut} and conventional wire-cutting methods.

\section{Quasiprobability simulation}\label{sec:QPD}

Quasiprobability simulation provides a general framework for classically reproducing the outcome statistics of a quantum process that cannot be directly implemented, using only the available operations on a given device.
This simulation scheme has been extensively employed across a wide range of quantum information and computation tasks, including quantum error mitigation~\cite{RevModPhys.95.045005,PhysRevLett.119.180509,PhysRevX.8.031027}, classical simulation of quantum systems~\cite{PhysRevLett.115.070501, PhysRevLett.118.090501, doi:10.1098/rspa.2019.0251, PRXQuantum.2.010345}, and quantum circuit cutting. 
For instance, in the context of classical simulation of quantum systems, a non-Clifford gate is regarded as the desired operation to be replaced, and it is typically simulated by stabilizer operations.

The key technical idea of the quasiprobability simulation is to express the target operation as a linear combination of implementable ones, referred to as a quasiprobability decomposition (QPD). Suppose that, due to physical limitations or experimental costs, the available set of quantum operations is restricted to a family $F$, while the desired operation $\mathcal{T}$ lies outside this set. Then, we represent $\mathcal{T}$ as a linear combination of operations $\{\mathcal{E}_i\}_i\subseteq F$:
\begin{equation}\label{apeq:QPD}
    \mathcal{T} = \sum_i a_i \mathcal{E}_i,
\end{equation}
where the coefficients $a_i$ are real numbers that may take negative values. Building on this decomposition, quasiprobability simulation effectively reproduces the action of $\mathcal{T}$ by randomly sampling $\mathcal{E}_i$ with the probabilities $|a_i|/\gamma$ and weighting outcomes by $\gamma \mathrm{sgn}(a_i)$, where $\gamma:=\sum_i|a_i|$.

To illustrate this idea, let us consider the task of estimating the expectation value of an observable $O$ with $\|O\|_{\infty}\leq1$ for the state obtained by applying $\mathcal{T}$ to an initial state $\rho$. 
Using Eq.~\eqref{apeq:QPD}, the desired quantity can be rewritten as
\begin{equation}\label{eq:QPD2}
    \mathrm{tr}[O \mathcal{T}(\rho)] = \sum_i \frac{|a_i|}{\gamma} \, \gamma \mathrm{sgn}(a_i) \, \mathrm{tr}[O \mathcal{E}_i(\rho)].
\end{equation}
where $\gamma:=\sum_i |a_i|$ and $\mathrm{sgn}(\cdot)$ denotes the sign function. 
Since $|a_i|/\gamma$ defines a normalized probability over the implementable operations, Eq.~\eqref{eq:QPD2} provides a natural Monte-Carlo sampling scheme: for each shot, select an index $i$ according to the probability $|a_i|/\gamma$, apply the corresponding operation $\mathcal{E}_i$ to $\rho$, measure the observable $O$, and weight the measurement outcome by $\gamma \mathrm{sgn}(a_{i})$. By averaging these weighted outcomes over the total number of shots $N$, we obtain the desired expectation value $\mathrm{tr}[O \mathcal{T}(\rho)]$. If $\|O\|_\infty\le 1$, each weighted single-shot outcome is bounded in
absolute value by $\gamma$. Therefore, Hoeffding's inequality implies
that $\mathcal O(\gamma^2\epsilon^{-2}\log(1/\delta))$ samples are sufficient to
estimate the expectation value within additive error $\epsilon$ with failure probability at most $\delta$.

\section{Observable-adaptive rescaling-free wire cuts}\label{apsec:rescaling-free_wirecut}

\subsection{Proof of Theorem~\ref{thm:existence_of_no_cost_cut}}
\label{sec:proof_of_existence_of_no_cost_cut}

\newtheorem*{T1}{\bf{Theorem~\ref{thm:existence_of_no_cost_cut}} (Existence of a rescaling-free wire cut)}
\begin{T1}
\textit{
Let $\Phi:\Linear{\mathbb{C}^{d_{\rm in}}} \to \Linear{\mathbb{C}^{d_{\rm out}}}$ be a CPTP map, and let $O \in \Herm{\mathbb{C}^{d_{\rm out}}}$ be a Hermitian operator. 
Then there exists a MP channel $\mathcal{M}_{\rm id}:\Linear{\mathbb{C}^{d_{\rm in}}} \to \Linear{\mathbb{C}^{d_{\rm in}}}$ such that, for any $X\in \Linear{\mathbb{C}^{d_{\rm in}}}$,
\begin{equation}
    \mathrm{tr}\left[O \,\Phi(X)\right] = \mathrm{tr}\left[ O\, (\Phi \circ \mathcal{M}_{\rm id}) (X) \right]. 
\end{equation}
An explicit construction of $\mathcal{M}_{\rm id}$ is given by
\begin{equation}\label{apeq:gamma_optimal_timelike_cut}
    \mathcal{M}_{\rm id}(X) = \sum_{j} \mathrm{tr}\left[V \pure{j} V^{\dagger} X\right] V\pure{j} V^{\dagger},
\end{equation}
where $\{\ket{j}\}$ denotes the computational basis, and $V$ is a unitary that diagonalizes $O_{\Phi}:=\Phi^{\dagger}(O)$, i.e., \begin{equation}\label{apeq:eigenvalue_decomposition_of_O}
    O_{\Phi}:= \Phi^{\dagger}(O) = V D V^{\dagger}.
\end{equation}
}
\end{T1}

\begin{proof}[Proof of Theorem~\ref{thm:existence_of_no_cost_cut}]
Let $D:=\sum_{j} \lambda_j \pure{j}$ with the computational basis $\{\ket{j}\}$.
By the definition of $\mathcal{M}_{\rm id}$ in Eq.~\eqref{apeq:gamma_optimal_timelike_cut}, it is straightforward to see
\begin{eqnarray}
    \mathcal{M}_{\rm id}(O_{\Phi}) 
    &=& \sum_{j}\braket{j|V^{\dagger}O_{\Phi}V|j} V\pure{j} V^{\dagger}\\
    &=& \sum_j \lambda_j V\pure{j} V^{\dagger} \\
    &=& O_{\Phi}\label{apeq:fixed_point},
\end{eqnarray}
which implies $O_{\Phi}$ is a fixed point of $\mathcal{M}_{\rm id}$.
Thus, we have
\begin{eqnarray}
    \mathrm{tr}\left[ O\, (\Phi \circ \mathcal{M}_{\rm id})(X) \right] 
    &=& \mathrm{tr}\left[ \mathcal{M}^{\dagger}_{\rm id}(O_{\Phi}) X \right] \\
    &=& \mathrm{tr}\left[ \mathcal{M}_{\rm id}(O_{\Phi}) X \right] \\
    &=& \mathrm{tr}\left[ O_{\Phi} X \right] \\
    &=& \mathrm{tr}\left[ O\, \Phi(X) \right].
\end{eqnarray}
Here, the second equality uses the self-adjointness of $\mathcal{M}_{\rm id}$ with respect to the Hilbert-Schmidt inner product, the third equality uses Eq.~\eqref{apeq:fixed_point}. This proves Theorem~\ref{thm:existence_of_no_cost_cut}.
\end{proof}

\subsection{Proof of Theorem~\ref{thm:zero_cost_timelike_cut_with_an_approximated_unitary}}\label{sec:proof-zero_cost_timelike_cut_with_an_approximated_unitary}

\newtheorem*{T2}{\bf{Theorem~\ref{thm:zero_cost_timelike_cut_with_an_approximated_unitary}} (Approximate rescaling-free wire cut)}
\begin{T2}
\textit{ 
Let $\Phi:\Linear{\mathbb{C}^{d_{\rm in}}} \to \Linear{\mathbb{C}^{d_{\rm out}}}$ be a CPTP map, and let $O \in \Herm{\mathbb{C}^{d_{\rm out}}}$ be a Hermitian operator.
Denote by $O_{\Phi}:=\Phi^{\dagger}(O)$ the corresponding effective observable, and let $\tilde{O}_{\Phi}$ be its Hermitian approximation satisfying $\|\tilde{O}_{\Phi} - O_{\Phi} \|_{\infty} \leq \eta$. 
Define the MP channel $\mathcal{M}_{\rm apr}:\Linear{\mathbb{C}^{d_{\rm in}}} \to \Linear{\mathbb{C}^{d_{\rm in}}}$ given by 
\begin{equation} 
    \mathcal{M}_{\rm apr}(X) := \sum_{j} \mathrm{tr} \left[ \tilde{V} \pure{j} \tilde{V}^{\dagger} X \right] \tilde{V} \pure{j} \tilde{V}^{\dagger}, 
\end{equation} 
where $\{\ket{j}\}$ denotes the computational basis, and $\tilde{V}$ is a unitary that diagonalizes $\tilde{O}_{\Phi}$ 
(i.e., $\tilde{O}_{\Phi}=\tilde{V}\tilde{D} \tilde{V}^{\dagger}$). 
Then for any  $X\in \Linear{\mathbb{C}^{d_{\rm in}}}$,
\begin{equation}
    \Bigl| \mathrm{tr}[O (\Phi \circ \mathcal{M}_{\rm apr}) (X)] - \mathrm{tr}[O\Phi(X)] \Bigr| \leq 2\eta \|X \|_{1}.
\end{equation}
}
\end{T2}

\begin{proof}[Proof of Theorem~\ref{thm:zero_cost_timelike_cut_with_an_approximated_unitary}]
Using the definition of $O_{\Phi}:=\Phi^{\dagger}(O)$ and the self-adjointness of $\mathcal{M}_{\rm apr}$, the difference between the expectation value with and without cut can be rewritten as
\begin{eqnarray}
     \mathrm{tr}[O (\Phi \circ \mathcal{M}_{\rm apr}) (X)] - \mathrm{tr}[O\Phi(X)]
     = \mathrm{tr}\left[ \left( \mathcal{M}_{\rm apr}(O_{\Phi}) - O_{\Phi} \right) X \right].
\end{eqnarray}
Here, Hölder's inequality, $| \mathrm{tr}[AB] | \leq \left\|A\right\|_{\infty}\left\|B \right\|_1$, yields
\begin{eqnarray}
    \Bigl| \mathrm{tr}\left[ \left( \mathcal{M}_{\rm apr}(O_{\Phi}) - O_{\Phi} \right) X \right] \Bigr|\leq \left\| \mathcal{M}_{\rm apr}(O_{\Phi}) - O_{\Phi} \right\|_{\infty} \left\| X\right\|_{1}.\label{eq:holder}
\end{eqnarray}
Since $\tilde{O}_{\Phi}$ is a fixed point of $\mathcal{M}_{\rm apr}$, i.e., $\mathcal{M}_{\rm apr}(\tilde{O}_{\Phi})=\tilde{O}_{\Phi}$, we obtain
\begin{eqnarray}
    \mathcal{M}_{\rm apr}(O_{\Phi}) - O_{\Phi} =  \mathcal{M}_{\rm apr}(O_{\Phi} - \tilde{O}_{\Phi}) - (O_{\Phi} - \tilde{O}_{\Phi}).
\end{eqnarray}
With the triangle inequality, the operator norm of this difference can be bounded as
\begin{eqnarray}
    \| \mathcal{M}_{\rm apr}(O_{\Phi}) - O_{\Phi}\|_{\infty}
    &\leq&  \| \mathcal{M}_{\rm apr}(O_{\Phi} - \tilde{O}_{\Phi})  \|_{\infty} + \| O_{\Phi} - \tilde{O}_{\Phi} \|_{\infty} \\
    &\leq& 2\| O_{\Phi} - \tilde{O}_{\Phi} \|_{\infty} \\
    &\leq& 2\eta,
\end{eqnarray}
where the second inequality follows from the contractivity of the MP channel, i.e.,  $\|\mathcal{M}_{\rm apr}(A)\|_{\infty} \leq \|A\|_{\infty},\forall A$, which is guaranteed by its construction as a unital pinching map.
Substituting this bound into Eq.~\eqref{eq:holder}, we arrive at
\begin{equation}
    \Bigl|  \mathrm{tr}[O (\Phi \circ \mathcal{M}_{\rm apr}) (X)] - \mathrm{tr}[O\Phi(X)] \Bigr| \leq 2\eta \|X \|_{1},
\end{equation}
which completes the proof.
\end{proof}

\subsection{Proof of Theorem~\ref{thm:zero_cost_proccessing}}\label{sec:proof_of_zero_cost_processing}

\newtheorem*{T3}{\bf{Theorem~\ref{thm:zero_cost_proccessing} (Post-processing from an approximate effective observable)}}
\begin{T3}
\textit{
Let $\Phi:\Linear{\mathbb{C}^{d_{\rm in}}} \to \Linear{\mathbb{C}^{d_{\rm out}}}$ be a CPTP map, and let $O \in \Herm{\mathbb{C}^{d_{\rm out}}}$ be a Hermitian operator.
Denote by $O_{\Phi}:=\Phi^{\dagger}(O)$ the corresponding effective observable, and let $\tilde{O}_{\Phi}$ be its Hermitian approximation satisfying $\|\tilde{O}_{\Phi} - O_{\Phi} \|_{\infty} \leq \eta$. 
Define the classical post-processing function
\begin{equation}\label{apeq:post-processing_functional}
    \mathcal{C}_{\rm apr}(X) := \sum_{j} \mathrm{tr} \left[ \tilde{V} \pure{j} \tilde{V}^{\dagger} X \right] \tilde{\lambda}_{j},
\end{equation}
where $\{\ket{j}\}$ denotes the computational basis, $\tilde{V}$ is a unitary that diagonalizes $\tilde{O}_{\Phi}$ (i.e., $\tilde{O}_{\Phi}=\tilde{V}\tilde{D} \tilde{V}^{\dagger}$), and $\tilde{D}:=\sum_j \tilde{\lambda}_j \pure{j}$. 
Then, for any  $X\in \Linear{\mathbb{C}^{d_{\rm in}}}$,
\begin{eqnarray}
    \Bigl|  \mathcal{C}_{\rm apr}(X) - \mathrm{tr}[O\Phi(X)] \Bigr| 
    &\leq& \eta \|X \|_{1},\\
    \max_{j} |\tilde{\lambda}_j| \leq \|O \|_{\infty}+\eta.\label{apeq:maximal_norm} 
\end{eqnarray}
}
\end{T3}

\begin{proof}[Proof of Theorem~\ref{thm:zero_cost_proccessing}]
By the definition of $\tilde{D}$, it is straightforward to rewrite Eq.~\eqref{apeq:post-processing_functional} as
\begin{equation}
    \mathcal{C}_{\rm apr}(X) = \mathrm{tr}[ \tilde{O}_{\Phi} X].
\end{equation}
Hence, using Hölder's inequality, 
\begin{eqnarray}
    \left| \mathcal{C}_{\rm apr}(X) - \mathrm{tr}[O \, \Phi(X)] \right| &\leq& \| \tilde{O}_{\Phi} - O_{\Phi} \|_{\infty} \| X \|_1
    \leq \eta \| X \|_1.
\end{eqnarray}
For Eq.~\eqref{apeq:maximal_norm}, since $O$ is Hermitian we have $-\|O\|_\infty I_{d_{\rm out}}\leq O \leq \|O\|_\infty I_{d_{\rm out}}$. Because the adjoint map of the CPTP map is unital and positive, 
\begin{equation}
    -\|O\|_\infty I_{d_{\rm in}}
    \le
    O_\Phi
    \le
    \|O\|_\infty I_{d_{\rm in}},
\end{equation}
and hence $\|O_\Phi\|_\infty\le \|O\|_\infty$. Thus,
\begin{eqnarray}
  \max_j |\tilde\lambda_j|
  &=&\|\tilde{O}_{\Phi} \|_{\infty} \\
  &\leq&  \|O_{\Phi} \|_{\infty} + \|\tilde{O}_{\Phi} - O_{\Phi} \|_{\infty} \\
   &\leq& \|O \|_{\infty} + \eta, 
\end{eqnarray}
which completes the proof.
\end{proof}

\section{Learning Heisenberg-evolved observables under unknown quantum channels}\label{apsec:learning_observable}

\subsection{Background}
\subsubsection{Useful tricks with the SWAP operator}
In the following, we present formulas for the SWAP operator, which will be used, especially, to introduce a protocol for learning local observables; see Sec.~\ref{sec:learning_protocol}. 
We denote by $\mathrm{SWAP}_n$ the operator that swaps two copies of an $n$-qubit system, defined as
\begin{equation}\label{eq:def_of_n_swap}
    \mathrm{SWAP}_{n} := \sum_{i,j\in\{0,1\}^n} \ket{i}\bra{j} \otimes \ket{j}\bra{i}, 
\end{equation}
where $\{\ket{i}\}_{i\in\{0,1\}^n}$ denotes the computational basis of $\mathbb{C}^d$ with $d=2^n$.
First, we introduce 
a well-known decomposition of $\mathrm{SWAP}_{n}$ with the tensor products of two $n$-qubit Pauli observables.
\begin{lemma}\label{lemma:swap_decomp_1}
The $2n$-qubit SWAP operator can be expressed as a uniform sum of tensor products of identical $n$-qubit Pauli operators:
\begin{equation}\label{eq:swap_decomp_by_n_paulis}
    \mathrm{SWAP}_{n}=\frac{1}{2^n}\sum_{P\in\{I,X,Y,Z\}^{\otimes n}} P\otimes P.
\end{equation}
\end{lemma}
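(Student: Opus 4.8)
The plan is to deduce the identity from a basis-independent ``transpose trick'' for the SWAP operator, applied to the normalized $n$-qubit Pauli basis. The core observation I would establish first is the following general fact: for \emph{any} orthonormal basis $\{G_\alpha\}_\alpha$ of $\Linear{\mathbb{C}^d}$ with respect to the Hilbert--Schmidt inner product $\langle A,B\rangle = \mathrm{tr}(A^\dagger B)$ (here $d=2^n$), one has $\sum_\alpha G_\alpha \otimes G_\alpha^\dagger = \mathrm{SWAP}_n$. This is proved in two short steps. First, evaluating the left-hand side on the matrix-unit basis $\{\ket{i}\bra{j}\}_{i,j}$ gives $\sum_{i,j}\ket{i}\bra{j}\otimes\ket{j}\bra{i}$, which is exactly $\mathrm{SWAP}_n$ by Eq.~\eqref{eq:def_of_n_swap}. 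Second, the expression is independent of the choice of orthonormal basis: writing $G_\alpha = \sum_\beta U_{\alpha\beta} E_\beta$ for a change-of-basis matrix $U$, orthonormality of both bases forces $U$ to be unitary, and then $\sum_\alpha G_\alpha \otimes G_\alpha^\dagger = \sum_{\beta,\beta'}\bigl(\sum_\alpha U_{\alpha\beta}\overline{U_{\alpha\beta'}}\bigr) E_\beta \otimes E_{\beta'}^\dagger = \sum_\beta E_\beta \otimes E_\beta^\dagger$, since $\sum_\alpha \overline{U_{\alpha\beta'}}\,U_{\alpha\beta} = (U^\dagger U)_{\beta'\beta} = \delta_{\beta'\beta}$.

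Next I would verify that $\{2^{-n/2}P : P \in \{I,X,Y,Z\}^{\otimes n}\}$ is such an orthonormal basis. There are $4^n = (2^n)^2 = \dim\Linear{\mathbb{C}^d}$ elements; each $n$-qubit Pauli $P$ is Hermitian with $P^2 = I$, so $\mathrm{tr}(P^\dagger P) = \mathrm{tr}(I_d) = 2^n$; and for distinct $P,Q$ the product $PQ$ is an overall phase times a non-identity $n$-qubit Pauli (they differ in at least one tensor factor, and there the single-qubit product is a phase times $X$, $Y$, or $Z$), hence traceless, so $\mathrm{tr}(P^\dagger Q) = \mathrm{tr}(PQ) = 0$. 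Thus the normalized Paulis are pairwise orthonormal and, by the dimension count, form a basis. Applying the general fact with $G_\alpha = 2^{-n/2}P$ and using $P^\dagger = P$ yields $\mathrm{SWAP}_n = \sum_P (2^{-n/2}P)\otimes(2^{-n/2}P)^\dagger = 2^{-n}\sum_{P} P\otimes P$, which is precisely Eq.~\eqref{eq:swap_decomp_by_n_paulis}.

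I expect the only mild subtlety to be the bookkeeping with adjoints versus complex conjugates: the trick requires $G_\alpha \otimes G_\alpha^\dagger$ (not $G_\alpha \otimes \overline{G_\alpha}$), which is harmless here only because Pauli operators are Hermitian. An alternative, more hands-on route would be to first check the single-qubit case $\mathrm{SWAP}_1 = \tfrac12(I\otimes I + X\otimes X + Y\otimes Y + Z\otimes Z)$ by direct $4\times 4$ matrix multiplication, and then lift to $n$ qubits by regrouping the $2n$ tensor factors into $n$ adjacent pairs, using $\mathrm{SWAP}_n = \bigotimes_{s=1}^n \mathrm{SWAP}_1^{(s)}$ together with $\bigl(\sum_{P_s}P_s\otimes P_s\bigr)^{\otimes n} = \sum_{P}P\otimes P$ after the same regrouping. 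In that route the only real care needed is tracking the permutation of tensor factors, which is why I would prefer the basis-independent argument.
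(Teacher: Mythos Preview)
Your argument is correct. The paper, however, takes exactly the ``alternative, more hands-on route'' you sketch at the end: it states the single-qubit identity $\mathrm{SWAP}_1=\tfrac12\sum_{P\in\{I,X,Y,Z\}}P\otimes P$ as straightforward to verify, and then invokes $\mathrm{SWAP}_n=\mathrm{SWAP}_1^{\otimes n}$ to conclude. Your primary approach---the basis-independent fact that $\sum_\alpha G_\alpha\otimes G_\alpha^\dagger=\mathrm{SWAP}_n$ for any Hilbert--Schmidt orthonormal basis---is more general and makes transparent why the result has nothing to do with the Pauli basis specifically, at the cost of a slightly longer setup (verifying basis-independence and orthonormality of the normalized Paulis). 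The paper's tensor-product route is shorter because it reduces everything to a single $4\times 4$ check, but it leans on the multiplicative structure $\mathrm{SWAP}_n=\mathrm{SWAP}_1^{\otimes n}$ and the implicit regrouping of tensor factors you flag as the ``only real care needed.'' Both proofs are standard; your main one has the advantage of immediately generalizing to other operator bases should they be needed elsewhere.
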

\begin{proof}[Proof of Lemma~\ref{lemma:swap_decomp_1}]
It is straightforward to verify that the two-qubit SWAP operator can be decomposed with the single-qubit Pauli basis as $\mathrm{SWAP}_1=\frac{1}{2}\sum_{P\in\{I,\rm X,\rm Y,\rm Z\}}P\otimes P$. Since $\mathrm{SWAP}_n=\mathrm{SWAP}_1^{\otimes n}$, Eq.~\eqref{eq:swap_decomp_by_n_paulis} follows immediately, which completes the proof.
\end{proof}

Next, we introduce a useful identity known as the (partial) SWAP-trick.

\begin{lemma}[Partial-swap-trick]\label{lemma:partial_swap_trick}
For all operators $A,B\in \Linear{\mathbb{C}^d}$ with $d:=2^n$, we have
\begin{equation}
    \mathrm{tr}_{1}\left[ \mathrm{SWAP}_{n} (A\otimes B) \right] = AB.
\end{equation}
\end{lemma}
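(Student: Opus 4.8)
The plan is to prove Lemma~\ref{lemma:partial_swap_trick} by a direct computation in the computational basis, using the explicit form of $\mathrm{SWAP}_n$ from Eq.~\eqref{eq:def_of_n_swap}. First I would substitute $\mathrm{SWAP}_n = \sum_{i,j}\ket{i}\bra{j}\otimes\ket{j}\bra{i}$ and use bilinearity of the tensor product together with the mixed-product rule $(\ket{i}\bra{j}\otimes\ket{j}\bra{i})(A\otimes B) = (\ket{i}\bra{j}A)\otimes(\ket{j}\bra{i}B)$ to obtain
\begin{equation}
    \mathrm{SWAP}_n(A\otimes B) = \sum_{i,j}\ket{i}\bra{j}A\otimes\ket{j}\bra{i}B .
\end{equation}

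Next I would apply the partial trace $\mathrm{tr}_1$, which acts on elementary tensors by $\mathrm{tr}_1(X\otimes Y) = \mathrm{tr}(X)\,Y$, and use cyclicity of the trace in the form $\mathrm{tr}[\ket{i}\bra{j}A] = \bra{j}A\ket{i}$, giving
\begin{equation}
    \mathrm{tr}_1\!\left[\mathrm{SWAP}_n(A\otimes B)\right] = \sum_{i,j}\mathrm{tr}\!\left[\ket{i}\bra{j}A\right]\ket{j}\bra{i}B = \sum_{i,j}\bra{j}A\ket{i}\,\ket{j}\bra{i}B .
\end{equation}

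Finally I would recognize the coefficient sum as a resolution of $A$ in the computational basis, $\sum_{i,j}\bra{j}A\ket{i}\,\ket{j}\bra{i} = \bigl(\sum_j\ket{j}\bra{j}\bigr)A\bigl(\sum_i\ket{i}\bra{i}\bigr) = A$, so that the right-hand side collapses to $AB$, which is exactly the claim.

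There is no genuine obstacle here: the computation is elementary and purely algebraic. The only point requiring a bit of care is the bookkeeping of tensor factors---keeping track of which register is traced out and applying $\mathrm{tr}[\ket{i}\bra{j}A] = \bra{j}A\ket{i}$ with the indices in the correct order so that the final product comes out as $AB$ rather than $BA$. An alternative route would be to check the identity first on rank-one operators $A = \ket{a}\bra{a'}$ and $B = \ket{b}\bra{b'}$ and then extend by linearity in both arguments, but the basis computation above is the most direct.
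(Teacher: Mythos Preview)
Your proof is correct and follows essentially the same approach as the paper: both substitute the explicit form of $\mathrm{SWAP}_n$ from Eq.~\eqref{eq:def_of_n_swap}, apply the partial trace termwise to obtain $\sum_{i,j}\braket{j|A|i}\ket{j}\bra{i}B$, and recognize the sum as $AB$. The paper's write-up is simply more compressed.
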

\begin{proof}[Proof of Lemma~\ref{lemma:partial_swap_trick}]
By definition of $\mathrm{SWAP}_n$ in Eq.~\eqref{eq:def_of_n_swap}, it follows that
\begin{eqnarray}
    \mathrm{tr}_{1}\left[ \mathrm{SWAP}_{n} (A_1\otimes B_1) \right] 
    &=& \sum_{i,j} \mathrm{tr}_{1}\left[ \ket{i}\bra{j} \otimes \ket{j}\bra{i} (A\otimes B) \right]
    =\sum_{i,j} (\braket{j|A|i} \ket{j}\bra{i}) B = AB.
\end{eqnarray}
\end{proof}

\subsubsection{Unitary and state designs}\label{sec:quantum-t-design}

Haar random unitaries and pure states underlie many primitives in various quantum information processing, such as randomized benchmarking~\cite{emerson2005scalable,PhysRevA.77.012307,PhysRevLett.106.180504,PhysRevLett.125.200501}, quantum tomography~\cite{Guta_2020,huang2020predicting,zambrano2025fast}, and quantum cryptography~\cite{ji2018pseudorandom,ananth2022cryptography,ma2025construct}.
However, the exact implementation of Haar random objects is generally infeasible, since typical unitaries require circuit depth exponential in the number of qubits~\cite{Nielsen_Chuang_2010,PhysRevA.69.062321,PRXQuantum.2.030316}. Fortunately, many applications do not require full Haar randomness: reproducing only the first few statistical moments of the Haar measure often suffices. 
This motivates the notion of \textit{unitary $t$-designs}~\cite{dankert2005efficient,PhysRevA.80.012304,gross2007evenly} and \textit{state $t$-designs}~\cite{4262758} (a.k.a., \textit{complex projective $t$-design}~\cite{HOGGAR1982233} or \textit{spherical $t$-designs}~\cite{delsarte1991spherical}), which formalize the ensembles that approximate the Haar distribution up to the $t$-th moment. We briefly recall their definitions below.

\begin{dfn}[Unitary $t$-design]
A probability distribution $\{p_i, U_i\}$ over $\Unit{\mathbb{C}^d}$ is a \textit{unitary $t$-design} if and only if, for all $A\in \Linear{(\mathbb{C}^d)^{\otimes t}}$,
\begin{equation}
    \sum_{i} p_i U_i^{\otimes t} A (U_i^{\dagger})^{\otimes t}= \int_{\Unit{\mathbb{C}^d}} d\mu_{\rm H}  U_i^{\otimes t} A (U_i^{\dagger})^{\otimes t},
\end{equation}
where $d\mu_{\rm H}$ denotes the normalized Haar measure on $\Unit{\mathbb{C}^d}$. 
\end{dfn}

\begin{dfn}[State $t$-design]\label{def:quantum-t-design}

A probability distribution $\{p_i, \ket{\psi_i}\}$ over pure states in $\mathbb{C}^d$ is a \textit{state $t$-design} (or \textit{complex projective $t$-design}) if and only if
\begin{equation}
    \sum_{i} p_i (\ket{\psi_i}\bra{\psi_i})^{\otimes t} = \int_{\phi} d\phi (\ket{\phi}\bra{\phi})^{\otimes t},
\end{equation}
where the integral is taken with respect to the normalized Haar measure on the unit sphere of $\mathbb{C}^d$.
\end{dfn}

Our randomized protocol discussed in Sec.~\ref{sec:main_leaerning_protocol} requires ensembles that form a unitary or state $2$-design. A representative example is given by the uniform distribution over the $n$-qubit Clifford group $\mathrm{Cl}(2^n)$~\cite{gottesman1997stabilizer}, which is known to form a unitary $3$-design~\cite{kueng2015qubit,webb2015clifford,zhu2017multiqubit} but not a $4$-design~\cite{zhu2016clifford}. The $n$-qubit Clifford circuits have been extensively studied, and importantly, they are considered to be particularly suitable for implementation for the following two reasons: (i) there exist efficient algorithms for uniform sampling from $\mathrm{Cl}(2^n)$~\cite{9605330,10.1063/1.4903507} and (ii) Clifford circuits can be implemented with at most $\mathcal{O}(n^2/\log{n})$ elementary gates (CNOT, Hadamard, and phase gates)~\cite{PhysRevA.70.052328}. 

Other prominent examples of state ensembles forming a $2$-design are provided by mutually unbiased bases (MUBs)~\cite{wootters1989optimal,PhysRevA.65.032320,seyfarth2019cyclic,9248636} and symmetric informationally complete positive operator-valued measures (SIC-POVMs)~\cite{renes2004symmetric}. Here, we focus on MUBs in particular, which are defined as follows.

\begin{dfn}[Mutually unbiased bases]
Let $\{\mathcal{B}_k\}_{k=1}^m$ be orthonormal bases of $\mathbb{C}^d$, where
$\mathcal{B}_k=\{\ket{\phi^{(k)}_i}\}_{i=1}^d$.
The bases are said to be mutually unbiased if and only if, for all $k\neq k'$ and all $i,j\in\{1,\dots,d\}$,
\begin{equation}
    \bigl|\braket{\phi^{(k)}_{i}|\phi^{(k')}_{j}}\bigr|^2 = \frac{1}{d}.
\end{equation}
\end{dfn}
A set of MUBs is called \emph{maximal} if $m=d+1$. In prime-power dimensions, including $d=2^n$, the maximal sets of $d+1$ MUBs exist~\cite{ivonovic1981geometrical,wootters1989optimal}, and the uniform distribution over $\bigcup_{k=1}^{d+1}\mathcal{B}_k$ constitutes a state $2$-design~\cite{1523643}. For qubits ($d=2^n$), there are efficient constructions of circuits that prepare MUB states or implement the corresponding unitaries,
e.g., methods via the stabilizer formalism~\cite{9248636,Crawford2021efficientquantum}.
A practical advantage of MUBs is the compactness of the ensemble: in dimension $d=2^n$, a maximal set consists of $d+1=2^n+1$ bases, or $d(d+1)=2^n(2^n+1)$ states in total. This is much smaller than the number of elements in the $n$-qubit Clifford group, whose size is $2^{n^2+2n}\prod_{j=1}^{n}(4^{j}-1)$. 
This compactness helps reduce classical overhead, especially in scenarios where many random samples are required (e.g., when the number of required random circuits is on the order of $2^n$ or larger)~\cite{harada2025doubly,PhysRevApplied.21.064001,PhysRevA.109.062406}.
For example, if $n$ is relatively small, one can make the lookup tables for the associated state-preparation circuits and reuse them during the sampling stage~\cite{harada2025doubly}.

\subsubsection{Symmetric subspace}\label{sec:symmetric_subspace}

To derive our randomized protocols in Sec.~\ref{sec:main_leaerning_protocol}, we need to employ moment calculations with respect to Haar random states. To provide the necessary tools for such calculations (Appendices~\ref{apsec:constructino} and \ref{appsec:performance_guarantee}), we here present a brief review of the symmetric subspace. For further details, see Refs.~\cite{Mele2024introductiontohaar,harrow2013church}.

\begin{dfn}[Permutation operator]
    Let $S_{k}$ denote the symmetric group on $k$ elements. For any $\pi\in S_k$, define the permutation operator $P_d(\pi)$ on the $k$-fold tensor product $(\mathbb{C}^{d})^{\otimes k}$ by
    \begin{equation}
        P_{d}(\pi) \ket{\psi_{1}} \otimes \cdots\otimes\ket{\psi_k} = \ket{\psi_{\pi^{-1}(1)}}\otimes \cdots \otimes\ket{\psi_{\pi^{-1}(k)}},
    \end{equation}
    for all $\ket{\psi_1},\ldots,\ket{\psi_k} \in \mathbb{C}^d$. Equivalently, it can be expressed as
    \begin{equation}
        P_{d}(\pi) = \sum_{i_1,\ldots,i_k\in [d]} \ket{i_{\pi^{-1}(1)},...,i_{{\pi}^{-1}(k)}} \bra{i_1,\ldots,i_k}.
    \end{equation}
\end{dfn}

\begin{dfn}[Symmetric subspace]
    The symmetric subspace of $(\mathbb{C}^d)^{\otimes k}$, denoted by $\vee^{k}\mathbb{C}^d$, is defined as
    \begin{equation}
        \vee^{k}\mathbb{C}^d := \left\{ \ket{\psi}\in (\mathbb{C}^d)^{\otimes k} : P_{d}(\pi) \ket{\psi}=\ket{\psi} ~~\text{for all}~~\pi\in S_k \right\}.
    \end{equation}
\end{dfn}

We now present two fundamental facts about the symmetric subspace: the orthogonal projector and its dimension. We omit the proofs of these results here, but the detailed arguments can be found in Refs.~\cite{Mele2024introductiontohaar,harrow2013church}.

\begin{fact}[Orthogonal projector onto $\vee^{k}\mathbb{C}^d$]\label{fact:proj_of_sym}
Define
\begin{equation}
P_{\rm sym}^{d,k}:=\frac{1}{k!}\sum_{\pi\in S_{k}} P_{d}(\pi).
\end{equation}
Then, $P_{\rm sym}^{d,k}$ is the orthogonal projector onto $\vee^{k}\mathbb{C}^d$.
\end{fact}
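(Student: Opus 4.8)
\textbf{Proof proposal for Fact~\ref{fact:proj_of_sym} (orthogonal projector onto the symmetric subspace).}

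The plan is to verify the two defining properties of an orthogonal projector onto $\vee^{k}\mathbb{C}^d$: (i) $P_{\rm sym}^{d,k}$ is idempotent and Hermitian, and (ii) its range is exactly $\vee^{k}\mathbb{C}^d$. First I would record the basic algebraic facts about the representation $\pi \mapsto P_d(\pi)$ of $S_k$ on $(\mathbb{C}^d)^{\otimes k}$: it is a group homomorphism, $P_d(\pi\sigma)=P_d(\pi)P_d(\sigma)$, and each $P_d(\pi)$ is unitary with $P_d(\pi)^{\dagger}=P_d(\pi^{-1})$ (this is immediate from the action on product basis vectors). Hermiticity of $P_{\rm sym}^{d,k}$ then follows from
\begin{equation}
    \bigl(P_{\rm sym}^{d,k}\bigr)^{\dagger}=\frac{1}{k!}\sum_{\pi\in S_k}P_d(\pi^{-1})=\frac{1}{k!}\sum_{\sigma\in S_k}P_d(\sigma)=P_{\rm sym}^{d,k},
\end{equation}
using the reindexing $\sigma=\pi^{-1}$.

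For idempotency, I would compute
\begin{equation}
    \bigl(P_{\rm sym}^{d,k}\bigr)^2=\frac{1}{(k!)^2}\sum_{\pi,\sigma\in S_k}P_d(\pi)P_d(\sigma)=\frac{1}{(k!)^2}\sum_{\pi,\sigma\in S_k}P_d(\pi\sigma).
\end{equation}
Fixing $\pi$ and substituting $\tau=\pi\sigma$, the inner sum over $\sigma$ becomes a sum over all $\tau\in S_k$, so the double sum equals $k!\sum_{\tau\in S_k}P_d(\tau)$, giving $\bigl(P_{\rm sym}^{d,k}\bigr)^2=P_{\rm sym}^{d,k}$. Together with Hermiticity, this shows $P_{\rm sym}^{d,k}$ is an orthogonal projector onto its range.

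It remains to identify the range with $\vee^{k}\mathbb{C}^d$. For the inclusion $\mathrm{ran}\,P_{\rm sym}^{d,k}\subseteq\vee^{k}\mathbb{C}^d$, I would note that for any $\tau\in S_k$, left multiplication by $P_d(\tau)$ permutes the summands: $P_d(\tau)P_{\rm sym}^{d,k}=\frac{1}{k!}\sum_{\pi}P_d(\tau\pi)=P_{\rm sym}^{d,k}$, so every vector of the form $P_{\rm sym}^{d,k}\ket{\psi}$ is fixed by all $P_d(\tau)$, hence lies in $\vee^{k}\mathbb{C}^d$. For the reverse inclusion, if $\ket{\psi}\in\vee^{k}\mathbb{C}^d$ then $P_d(\pi)\ket{\psi}=\ket{\psi}$ for every $\pi$, so $P_{\rm sym}^{d,k}\ket{\psi}=\frac{1}{k!}\sum_{\pi}\ket{\psi}=\ket{\psi}$, showing $\ket{\psi}\in\mathrm{ran}\,P_{\rm sym}^{d,k}$. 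This establishes that the range is precisely the symmetric subspace, completing the argument. I do not anticipate a genuine obstacle here — the only point requiring mild care is the bookkeeping of the group-law reindexing in the idempotency step, and ensuring that unitarity of $P_d(\pi)$ (needed for the adjoint) is stated cleanly before it is used.
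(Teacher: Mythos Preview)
Your argument is correct and is the standard textbook proof of this fact. The paper itself does not give a proof at all: it explicitly states ``We omit the proofs of these results here, but the detailed arguments can be found in Refs.~\cite{Mele2024introductiontohaar,harrow2013church},'' so there is nothing to compare against beyond noting that your write-up matches the usual argument in those references.
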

\begin{fact}[Dimension of $\vee^{k}\mathbb{C}^d$]\label{fact:dim_of_sym}
\begin{equation}
\dim{(\vee^{k}\mathbb{C}^d)}=\mathrm{tr}(P_{\rm sym}^{d,k})=\binom{d+k-1}{k}.
\end{equation}
\end{fact}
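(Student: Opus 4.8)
The plan is to establish the two equalities separately. For the first, $\dim(\vee^{k}\mathbb{C}^d)=\mathrm{tr}(P_{\rm sym}^{d,k})$, I would invoke the standard linear-algebra fact that the trace of an orthogonal projector equals the dimension of its range. Since Fact~\ref{fact:proj_of_sym} identifies $P_{\rm sym}^{d,k}$ as the orthogonal projector onto $\vee^{k}\mathbb{C}^d$, diagonalizing it (eigenvalues $0$ and $1$, with the $1$-eigenspace equal to $\vee^{k}\mathbb{C}^d$) immediately gives $\mathrm{tr}(P_{\rm sym}^{d,k})=\dim(\vee^{k}\mathbb{C}^d)$. This step is essentially free.

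For the second equality, $\mathrm{tr}(P_{\rm sym}^{d,k})=\binom{d+k-1}{k}$, the approach I would take is to exhibit an explicit orthogonal basis of $\vee^{k}\mathbb{C}^d$ and count its elements. For each occupation vector $(n_1,\dots,n_d)$ with $n_i\ge 0$ and $\sum_i n_i=k$ — equivalently, each size-$k$ multiset drawn from $[d]$, or each weakly increasing tuple $1\le j_1\le\cdots\le j_k\le d$ — I would define the symmetrized vector $P_{\rm sym}^{d,k}\ket{j_1,\dots,j_k}$. The key observations are: (i) these vectors lie in $\vee^{k}\mathbb{C}^d$ by construction; (ii) $P_{\rm sym}^{d,k}\ket{i_1,\dots,i_k}=\tfrac{n_1!\cdots n_d!}{k!}\sum_{\text{distinct rearrangements}}\ket{\cdots}$ depends only on the underlying multiset and is a nonzero vector; (iii) vectors attached to distinct multisets have disjoint computational-basis supports, hence are orthogonal; and (iv) they span the range of $P_{\rm sym}^{d,k}$, since the images $P_{\rm sym}^{d,k}\ket{i_1,\dots,i_k}$ of all computational basis vectors span that range and each is one of these vectors. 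Counting multisets of size $k$ from a set of size $d$ then yields $\binom{d+k-1}{k}$.

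As a cross-check I would also mention the direct trace computation: writing $P_{\rm sym}^{d,k}=\tfrac{1}{k!}\sum_{\pi\in S_k}P_d(\pi)$ and evaluating $\mathrm{tr}(P_d(\pi))$ in the computational basis shows that only basis vectors constant on each cycle of $\pi$ contribute, so $\mathrm{tr}(P_d(\pi))=d^{c(\pi)}$, where $c(\pi)$ is the number of cycles of $\pi$. Grouping by cycle type and using the generating-function identity $\sum_{\pi\in S_k}x^{c(\pi)}=x(x+1)\cdots(x+k-1)$ gives $\mathrm{tr}(P_{\rm sym}^{d,k})=\tfrac{1}{k!}\,d(d+1)\cdots(d+k-1)=\binom{d+k-1}{k}$.

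The main obstacle is modest and is roughly the same in both routes. In the basis approach it is the bookkeeping in (ii)–(iv): confirming that $P_{\rm sym}^{d,k}\ket{i_1,\dots,i_k}$ is a nonzero multiple of the multiset-symmetrized state and that these vectors exhaust the range without overcounting. In the trace approach it is justifying the combinatorial identity $\sum_{\pi\in S_k}x^{c(\pi)}=x(x+1)\cdots(x+k-1)$, which one proves by induction on $k$ via the recursion for inserting the element $k$ into a permutation of $[k-1]$ (either extending an existing cycle, in $k-1$ ways, or forming a new fixed point). Neither presents a genuine difficulty; the content is standard symmetric-subspace bookkeeping, which is why the original text states the fact without proof.
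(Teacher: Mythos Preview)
Your proposal is correct. The paper itself does not prove this fact at all: it explicitly states that the proofs of Facts~\ref{fact:proj_of_sym} and~\ref{fact:dim_of_sym} are omitted and refers the reader to the survey references on the symmetric subspace. Both of your routes (the explicit multiset-indexed basis and the cycle-count trace computation via $\sum_{\pi\in S_k}x^{c(\pi)}=x(x+1)\cdots(x+k-1)$) are standard and correct, and either would serve as the missing detailed argument the paper defers to the literature.
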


We conclude this subsection by introducing a useful identity that will be employed in the subsequent analysis. Define the tensor product state 
\begin{equation}
    \rho_{\text{k-sym}} := \int_{\phi} d\phi(\ket{\phi}\bra{\phi})^{\otimes k},
\end{equation}
where the integral is taken over the pure states $\ket{\phi}\in \mathbb{C}^d$ from the Haar measure. By unitary invariance, it follows that $\rho_{\text{k-sym}}$ commutes with $U^{\otimes k}$ for any unitary $U$. Moreover, it is known that $\vee^{k}\mathbb{C}^d$ affords the irreducible representation arising from the action $U \mapsto U^{\otimes k}$. Therefore, by Schur’s lemma, $\rho_{\text{k-sym}}$ is proportional to the orthogonal projection onto the symmetric subspace. Combining this with Facts~\ref{fact:proj_of_sym} and \ref{fact:dim_of_sym}, we obtain the following characterization of tensor powers of Haar-random pure states.

\begin{lemma}
Let $d\phi$ denote the normalized Haar measure on pure states of $\mathbb{C}^d$. For any $k\in \mathbb{N}$,
\begin{eqnarray}
    \int d\phi (\pure{\phi})^{\otimes k} 
    &=& \frac{P_{\rm sym}^{d,k}}{\mathrm{tr}(P_{\rm sym}^{d,k})}\\
    &=&  \frac{1}{d(d+1)\cdots(d+k-1)} \sum_{\pi\in \mathcal{S}_{k}} P_{d}(\pi).
\end{eqnarray}
\end{lemma}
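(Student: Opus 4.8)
The plan is to invoke Schur's lemma exactly along the lines foreshadowed in the text preceding the statement. Write $\rho := \int d\phi\,(\pure{\phi})^{\otimes k}$, viewed as an operator on $(\mathbb{C}^d)^{\otimes k}$. First I would record two structural properties. (i) Since every $\ket{\phi}^{\otimes k}$ lies in the symmetric subspace $\vee^{k}\mathbb{C}^d$, Fact~\ref{fact:proj_of_sym} gives $P_{\rm sym}^{d,k}\,(\pure{\phi})^{\otimes k}\,P_{\rm sym}^{d,k} = (\pure{\phi})^{\otimes k}$ for every $\phi$; integrating over the Haar measure yields $P_{\rm sym}^{d,k}\,\rho\,P_{\rm sym}^{d,k} = \rho$, so $\rho$ is supported on $\vee^{k}\mathbb{C}^d$ and annihilates its orthogonal complement. (ii) Because the Haar measure on pure states is invariant under $\ket{\phi}\mapsto U\ket{\phi}$ for any unitary $U$, we get $U^{\otimes k}\,\rho\,(U^{\dagger})^{\otimes k} = \rho$, i.e., $\rho$ commutes with $U^{\otimes k}$ for all $U\in\Unit{\mathbb{C}^d}$.

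Next I would restrict attention to the invariant subspace $\vee^{k}\mathbb{C}^d$, on which the representation $U\mapsto U^{\otimes k}\big|_{\vee^{k}\mathbb{C}^d}$ is irreducible (the Schur--Weyl fact recalled just above the lemma). By property~(i) the reduction to this subspace is legitimate, and Schur's lemma then forces $\rho$ to be a scalar multiple of the identity on $\vee^{k}\mathbb{C}^d$, hence $\rho = c\,P_{\rm sym}^{d,k}$ for some constant $c$. To fix $c$ I would take the trace: on one hand $\mathrm{tr}[\rho] = \int d\phi\,\mathrm{tr}\big[(\pure{\phi})^{\otimes k}\big] = \int d\phi\,(\mathrm{tr}\pure{\phi})^k = 1$; on the other hand $\mathrm{tr}[c\,P_{\rm sym}^{d,k}] = c\,\mathrm{tr}[P_{\rm sym}^{d,k}]$. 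Therefore $c = 1/\mathrm{tr}[P_{\rm sym}^{d,k}]$, which establishes the first equality
\begin{equation}
    \int d\phi\,(\pure{\phi})^{\otimes k} = \frac{P_{\rm sym}^{d,k}}{\mathrm{tr}(P_{\rm sym}^{d,k})}.
\end{equation}

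Finally I would convert to the explicit product form using Facts~\ref{fact:proj_of_sym} and~\ref{fact:dim_of_sym}. Substituting $P_{\rm sym}^{d,k} = \tfrac{1}{k!}\sum_{\pi\in\mathcal{S}_k}P_d(\pi)$ and $\mathrm{tr}[P_{\rm sym}^{d,k}] = \binom{d+k-1}{k} = \tfrac{(d+k-1)!}{k!\,(d-1)!}$, the factor $k!$ cancels and
\begin{equation}
    \frac{P_{\rm sym}^{d,k}}{\mathrm{tr}(P_{\rm sym}^{d,k})} = \frac{(d-1)!}{(d+k-1)!}\sum_{\pi\in\mathcal{S}_k}P_d(\pi) = \frac{1}{d(d+1)\cdots(d+k-1)}\sum_{\pi\in\mathcal{S}_k}P_d(\pi),
\end{equation}
which is the second equality.

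I do not anticipate a real obstacle: the only non-elementary input is the irreducibility of $\vee^{k}\mathbb{C}^d$ under $U\mapsto U^{\otimes k}$, which the excerpt already treats as a known fact, and everything else is bookkeeping with the Haar measure and binomial coefficients. The one point deserving mild care is the invocation of Schur's lemma, which must be stated on the invariant subspace $\vee^{k}\mathbb{C}^d$ rather than on the full tensor product space; property~(i) is precisely what licenses this reduction, so it should be spelled out explicitly rather than left implicit.
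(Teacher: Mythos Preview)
Your proposal is correct and follows essentially the same approach as the paper: the paper's argument (given in the paragraph preceding the lemma) invokes unitary invariance, the irreducibility of $\vee^{k}\mathbb{C}^d$ under $U\mapsto U^{\otimes k}$, and Schur's lemma to conclude $\rho_{\text{k-sym}}\propto P_{\rm sym}^{d,k}$, then appeals to Facts~\ref{fact:proj_of_sym} and~\ref{fact:dim_of_sym}. Your version is somewhat more explicit---spelling out that $\rho$ is supported on the symmetric subspace and computing the trace to fix the normalization---but the underlying strategy is identical.
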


This lemma is useful for deriving closed-form expressions for randomized protocol schemes and for evaluating their efficiency. In our calculations in Appendices~\ref{apsec:constructino} and \ref{appsec:performance_guarantee}, we frequently use the cases $k=1$ and $k=2$, so we present the corresponding formulas explicitly below.

\begin{corollary}\label{cor:2_design}
Let $d\phi$ denote the normalized Haar measure on pure states of $\mathbb{C}^d$ with $d:=2^n$. Then,
\begin{eqnarray}
    \int d\phi (\pure{\phi}) &=& \frac{I_d}{d},\\
    \int d\phi(\pure{\phi})^{\otimes 2} &=& \frac{1}{d(d+1)}\left( I_{d^2} +  \mathrm{SWAP}_{n}\right).
\end{eqnarray}
\end{corollary}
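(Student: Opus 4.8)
The plan is to obtain Corollary~\ref{cor:2_design} as a direct specialization of the preceding lemma to the two cases $k=1$ and $k=2$, with the group-theoretic data written out explicitly. For $k=1$, the symmetric group $S_1$ contains only the identity permutation $e$, whose permutation operator is $P_d(e)=I_d$. The second equality of the lemma then reads $\int d\phi\,\pure{\phi} = \frac{1}{d}\sum_{\pi\in S_1} P_d(\pi) = \frac{I_d}{d}$, which is the first claimed identity.

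For $k=2$, the symmetric group $S_2=\{e,\tau\}$, where $\tau$ denotes the transposition $(1\,2)$. The permutation operator of the identity is $P_d(e)=I_{d^2}$, and I would identify the permutation operator of $\tau$ with the $2n$-qubit SWAP operator: by definition $P_d(\tau)$ acts on the computational basis of $(\mathbb{C}^d)^{\otimes 2}$ by $\ket{i}\otimes\ket{j}\mapsto\ket{j}\otimes\ket{i}$, which coincides term by term with $\mathrm{SWAP}_n$ as defined in Eq.~\eqref{eq:def_of_n_swap} once we set $d=2^n$. Substituting $\sum_{\pi\in S_2} P_d(\pi)=I_{d^2}+\mathrm{SWAP}_n$ and the normalization factor $d(d+1)$ into the lemma yields $\int d\phi\,(\pure{\phi})^{\otimes 2} = \frac{1}{d(d+1)}\bigl(I_{d^2}+\mathrm{SWAP}_n\bigr)$, completing the proof.

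There is no substantive obstacle here; the statement is a routine instantiation of the general lemma. The only point requiring a moment of care is the explicit identification $P_d(\tau)=\mathrm{SWAP}_n$, which is immediate once both operators are written in the computational basis and one recalls that $\mathbb{C}^d=(\mathbb{C}^2)^{\otimes n}$ with $d=2^n$.
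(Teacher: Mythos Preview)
Your proposal is correct and matches the paper's approach: the corollary is presented there as an immediate specialization of the preceding lemma to $k=1$ and $k=2$, with no additional argument beyond identifying $P_d(\tau)=\mathrm{SWAP}_n$.
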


\subsection{Construction of the unbiased estimators}\label{apsec:constructino}

In this section, we explain how we derive the unbiased estimators introduced in Sec.~\ref{sec:main_leaerning_protocol}. 
To obtain $\hat{O}_{\Phi}$ satisfying $\| \hat{O}_{\Phi} - O_{\Phi}\|_{\infty} \leq \eta$, we need to collect informative data from the measurements of the circuit composed of an unknown CPTP map $\Phi: \Linear{\mathbb{C}^{d_{\rm in}}} \to \Linear{\mathbb{C}^{d_{\rm out}}}$ and a known Hermitian operator $O\in \Herm{\mathbb{C}^{d_{\rm out}}}$, probed with an appropriately chosen ensemble of input states $\mathcal{S}:=\{(p_i,\ket{\psi_i})\}$. 
We thus consider the following randomized protocol, which performs measurements on random inputs and stores the corresponding data:
\begin{description}
    \item[\textbf{Step 1 (Random sampling)}] Sample a state $\pure{\psi_i}$ according to the probability distribution 
    $p_i$.
    \item[\textbf{Step 2 (Measurement)}] Prepare the corresponding quantum state $\pure{\psi_i}$, evolve it under the CPTP map $\Phi$, and measure with the observable $O$ (i.e., apply a unitary rotation $W^{\dagger}$ immediately before measurement in the computational basis $\{\ket{j}\}$).
    Denote the measurement outcome by $j$.
    \item[\textbf{Step 3 (Data storage)}] Store the weighted state $\nu_{j} \pure{\psi_i}$ in classical memory.
\end{description}
For the above process (\textbf{Step 1-3}), we define the mapping $\mathcal{M}: \Herm{\mathbb{C}^{d_{\rm in}}} \rightarrow \Herm{\mathbb{C}^{d_{\rm in}}}$ from $O_{\Phi}$ to its classical description $\mathbb{E}_{i,j}[\nu_{j} \pure{\psi_i}]$, where the expectation is taken over the choice of the input state and
the measurement outcome. Explicitly, 
\begin{eqnarray}\label{eq:randomized-map}
    \mathcal{M}(O_{\Phi}) 
    &=& \sum_i p_i \sum_j \mathrm{tr}\left[ \pure{j} W^{\dagger} \Phi(\pure{\psi_i}) W \right] \nu_j \pure{\psi_i}\notag\\
    &=&\sum_i p_i \mathrm{tr}\left[ O_{\Phi} \pure{\psi_i} \right] \, \pure{\psi_i}. 
\end{eqnarray}
By choosing the ensemble $\mathcal{S}$ appropriately, the map $\mathcal{M}$ can be made to carry substantial information about $O_{\Phi}$. 
Representative examples of such a choice of $\mathcal{S}$ are an ensemble forming a 2-design or the tensor product of single-qubit stabilizer states. 
In Appendices~\ref{apsec:const_2-design} and \ref{apsec:const_stab}, we provide detailed discussions of these two cases, respectively. Moreover, by introducing an additional weighting in Step 3, we can also formulate a protocol based on Pauli operators. 
This is discussed in detail in Appendix~\ref{apsec:const_pau}. 

\subsubsection{2-design ensemble}\label{apsec:const_2-design}

Let $\mathcal{S}_{\text{2-dgn}}=\{(p_i,\ket{\psi_i})\}$ be an ensemble forming a state 2-design. By definition, 
\begin{equation}\label{eq:2-design}
    \sum_{i} p_i (\pure{\psi_i})^{\otimes 2} = \int_{\phi} (\ket{\phi}\bra{\phi})^{\otimes 2} d\phi,
\end{equation}
where the integral is over pure states with respect to the Haar measure; see Definition~\ref{def:quantum-t-design}. 
A standard result states that the Haar integral in Eq.~\eqref{eq:2-design} yields an operator $P_{\rm sym}^{d_{\rm in},2}$ proportional to the projector onto the symmetric subspace (see Corollary~\ref{cor:2_design}), namely,
\begin{equation}\label{eq:2-design-2}
    \sum_{i} p_i (\ket{\psi_i}\bra{\psi_i})^{\otimes 2} =\frac{P_{\rm sym}^{d_{\rm in},2}}{\binom{d_{\rm in}+1}{2}}= \frac{1}{d_{\rm in}(d_{\rm in}+1)}\left( \mathrm{SWAP}_n + I_{d_{\rm in}^2} \right),
\end{equation}
where $I_{d_{\rm in}^2}$ is the identity on the two-copy space, and $\mathrm{SWAP}_n$ is the swap operator. Substituting Eq.~\eqref{eq:2-design-2} in Eq.~\eqref{eq:randomized-map}, the map $\mathcal{M}$ acting on an operator $O_{\Phi}$ can be rewritten as
\begin{eqnarray}
    \mathcal{M}(O_{\Phi}) 
    &=& \sum_i p_i \mathrm{tr}_{1}\left[ (O_{\Phi}\otimes I_{d_{\rm in}}) (\pure{\psi_i})^{\otimes 2} \right]\\
    &=& \frac{1}{d_{\rm in}(d_{\rm in}+1)} \mathrm{tr}_{1}\left[ (O_{\Phi}\otimes I_{d_{\rm in}}) (\mathrm{SWAP}_n+I_{d_{\rm in}^2}) \right]\\
    &=& \frac{1}{d_{\rm in}(d_{\rm in}+1)}\left( O_{\Phi} + \mathrm{tr}\left(O_{\Phi}\right) I_{d_{\rm in}} \right).\label{eq:randomized-map-2design}
\end{eqnarray}
This expression shows that the randomized map $\mathcal{M}$ carries substantial information about the target operator $O_{\Phi}$, in the form of both the trace and operator content. From Eq.~\eqref{eq:randomized-map-2design}, the information about $O_{\Phi}$ can be recovered by applying the classical post-processing to the output 
\begin{equation}\label{eq:modification_2_design}
    \mathcal{M}(O_{\Phi}) \rightarrow d_{\rm in}(d_{\rm in}+1) \mathcal{M}(O_{\Phi})-\mathrm{tr}(O_{\Phi}) I_{d_{\rm in}}.
\end{equation}
Here, noting that the following equality holds
\begin{equation}
    \mathrm{tr}(O_{\Phi}) I_{d_{\rm in}} = \sum_i p_i \sum_j \nu_j \mathrm{tr}\left[ \pure{j} W^{\dagger}\Phi(\pure{\psi_i}) W\right] \times d_{\rm in}I_{d_{\rm in}},
\end{equation}
the above equality together with Eq.~\eqref{eq:randomized-map} readily implies that
\begin{equation}
    O_{\Phi} = \sum_{i} p_i \sum_j \mathrm{tr}\left[ \pure{j} W^{\dagger}\Phi(\pure{\psi_i}) W\right]
    \nu_j \left( d_{\rm in}(d_{\rm in}+1) \pure{\psi_i} - d_{\rm in}I_{d_{\rm in}}  \right)\label{eq:2_design_unbias}.
\end{equation}
Thus, we can reconstruct the classical description of $O_{\Phi}$ in practice, by modifying \textbf{Step 3} as follows. 

For $k=1,...,N$, we sample an input $\ket{\psi_{i}}$ with the probability $p_i$, evolve it under $\Phi$, perform the measurement in the basis $\{W\ket{j}\}$, obtain the outcome $j$ (\textbf{Step 1} and \textbf{Step 2}), and store the single-shot estimator as 
\begin{equation}
    \hat{\omega}_{\text{2-dgn}}^{(k)}:=\nu_{j_k}( d_{\rm in}(d_{\rm in}+1)\pure{\psi_{i_k}} - d_{\rm in}I_{d_{\rm in}}).
\end{equation}
Then, by averaging the data as
\begin{eqnarray}
    \hat{O}_{\Phi,\text{2-dgn}} &:=& \frac{1}{N}\sum_{k=1}^{N} \hat{\omega}_{\text{2-dgn}}^{(k)} \\
    &=& \frac{1}{N}\sum_{k=1}^{N} \nu_{j_k}( d_{\rm in}(d_{\rm in}+1)\pure{\psi_{i_k}} - d_{\rm in}I_{d_{\rm in}}),
\end{eqnarray}
we obtain an unbiased estimator of $O_{\Phi}$, i.e., $\mathbb{E}[\hat{O}_{\Phi,\text{2-dgn}}]=O_{\Phi}$. 

Finally, we provide some supplementary explanations regarding the above protocol. 
First, in the above discussion, the estimator $\hat{O}_{\Phi}$ was obtained by modifying 
the map $\mathcal{M}$, as demonstrated in Eq.~\eqref{eq:modification_2_design}, 
so that only the terms of $O_{\Phi}$ appearing in the expansion 
\eqref{eq:randomized-map-2design} of $\mathcal{M}(O_{\Phi})$ are extracted. 
The resulting estimator corresponds to a linear inversion estimator for $O_{\Phi}$, 
where the linear inversion technique is sometimes employed in quantum state/measurement tomography~\cite{Guta_2020,zambrano2025fast} and classical shadow tomography~\cite{huang2020predicting}. 

\subsubsection{Tensor-product single qubit stabilizer states}\label{apsec:const_stab}

We focus on the case where the input ensemble $\mathcal{S}_{\rm stab}$ is chosen as the set of all $n$-fold tensor products of the single-qubit stabilizer states 
\begin{equation}
\mathcal{Q}_{\rm stab}:=\left\{\ket{0},\ket{1},\ket{+},\ket{-},\ket{+i},\ket{-i}\right\},
\end{equation}
sampled uniformly with probability $p_i=1/6^n$. Formally, 
\begin{equation}
    \mathcal{S}_{\text{stab}}:=\left\{\left(\frac{1}{6^n}, \bigotimes_{s=1}^{n}\ket{u^s}\right): \ket{u^s} \in \mathcal{Q}_{\rm stab}\right\}.
\end{equation}

Before turning to general $n$, it is convenient to verify the single-qubit case ($n=1$). Let 
\begin{equation}
    \mathcal{Q}_x=\{\ket{\pm}\},~~\mathcal{Q}_y=\{\ket{\pm i}\},~~\mathcal{Q}_z=\{\ket{0},\ket{1}\},
\end{equation}
the eigenstate sets of the Pauli operators $\rm X, Y, Z$, respectively, and define $\mathcal{Q}_{\text{stab}}:=\mathcal{Q}_x\cup \mathcal{Q}_y \cup \mathcal{Q}_z$. The three bases $\mathcal{Q}_x, \mathcal{Q}_y, \mathcal{Q}_z$ form a maximal set of mutually unbiased bases (MUBs) in dimension $d_{\rm in}=2$, and the uniform ensemble over $\mathcal{Q}_{\text{stab}}$ is shown to be a complex-projective 2-design~\cite{1523643}. Equivalently,
\begin{equation}\label{eq:1_mubs_forms_2_design}
    \frac{1}{6} \sum_{\ket{v}\in \mathcal{Q}_{\text{stab}}} (\pure{u})^{\otimes 2} 
    = \frac{1}{6} \left( \mathrm{SWAP}+I \right).
\end{equation}
Using Eq.~\eqref{eq:1_mubs_forms_2_design} (equivalently, inserting $d_{\rm in}=2$ into Eq.~\eqref{eq:2_design_unbias}), we obtain
\begin{equation}\label{eq:1_mubs_unbias}
    \sum_{\ket{u}\in \mathcal{Q}_{\text{stab}}} \frac{1}{6} \mathrm{tr}\left[ A\pure{u} \right] \left( 6\pure{u}-2I \right) = A, \quad \forall A\in \mathsf{H}(\mathbb{C}^2).
\end{equation}
We now extend the discussion above to the multiple-qubit case. Taking the $n$-fold tensor products of both sides of Eq.~\eqref{eq:1_mubs_unbias}, for any product operator $A=\bigotimes_{s=1}^{n} A^s$ with $A^s\in \Herm{\mathbb{C}^2}$, we have
\begin{eqnarray}
    \bigotimes_{s=1}^{n} \sum_{\substack{\ket{u^s}\in \mathcal{Q}_{\text{stab}}}} \frac{1}{6} \mathrm{tr}\left[ A^s\pure{u^s} \right] \left( 6\pure{u^s}-2I \right) 
    = \sum_{\substack{\ket{u^1},...,\ket{u^n}\\ \in \mathcal{Q}_{\text{stab}}}} \frac{1}{6^n}  \prod_{s=1}^{n} \mathrm{tr}\left[ A^s \pure{u^s} \right]  \bigotimes_{s=1}^{n} \left( 6\pure{u^s} - 2I \right).
\end{eqnarray}
Using $\prod_{s=1}^{n}\mathrm{tr}\left[ A^s \pure{u^s} \right]=\mathrm{tr}\left[ A\pure{u} \right]$ for $\ket{u}:=\ket{u^1u^2\cdots u^n}$, we can rewrite this as
\begin{equation}\label{eq:product_mubs_unbias_A}
    A=\sum_{\ket{u}\in \mathcal{Q}_{\text{stab}}^{\otimes n}} \frac{1}{6^n} \mathrm{tr}\left[ A \pure{u} \right] \bigotimes_{s=1}^{n} \left( 6\pure{u^s} - 2I \right).
\end{equation}
By linearity, Eq.~\eqref{eq:product_mubs_unbias_A} also holds for any $A\in \Herm{\mathbb{C}^{2^n}}$. Hence, for the uniform product-stabilizer ensemble $\mathcal{S}_{\text{stab}}$, we can express 
\begin{equation}
    O_{\Phi} = \sum_{\ket{u}\in \mathcal{Q}_{\text{stab}}^{\otimes n}} \frac{1}{6^n} \sum_{j} \mathrm{tr}\left[ \pure{j} W^{\dagger} \Phi(\pure{u}) W \right]
    \nu_j \bigotimes_{s=1}^{n} \left( 6\pure{u^s} - 2I \right).\label{eq:n_mubs_unbias}
\end{equation}
Thus, to obtain the classical description of $O_{\Phi}$ in this case, we repeat the following randomized protocol independently for $k=1,...,N$. 
At the $k$-th round, we sample an input $\ket{u_k}\in \mathcal{Q}_{\text{stab}}^{\otimes n}$ uniformly, apply $\Phi$, measure in the basis $\{W\ket{j}\}$, obtain outcome $j_k$, and record the single-shot estimator
\begin{equation}
    \hat{\omega}_{\text{stab}}^{(k)}:=\nu_{j_k}\bigotimes_{s=1}^{n} (6\pure{u_k^{s}}-2I).
\end{equation}
Finally, the empirical estimator is obtained as the sample average
\begin{eqnarray}
    \hat{O}_{\Phi,\text{stab}} &:=& \frac{1}{N}\sum_{k=1}^{N} \hat{\omega}_{\text{stab}}^{(k)}\\
    &=& \frac{1}{N} \sum_{k=1}^{N}\nu_{j_k}\bigotimes_{s=1}^{n} (6\pure{u_k^{s}}-2I).
\end{eqnarray}
By construction and Eq.~\eqref{eq:n_mubs_unbias}, this is an unbiased estimator of $O_{\Phi}$.

\subsubsection{$n$-qubit Pauli operator}\label{apsec:const_pau}

To make the discussion self-contained, we recall the definitions of the single-qubit Pauli operators and their eigenvalue decompositions in the main text:
\begin{equation}
    P_i = \sum_{e\in[2]} c_{i,e}\,\pure{v_{i,e}}, \qquad i \in\{1,2,3,4\}=[4],
\end{equation}
where $P_1$, $P_2$, $P_3$, and $P_4$ correspond to the identity and the Pauli operators $\rm X$, $\rm Y$, and $\rm Z$, respectively.
The eigenstates and their associated eigenvalues are summarized as
\begin{align}
\begin{alignedat}{10}
\ket{v_{1,1}} &= \ket{0}, &\quad c_{1,1} &= +1, &\quad \ket{v_{1,2}} &= \ket{1}, &\quad c_{1,2} &= +1, \\
\ket{v_{2,1}} &= \ket{+}, &\quad c_{2,1} &= +1, &\quad \ket{v_{2,2}} &= \ket{-}, &\quad c_{2,2} &= -1, \\
\ket{v_{3,1}} &= \ket{+i}, &\quad c_{3,1} &= +1, &\quad \ket{v_{3,2}} &= \ket{-i}, &\quad c_{3,2} &= -1, \\
\ket{v_{4,1}} &= \ket{0}, &\quad c_{4,1} &= +1, &\quad \ket{v_{4,2}} &= \ket{1}, &\quad c_{4,2} &= -1.
\end{alignedat}
\end{align}

Building upon these single-qubit definitions, we can naturally extend them to the $n$-qubit setting by defining the Pauli ensemble
\begin{equation}
        \mathcal{S}_{\rm Pauli} := \left\{ \left( \frac{1}{4^n},\, 
    P_{\bm{i}} := \bigotimes_{s=1}^{n} P_{i_s} \right) : \bm{i}\in [4]^n
    \right\},
\end{equation}
where $\bm{i} = (i_1,\dots,i_n)$.  
Each operator in this ensemble can be expressed via its eigenvalue decomposition as
\begin{equation}
\label{appeq:eigenvalue_decompositions_of_n_pauli_app}
    P_{\bm{i}}
    = \sum_{\bm{e}\in[2]^n} c_{\bm{i},\bm{e}}\,
    \pure{v_{\bm{i},\bm{e}}},~~~~~~~~~
    c_{\bm{i},\bm{e}}
    := \prod_{s=1}^{n} c_{i_s,e_s}, \qquad
    \ket{v_{\bm{i},\bm{e}}} := \bigotimes_{s=1}^{n} \ket{v_{i_s,e_s}}.
\end{equation}

With these definitions, we can describe the sampling process as follows:  
for each trial, a Pauli operator $P_{\bm{i}}$ is drawn uniformly at random from $\mathcal{S}_{\rm Pauli}$, and one of its eigenstates $\ket{v_{\bm i,\bm{e}}}$ is selected uniformly to serve as the probe state. 
\begin{enumerate}
    \item \textbf{Random sampling}: Sample a Pauli operator $P_{\bm{i}_k}$ uniformly at random, and sample one of its eigenstates $\ket{v_{\bm{i}_k,\bm{e}_k}}$ uniformly.
    \item \textbf{Measurement}: Prepare the corresponding quantum state $\ket{v_{\bm{i},\bm{e}}}$, evolve it under the map $\Phi$, and measure with the observable $O$. Denote the measurement outcome by $j$.
    \item \textbf{Data storage}: Store the weighted Pauli operator $\nu_{j} c_{\bm i,\bm{e}} P_{\bm{i}}$ in classical memory.
\end{enumerate}
As in the previous discussions, we show that the output of this procedure contains sufficient information of $O_{\Phi}$ in expectation. To this end, we define the mapping $\mathcal{M}:\Herm{\mathbb{C}^{d_{\rm in}}}\rightarrow\Herm{\mathbb{C}^{d_{\rm in}}}$, which maps the observable $O_{\Phi}$ to its classical description $\mathbb{E}_{i,\bm j,\bm e}[\nu_{j} c_{\bm{i},\bm{e}} P_{\bm{i}}]$, obtained by averaging over the labels $(\bm{i},\bm{e})$ and measurement outcomes $j$. Explicitly,
\begin{eqnarray}
    \mathcal{M}(O_{\Phi}) 
    &=& \sum_{\bm{i}\in[4]^n,\bm{e}\in[2]^n} \frac{1}{8^n} \sum_{j} \mathrm{tr}\left[ \pure{j}W^{\dagger}\Phi(\pure{v_{\bm{i},\bm{e}}})W \right]\, \nu_{j} c_{\bm{i},\bm{e}} P_{\bm{i}}\\
    &=& \sum_{\bm{i}\in[4]^n} \frac{1}{8^n} \mathrm{tr}\left[ O_{\Phi} P_{\bm{i}} \right]\, P_{\bm{i}} ,
\end{eqnarray}
where we have used the eigenvalue decomposition of $O_{\Phi}$ and $P_{\bm{i}}$ in Eqs.~\eqref{eq:svd_of_O} and \eqref{appeq:eigenvalue_decompositions_of_n_pauli_app}, respectively. Here, the decomposition formula of swap operators in Lemma~\ref{lemma:swap_decomp_1}
enables a further simplification of the above equality as
\begin{eqnarray}
    \mathcal{M}(O_{\Phi}) 
    = \frac{1}{4^n} O_{\Phi} \left(= \frac{1}{4^n}\mathrm{tr}_{1}\left[ (O_{\Phi}\otimes I) \left(\frac{1}{2^n}\sum_{\bm{i}\in[4]^n} P_{\bm{i}}\otimes P_{\bm i} \right) \right] \right) .\label{eq:pauli_unbias}
\end{eqnarray}
Hence, to obtain the classical description of $O_{\Phi}$, we repeat the following randomized protocol independently for $k=1,...,N$. At the $k$-th round, we uniformly sample the labels of a Pauli operator and its eigenstate $(\bm{i}_{k},\bm{e}_{k})$, prepare the input state $\ket{v_{\bm{i}_k,\bm{e}_k}}$, apply $\Phi$, rotate the state by $W$, measure in the computational basis $\{\ket{j}\}$, obtain outcome $j_k$, and record the single-shot estimator as
\begin{equation}
\hat{\omega}_{\text{pauli}}^{(k)}:=4^n \nu_{j_k} c_{\bm{i}_k,\bm{e}_k} P_{\bm{i}_k}
\end{equation}
in classical memory. Finally, the empirical estimator is obtained as the sample average
\begin{eqnarray}
    \hat{O}_{\Phi,\text{pauli}}
    &:=& \frac{1}{N}\sum_{k=1}^{N} 
    \hat{\omega}_{\text{pauli}}^{(k)} \\
    &=& \frac{1}{N} \sum_{k=1}^{N} 4^n \nu_{j_k} c_{\bm{i}_k,\bm{e}_k} P_{\bm{i}_k}.
\end{eqnarray}
By construction and Eq.~\eqref{eq:pauli_unbias}, this is an unbiased estimator of $O_{\Phi}$.

\subsection{Rigorous performance guarantees (Proof of Theorem~\ref{thm:error_bounds_for_each_x})}\label{appsec:performance_guarantee}

In the main text, we provide the following performance guarantee for the estimator $\hat{O}_{\Phi,x}$ with each $x\in\{\text{2-dgn, stab, pauli}\}$.

\newtheorem*{T4}{\bf{Theorem~\ref{thm:error_bounds_for_each_x} (Performance guarantee for $\hat{O}_{\Phi,x}$)}}
\begin{T4}
\textit{
Let $\Phi:\Linear{\mathbb{C}^{d_{\rm in}}} \to \Linear{\mathbb{C}^{d_{\rm out}}}$ be an unknown CPTP map, and let $O \in \Herm{\mathbb{C}^{d_{\rm out}}}$ be a known Hermitian operator. Define $O_\Phi:=\Phi^\dagger(O)$.
Let $\hat{O}_{\Phi,x}$ be the estimator defined in
Sec.~\ref{sec:main_leaerning_protocol} for
$x\in\{\text{2-dgn},\text{stab},\text{pauli}\}$.
Then, for each $x$, it suffices to take $N$ at least as follows to guarantee $\|\hat{O}_{\Phi,x}-O_{\Phi}\|_{\infty}\le \eta$ with probability at least $1-\delta$.
\begin{eqnarray}
    &\text{[2-dgn]}~&\frac{2 \max\{1,\|O\|^2_{\infty}\}(d_{\rm in}^2+1) (d_{\rm in}+1+\eta/3) } {\eta^2} \ln\left(\frac{2d_{\rm in}}{\delta}\right),\notag\\
    &\text{[stab]}~& \frac{2 \max\{1,\|O\|_{\infty}^2\}(d_{\rm in}^{3.33}+1+(\eta/3)(d_{\rm in}^2+1)) } {\eta^2} \ln{\left(\frac{2d_{\rm in}}{\delta}\right)},\notag\\
    &\text{[pauli]}~& \frac{2 \max\{1,\|O\|_{\infty}^2\}(d_{\rm in}^4+1+(\eta/3)(d_{\rm in}^2+1)) } {\eta^2} \ln{\left(\frac{2d_{\rm in}}{\delta}\right)}.\notag
\end{eqnarray}}
\end{T4}

\vspace{4pt}

In the following subsections, we provide the proof of Theorem~\ref{thm:error_bounds_for_each_x} for each $x\in\{\text{2-dgn, stab, pauli}\}$. Now, we outline the basic proof strategy below. 
For each choice of $x$, the estimator $\hat{O}_{\Phi,x}$ of $O_{\Phi}$, generated by the algorithm described in Sec.~\ref{sec:learning_protocol} is obtained as the average of independently sampled random operators $\hat{\omega}_x^{(k)}$; see Sec.~\ref{sec:main_leaerning_protocol} or Appendix~\ref{apsec:constructino}. Since $\hat{\omega}_x^{(k)}$ is an unbiased estimator of $O_{\Phi}$, for each $k\in[N]$,
\begin{equation}
     X^{(k)} = \frac{1}{N}\bigl(\hat{\omega}_x^{(k)} - O_{\Phi}\bigr),
\end{equation}
is a centered random matrix satisfying $\mathbb{E}[X^{(k)}]=0$. Moreover, each $X^{(k)}$ is independent, random, self-adjoint, and bounded as
\begin{eqnarray}
    \left\|X^{(k)}\right\|_{\infty} 
    &=& \frac{1}{N}\,\left\|\hat{\omega}_x^{(k)}-O_{\Phi}\right\|_{\infty} \\
    &\leq& \frac{1}{N}\left\{\|\hat{\omega}_x^{(k)}\|_{\infty}+\,\|O_{\Phi}\|_{\infty}\right\}.
\end{eqnarray}
Then, for such independent random matrices, to analyze the deviation $\hat{O}_{\Phi,x}-O_{\Phi}=\sum_{k=1}^N X^{(k)}$, we employ the matrix extension of Bernstein's inequality developed independently in Refs.~\cite{oliveira2009concentration,tropp2012user}. 
\begin{lemma}[Matrix Bernstein inequality: Hermitian case~\cite{tropp2015introduction}]\label{lemma:matrix_bernstein_inequality}
Consider a finite sequence $X^{(1)},\ldots,X^{(N)}$ of independent, random, Hermitian matrices with dimension $d$. Assume that each random matrix satisfies $\mathbb{E}[X^{(k)}] = 0$ and $\|X^{(k)}\|_{\infty}\leq \alpha$ almost surely for all $k$. Then, for all $t\geq0$, 
\begin{equation}\label{eq:matrix_benstein}
    \mathrm{Pr}\left[ \left\| \sum_{k=1}^{N} X^{(k)} 
     \right\|_{\infty} \geq t \right] \leq 2d \exp\left( \frac{-t^2/2}{\sigma^2+\alpha t/3} \right)
\end{equation}
where $\sigma^2=\| \sum_{k=1}^{N} \mathbb{E}[(X^{(k)})^2] \|_{\infty}$.
\end{lemma}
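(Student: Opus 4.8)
The plan is to prove this classical bound via the matrix Laplace-transform method of Ahlswede--Winter, in the refined form due to Tropp, which converts the tail bound on the random sum $Y:=\sum_{k=1}^{N}X^{(k)}$ into a pointwise estimate on the matrix moment generating function of each summand. First I would establish the master tail inequality: for every $\theta>0$,
\[
    \mathrm{Pr}\!\left[\lambda_{\max}(Y)\geq\epsilon\right]\leq e^{-\theta\epsilon}\,\mathbb{E}\,\mathrm{tr}\,e^{\theta Y},
\]
which is Markov's inequality applied to the nonnegative scalar random variable $\mathrm{tr}\,e^{\theta Y}$, combined with the elementary bound $e^{\theta\lambda_{\max}(Y)}\leq\mathrm{tr}\,e^{\theta Y}$ (the trace dominates any single exponentiated eigenvalue).

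The crucial structural step is to control $\mathbb{E}\,\mathrm{tr}\,e^{\theta Y}$ for an independent sum. Here I would invoke Lieb's concavity theorem, which asserts that $A\mapsto\mathrm{tr}\,\exp(H+\log A)$ is concave on positive-definite matrices. Combined with Jensen's inequality and induction over the summands, this yields the subadditivity of the matrix cumulant generating function,
\[
    \mathbb{E}\,\mathrm{tr}\,\exp\!\Big(\theta\textstyle\sum_{k}X^{(k)}\Big)\leq\mathrm{tr}\,\exp\!\Big(\textstyle\sum_{k}\log\mathbb{E}\,e^{\theta X^{(k)}}\Big),
\]
reducing the problem to bounding each per-summand term $\log\mathbb{E}\,e^{\theta X^{(k)}}$ separately.

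Next I would derive the Bernstein moment bound for a single summand. Using $\mathbb{E}[X^{(k)}]=0$, the scalar estimate $e^{z}-1-z\leq\frac{z^{2}/2}{1-\alpha\theta/3}$ valid for $|z|\leq\alpha\theta<3$ (obtained from the series $\sum_{p\geq2}z^{p}/p!$ together with $p!\geq 2\cdot 3^{p-2}$) lifts, through the transfer rule for functions of Hermitian matrices, to the semidefinite inequality $e^{\theta X^{(k)}}-I-\theta X^{(k)}\preceq g(\theta)\,(X^{(k)})^{2}$, where $g(\theta):=\frac{\theta^{2}/2}{1-\alpha\theta/3}$ for $0<\theta<3/\alpha$. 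Taking expectations and using the operator monotonicity of the logarithm together with $\log(I+A)\preceq A$ gives $\log\mathbb{E}\,e^{\theta X^{(k)}}\preceq g(\theta)\,\mathbb{E}[(X^{(k)})^{2}]$.

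Assembling these, monotonicity of the trace exponential and the inequality $\mathrm{tr}\,e^{A}\leq d\,e^{\lambda_{\max}(A)}$ yield $\mathbb{E}\,\mathrm{tr}\,e^{\theta Y}\leq d\,e^{g(\theta)\sigma^{2}}$, so that $\mathrm{Pr}[\lambda_{\max}(Y)\geq\epsilon]\leq d\exp(-\theta\epsilon+g(\theta)\sigma^{2})$; the explicit choice $\theta=\epsilon/(\sigma^{2}+\alpha\epsilon/3)$ then reproduces the stated exponent $-\tfrac{\epsilon^{2}/2}{\sigma^{2}+\alpha\epsilon/3}$, and applying the same argument to $-Y$ and using $\|Y\|_{\infty}=\max\{\lambda_{\max}(Y),\lambda_{\max}(-Y)\}$ delivers the two-sided operator-norm tail. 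The main obstacle is Lieb's concavity theorem: it is the single genuinely deep ingredient, with no elementary proof, and all remaining steps are assembly around it. A secondary technical subtlety is the passage from the scalar Bernstein bound to its semidefinite matrix analog, since the matrix exponential and logarithm do not commute, so the transfer rule and operator monotonicity must be applied in precisely the right order.
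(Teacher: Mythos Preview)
The paper does not prove this lemma; it is stated with a citation to Tropp's monograph~\cite{tropp2015introduction} and used as a black-box tool in the performance-guarantee analysis. Your outline is correct and is precisely the standard argument from that reference: the matrix Laplace-transform master bound, Lieb's concavity to obtain subadditivity of matrix cumulants, the scalar Bernstein series estimate lifted via the transfer rule, and the final optimization in $\theta$. One small point: the union bound over $Y$ and $-Y$ naturally produces a prefactor $2d$ rather than the $d$ stated here, so either the paper has absorbed a constant or is quoting a one-sided form; this does not affect any downstream sample-complexity scaling.
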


Hence, we complete the proof of Theorem~\ref{thm:error_bounds_for_each_x} by evaluating $\alpha$ and $\sigma^2$ for each choice of $x$.

\subsubsection{2-design ensemble}\label{sec:prt_global_2-design_complexity}

\begin{proof}
The algorithm described in Sec.~\ref{sec:main_leaerning_protocol} produces an estimator of observable $O_{\Phi}$ as
\begin{eqnarray}
    \hat{O}_{\Phi,\text{2-dgn}} 
    &:=& \frac{1}{N}\sum_{k=1}^{N} \hat{\omega}_{\text{2-dgn}}^{(k)}\\
    &=& \frac{1}{N}\sum_{k=1}^{N} \nu_{j_k} \left( d_{\rm in}(d_{\rm in}+1)\pure{\psi_{i_k}} - d_{\rm in}I_{d_{\rm in}} \right),
\end{eqnarray}
where each $\hat{\omega}_{\text{2-dgn}}^{(k)}$ is an i.i.d. copy of a random matrix $\omega_{\text{2-dgn}}\in \Herm{\mathbb{C}^{d_{\rm in}}}$ taking the value
\begin{equation}
   \nu_{j} \left( d_{\rm in}(d_{\rm in}+1)\pure{\psi_{i}} - d_{\rm in}I_{d_{\rm in}} \right)
\end{equation}
with probability $p_{i} \bra{j} W^{\dagger}\Phi(\pure{\psi_{i}})W\ket{j}$ for all $k\in\{1,...,N\}$.
Since the estimator is unbiased, i.e., $\mathbb{E}[\hat{O}_{\Phi}]=O_{\Phi}$, the deviation $\hat{O}_{\Phi,\text{2-dgn}}-O_{\Phi}$ can be written as a sum of $N$ independent, centered random matrices of the form $X_{\text{2-dgn}}^{(k)}:=( \omega_{\text{2-dgn}}^{(k)}  - O_{\Phi} )/N$.
These matrices obey
\begin{eqnarray}
    \left\| X_{\text{2-dgn}}^{(k)} \right\|_{\infty}
    &\leq& \frac{1}{N} \left\{  \left\| \omega_{\text{2-dgn}}^{(k)} \right\|_{\infty} + \left\| O_{\Phi} \right\|_{\infty} \right\}\\
    &\leq& \frac{1}{N} \left\{ \max_{\ket{\psi_i},\nu_j} |\nu_{j}| \left\|   d_{\rm in}(d_{\rm in}+1) \pure{\psi_{i}} -d_{\rm in}I_{d_{\rm in}}  \right\|_{\infty} + \left\| O_{\Phi} \right\|_{\infty}\right\}\\
    &\leq& \frac{ (d_{\rm in}^2+1) \|O \|_{\infty} }{N}:=\alpha\label{eq:2_design_norm}.
\end{eqnarray}
where the last inequality follows from $
|\nu_j|\le \|O\|_\infty$ and $\left\|d_{\rm in}(d_{\rm in}+1)|\psi_i\rangle\langle\psi_i|-d_{\rm in}I_{d_{\rm in}}\right\|_\infty\le d_{\rm in}^2$. 
We also use $\|O_\Phi\|_\infty=\|\Phi^\dagger(O)\|_\infty\le\|O\|_\infty$, which comes from the fact that $\Phi^\dagger$ is positive and unital.
Also, the random matrix $\omega_{\text{2-dgn}}$ obeys
\begin{eqnarray}
    \mathbb{E}[(\omega_{\text{2-dgn}}- \mathbb{E}[\omega_{\text{2-dgn}}])^2 ] &=& \mathbb{E}[\omega_{\text{2-dgn}}^2] - (\mathbb{E}[\omega_{\text{2-dgn}}])^2 \\
    &=& \mathbb{E}[\omega_{\text{2-dgn}}^2] - O_{\Phi}^2\label{eq:variance_decomp}
\end{eqnarray}
and we have
\begin{eqnarray}
    \mathbb{E}[\omega_{\text{2-dgn}}^2] 
    &=& \sum_{i} p_{i} \sum_j \mathrm{tr}\left[ \pure{j} W^{\dagger}\Phi(\pure{\psi_{i}}) W\right] \nu_j^2 \left( d_{\rm in}(d_{\rm in}+1) \pure{\psi_{i}} - d_{\rm in}I_{d_{\rm in}}  \right)^2\\
    &=& \sum_{i} p_{i} \mathrm{tr}\left[ \Phi^{\dagger}(O^2) \pure{\psi_{i}}\right] \left( d_{\rm in}^2(d_{\rm in}^2-1) \pure{\psi_{i}} + d_{\rm in}^2I_{d_{\rm in}} \right)\\
    &=& d_{\rm in}^2(d_{\rm in}^2-1) \sum_{i} p_{i} \mathrm{tr}\left[ \Phi^{\dagger}(O^2) \pure{\psi_{i}}\right]   \pure{\psi_{i}} + d_{\rm in}^2 \sum_{i}p_{i} \, \mathrm{tr}\left[ \Phi^{\dagger}(O^2) \pure{\psi_{i}} \right] I_{d_{\rm in}}\\
    &=& d_{\rm in}^2(d_{\rm in}^2-1) \mathrm{tr}_{1} \left[ \left(\Phi^{\dagger}(O^2) \otimes I\right) \frac{\mathrm{SWAP}_{n}+I_{d_{\rm in}^2}}{d_{\rm in}(d_{\rm in}+1)} \right] + d_{\rm in}\,\mathrm{tr}\left[ \Phi^{\dagger}(O^2) \right] I_{d_{\rm in}} \\
    &=& d_{\rm in}(d_{\rm in}-1) \Phi^{\dagger}(O^2) + d_{\rm in}^2 \,\mathrm{tr}\left[ \Phi^{\dagger}(O^2) \right] I_{d_{\rm in}},
\end{eqnarray}
where in the second equality we use the eigenvalue decomposition of $O$ in Eq.~\eqref{eq:svd_of_O} together with the adjoint map $\Phi^{\dagger}$ of $\Phi$; in the fourth equality we apply the Corollary~\ref{cor:2_design}, which follows from the fact that the probability distribution $\mathcal{S}_{\text{2-dgn}}$ forms a state 2-design. 
Finally, in the last line, we employ the partial-swap-trick from Lemma~\eqref{lemma:partial_swap_trick}.

Hence, the variance $\sigma^2$ can be upper bounded by 
\begin{eqnarray}
    \left\| \sum_{k=1}^{N} \mathbb{E}\left[ \left(\frac{\omega^{(k)}_{\text{2-dgn}}-O_{\Phi}}{N}\right)^2 \right] \right\|_{\infty} 
    &=& \frac{1}{N} \left\| d_{\rm in}(d_{\rm in}-1) \Phi^{\dagger}(O^2) + d_{\rm in}^2\,\mathrm{tr}\left[ \Phi^{\dagger}(O^2) \right] I_{d_{\rm in}} - O_{\Phi}^2 \right\|_{\infty} \\
    &\leq& \frac{1}{N} \left\{ d_{\rm in}(d_{\rm in}-1) \| \Phi^{\dagger}(O^2) \|_{\infty}
    + d_{\rm in}^2 \mathrm{tr}\left[ \Phi^{\dagger}(O^2) \right]  + \|O_{\Phi}^2\|_{\infty} \right\}\\
    &\leq& \frac{1}{N} \left\{ d_{\rm in}(d_{\rm in}-1) \| O \|_{\infty}^2
    + d_{\rm in}^3 \| O \|_{\infty}^2  + \|O_{\Phi}\|_{\infty}^2 \right\}\\
    &\leq& \frac{\| O \|_{\infty}^2(d_{\rm in}^2+1)(d_{\rm in}+1)}{N}:=\sigma^2.\label{eq:2_design_variance}
\end{eqnarray}
In the third inequality, we used $0\leq O^2\le \|O\|_\infty^2 I_{d_{\rm out}}$. 
Since $\Phi^\dagger$ is positive and unital, this implies $0\le \Phi^\dagger(O^2) \leq \|O\|_\infty^2 I_{d_{\rm in}}$ and hence $\|\Phi^\dagger(O^2)\|_\infty\le \|O\|_\infty^2$ and $\operatorname{tr}[\Phi^\dagger(O^2)]\leq d_{\rm in}\|O\|_\infty^2$. We also used $\|O_\Phi^2\|_\infty=\|O_\Phi\|_\infty^2\leq\|O\|_\infty^2$.
Substituting Eqs.~\eqref{eq:2_design_norm} and \eqref{eq:2_design_variance} into Eq.~\eqref{eq:matrix_benstein} yields
\begin{eqnarray}
    \mathrm{Pr}\left[ \left\| \hat{O}_{\Phi,\text{2-dgn}} - O_{\Phi}
     \right\|_{\infty} \geq \eta \right] \leq 2d_{\rm in} \exp\left( \frac{-N\eta^2/2}{ \|O\|_{\infty}^2 (d_{\rm in}^2+1)(d_{\rm in}+1) + \|O\|_{\infty}(d_{\rm in}^2+1)\eta/3} \right).
\end{eqnarray}
Thus, it can be concluded that
\begin{equation}
    N \geq \frac{2 \max\{1,\|O\|_{\infty}^2\}(d_{\rm in}^2+1) (d_{\rm in}+1+\eta/3) } {\eta^2} \ln{\left(\frac{2d_{\rm in}}{\delta}\right)}
\end{equation}
suffices to guarantee that $\|\hat{O}_{\Phi,\text{2-dgn}}-O_{\Phi} \|_{\infty}\leq \eta$ with probability at least $1-\delta$.
\end{proof}

\subsubsection{Tensor-product single-qubit stabilizer states}\label{sec:prt_local_2_design_complexity}
\begin{proof}
Similar to the previous discussion, since the estimator $\hat{O}_{\Phi,\text{stab}}$ has the explicit representation $\hat{O}_{\Phi,\text{stab}} = \frac{1}{N} \sum_{i=1}^{N} \hat{\omega}_{\text{stab}}^{(k)}$ in Eq.~\eqref{eq:estimator_stab}, it follows that
\begin{equation}
    \hat{O}_{\Phi,\text{stab}} - O_{\Phi} =  \sum_{k=1}^{N} \frac{1}{N} \left\{\hat{\omega}_{\text{stab}}^{(k)}-O_{\Phi}\right\}.
\end{equation}
Here, each $\hat{\omega}_{\text{stab}}^{(k)}$ is an i.i.d. copy of a random matrix $\omega_{\text{stab}}=\nu_{j}\bigotimes_{s=1}^{n} (6\pure{u^s}-2I)$, which occurs with probability $(1/6^n)\bra{j} W^{\dagger} \Phi(\pure{u}) W\ket{j}$, where $\ket{u}=\bigotimes_{s=1}^{n} \ket{u^s}$.
Combining this with the unbiasedness of $\hat{O}_{\Phi,\text{stab}}$ shown in Appendix~\ref{apsec:const_stab} implies that $\hat{O}_{\Phi,\text{stab}} - O_{\Phi}$ is a sum of centered random matrices $X_{\text{stab}}^{(k)}:=(\hat{\omega}_{\text{stab}}^{(k)}-O_{\Phi})/N$. 
The operator norm of each random matrix $X_{\text{stab}}^{(k)}$ can be upper bounded as
\begin{eqnarray}
    \| X_{\text{stab}}^{(k)} \|_{\infty} &\leq& \frac{1}{N} \left\{ \| \omega_{\text{stab}}^{(k)} \|_{\infty} + \| O_{\Phi} \|_{\infty} \right\}\\
    &\leq& \frac{1}{N} \left\{  \max_{\ket{u},\nu_j} |\nu_j | \prod_{s=1}^{n} \| 6\pure{u^s}-2I \|_{\infty}  + \| O_{\Phi} \|_{\infty} \right\}\\
    &\leq& \frac{ (4^n+1) \|O \|_{\infty} }{N} := \alpha.
\end{eqnarray}
Since $\mathbb{E}[(X^{(k)}_{\text{stab}})^2 ]= (\mathbb{E}[\omega_{\text{stab}}^2]-O_{\Phi}^2)/N^2$, it suffices to calculate $\mathbb{E}[ \omega_{\text{stab}}^2 ]$ to evaluate the variance parameter:
\begin{eqnarray}
    \mathbb{E}[\omega_{\text{stab}}^2] 
    &=& \mathbb{E}_{u} \sum_j \mathrm{tr}\left[ \pure{j} W^{\dagger}\Phi\left(\pure{u} \right) W\right] \nu_j^2 \left( \bigotimes_{s=1}^{n} (6\pure{u^s}-2I)  \right)^2\\
    &=& \frac{1}{6^n} \sum_{\ket{u}\in \mathcal{Q}_{\rm stab}^{\otimes n}} \mathrm{tr}\left[ \Phi^{\dagger}(O^2) \bigotimes_{s=1}^{n} \pure{u^{s}} \right] \left( 4^n \bigotimes_{s=1}^{n} (3\pure{u^{s}}+I) \right).
\end{eqnarray}
Here, to proceed with the above calculation, we first examine the case where the input is a product operator $A:=\bigotimes_{s=1}^{n}A^s$ with $A^{s}\in \Herm{\mathbb{C}^2}$, instead of $\Phi^{\dagger}(O^2)$.
\begin{eqnarray}
    \sum_{\ket{u}\in \mathcal{Q}_{\rm stab}^{\otimes n}} \mathrm{tr} \left[ A \bigotimes_{s=1}^{n} \pure{u^{s}} \right] \left( \bigotimes_{s=1}^{n} \left( 3\pure{u^{s}}+I \right) \right)
    &=& \bigotimes_{s=1}^{n} \left( \sum_{\ket{u^s} \in \mathcal{Q}_{\rm stab}} \mathrm{tr}\left[ A^s \pure{u^{s}}\right] (3 \pure{u^{s}}+I)  \right) \\
    &=& 3^n \bigotimes_{s=1}^{n} \left( 2\mathrm{tr}(A^s) I + A^s \right)\\
    &=& 3^n \sum_{\alpha \subseteq [n]} 2^{|\alpha|} \left(\mathrm{tr}_{\alpha} \left[A \right] \otimes \bigotimes_{s\in \alpha} I_s \right),
\end{eqnarray}
where the second equality uses $\sum_{\ket{u^s}\in \mathcal{Q}_{\rm stab}} \mathrm{tr}\left[ A^s \pure{u^{s}}\right] (3 \pure{u^{s}}+I)  = 3 \left( 2 \mathrm{tr}(A^s)I+A^s \right)$ for all $s\in[n]$, which follows from the fact that uniform distribution over the set $\mathcal{Q}_{\rm stab}$ forms a state 2-design. 
By linear extension of such product operators $A$, the result can be applied to $\Phi^{\dagger}(O^2)$. Therefore, the operator norm of $\mathbb{E}[\omega_{\rm stab}^2]$ can be upper bounded as
\begin{eqnarray}
    \left\| \mathbb{E}[\omega_{\text{stab}}^2]\right\|_{\infty}
    &=& \left\| 2^n \sum_{\alpha \subseteq [n]} 2^{|\alpha|} \left(\mathrm{tr}_{\alpha} \left[\Phi^{\dagger}(O^2) \right] \otimes \bigotimes_{s\in \alpha} I_s \right)\right\|_{\infty}\\
    &\leq& 2^n \sum_{\alpha \subseteq [n]} 2^{|\alpha|}  \left\|\mathrm{tr}_{\alpha} \left[\Phi^{\dagger}(O^2) \right] \right\|_{\infty} \left\| \bigotimes_{s\in \alpha} I_s \right\|_{\infty} \\
    &\leq& 2^n \left\| O \right\|^2_{\infty} \sum_{\alpha \subseteq [n]} 2^{|\alpha|} \cdot 2^{|\alpha|} \\
    &\leq& 2^n \left\| O \right\|^2_{\infty} \sum_{j=0}^{n} \binom{n}{j} 4^{j} \\
    &\leq& 2^n \left\| O \right\|^2_{\infty} (4+1)^n\leq 10^n \|O \|_{\infty}^2,
\end{eqnarray}
where we used $0\leq O^2\le \|O\|_\infty^2 I_{d_{\rm out}}$ and the positivity and unitality of $\Phi^\dagger$, which imply $0\le \Phi^\dagger(O^2)\leq\|O\|_\infty^2 I_{d_{\rm in}}$.
Consequently, for each subset $\alpha$, $\left\|\operatorname{tr}_{\alpha}\!\left[\Phi^\dagger(O^2)\right]\right\|_\infty\leq2^{|\alpha|}\|O\|_\infty^2$, with the convention for $\operatorname{tr}_{\alpha}$ used above.
Hence, the variance $\sigma^2$ can be upper bounded by 
\begin{eqnarray}
    \left\| \sum_{i=1}^{N} \mathbb{E}\left[ \left(\frac{\omega^{(k)}_{\text{stab}} -O_{\Phi}}{N}\right)^2 \right] \right\|_{\infty} 
    &\leq& \frac{1}{N} \left\{ \left\| \mathbb{E}[(\omega_{\text{stab}}^{(k)})^2]\right\|_{\infty} + \| O^2_{\Phi} \|_{\infty} \right\} \\
    &\leq& \frac{\| O \|_{\infty}^2(10^n+1)}{N}:= \sigma^2.
\end{eqnarray}
Thus, Matrix Bernstein inequality in Lemma~\ref{lemma:matrix_bernstein_inequality} implies that
\begin{equation}
    N \geq \frac{2 \max\{1,\|O\|_{\infty}^2\}(10^n+1+(\eta/3)(4^n+1)) } {\eta^2} \ln{\left(\frac{2^{n+1}}{\delta}\right)}
\end{equation}
is sufficient to ensure that $\| \hat{O}_{\Phi,\text{stab}}-O_{\Phi} \|_{\infty} \leq \eta$ with probability at least $1-\delta$. Since $d_{\rm in}=2^n$ and
$10^n=d_{\rm in}^{\log_2 10}\le d_{\rm in}^{3.33}$,
this implies the stabilizer bound stated in Theorem~\ref{thm:error_bounds_for_each_x}.
\end{proof}

\subsubsection{$n$-qubit Pauli operator}
\begin{proof}
From the unbiasedness of $\hat{O}_{\Phi,\text{pauli}}$ together with Eq.~\eqref{eq:estimator_pauli}, we have
\begin{equation}
    \hat{O}_{\Phi, \text{pauli}} - O_{\Phi} = \sum_{k=1}^{N} \frac{1}{N} \left\{ \omega_{\text{pauli}}^{(k)}-O_{\Phi} \right\}:=\sum_{k=1}^{N} X_{\rm pauli}^{(k)},
\end{equation}
which is a sum of independent centered random matrices.
For each random matrix $X_{\rm pauli}^{(k)}$, we have the upper bound as
\begin{eqnarray}
    \| X_{\rm pauli}^{(k)} \|_{\infty} 
    &\leq& \frac{1}{N} \left\{ \left\| \omega_{\rm pauli}^{(k)} \right\|_{\infty} + \| O_{\Phi} \|_{\infty} \right\}\\
    &\leq& \frac{1}{N} \left\{ \max_{j ,\bm i, \bm e} \left\| 4^n \nu_{j} c_{\bm{i},\bm{e}} P_{\bm{i}} \right\|_{\infty} + \| O_{\Phi} \|_{\infty} \right\}\\
    &=& \frac{ (4^n+1)\|O \|_{\infty} }{N} := \alpha.
\end{eqnarray}
Here, $\mathbb{E}[\omega^2_{\rm pauli}]$ can be evaluated as
\begin{eqnarray}
    \mathbb{E}[\omega^2_{\rm pauli}]
    &=& \sum_{\bm i, \bm e} \frac{1}{8^n} \sum_{j} \mathrm{tr}\left[ \pure{j} W^{\dagger} \Phi(\pure{v_{\bm i, \bm e}}) W \right] \left( \nu_{j} 4^n c_{\bm i,\bm e}P_{\bm{i}} \right)^2\\
    &=& \frac{1}{8^n} \sum_{\bm i, \bm e} \mathrm{tr}\left[ \Phi^{\dagger}(O^2) \pure{v_{\bm i, \bm e}} \right] 16^n I^{\otimes n}\\
    &=& 2^n \sum_{\bm i\in[4]^n} \mathrm{tr}\left[ \Phi^{\dagger}(O^2) \right] I^{\otimes n}\\
    &=& 8^n \mathrm{tr}\left[ \Phi^{\dagger}(O^2) \right] I^{\otimes n},
\end{eqnarray}
where in the second-to-third line, we employ the fact that $\sum_{\bm e} \pure{v_{\bm i, \bm e}}=I^{\otimes n}$ for all $\bm i \in[4]^n$. 
Hence, we arrive at the following upper bound of the variance parameter:
\begin{eqnarray}
    \left\| \sum_{i=1}^{N} \mathbb{E}\left[ \left(\frac{\omega^{(k)}_{\text{pauli}}-O_{\Phi}}{N}\right)^2 \right] \right\|_{\infty}
    &\leq& \frac{1}{N} \left\{ \left\| \mathbb{E}[(\omega_{\text{pauli}}^{(k)})^2]\right\|_{\infty} + \| O^2_{\Phi} \|_{\infty} \right\}\\
    &\leq& \frac{\| O \|_{\infty}^2(16^n+1)}{N} := \sigma^2,
\end{eqnarray}
where we used $0\le \Phi^\dagger(O^2)\le\|O\|_\infty^2 I_{d_{\rm in}}$ which follows from $O^2\le \|O\|_\infty^2 I_{d_{\rm out}}$ and the
positivity and unitality of $\Phi^\dagger$. Therefore, $\operatorname{tr}[\Phi^\dagger(O^2)]\leq d_{\rm in}\|O\|_\infty^2
=2^n\|O\|_\infty^2$.

Like the preceding discussion, the Matrix Bernstein inequality shows that the number of iterations
\begin{equation}
    N \geq \frac{2 \max\{1,\|O\|_{\infty}^2\}(d_{\rm in}^4+1+(\eta/3)(d_{\rm in}^2+1)) } {\eta^2} \ln{\left(\frac{2d_{\rm in}}{\delta}\right)}
\end{equation}
suffices to ensure that $\| \hat{O}_{\Phi,\text{pauli}}-O_{\Phi} \|_{\infty} \leq \eta$ with probability at least $1-\delta$.
\end{proof}

\section{Cluster simulation of tree-structured quantum circuits}\label{apsec:cluster_simulation}

\subsection{Technical lemmas}\label{sec:tecchnical_lemmas}
In this subsection, we present several technical lemmas that will be used for the complexity analysis in the following subsections.
First, the following two lemmas are useful for bounding the norms of tensor products of local operators generated through statistical estimation procedures, and their deviation from the true value.

\begin{lemma}[See, e.g., Ref.~\cite{peng2021simulating}]\label{lemma:bound_exponential}
Let $r\in \mathbb{N}$ and $x$ be a real number satisfying $0\leq x \leq 1/r$. Then, the following inequality holds:
\begin{equation}
(1+x)^r \leq 1+(e-1)rx.
\end{equation}
\end{lemma}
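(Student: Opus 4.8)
The statement to prove is the elementary inequality $(1+x)^r \le 1 + (e-1)rx$ for $r \in \mathbb{N}$ and $0 \le x \le 1/r$. The plan is to reduce it to a convexity (or monotonicity) argument on a single real variable. First I would set $t := rx$, so that the hypothesis becomes $0 \le t \le 1$ and the claim becomes $(1 + t/r)^r \le 1 + (e-1)t$. The key observation is that the function $f(s) := (1+s/r)^r$ on the interval $s \in [0,1]$ is convex (its second derivative is $\tfrac{r-1}{r}(1+s/r)^{r-2} \ge 0$), so it lies below the chord connecting its endpoints. The endpoints are $f(0) = 1$ and $f(1) = (1+1/r)^r$, and the chord at $s = t$ gives $f(t) \le f(0) + t\,(f(1) - f(0)) = 1 + t\big((1+1/r)^r - 1\big)$.

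The second and final step is to bound the slope of the chord: $(1+1/r)^r - 1 \le e - 1$, which is immediate from the standard fact that $(1+1/r)^r \le e$ for all $r \ge 1$ (the sequence $(1+1/r)^r$ is increasing with limit $e$, so it never exceeds $e$). Combining the two displays yields $f(t) \le 1 + (e-1)t$, i.e., $(1+x)^r \le 1 + (e-1)rx$, as required. An alternative to the convexity argument, if one prefers to avoid differentiating, is to note that $g(x) := \frac{(1+x)^r - 1}{x}$ is nondecreasing in $x$ on $(0,\infty)$ — each term in the binomial-expansion-based expression $\sum_{k=1}^{r}\binom{r}{k}x^{k-1}$ is nondecreasing — so $g(x) \le g(1/r) = r\big((1+1/r)^r - 1\big) \le r(e-1)$ for $0 < x \le 1/r$, and the $x = 0$ case is trivial.

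There is essentially no hard part here: the inequality is a textbook estimate (indeed the excerpt cites Ref.~\cite{peng2021simulating} for it), and both routes above are short. The only point requiring a token of care is ensuring the direction of the chord inequality is correct — convexity gives ``function below chord'' on the interval between the two interpolation points, which is exactly the regime $t \in [0,1]$ we are in — and confirming that the endpoint $s=1$ is admissible, which it is precisely because the hypothesis $x \le 1/r$ translates to $t \le 1$. I would present the convexity version as the main proof and perhaps remark on the binomial version as an aside.
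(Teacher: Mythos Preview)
Your proof is correct, but takes a slightly different route from the paper's. The paper first applies the logarithm bound $\ln(1+x)\le x$ to obtain the intermediate estimate $(1+x)^r \le e^{rx}$, and then uses convexity of $t\mapsto e^t$ on $[0,1]$ to bound $e^{rx}\le 1+(e-1)rx$. You instead apply convexity directly to the function $s\mapsto (1+s/r)^r$ on $[0,1]$, bounding it by the chord through $(0,1)$ and $(1,(1+1/r)^r)$, and then invoke $(1+1/r)^r\le e$ to control the chord's slope. Both arguments hinge on a single convexity step and the same endpoint constraint $rx\le 1$; yours is marginally more direct (no detour through the exponential), while the paper's has the minor advantage that the intermediate bound $(1+x)^r\le e^{rx}$ holds without any restriction on $x$, so the hypothesis $x\le 1/r$ is only used once, at the chord step. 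Your binomial alternative is also fine and gives yet a third short route.
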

\begin{proof}[Proof of Lemma~\ref{lemma:bound_exponential}]
Since $\ln(1+x)\leq x$ for all $x>-1$, we obtain $(1+x)^{r}=\exp{(r \ln(1+x))}\leq e^{rx}$. Now we consider the exponential function $f(t)=e^{t}$. Since $f(t)$ is convex, for $t\in[0,1]$ its graph lies below the line segment connecting the points $(0,1)$ and $(1,e)$, i.e., $e^{t} \leq (e-1)t+1$. Applying this with $t=rx$ and noting that $0\leq rx \leq 1$ by assumption, we have
\begin{equation}
    (1+x)^{r} \leq e^{rx} \leq 1+(e-1)rx,
\end{equation}
which concludes the proof.
\end{proof}

\begin{lemma}\label{lemma:bound_the_op_norm_of_products}
Let $\{A_i\}_{i=1}^r$ and $\{B_i\}_{i=1}^r$ be sets of matrices such that $\|A_i - B_i \|_{\infty} \leq \eta_i$ for all $i=1,\dots,r$ where $\eta_i \in (0,1]$. Then the following inequality holds:
\begin{equation}
    \left\| \bigotimes_{i=1}^{r} A_i - \bigotimes_{i=1}^{r} B_i \right\|_{\infty} 
    \leq \sum_{i=1}^{r} \eta_i \prod_{j \neq i} \max\{ \|A_j\|_{\infty}, \, \|B_j\|_{\infty}\}.
\end{equation}
\end{lemma}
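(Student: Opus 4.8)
\textbf{Proof plan for Lemma~\ref{lemma:bound_the_op_norm_of_products}.}
The plan is to establish the bound by a standard telescoping (hybrid) argument over the $r$ tensor factors. First I would write the difference $\bigotimes_{i=1}^{r} A_i - \bigotimes_{i=1}^{r} B_i$ as a telescoping sum in which, in the $i$-th term, the first $i-1$ factors are the $B_j$'s, the $i$-th factor is $A_i - B_i$, and the last $r-i$ factors are the $A_j$'s. Concretely, defining for $i=1,\dots,r$ the hybrid operator
\begin{equation}
    H_i := \Bigl( \bigotimes_{j<i} B_j \Bigr) \otimes (A_i - B_i) \otimes \Bigl( \bigotimes_{j>i} A_j \Bigr),
\end{equation}
one checks by a direct collapse of adjacent terms that $\bigotimes_{i=1}^{r} A_i - \bigotimes_{i=1}^{r} B_i = \sum_{i=1}^{r} H_i$.

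Next I would apply the triangle inequality for the operator norm, $\bigl\| \sum_i H_i \bigr\|_{\infty} \le \sum_i \|H_i\|_{\infty}$, together with multiplicativity of the Schatten $\infty$-norm (operator norm) under tensor products, $\|X \otimes Y\|_{\infty} = \|X\|_{\infty}\|Y\|_{\infty}$. This gives
\begin{equation}
    \|H_i\|_{\infty} = \Bigl( \prod_{j<i} \|B_j\|_{\infty} \Bigr) \, \|A_i - B_i\|_{\infty} \, \Bigl( \prod_{j>i} \|A_j\|_{\infty} \Bigr).
\end{equation}
Using the hypothesis $\|A_i - B_i\|_{\infty} \le \epsilon_i$ and bounding each remaining factor $\|B_j\|_{\infty}$ (for $j<i$) and $\|A_j\|_{\infty}$ (for $j>i$) by $\max\{\|A_j\|_{\infty}, \|B_j\|_{\infty}\}$, we obtain $\|H_i\|_{\infty} \le \epsilon_i \prod_{j\ne i} \max\{\|A_j\|_{\infty}, \|B_j\|_{\infty}\}$, and summing over $i$ yields the claim.

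There is no genuine obstacle here; the only points requiring a moment of care are verifying that the telescoping sum is exactly the claimed difference (by tracking which factors are $A$'s versus $B$'s in consecutive terms and checking the cancellation), and invoking the fact that $\|\cdot\|_{\infty}$ is submultiplicative with respect to tensor products, which is immediate since the operator norm of a tensor product equals the product of the operator norms. The boundary cases $i=1$ and $i=r$, where one of the products over $j<i$ or $j>i$ is empty (and hence equals $1$), are handled by the usual empty-product convention. I would present the telescoping identity explicitly for small $r$ if brevity permits, or simply state it and note it follows by induction on $r$.
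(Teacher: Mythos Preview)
Your proposal is correct and matches the paper's proof essentially line for line: the paper also writes the telescoping identity $\bigotimes_i A_i - \bigotimes_i B_i = \sum_i \bigl(\bigotimes_{j<i} B_j\bigr)\otimes (A_i-B_i)\otimes\bigl(\bigotimes_{j>i} A_j\bigr)$, applies the triangle inequality and $\|X\otimes Y\|_\infty=\|X\|_\infty\|Y\|_\infty$, and then bounds each surviving factor by $\max\{\|A_j\|_\infty,\|B_j\|_\infty\}$.
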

\begin{proof}[Proof of Lemma~\ref{lemma:bound_the_op_norm_of_products}]
By the telescoping identity, we have
\begin{equation}
    \bigotimes_{i=1}^{r} A_{i} - \bigotimes_{i=1}^{r} B_i = \sum_{i=1}^{r} \Bigl( \bigotimes_{j<i} B_j \Bigr) \otimes (A_i - B_i) \otimes \Bigl( \bigotimes_{j>i} A_j \Bigr).
\end{equation}
Since $\| A\otimes B \|_{\infty}=\|A \|_{\infty}\|B\|_{\infty}$ and triangle inequality, it follows that
\begin{eqnarray}
    \left\| \bigotimes_{i=1}^{r} A_{i} - \bigotimes_{i=1}^{r} B_i \right\|_{\infty} 
    &\leq& \sum_{i=1}^{r} \|A_i - B_i\|_{\infty} \prod_{j<i}\|B_j\|_{\infty} \prod_{j>i}\|A_j\|_{\infty}\\
    &\leq& \sum_{i=1}^{r} \eta_i \prod_{j \neq i} \max\{ \|A_j\|_{\infty}, \, \|B_j\|_{\infty}\},
\end{eqnarray}
where the last inequality uses the assumption $\|A_i - B_i\|_{\infty} \leq \eta_i$.
\end{proof}

\begin{lemma}[Union bound; see, e.g., Ref.~\cite{rozanov2013probability}]\label{lemma:union-bound}
Given a finite set $\{E_i\}_{i=1}^{L}$ of events, the following inequality holds:
\begin{equation}
    \mathrm{Pr}\left[ \bigcup_{i=1}^{L} E_i \right] \leq \sum_{i=1}^{L} \mathrm{Pr}(E_{i}).
\end{equation}
\end{lemma}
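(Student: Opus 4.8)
The plan is to reduce the claim to two elementary properties of any probability measure: monotonicity and finite additivity. First I would \emph{disjointify} the union. Set $F_1 := E_1$ and, for $2 \le i \le L$, set $F_i := E_i \setminus \bigcup_{j<i} E_j$. By construction the $F_i$ are pairwise disjoint, each $F_i$ is a measurable event (a finite Boolean combination of the $E_j$), and $\bigcup_{i=1}^{L} F_i = \bigcup_{i=1}^{L} E_i$; the last equality follows by a one-line induction on $L$, since any point of the union lies in some $E_i$, and assigning it to the smallest such index places it in the corresponding $F_i$.

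Next I would apply finite additivity of $\mathrm{Pr}$ on the disjoint family $\{F_i\}_{i=1}^{L}$ to obtain $\mathrm{Pr}[\bigcup_{i=1}^{L} E_i] = \mathrm{Pr}[\bigcup_{i=1}^{L} F_i] = \sum_{i=1}^{L} \mathrm{Pr}[F_i]$. Since $F_i \subseteq E_i$, monotonicity of $\mathrm{Pr}$ gives $\mathrm{Pr}[F_i] \le \mathrm{Pr}[E_i]$ for each $i$, and summing these $L$ inequalities yields $\sum_{i=1}^{L} \mathrm{Pr}[F_i] \le \sum_{i=1}^{L} \mathrm{Pr}[E_i]$. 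Chaining this with the preceding equality gives $\mathrm{Pr}[\bigcup_{i=1}^{L} E_i] \le \sum_{i=1}^{L} \mathrm{Pr}[E_i]$, which is the assertion.

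An equally short alternative, which I might present instead for brevity, is a direct induction on $L$: the base case $L=1$ holds with equality, and for the inductive step one writes $\bigcup_{i=1}^{L+1} E_i = \bigl(\bigcup_{i=1}^{L} E_i\bigr) \cup E_{L+1}$ and uses the two-event bound $\mathrm{Pr}[A \cup B] = \mathrm{Pr}[A] + \mathrm{Pr}[B] - \mathrm{Pr}[A \cap B] \le \mathrm{Pr}[A] + \mathrm{Pr}[B]$ together with the induction hypothesis. Both routes are routine; the only point requiring any care is that all sets appearing are genuine events, which is automatic here because the index set is finite and a $\sigma$-algebra is closed under finite unions, intersections, and complements. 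Consequently there is no substantive obstacle — the "hard part" is merely choosing whichever of the two equivalent presentations is cleanest.
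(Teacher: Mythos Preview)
Your proof is correct; both the disjointification argument and the induction are standard and complete. The paper itself does not prove this lemma at all --- it simply states it with a citation to a probability textbook and then records the De~Morgan reformulation used later --- so your write-up already exceeds what the paper provides.
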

In our analysis, we often use the following equivalent inequality obtained by applying De~Morgan’s law:
\begin{equation}
    1 - \sum_{i=1}^{L} \mathrm{Pr}\left[ E_i^{\rm c} \right] \leq \mathrm{Pr}\left[ \bigcap_{i=1}^{L} E_i \right].
\end{equation}
Therefore, to ensure that all events $E_{1},...,E_{L}$ occur simultaneously with probability at least $1-\delta$, it suffices to require that
\begin{equation}
    \sum_{i=1}^{L} \mathrm{Pr}[E_{i}^{c}] \leq \delta.
\end{equation}
In other words, if the individual failure probabilities $\mathrm{Pr}[E_{i}^{c}]$ are chosen such that their sum does not exceed $\delta$, then the probability that all events occur is at least $1-\delta$. As a simple example, it suffices to assign equal bounds $\mathrm{Pr}[E_{i}^c] \leq \delta/L$ for all $i$ for the desired overall confidence.

\subsection{Proof of Theorem~\ref{thm:2-layer}}\label{sec:proof_2_layer}

\newtheorem*{T7}{\bf{Theorem~\ref{thm:2-layer}} (Two-layer tree circuits)}
\begin{T7}
\textit{
Protocols A and B estimate $\mu={\rm tr}[O\rho_{\rm tree}]$ up to additive error $\epsilon\in(0,1]$ with probability at least $1-\delta$, provided that the
numbers of measurements satisfy, for $k=1,\ldots,R$,
\begin{equation}
    N_{{\rm a},k}
    =
    N_{{\rm b},k}
    =
    \mathcal O\!\left(
    \frac{d^3R^2}{\epsilon^2}
    \ln\!\left(
    \frac{Rd}{\delta}
    \right)
    \right),\qquad
    N_{{\rm a},0}
    =
    N_{{\rm b},0}
    =
    \mathcal O\!\left(
    \frac{1}{\epsilon^2}
    \ln\!\left(
    \frac{1}{\delta}
    \right)
    \right).
\end{equation}
Consequently,
\begin{equation}
    N_{{\rm a,tot}}
    =
    N_{{\rm b,tot}}
    =
    \mathcal O\!\left(
    \frac{d^3R^3}{\epsilon^2}
    \ln\!\left(
    \frac{Rd}{\delta}
    \right)
    \right).
\end{equation}
}
\end{T7}
Here, Protocol A and Protocol B refer to the measure-and-prepare and post-processing implementations introduced in Sec.~\ref{sec:2-layer-tree}, respectively.

\begin{proof}[Proof of Theorem~\ref{thm:2-layer}]
We first analyze Protocol A, and then B.

\vspace{0.5em}
\noindent \textbf{Analysis of Protocol A.}
Assume that the operator-norm error of each estimated effective observable $\tilde{M}_k$ in Step~\textbf{(a-1)} is upper bounded by $\eta_{\rm a}$, i.e., 
\begin{equation}
    \left\| \tilde{M}_k - M_k \right\|_{\infty}\leq \eta_{\rm a}, \quad k=1,...,R.
\end{equation}
Define the approximate target mean of Step~\textbf{(a-3)} as 
\begin{equation}
    \mu_{\rm a} := \mathrm{tr}\left[ O \left( \bigotimes_{k=1}^{R} \Phi_k \circ \mathcal{M}_{\rm apr}^{(k)} \right) (\rho) \right].\label{apeq:mu_a}
\end{equation}
Under this assumption, our first goal is to relate the individual errors $\eta_{\rm a}$ to the total bias in Step~\textbf{(a-3)}. To this end, we express the bias,
\begin{equation}\label{apeq:bias}
    \mathrm{Bias}_{\rm a} := \left| \mu_{\rm a} - \mu \right|,
\end{equation}
solely in terms of the quantity $\eta_{\rm a}$.
Using Eq.~\eqref{eq:mu_another}, this can be rewritten and bounded as
\begin{eqnarray}
    \mathrm{Bias}_{\rm a}
    = \left| \mathrm{tr}\left[ \left(\bigotimes_{k=1}^{R} \mathcal{M}_{\rm apr}^{(k)}(M_k) \right) \rho \right] - \mathrm{tr}\left[ \left(\bigotimes_{k=1}^{R} M_k \right) \rho \right] \right|
    \leq \left\| \bigotimes_{k=1}^{R} \mathcal{M}_{\rm apr}^{(k)}(M_k) - \bigotimes_{k=1}^{R} M_k \right\|_{\infty},\label{apeq:bias2}
\end{eqnarray}
where the inequality follows from Hölder's inequality, $|\mathrm{tr}[AB]|\leq \|A \|_{\infty} \| B\|_{1}$.
Applying the telescoping identity (see Lemma~\ref{lemma:bound_the_op_norm_of_products}), we obtain
\begin{eqnarray}
    \mathrm{Bias}_{\rm a}
    &\leq& \sum_{k=1}^{R} \left\| \mathcal{M}_{\rm apr}^{(k)}(M_k) - M_k \right\|_{\infty} \prod_{l \neq k} \max \left\{ \left\|\mathcal{M}_{\rm apr}^{(l)}(M_l) \right\|_{\infty}, \left\| M_l \right\|_{\infty} \right\}.\label{apeq:bias3-0}
\end{eqnarray}
Since each MP channel $\mathcal{M}_{\rm apr}^{(k)}$ is a unital pinching map, it is contractive in operator norm, $\| \mathcal{M}_{\rm apr}^{(k)}(A) \|_{\infty}\leq \|A \|_{\infty}$ for any Hermitian operator $A$. Moreover, $\mathcal{M}_{\rm apr}^{(k)}(\tilde{M}_k)=\tilde{M}_k$ by construction; see Appendix~\ref{sec:proof_of_existence_of_no_cost_cut}. Hence, 
\begin{eqnarray}
    \left\| \mathcal{M}_{\rm apr}^{(k)}(M_k) - M_k \right\|_{\infty} &\leq& \left\| \mathcal{M}_{\rm apr}^{(k)}(M_k - \tilde{M}_k) \right\|_{\infty} + \left\| \tilde{M}_k - M_k \right\|_{\infty}\\
    &\leq& 2 \left\| \tilde{M}_k - M_k \right\|_{\infty} \leq 2\eta_{\rm a}.\label{apeq:mp_norm-1}
\end{eqnarray}
Similarly, since $\Phi_l$ is CPTP, $\Phi_l^\dagger$ is completely positive and unital. 
As positive unital maps are contractive in the operator norm, we have $\|M_l\|_\infty=\|\Phi_l^\dagger(O_l)\|_\infty\leq \|O_l\|_\infty$.
Using $\|\mathcal{M}_{\rm apr}^{(l)}(M_l)\|_\infty \le \|M_l\|_\infty$ and Eq.~\eqref{apeq:mp_norm-1} in Eq.~\eqref{apeq:bias3-0} yields
\begin{eqnarray}
    \mathrm{Bias}_{\rm a} \leq \sum_{k=1}^{R} 2\eta_{\rm a} \prod_{l \neq k} \left\|M_l \right\|_{\infty}
    \leq 2R\eta_{\rm a}.\label{apeq:bias3}
\end{eqnarray}
Hence, by choosing
\begin{equation}
    \eta_{\rm a} = \frac{9\epsilon}{20R},\label{apeq:eta_condition}
\end{equation}
the $\mathrm{Bias}_{\rm a}$ is upper bounded by $\frac{9\epsilon}{10}$.

Let $\hat\mu_{\rm a}$ be the empirical mean from $N_{\rm a,0}$ shots in Step~\textbf{(a-3)}.
Define
\begin{equation}
E^{\rm a}_0 :=
\left\{
|\hat\mu_{\rm a}-\mu_{\rm a}|\le \frac{\epsilon}{10}
\right\}.
\end{equation}
Conditioned on the learned observables $\{\tilde M_k\}_{k=1}^R$, each
single-shot outcome in Step~\textbf{(a-3)} is bounded in $[-1,1]$. Hence, by
Hoeffding's inequality,
\begin{equation}
\Pr\!\left[(E^{\rm a}_0)^c
\,\middle|\,
\tilde M_1,\ldots,\tilde M_R
\right]
\le
2\exp\!\left(
-\frac{N_{\mathrm{a},0}\epsilon^2}{200}
\right).
\end{equation}
Since the right-hand side is uniform in the learning outcomes, the tower
property gives
\begin{equation}
\Pr[(E^{\rm a}_0)^c]
\le
2\exp\!\left(
-\frac{N_{\mathrm{a},0}\epsilon^2}{200}
\right).
\end{equation}
Thus it suffices to choose
\begin{equation}
N_{\mathrm{a},0}
=
O\!\left(
\frac{1}{\epsilon^2}\ln\frac{1}{\delta}
\right)
\end{equation}
so that $\Pr[(E^{\rm a}_0)^c]\le \delta/2$.

For each local learning step, define
\begin{equation}
    E^{\rm a}_k :=\left\{
    \|\tilde M_k-M_k\|_\infty \le \eta_{\rm a}
    \right\},
    \qquad k=1,\ldots,R .
\end{equation}
Applying Theorem~\ref{thm:error_bounds_for_each_x} with target accuracy
$\eta=\eta_{\rm a}=9\epsilon/(20R)$ and local failure probability $\delta/(2R)$, and using $\|O_k\|_\infty\le 1$ and $\dim \mathcal H_k\le d$, it suffices to take
\begin{equation}
    N_{a,k}
    =
    \mathcal O\!\left(
    \frac{d^3}{\eta_{\rm a}^2}
    \ln\left(\frac{2d}{\delta/(2R)}\right)
    \right)
    =
    \mathcal O\!\left(
    \frac{d^3R^2}{\epsilon^2}
    \ln \left(\frac{Rd}{\delta}\right)
    \right)
\end{equation}
to ensure $\Pr[(E^a_k)^c]\leq \delta/(2R)$.

On the event $\bigcap_{k=1}^R E^{\rm a}_k$, we have
${\rm Bias}_{\rm a}\le 9\epsilon/10$. On $E^{\rm a}_0$, we have
$|\hat\mu_{\rm a}-\mu_{\rm a}|\le \epsilon/10$. Therefore, on
$\bigcap_{k=0}^R E^{\rm a}_k$,
\begin{equation}
|\hat\mu_{\rm a}-\mu|
\le
|\hat\mu_{\rm a}-\mu_{\rm a}|+|\mu_{\rm a}-\mu|
\le
\frac{\epsilon}{10}+\frac{9\epsilon}{10}
=
\epsilon .
\end{equation}
Moreover,
\begin{equation}
\Pr\!\left[
\left(\bigcap_{k=0}^R E^{\rm a}_k\right)^c
\right]
\le
\Pr[(E^{\rm a}_0)^c]
+
\sum_{k=1}^R \Pr[(E^{\rm a}_k)^c]
\le
\frac{\delta}{2}
+
R\frac{\delta}{2R}
=
\delta .
\end{equation}
This proves the claim for Protocol A.

\vspace{0.5em}
\noindent \textbf{Analysis of Protocol B.}
Assume that each estimated effective observable $\tilde{M}_k$ in Step~\textbf{(b-1)} satisfies
\begin{equation}\label{apeq:assumption_b}
    \left\| \tilde{M}_k - M_k \right\|_{\infty}\leq \eta_{\rm b}, \quad k=1,...,R.
\end{equation}
Define the approximate target mean of Step~\textbf{(b-3)} as 
\begin{equation}
    \mu_{\rm b} :=  \mathrm{tr}\left[\left(\bigotimes_{k=1}^{R}\tilde{M}_k \right) \rho\right],\label{apeq:mu_b}
\end{equation}
where the equality follows from the definition of $\mathcal{C}_{\rm apr}^{(k)}$. 

Under this assumption of Eq.~\eqref{apeq:assumption_b}, we first analyze the relationship between the individual errors $\eta_{\rm b}$ and the total bias
\begin{equation}\label{apeq:bias_thm3}
    \mathrm{Bias}_{\rm b} := \left| \mu_{\rm b}  - \mu \right|
\end{equation}
for Step~\textbf{(b-3)}. Now, the total bias can be bounded as
\begin{eqnarray}\label{apeq:bias2_thm3}
    \mathrm{Bias}_{\rm b} = \left| \mathrm{tr}\left[ \left(\bigotimes_{k=1}^{R}\tilde{M}_k-\bigotimes_{k=1}^{R}M_k\right)\rho \right] \right|
    \leq \left\| \bigotimes_{k=1}^{R}\tilde{M}_k - \bigotimes_{k=1}^{R}M_k \right\|_{\infty},
\end{eqnarray}
where we used Hölder's inequality, $|\mathrm{tr}[AB]| \leq \| A\|_{\infty} \| B\|_{1}$. 
Using the telescoping identity in Lemma~\ref{lemma:bound_the_op_norm_of_products}, we obtain
\begin{eqnarray}
    \mathrm{Bias}_{\rm b} 
    &\leq&  \sum_{k=1}^{R} \left\| \tilde{M}_k-M_k\right\|_{\infty} \prod_{l \neq k} \max \left\{ \left\|M_l\right\|_{\infty}, \| \tilde{M}_l \|_{\infty} \right\} \\
    &\leq& \sum_{k=1}^{R} \left\| \tilde{M}_k-M_k\right\|_{\infty} \prod_{l \neq k} \max \left\{ \left\|M_l\right\|_{\infty}, \left\| M_l \right\|_{\infty} + \left\| \tilde{M}_l-M_l\right\|_{\infty} \right\} \\
    &\leq& R \eta_{\rm b} (1+\eta_{\rm b})^{R-1},\label{apeq:bias3_thm3}
\end{eqnarray}
where in the last inequality we used $\left\|M_l\right\|_{\infty}\le 1$ and
$\|\tilde{M}_l - M_l\|_{\infty}\leq \eta_{\rm b}$ for all $l$.

We now derive an explicit upper bound on the total bias $\mathrm{Bias}_{\rm b}$ in terms of the estimation errors $\eta_{\rm b}$. We further simplify the derived expression using the following fact (see Lemma~\ref{lemma:bound_exponential}):
\begin{equation}
    (1+x)^r \leq 1+(e-1) rx,\quad (0\leq x \leq 1/r).
\end{equation}
In particular, we set 
\begin{equation}
    \eta_{\rm b}= \frac{\epsilon}{2R(e-1)}\label{apeq:xi_set}
\end{equation}
for $k=1,...,R$.
Substituting this into Eq.~\eqref{apeq:bias3_thm3} yields
\begin{eqnarray}
    \mathrm{Bias}_{\rm b} 
    &\leq& R \times \frac{\epsilon}{2R(e-1)} \times \left( 1 + \frac{\epsilon}{2R(e-1)} \right)^{R-1}\\
    &\leq& R \times \frac{\epsilon}{2R(e-1)}\times \left( 1+ \frac{\epsilon}{2} \right)\\
    &\leq& \frac{\epsilon}{2} \times \frac{1.5}{e-1} \leq \frac{\epsilon}{2},\label{apeq:bias_bound}
\end{eqnarray}
where in the last two inequalities we use $\epsilon\leq 1$ and the fact that $1.5/(e-1)\leq 1$.
In Step~\textbf{(b-3)}, each run outputs a classical random variable $Y$ such that $\mathbb{E}[Y|\tilde{M}_1,...,\tilde{M}_R]=\mu_{\rm b}$. Moreover, conditioned on fixed $\{\tilde{M}_k\}$, the output is upper bounded by
\begin{equation}
    |Y| \leq \gamma(\tilde{M}) := \left\| \bigotimes_{k=1}^{R} \tilde{M}_{k} \right\|_{\infty} \leq \prod_{k=1}^{R} \left\| \tilde{M}_k \right\|_{\infty}.
\end{equation}
On the good-tomography event in Eq.~\eqref{apeq:tomography_good_event_b} below, we will show that $\gamma(\tilde{M})\leq 3/2$.

Next, we consider the statistical error in Step~\textbf{(b-3)}. Let $\hat{\mu}
_{\rm b}$ be the empirical mean from $N_{\mathrm{b},0}$ shots.
Define the statistical event
\begin{eqnarray}
    E_{0}^{\rm b} := \left\{ \left| \hat{\mu}_{\rm b} - \mu_{\rm b} \right| \leq \frac{\epsilon}{2} \right\},\label{apeq:protocol_b_3}
\end{eqnarray}
where $\mu_{\rm b}$ is defined in Eq.~\eqref{apeq:mu_b}.
Conditioned on fixed $\{\tilde{M}_{k}\}$, the $N_{\rm b,0}$ samples are i.i.d. and bounded in $[-\gamma(\tilde{M}),\gamma(\tilde{M})]$. Thus, Hoeffding's inequality gives
\begin{equation}
    \mathrm{Pr}\left[ (E_{0}^{\rm b})^{c} \mid \tilde{M}_1,...,\tilde{M}_{R} \right] \leq 2 \exp\left( -\frac{N_{\rm b,0} \epsilon^2}{8 \gamma(\tilde{M})^2 } \right).
\end{equation}
Here, let $T$ be the good-tomography event where the tomography errors are within the desired value as in Eq.~\eqref{apeq:xi_set}
\begin{eqnarray}
    T:= \bigcap_{k=1}^{R} \left\{ \left\| \tilde{M}_k - M_k \right\|_{\infty} \leq \eta_{\rm b} \right\}, \qquad
    \eta_{\rm b}=\frac{\epsilon}{2R(e-1)}.\label{apeq:tomography_good_event_b}
\end{eqnarray} 
On the event $T$, we have
\begin{eqnarray}
    \gamma(\tilde{M}) = \left\| \bigotimes_{k=1}^{R} \tilde{M}_{k} \right\|_{\infty} \leq \prod_{k=1}^{R} \left\| \tilde{M}_k \right\|_{\infty} \leq (1+\eta_{\rm b})^R,
\end{eqnarray}
where we used the triangle inequality, $\|\tilde{M}_k \|_{\infty} \leq \| M_k \|_{\infty} + \| M_k - \tilde{M}_k \|_{\infty}\leq 1+\eta_{\rm b}$.
With $\eta_{\rm b}=\epsilon/[2R(e-1)]\leq 1/R$, Lemma~\ref{lemma:bound_exponential} yields
\begin{eqnarray}
    (1+\eta_{\rm b})^R \leq 1+\frac{\epsilon}{2} \leq \frac{3}{2},
\end{eqnarray}
so $\gamma(\tilde{M})\leq3/2$ holds on the good-tomography event $T$. Consequently, for any realization with $T$,
\begin{equation}
    \mathrm{Pr}\left[ (E_{0}^{\rm b})^{c} \mid \tilde{M}_1,...,\tilde{M}_{R} \right] \leq 2 \exp\left( -\frac{N_{\rm b,0} \epsilon^2}{18} \right).
\end{equation}
By the tower property and the fact that the above conditional bound holds whenever $T$ occurs, we have
\begin{eqnarray}
    \mathrm{Pr}\left[ (E_{0}^{\rm b})^{\rm c} \cap T \right] = \mathbb{E}\left[ \mathbf{1}_T \mathrm{Pr}\left[ (E_{0}^{\rm b})^{\rm c} \mid \tilde{M}_1,...,\tilde{M}_R \right] \right] \leq 2\exp\left(-\frac{N_{\rm b,0} \epsilon^2}{18}\right),
\end{eqnarray}
where $\mathbf{1}_{T}$ is the indicator function about $T$.
Thus, it is sufficient to choose
\begin{equation}
    N_{\rm b,0} = \mathcal{O}\left( \frac{1}{\epsilon^2} \ln\left( \frac{1}{\delta} \right)\right),\label{apeq:N_b0}
\end{equation}
so that $\mathrm{Pr}[(E_{0}^{\rm b})^{\rm c} \cap T]\leq \delta/2$.
Then, on $T \cap E_{0}^{\rm b}$, we have $\mathrm{Bias}_{\rm b}\leq\epsilon/2$ and $| \hat{\mu}_{\rm b} - \mu_{\rm b} |\leq \epsilon/2$. The triangle inequality yields 
\begin{equation}
    \left| \hat{\mu}_{\rm b} - \mu \right| \leq \mathrm{Bias}_{\rm b} + | \hat{\mu}_{\rm b} - \mu_{\rm b} | \leq \epsilon.
\end{equation}

It remains to choose the number of samples in each local learning step. For $k=1,\ldots,R$, define
\begin{equation}
    E^{\rm b}_k :=
    \left\{
    \|\tilde M_k-M_k\|_\infty \le \eta_{\rm b}
    \right\},
    \qquad
    k=1,\ldots,R .
\end{equation}
Applying Theorem~\ref{thm:error_bounds_for_each_x} with target accuracy $\eta=\eta_b=\epsilon/[2R(e-1)]$ and local failure probability $\delta/(2R)$, and using $\|O_k\|_{\infty}\leq 1$ and $\dim\mathcal H_k\le d$, it suffices to take
\begin{equation}
    N_{b,k}=
    \mathcal O\!\left(
    \frac{d^3}{\eta_b^2}
    \ln \left(\frac{2d}{\delta/(2R)} \right)
    \right) =
    \mathcal O\!\left(
    \frac{d^3R^2}{\epsilon^2}
    \ln \left( \frac{Rd}{\delta} \right)
    \right)
\end{equation}
so that $\Pr[(E^{\rm b}_k)^c]\leq \delta/(2R)$.
With this choice, the good-tomography event $T$ defined in Eq.~\eqref{apeq:tomography_good_event_b} satisfies
\begin{equation}
\Pr[T^c]\le \sum_{k=1}^R \Pr[(E^{\rm b}_k)^c]
\le R\frac{\delta}{2R}
=
\frac{\delta}{2}.
\end{equation}
Moreover, the choice of $N_{\mathrm{b},0}$ above (Eq.~\eqref{apeq:N_b0}) ensures that
\begin{equation}
    \Pr[(E^{\rm b}_0)^c\cap T]\le \frac{\delta}{2}.
\end{equation}
Combining these bounds with the union bound, we obtain
\begin{equation}
    \Pr[(T\cap E^{\rm b}_0)^c]
    \leq
    \sum_{k=1}^R \Pr[(E^{\rm b}_k)^c]
    +
    \Pr[(E^{\rm b}_0)^c\cap T]
    \leq
    R\frac{\delta}{2R}+\frac{\delta}{2}
    =
    \delta.
\end{equation}
This proves the theorem statement for Protocol B.
\end{proof}

\subsection{Proof of Theorem~\ref{thm:tree_simulation_general}}\label{sec:proof_multi_layer}

In this subsection, we prove the formal version of Theorem~\ref{thm:tree_simulation_general}. Note that we assume $K\ge1$ and the case $K=0$ is handled by directly measuring the root system.

\newtheorem*{T5}{\bf{Theorem~\ref{thm:tree_simulation_general} (General tree-structured circuits, formal)}}
\begin{T5}
\textit{
Consider the finite rooted tree setup of Sec.~\ref{sec:main_tree_setup}. 
Let $K$ be
the number of cut edges, and let $d_v$ be the bond dimension of the cut
edge entering node $v\in V^\circ$. 
Then, Algorithm~\ref{alg:general-tree-appendix} estimates the expectation value
\begin{equation}
    \mu={\rm tr}[O\rho_{\rm tree}]
\end{equation}
within additive error $\epsilon\in(0,1]$ with probability
at least $1-\delta$ using
\begin{equation}
    N_{\rm tot}
    =
    \mathcal O\!\left(
    \frac{K^2}{\epsilon^2}
    \sum_{v\in V^\circ}d_v^3
    \ln\!\left(
    \frac{Kd_v}{\delta}
    \right)
    +
    \frac{1}{\epsilon^2}
    \ln\!\left(
    \frac{1}{\delta}
    \right)
    \right)
\end{equation}
measurements. In particular, if $d_v\le d$ for all $v\in V^\circ$, then
\begin{equation}
    N_{\rm tot}
    =
    \mathcal O\!\left(
    \frac{d^3K^3}{\epsilon^2}
    \ln\!\left(
    \frac{Kd}{\delta}
    \right)
    \right).
\end{equation}
}
\end{T5}

\subsubsection{General algorithm overview}

\begin{algorithm}[b]
\caption{Bottom-up learning protocol on a finite rooted tree}
\label{alg:general-tree-appendix}
\DontPrintSemicolon

\KwIn{
A finite rooted tree $T=(V,E)$ with root $r$;
local maps $\{\Phi_v\}_{v\in V^\circ}$;
root state $\rho$;
terminal observables $\{O_v\}_{v\in V}$;
shot budgets $\{N_v\}_{v\in V^\circ}$ and $N_0$.
}

\KwOut{
An estimate $\hat\mu$ of $\mu={\rm tr}[O\rho_{\rm tree}]$.
}

Choose a bottom-up ordering of $V^\circ$, so that every child is processed before its parent.\;

\ForEach{$v\in V^\circ$ in this order}{
    Form
    \[
        \tilde B_v
        \leftarrow
        O_v\otimes
        \bigotimes_{u\in{\rm Ch}(v)}\tilde M_u .
    \]
    Using $N_v$ fresh samples, apply the 2-design learning protocol of
    Theorem~\ref{thm:error_bounds_for_each_x} to estimate
    \[
        \Phi_v^\dagger(\tilde B_v).
    \]
    Store the resulting estimator as $\tilde M_v$.\;
}

Form
\[
    \tilde B_r
    \leftarrow
    O_r\otimes
    \bigotimes_{u\in{\rm Ch}(r)}\tilde M_u .
\]

Measure the root state $\rho$ in an eigenbasis of $\tilde B_r$ for $N_0$ shots, weight each outcome by the corresponding eigenvalue, and return
the empirical mean $\hat\mu$.\;
\Return{$\hat\mu$}
\end{algorithm}

We now describe the protocol and the shot allocation used in Theorem~\ref{thm:tree_simulation_general}.
The protocol processes the vertices in a bottom-up order, so that every child is processed before its parent. 
Specifically, suppose that, when processing a node $v$, the estimates $\{\tilde M_u\}_{u\in{\rm Ch}(v)}$ for all its children have already been constructed. 
We then form the downstream observable
\begin{equation}
    \tilde B_v
    :=
    O_v\otimes
    \bigotimes_{u\in{\rm Ch}(v)}\tilde M_u .
\end{equation}
Using $N_v$ fresh samples from the local device implementing $\Phi_v$, we apply the 2-design learning protocol of
Theorem~\ref{thm:error_bounds_for_each_x} to estimate
\begin{equation}
    \Phi_v^\dagger(\tilde B_v),
\end{equation}
and store the resulting estimator as $\tilde M_v$. Here, the target local accuracy $\eta_v$ in operator norm and failure probability $\delta_v$ are chosen as
\begin{equation}\label{eq:error_choice}
    \eta_v=\frac{\ln(1+\epsilon/2)}{K},
    \qquad
    \delta_v=\frac{\delta}{2K}.
\end{equation}
Thus, the number of samples at node $v$ is chosen as
\begin{equation}
\label{eq:general_tree_local_shots}
    N_v
    =
    \mathcal O\!\left(
    \frac{d_v^3}{\eta_v^2}
    \ln\frac{d_v}{\delta_v}
    \right)
    =
    \mathcal O\!\left(
    \frac{d_v^3K^2}{\epsilon^2}
    \ln\frac{Kd_v}{\delta}
    \right).
\end{equation}

After all non-root vertices have been processed, we form
\begin{equation}
    \tilde B_r
    :=
    O_r\otimes
    \bigotimes_{u\in{\rm Ch}(r)}\tilde M_u .
\end{equation}
We then measure the root state $\rho$ in an eigenbasis of $\tilde B_r$,
weight each outcome by the corresponding eigenvalue, and return the empirical
mean $\hat\mu$. We also reserve the remaining failure probability for the final root
measurement and set $\delta_{0}:=\delta/2$. Thus, the number of root measurements is
\begin{equation}
\label{eq:general_tree_root_shots}
    N_0
    =\mathcal O\!\left(
    \frac{1}{\epsilon^2}
    \ln\frac{1}{\delta_0}
    \right)
    =
    \mathcal O\!\left(
    \frac{1}{\epsilon^2}
    \ln\frac{1}{\delta}
    \right).
\end{equation}
The same procedure is summarized in Algorithm~\ref{alg:general-tree-appendix}.

\vspace{1em}
\subsubsection{Success events}
For the auxiliary statements below, fix arbitrary local accuracy parameters $\boldsymbol{\eta}:=\{\eta_v\}_{v\in V^\circ}$ and $\eta_v>0$.
For each $v\in V^\circ$, define the local success event
\begin{equation}
    E_v(\eta_v)
    :=
    \left\{
    \left\|
    \tilde M_v-\Phi_v^\dagger(\tilde B_v)
    \right\|_\infty
    \le \eta_v
    \right\}.
\end{equation}
Thus, $E_v(\eta_v)$ is the event that the local learning step at node $v$ succeeds with accuracy $\eta_v$. 

For each $v\in V^\circ$, let $T_v$ denote the subtree rooted at $v$,
including $v$ itself. We define the subtree success event by
\begin{equation}
    G_v(\boldsymbol{\eta})
    :=
    \bigcap_{w\in T_v}E_w(\eta_w),
\end{equation}
and the strict-descendant success event by
\begin{equation}
    D_v(\boldsymbol{\eta})
    :=
    \bigcap_{w\in T_v\setminus\{v\}}E_w(\eta_w).
\end{equation}
Here, $G_v$ means that all local learning steps in the subtree rooted at $v$, including the step at $v$ itself, succeed, while $D_v$ means that all strict descendants of $v$ succeed. We also define the global success event
\begin{equation}
    G(\boldsymbol{\eta})
    :=
    \bigcap_{v\in V^\circ}E_v(\eta_v).
\end{equation}
When the accuracy parameters are clear from context, we simply write
$E_v$, $G_v$, $D_v$, and $G$.

\subsubsection{Proof using the auxiliary statements}

Before presenting the full proof of Theorem~\ref{thm:tree_simulation_general}, we state the auxiliary results used to analyze
Algorithm~\ref{alg:general-tree-appendix}. The detailed proof for each statement can be found after the proof of Theorem~\ref{thm:tree_simulation_general}.
The first lemma bounds the norm of
the ideal effective observables.

\begin{lemma}[Norm bound for ideal effective observables]
\label{lem:ideal_norm_general_tree}
For every $v\in V^\circ$,
\begin{equation}
    \|M_v\|_\infty\le 1,\qquad \|B_r\|_\infty\le 1.
\end{equation}
\end{lemma}

The following proposition is the deterministic part of the proof. It shows that local learning errors accumulate over the actual descendant subtree of a node.

\begin{proposition}[Subtree-wise error propagation]
\label{prop:subtree_error_general_tree}
For each $v\in V^\circ$, define $A_v:=\sum_{w\in T_v}\eta_w$.
Then, on the event $G_v$,
\begin{equation}
    \|\tilde M_v-M_v\|_\infty\le e^{A_v}-1,
    \qquad
    \|\tilde M_v\|_\infty\le e^{A_v}.
\end{equation}
\end{proposition}

The next corollary records the root bias bound and the norm bounds used below.

\begin{corollary}[Bias and norm control]
\label{cor:root_and_intermediate_norm_general_tree}
Define $A_{\rm tot}:=\sum_{v\in V^\circ}\eta_v$. Then, on the event $G$,
\begin{equation}
    \|\tilde B_r-B_r\|_\infty\le e^{A_{\rm tot}}-1,
    \qquad
    \|\tilde B_r\|_\infty\le e^{A_{\rm tot}}.
\end{equation}
Moreover, for every $v\in V^\circ$, on the event $D_v$,
\begin{equation}
    \|\tilde B_v\|_\infty\le e^{A_{\rm tot}}.
\end{equation}
\end{corollary}

The following lemma controls the probability of the adaptive learning process.
The point is that $\tilde B_v$ is random because it is constructed from the
learning outcomes in the descendant subtrees. Fix the bottom-up ordering, and
let $\mathcal F_{<v}$ denote the information generated by all data and estimators obtained before processing $v$. Then $\tilde B_v$ and $D_v$ are determined by
$\mathcal F_{<v}$.

\begin{lemma}[Success probability of adaptive bottom-up learning]
\label{lem:adaptive_success_general_tree}
Fix a bottom-up processing order, and let $\mathcal F_{<v}$ denote
all information available before processing $v$, including all data and estimators obtained at previously processed nodes.
Suppose that, for every $v\in V^\circ$, the event $D_v$ is determined
by $\mathcal F_{<v}$, and that
\begin{equation}
    \Pr(E_v^c\mid\mathcal F_{<v})\le \delta_v
    \qquad
    \text{on the event }D_v .
\end{equation}
Then
\begin{equation}
    \Pr(G^c)\le \sum_{v\in V^\circ}\delta_v.
\end{equation}
\end{lemma}

Combining the above statements, we prove Theorem~\ref{thm:tree_simulation_general} as follows.

\begin{proof}[Proof of Theorem~\ref{thm:tree_simulation_general}]
We use the accuracy and failure-probability budgets in
Eq.~\eqref{eq:error_choice}. Thus,
\begin{equation}
    \eta_v=\frac{\ln(1+\epsilon/2)}{K},
    \qquad
    \delta_v=\frac{\delta}{2K}
    \qquad
    (v\in V^\circ),
\end{equation}
and we set $\delta_0:=\delta/2$.
With this choice,
\begin{equation}
    A_{\rm tot}
    :=
    \sum_{v\in V^\circ}\eta_v
    =
    \ln\left(1+\frac{\epsilon}{2}\right).
\end{equation}

First, assume that the global success event $G$ holds. We bound the deterministic bias caused by replacing the ideal root observable $B_r$ with
the learned observable $\tilde B_r$.
By Corollary~\ref{cor:root_and_intermediate_norm_general_tree}, on the global success event $G$,
\begin{equation}
    \|\tilde B_r-B_r\|_\infty
    \le
    e^{A_{\rm tot}}-1
    = \frac{\epsilon}{2}.
\end{equation}
Since $\mu={\rm tr}[\rho B_r]$, we obtain, on $G$,
\begin{equation}
    \left|
    {\rm tr}[\rho\tilde B_r]-\mu
    \right|
    \le
    \|\tilde B_r-B_r\|_\infty
    \le \frac{\epsilon}{2}.
\end{equation}

Next, we bound the probability of $G$. 
Fix a non-root node $v\in V^\circ$. 
On the strict-descendant success event $D_v$, Corollary~\ref{cor:root_and_intermediate_norm_general_tree} gives
\begin{equation}
    \|\tilde B_v\|_\infty
    \le
    e^{A_{\rm tot}}
    =
    1+\frac{\epsilon}{2}
    \le \frac{3}{2}.
\end{equation}
Thus, conditioned on the information $\mathcal F_{<v}$ obtained before processing $v$ and on the event $D_v$, the observable $\tilde B_v$ is fixed and has operator norm bounded by a constant.
Therefore, by the 2-design learning guarantee of
Theorem~\ref{thm:error_bounds_for_each_x}, the choice $N_v=\mathcal O\!\left(
    d_v^3\eta_v^{-2}
    \ln(d_v/\delta_v)
    \right)$
ensures
\begin{equation}
    \Pr(E_v^c\mid\mathcal F_{<v})\le \delta_v
    \qquad
    \text{on }D_v .
\end{equation}
Thus, for every non-root vertex $v$, the conditional success
condition required in Lemma~\ref{lem:adaptive_success_general_tree} is satisfied with failure probability
$\delta_v$. Applying Lemma~\ref{lem:adaptive_success_general_tree} then gives
\begin{equation}
    \Pr(G^c) \leq  \sum_{v\in V^\circ}\delta_v =
    \frac{\delta}{2}
\end{equation}
It remains to control the final measurement at the root. 
By Corollary~\ref{cor:root_and_intermediate_norm_general_tree}, on $G$,
\begin{equation}
    \|\tilde B_r\|_\infty
    \le
    e^{A_{\rm tot}}
    \leq \frac{3}{2}.
\end{equation}
Thus each single-shot outcome in the final root measurement is bounded in
absolute value by $3/2$. Here, define
\begin{equation}
    E_0:=\left\{\left|\hat\mu-{\rm tr}[\rho\tilde B_r]\right|\le\frac{\epsilon}{2}\right\}.
\end{equation}
Then, by Hoeffding's inequality, the choice $N_0=\mathcal O(\epsilon^{-2}\ln(1/\delta_0))$
ensures
\begin{equation}
    \Pr(E_0^c\cap G)\le \delta_0.
\end{equation}

On the event $G\cap E_0$, the total error satisfies
\begin{equation}
    |\hat\mu-\mu|
    \le
    \left|
    \hat\mu-{\rm tr}[\rho\tilde B_r]
    \right|
    +
    \left|
    {\rm tr}[\rho\tilde B_r]-\mu
    \right|
    \leq
    \frac{\epsilon}{2}+ \frac{\epsilon}{2}
    =
    \epsilon.
\end{equation}
Therefore, we have
\begin{equation}
    \Pr(|\hat\mu-\mu|>\epsilon)
    \le
    \Pr(G^c)+\Pr(E_0^c\cap G)
    \leq \frac{\delta}{2} + \frac{\delta}{2} =
    \delta.
\end{equation}

We finally count the number of measurements. Since
\begin{eqnarray}
    \delta_v&=&\frac{\delta}{2K},
    \qquad
    \eta_v
    =
    \frac{\ln(1+\epsilon/2)}{K}
    =
    \Theta\!\left(\frac{\epsilon}{K}\right)\quad \text{for $\epsilon\in(0,1]$},
\end{eqnarray}
we have
\begin{equation}
    N_v
    =
    \mathcal O\!\left(
    \frac{d_v^3K^2}{\epsilon^2}
    \ln\frac{Kd_v}{\delta}
    \right).
\end{equation}
Summing over all non-root nodes and the root-measurement cost gives
\begin{eqnarray}
    N_{\rm tot}
    &=&N_0+\sum_{v\in V^\circ}N_v
    = \mathcal{O}\left( \frac{1}{\epsilon^2} \ln\frac{1}{\delta} \right) + \sum_{v\in V^\circ} \mathcal O\!\left(
    \frac{d_v^3K^2}{\epsilon^2}
    \ln\left(\frac{Kd_v}{\delta}\right)
    \right)\\
    &=&
    \mathcal O\!\left(
    \frac{K^2}{\epsilon^2}
    \sum_{v\in V^\circ}
    d_v^3
    \ln\left(\frac{Kd_v}{\delta}\right)
    +
    \frac{1}{\epsilon^2}
    \ln\frac{1}{\delta}
    \right).
\end{eqnarray}
If $d_v\le d$ for every $v\in V^\circ$, then $\sum_{v\in V^\circ}d_v^3\le Kd^3$ and $\ln(Kd_v/\delta)\le \ln(Kd/\delta)$. Thus,
\begin{equation}
    N_{\rm tot}
    =
    \mathcal O\!\left(
    \frac{d^3K^3}{\epsilon^2}
    \ln\left(\frac{Kd}{\delta}\right)
    \right).
\end{equation}
This completes the proof.
\end{proof}

\subsubsection{Proofs of auxiliary results}

\begin{proof}[Proof of Lemma~\ref{lem:ideal_norm_general_tree}]
We first recall a basic property of the adjoint of a CPTP map. Since
$\Phi_v$ is CPTP, its adjoint $\Phi_v^\dagger$ is positive and unital.
Hence, for any Hermitian operator $A$,
\begin{equation}
    \|\Phi_v^\dagger(A)\|_\infty\le \|A\|_\infty .
\end{equation}
Indeed, for any Hermitian $A$, we have
\begin{equation}
    -\|A\|_\infty I\le A\le \|A\|_\infty I.
\end{equation}
Applying the positive and unital map $\Phi_v^\dagger$ gives
\begin{equation}
    -\|A\|_\infty I
    =
    \Phi_v^\dagger(-\|A\|_\infty I)
    \le
    \Phi_v^\dagger(A)
    \le
    \Phi_v^\dagger(\|A\|_\infty I)
    =
    \|A\|_\infty I.
\end{equation}
Therefore, $\|\Phi_v^\dagger(A)\|_\infty\le \|A\|_\infty$.

Next, we prove $\|M_v\|_\infty\le 1$ by induction from the leaves to the root.
First, let $v$ be a leaf, i.e., $B_v=O_v$. 
Since $\|O_v\|_\infty\le 1$, the contractivity shown above gives
\begin{equation}
    \|M_v\|_\infty
    =
    \|\Phi_v^\dagger(B_v)\|_\infty
    \le
    \|B_v\|_\infty
    \le
    1.
\end{equation}

Next, let $v$ be an internal non-root node, and assume that
\begin{equation}
    \|M_u\|_\infty\le 1
\end{equation}
holds for all children $u\in{\rm Ch}(v)$. 
Therefore,
\begin{equation}
    \|B_v\|_\infty
    =
    \left\|
    O_v\otimes
    \bigotimes_{u\in{\rm Ch}(v)}M_u
    \right\|_\infty
    \le
    \|O_v\|_\infty
    \prod_{u\in{\rm Ch}(v)}\|M_u\|_\infty
    \le
    1.
\end{equation}
Using the contractivity of $\Phi_v^\dagger$,
\begin{equation}
    \|M_v\|_\infty
    =
    \|\Phi_v^\dagger(B_v)\|_\infty
    \le
    \|B_v\|_\infty
    \le
    1.
\end{equation}
By induction, this proves $\|M_v\|_\infty\le 1$ for all $v\in V^\circ$.

Finally, we prove the bound at the root. 
Since $\|O_r\|_\infty\le 1$ and $\|M_u\|_\infty\le 1$ for all
$u\in{\rm Ch}(r)$, we obtain
\begin{equation}
    \|B_r\|_\infty
    =
    \left\|
    O_r\otimes
    \bigotimes_{u\in{\rm Ch}(r)}M_u
    \right\|_\infty
    \le
    \|O_r\|_\infty
    \prod_{u\in{\rm Ch}(r)}\|M_u\|_\infty
    \le
    1.
\end{equation}
This completes the proof.
\end{proof}

\begin{proof}[Proof of Proposition~\ref{prop:subtree_error_general_tree}]
We prove the statement by induction from the leaves upward.
First, suppose that $v$ is a leaf. Then, by definition,
\begin{equation}
    \tilde B_v=B_v=O_v.
\end{equation}
On the subtree success event $G_v$, the local success event $E_v$ holds.
Therefore,
\begin{equation}
    \|\tilde M_v-M_v\|_\infty
    =
    \left\|
    \tilde M_v-\Phi_v^\dagger(\tilde B_v)
    \right\|_\infty
    \le
    \eta_v.
\end{equation}
Since $A_v=\eta_v$ for a leaf and $\eta_v\le e^{\eta_v}-1$, we obtain
\begin{equation}
    \|\tilde M_v-M_v\|_\infty
    \le
    e^{A_v}-1.
\end{equation}
Using Lemma~\ref{lem:ideal_norm_general_tree} and the triangle inequality, we also get
\begin{equation}
    \|\tilde M_v\|_\infty
    \le
    \|M_v\|_\infty+\|\tilde M_v-M_v\|_\infty
    \le
    1+e^{A_v}-1
    =
    e^{A_v}.
\end{equation}

Now let $v$ be an internal non-root vertex, and assume that the claim holds
for all children $u\in{\rm Ch}(v)$. On $G_v$, the event $E_v$ holds, and the subtree success event $G_u$ holds for every child $u\in{\rm Ch}(v)$.
Hence,
\begin{eqnarray}
    \|\tilde M_v-M_v\|_\infty
    &=&
    \left\|
    \tilde M_v-\Phi_v^\dagger(\tilde B_v)
    +
    \Phi_v^\dagger(\tilde B_v-B_v)
    \right\|_\infty                                      \\
    &\le&
    \eta_v+\|\Phi_v^\dagger(\tilde B_v-B_v)\|_\infty     \\
    &\le&
    \eta_v+\|\tilde B_v-B_v\|_\infty .\label{apeq:m_b_bound}
\end{eqnarray}
In the last inequality, we used the operator-norm contractivity of
$\Phi_v^\dagger$ on Hermitian operators.

It remains to bound $\|\tilde B_v-B_v\|_\infty$. Define
\begin{equation}
    \Delta_u:=\tilde M_u-M_u
    \qquad
    (u\in{\rm Ch}(v)),
\end{equation}
and set
\begin{equation}
    C_v:=\sum_{u\in{\rm Ch}(v)}A_u.
\end{equation}
By the definitions of $B_v$ and $\tilde B_v$,
\begin{equation}
    \|\tilde B_v-B_v\|_\infty
    =
    \left\|
    O_v\otimes
    \left(
    \bigotimes_{u\in{\rm Ch}(v)}\tilde M_u
    -
    \bigotimes_{u\in{\rm Ch}(v)}M_u
    \right)
    \right\|_\infty                                      
    \le
    \left\|
    \bigotimes_{u\in{\rm Ch}(v)}\tilde M_u
    -
    \bigotimes_{u\in{\rm Ch}(v)}M_u
    \right\|_\infty ,
\end{equation}
where the inequality uses $\|O_v\|_\infty\le 1$.

We now expand the tensor-product difference. Since
$\tilde M_u=M_u+\Delta_u$, we have
\begin{equation}
    \bigotimes_{u\in{\rm Ch}(v)}\tilde M_u
    -
    \bigotimes_{u\in{\rm Ch}(v)}M_u
    =
    \sum_{\emptyset\neq I\subseteq{\rm Ch}(v)}
    \left(
    \bigotimes_{u\in I}\Delta_u
    \otimes
    \bigotimes_{u\notin I}M_u
    \right),
\end{equation}
where any fixed ordering of the children is used. Here, the subset
$I\subseteq{\rm Ch}(v)$ specifies which tensor factors contribute an error
term $\Delta_u$. The empty subset is excluded because it gives the term
$\bigotimes_{u\in{\rm Ch}(v)}M_u$, which is subtracted on the left-hand side.

Taking the operator norm and using the triangle inequality gives
\begin{equation}
    \|\tilde B_v-B_v\|_\infty
    \le
    \sum_{\emptyset\neq I\subseteq{\rm Ch}(v)}
    \left\|
    \bigotimes_{u\in I}\Delta_u
    \otimes
    \bigotimes_{u\notin I}M_u
    \right\|_\infty                                     
    =
    \sum_{\emptyset\neq I\subseteq{\rm Ch}(v)}
    \prod_{u\in I}\|\Delta_u\|_\infty
    \prod_{u\notin I}\|M_u\|_\infty .
\end{equation}
By the induction hypothesis and Lemma~\ref{lem:ideal_norm_general_tree},
\begin{equation}
    \|\Delta_u\|_\infty
    =
    \|\tilde M_u-M_u\|_\infty
    \le
    e^{A_u}-1,
    \qquad
    \|M_u\|_\infty\le 1.
\end{equation}
Therefore,
\begin{eqnarray}
    \|\tilde B_v-B_v\|_\infty
    &\le&
    \sum_{\emptyset\neq I\subseteq{\rm Ch}(v)}
    \prod_{u\in I}(e^{A_u}-1)                              \\
    &=&
    \prod_{u\in{\rm Ch}(v)}
    \left(1+(e^{A_u}-1)\right)-1                            \\
    &=&
    \prod_{u\in{\rm Ch}(v)}e^{A_u}-1                         \\
    &=&
    e^{C_v}-1.
\end{eqnarray}
Substituting this tensor-product bound $\|\tilde B_v-B_v\|_\infty \le e^{C_v}-1$ into the local-error decomposition $\|\tilde M_v-M_v\|_\infty \le \eta_v+\|\tilde B_v-B_v\|_\infty$ in Eq.~\eqref{apeq:m_b_bound} gives
\begin{equation}
    \|\tilde M_v-M_v\|_\infty
    \le \eta_v+e^{C_v}-1 .
\end{equation}
It remains to put this bound in the inductive form.
Since the child subtrees are disjoint, we have $A_v=\eta_v+C_v$. Moreover, $e^{\eta_v}-1\ge \eta_v$ and $e^{C_v}\ge 1$, and hence
\begin{equation}
    \eta_v+e^{C_v}-1
    \le
    e^{C_v}(e^{\eta_v}-1)+e^{C_v}-1
    =
    e^{A_v}-1 .
\end{equation}
Thus,
\begin{equation}
    \|\tilde M_v-M_v\|_\infty \le e^{A_v}-1,
\end{equation}
which closes the induction step for the error bound.
Finally, using Lemma~\ref{lem:ideal_norm_general_tree} and the triangle
inequality,
\begin{equation}
    \|\tilde M_v\|_\infty
    \le
    \|M_v\|_\infty+\|\tilde M_v-M_v\|_\infty
    \le
    1+e^{A_v}-1
    =
    e^{A_v}.
\end{equation}
\end{proof}

\begin{proof}[Proof of Corollary~\ref{cor:root_and_intermediate_norm_general_tree}]
Assume first that the global success event $G$ holds. 
Then Proposition~\ref{prop:subtree_error_general_tree} applies to every child of the root. Since the subtrees rooted at the children of $r$ are disjoint, we
have
\begin{equation}
    A_{\rm tot}
    =
    \sum_{u\in{\rm Ch}(r)}A_u .
\end{equation}
Expanding the tensor-product difference over nonempty subsets of
${\rm Ch}(r)$, and using $\|\tilde M_u-M_u\|_\infty\le e^{A_u}-1$ and $\|M_u\|_\infty\le 1$, we obtain
\begin{eqnarray}
    \|\tilde B_r-B_r\|_\infty
    \le
    \sum_{\emptyset\neq I\subseteq{\rm Ch}(r)}
    \prod_{u\in I}(e^{A_u}-1) 
    =
    \prod_{u\in{\rm Ch}(r)} e^{A_u}-1
    =
    e^{A_{\rm tot}}-1 .
\end{eqnarray}
Moreover,
\begin{equation}
    \|\tilde B_r\|_\infty
    \le
    \|O_r\|_\infty
    \prod_{u\in{\rm Ch}(r)}\|\tilde M_u\|_\infty
    \le
    \prod_{u\in{\rm Ch}(r)}e^{A_u}
    =
    e^{A_{\rm tot}}.
\end{equation}

Next, fix $v\in V^\circ$. On $D_v$, all strict descendants of
$v$ are good. Thus, for every child $u\in{\rm Ch}(v)$, the
subtree success event $G_u$ holds, and Proposition~1 gives
\begin{equation}
    \|\tilde M_u\|_\infty\le e^{A_u}.
\end{equation}
Therefore,
\begin{equation}
    \|\tilde B_v\|_\infty
    \le
    \|O_v\|_\infty
    \prod_{u\in{\rm Ch}(v)}\|\tilde M_u\|_\infty
    \le
    \prod_{u\in{\rm Ch}(v)}e^{A_u}
    \le
    e^{A_{\rm tot}} .
\end{equation}
The same bound also covers the case where $v$ is a leaf, by the empty tensor
product convention and $\|O_v\|_\infty\le1$.
\end{proof}

\begin{proof}[Proof of Lemma~\ref{lem:adaptive_success_general_tree}]
For each $v\in V^\circ$, the event $D_v$ is determined by
$\mathcal F_{<v}$. Hence, by the tower property,
\begin{eqnarray}
    \Pr(E_v^c\cap D_v)
    =
    \mathbb E\!\left[
    \mathbf 1_{D_v}\mathbf 1_{E_v^c}
    \right]                                             
    =
    \mathbb E\!\left[
    \mathbf 1_{D_v}
    \Pr(E_v^c\mid \mathcal F_{<v})
    \right]                                          
    \le
    \mathbb E\!\left[
    \mathbf 1_{D_v}\delta_v
    \right]                                            
    \le
    \delta_v.
\end{eqnarray}
Here we used the assumed conditional bound on the event $D_v$.

Now suppose that $G$ fails. Then at least one local event $E_v$ fails.
Choose a failed vertex of maximum depth. Since it has maximum depth among the failed vertices, all its strict descendants are good. Therefore,
\begin{equation}
    G^c
    \subseteq
    \bigcup_{v\in V^\circ}(E_v^c\cap D_v).
\end{equation}
By the union bound,
\begin{equation}
    \Pr(G^c)
    \le
    \sum_{v\in V^\circ}\Pr(E_v^c\cap D_v)
    \le
    \sum_{v\in V^\circ}\delta_v.
\end{equation}
\end{proof}

\section{Exponential separation from learning-free wire cutting (Proof of Theorem~\ref{thm:learning-free-wire-cut-lower-bound})}\label{sec:lower-bound-qcc}

\noindent\textbf{Theorem~\ref{thm:learning-free-wire-cut-lower-bound} (Exponential separation from learning-free wire cutting).}
{\itshape
Consider Task~\ref{task:two-layer-tree-estimation} with target additive error $0<\epsilon \leq 1/3$.

\smallskip
\noindent\textup{\textbf{Upper bound.}}
For any failure probability $\delta\in(0,1)$, the learning-based
protocol estimates $\mu$ within additive error $\epsilon$ with probability at least $1-\delta$ using
\begin{equation}
    \mathcal{O}\!\left(
    \frac{d^3R^3}{\epsilon^2}
    \ln\!\left(\frac{Rd}{\delta}\right)
    \right)
\end{equation}
measurements in total, including the measurements used in the local learning steps.

\noindent\textup{\textbf{Lower bound.}}
Any learning-free wire-cutting protocol described by Algorithm~\ref{alg:learning-free-wire-cut} that estimates $\mu$ within additive error $\epsilon$ with constant success probability strictly larger than $1/2$ for all instances requires
\begin{equation}
    \Omega\!\left(
    \frac{(d+1)^R}{\epsilon^2}
    \right)
\end{equation}
measurements.
}

\begin{proof}[Proof of Theorem~\ref{thm:learning-free-wire-cut-lower-bound}]
The upper bound follows directly from Theorem~\ref{thm:2-layer}. We prove
the lower bound below.
It is enough to construct a random family of hard instances. Indeed, if a
learning-free wire-cutting protocol estimates $\mu$ within additive error $\epsilon$ with constant success probability for all instances
using $N$ shots, then it must also succeed on instances drawn from this
family. 

We construct a hard distribution over instances indexed by a hidden bit $x\in\{0,1\}$, such that estimating the target expectation value to accuracy $\epsilon$ allows one to identify $x$ with constant success probability.
Let $d=2^n$. For the $k$-th cut system, define $\bar Z^{(k)} := Z^{\otimes n}$ and set
\begin{equation}
    \bar Z := \bigotimes_{k=1}^R \bar Z^{(k)} .
\end{equation}
For a hidden bit $x\in\{0,1\}$, define the $d^{R}$-dimensional states
\begin{equation}
    \tau_x
    :=
    \frac{1}{d^R}
    \left(
        I+(-1)^x 3\epsilon\,\bar Z
    \right),
    \qquad x=0,1.
    \label{eq:hard-tau-x}
\end{equation}
The assumption $0<\epsilon\le 1/3$ ensures that $\tau_x$ is positive semidefinite. 

Next, we hide this Pauli observable by random local bases. 
For each local system $k\in\{1,\ldots,R\}$, we independently sample $U_k$ from the Haar measure on $\mathrm{U}(\mathbb{C}^d)$, and fix $U_1,\ldots,U_R$ for all $N$ shots.
With $U := \bigotimes_{k=1}^R U_k$, we define the input state
\begin{equation}
    \rho_x := U\tau_x U^\dagger=\frac{1}{d^R}
    \left(
        I+(-1)^x 3\epsilon\, U \bar Z U^{\dagger}
    \right),\qquad x=0,1.
    \label{eq:hard-rho-x}
\end{equation}
We also choose the downstream channels and observables as
\begin{equation}
    \Phi_k(\bullet) := U_k^\dagger \bullet U_k,
    \qquad
    O_k := \bar Z^{(k)} .
    \label{eq:hard-channel-observable}
\end{equation}
Since the downstream channels undo the random local changes of basis, $(\Phi_1\otimes\cdots\otimes\Phi_R)(\rho_x)=\tau_x $, and $O=\bigotimes_{k=1}^R O_k=\bar Z$, the target expectation value is
\begin{equation}
    \mu_x
    =
    \operatorname{tr}[O(\Phi_1\otimes\cdots\otimes\Phi_R)(\rho_x)]
    =
    \operatorname{tr}[\bar Z\tau_x]
    =
    (-1)^x 3\epsilon .
    \label{eq:hard-mu-x}
\end{equation}
Hence, by estimating the expectation value $\mu_x$ in Algorithm~\ref{alg:learning-free-wire-cut} within additive error $\epsilon$, one can predict the hidden bit $x$ with constant success probability.

We now turn this discrimination reduction into a sample lower bound by comparing two bounds on the mutual information between the hidden bit $x$ and the classical data produced by the protocol. 
Let $T_N := \bigl((\ell_i,y_i,z_i)\bigr)_{i=1}^N$ denote the classical data collected by Algorithm~\ref{alg:learning-free-wire-cut} before the final post-processing. 
Let $\hat{\mu}=\mathsf{Post}(T_N)$ be the final estimate. Define an estimator of the hidden bit by
\begin{equation}
    \hat{x}(T_N)
    :=
    \begin{cases}
    0, & \hat{\mu}\ge 0,\\
    1, & \hat{\mu}<0.
    \end{cases}
\end{equation}
For concreteness, we first take the success probability to be at least $2/3$. 
If the protocol estimates $\mu_x$ within additive error $\epsilon$ with success probability at least $2/3$, then $P_{\rm err}:=\Pr[\hat{x}\neq x]\le \frac{1}{3}$. Here $x$ is uniformly distributed over $\{0,1\}$, so $H(x)=\log 2$ where $H(A)$ is the Shannon entropy of $A$. 
Since $\hat{x}$ is a function of $T_N$, we have
$H(x|T_N)\le H(x|\hat{x})$.
By Fano's inequality for binary hypothesis testing, $H(x|T_N)\le h_2(P_{\rm err})\le h_2(1/3)$, where $h_2(p):=-p\log p-(1-p)\log(1-p)$ is the binary entropy. 
Thus, we have
\begin{equation}
    I(x:T_N)
    =
    H(x)-H(x|T_N)
    \ge
    \log 2-h_2(1/3)
    =: c_0>0,
    \label{eq:fano-lower-bound}
\end{equation}
where $I(A:B)$ denotes the mutual information between random variables $A$ and $B$.
For any other fixed success
probability larger than $1/2$, the same argument gives another positive constant $c_0$.

We now upper bound the same mutual information for any fixed
learning-free wire-cutting rule $\mathsf{WC}$. 
Let $\bm{\ell}:=(\ell_1,...,\ell_N)$, $\bm{y}=(y_1,...,y_N)$, and $\bm{z}=(z_1,...,z_N)$.
Since $U$ and the
cutting labels $\bm{\ell}$ are independent of $x$, we have $I(x:U,\bm{\ell})=0$. Therefore,
\begin{align}
    I(x:T_N)
    &\le I(x:T_N,U) \notag\\
    &= I(x:U,\bm{\ell})
       + I(x:\bm{y},\bm{z}
       \mid U,\bm{\ell}) \notag\\
    &= I(x:\bm{y},\bm{z}
       \mid U,\bm{\ell}).
    \label{eq:MI-add-U}
\end{align}
Conditioned on $U$, $\bm{\ell}$, and $\bm{y}$, all downstream input states are fixed by the wire-cutting rule $\mathsf{
WC}$: in the
$i$-th shot, the prepared state is $\bigotimes_{k=1}^R \sigma^{(k)}_{\ell_i,y_i^k}$.
Moreover, the downstream channels are fixed once $U$ is fixed. Thus, the conditional distribution of the downstream outcomes $\bm z$ does
not depend on $x$ once $U,\bm{\ell},\bm y$ are specified. Equivalently,
\begin{equation}
    I(x:\bm{z}\mid \bm{y},U,\bm{\ell})=0.
\end{equation}
Using the chain rule for mutual information, we obtain
\begin{align}
    I(x:\bm{y},\bm{z}\mid U,\bm{\ell})
    &=
    I(x:\bm{y}\mid U,\bm{\ell})
    +
    I(x:\bm{z}\mid \bm{y},U,\bm{\ell}) \notag\\
    &=
    I(x:\bm{y}\mid U,\bm{\ell}).
    \label{eq:MI-remove-z}
\end{align}
Combining Eqs.~\eqref{eq:MI-add-U} and \eqref{eq:MI-remove-z}, we have
\begin{equation}
    I(x:T_N)
    \le
    I(x:\bm{y}\mid U,\bm{\ell}).
    \label{eq:MI-reduce-y}
\end{equation}
It remains to decompose the right-hand side into single-shot
contributions. Conditioned on $x$, $U$, and $\bm{\ell}$, the
outcomes $y_1,\ldots,y_N$ are independent, and the distribution of $y_i$ depends on the cutting labels only through $\ell_i$. Therefore,
\begin{align}
    I(x:\bm{y}\mid U,\bm{\ell})
    &=
    H(\bm{y}\mid U,\bm{\ell})
    -
    H(\bm{y}\mid x,U,\bm{\ell}) \notag\\
    &\le
    \sum_{i=1}^N H(y_i\mid U,\ell_i)
    -
    \sum_{i=1}^N H(y_i\mid x,U,\ell_i) \notag\\
    &=
    \sum_{i=1}^N I(x:y_i\mid U,\ell_i),
    \label{eq:MI-sum}
\end{align}
and
\begin{equation}
    I(x:T_N)
    \le
    \sum_{i=1}^N I(x:y_i\mid U,\ell_i).
\end{equation}

We now bound each single-shot contribution
$I(x:y_i\mid U,\ell_i)$. Fix $i$. Since the cutting label
$\ell_i$ is sampled independently of $x$ and $U$, we have
\begin{equation}
    I(x:y_i\mid U,\ell_i)
    =
    I(x:y_i,\ell_i\mid U).
\end{equation}
Moreover, by the chain rule for mutual information and $I(U:y_i,\ell_i)\ge 0$, we obtain
\begin{eqnarray}
    I(x:y_i,\ell_i\mid U) = I(x,U:y_i,\ell_i) - I(U:y_i,\ell_i)
    \leq
    I(x,U:y_i,\ell_i).
\end{eqnarray}
To bound the term $I(x,U:y_i,\ell_i)$, we use the standard fact that any finite POVM can be refined into a rank-one POVM, with the original outcome recovered by classical processing. 
In our case, for a fixed cutting label $\ell_i$, the product POVM $\{E_{\ell_i,y_i}=\bigotimes_{k=1}^R E^{(k)}_{\ell_i,y_i^k}\}_{y_i}$ can be refined into rank-one product elements of the form $\left\{\omega_{\ell_i,s}d^R|v_{\ell_i,s}\rangle\langle v_{\ell_i,s}|\right\}_s$, where $|v_{\ell_i,s}\rangle=|v^1_{\ell_i,s}\rangle\otimes\cdots\otimes|v^R_{\ell_i,s}\rangle$ and $\sum_s \omega_{\ell_i,s}=1$.
Since the original outcome $y_i$ is obtained from the refined outcome $s$ by classical coarse graining, the data-processing inequality gives
\begin{equation}
    I(x,U:y_i,\ell_i)\le I(x,U:s,\ell_i).
\end{equation}

Let us define $q_{\ell,s}^{x,U}:=p_{\ell}\omega_{\ell,s}d^R\langle v_{\ell,s}|U\tau_x U^\dagger|v_{\ell,s}\rangle$, which denotes the joint probability of the cutting label $\ell_i$ taking value $\ell$ and the refined outcome $s$, conditioned on $x$ and $U$. We also write $\bar{q}_{\ell,s}:=\mathbb{E}_{U,x}\!\left[q_{\ell,s}^{x,U}\right]$. Then, we have
\begin{eqnarray}
    I(x:y_i|U,\ell_i)
    &\leq& I(x,U:s,\ell_i)\\
    &=&
    \sum_{\ell,s}
    \left\{
    -\bar q_{\ell,s}\log \bar q_{\ell,s}
    +
    \mathbb{E}_{U,x}
    \left[
    q_{\ell,s}^{x,U}\log q_{\ell,s}^{x,U}
    \right]
    \right\} \\
    &\leq&
    \sum_{\ell,s}
    \left\{
    -\bar q_{\ell,s}\log \bar q_{\ell,s}
    +
    \mathbb{E}_{U,x}
    \left[
    q_{\ell,s}^{x,U}
    \left(
    \log \bar q_{\ell,s}
    +
    \frac{q_{\ell,s}^{x,U}-\bar q_{\ell,s}}{\bar q_{\ell,s}}
    \right)
    \right]
    \right\}\\
    &=&
    \sum_{\ell,s}
    \frac{
    \mathbb{E}_{U,x}\!\left[(q_{\ell,s}^{x,U})^2\right]
    -
    \bar q_{\ell,s}^{\,2}
    }{
    \bar q_{\ell,s}
    }.
    \label{eq:denom_nume}
\end{eqnarray}
Here, the inequality uses the bound
$\log t\leq \log a + (t-a)/a$ for $t,a>0$, with
$a=\bar q_{\ell,s}$. 
Equivalently, this follows from the concavity of
$\log t$. 
Hence, it remains to evaluate the first and second moments of
$q_{\ell,s}^{x,U}$.

Averaging over $x\in\{0,1\}$ yields
$\mathbb{E}_x[\tau_x]=I/d^R$, and we have
\begin{eqnarray}
    \bar q_{\ell,s}
    &=&
    p_{\ell}\omega_{\ell,s}d^R
    \mathbb{E}_{U,x}
    \left[
    \bra{v_{\ell,s}}U\tau_xU^\dagger
    \ket{v_{\ell,s}}
    \right]\\
    &=&
    p_{\ell}\omega_{\ell,s}.
    \label{eq:nume_1}
\end{eqnarray}

For the second moment, we have
\begin{eqnarray}
    \mathbb{E}_{U,x}\!\left[\left(q_{\ell,s}^{x,U}\right)^2\right]
    &=& p_\ell^2 \omega_{\ell,s}^2 d^{2R}
    \mathrm{tr}\left[
    \mathbb{E}_{x}[\tau_x^{\otimes 2}]
    \bigotimes_{k=1}^{R}
    \mathbb{E}_{U_k}
    \left(
    U_k^{\dagger} \pure{v^k_{\ell,s}} U_k
    \right)^{\otimes 2}
    \right]\\
    &=& p_\ell^2 \omega_{\ell,s}^2 d^{2R}
    \mathrm{tr}\left[
    \frac{1}{d^{2R}}
    \left(
    I \otimes I
    +
    9\epsilon^2 \bar{Z} \otimes \bar{Z}
    \right)
    \bigotimes_{k=1}^{R}
    \mathbb{E}_{U_k}
    \left(
    U_k^{\dagger} \pure{v^k_{\ell,s}} U_k
    \right)^{\otimes 2}
    \right]\\
    &=& p_\ell^2 \omega_{\ell,s}^2
    \mathrm{tr}\left[
    \left(
    I \otimes I
    +
    9\epsilon^2 \bar{Z} \otimes \bar{Z}
    \right)
    \bigotimes_{k=1}^{R}
    \frac{I\otimes I + \mathrm{SWAP}}{d(d+1)}
    \right]\\
    &=& p_\ell^2 \omega_{\ell,s}^2
    \left(
    \prod_{k=1}^{R}
    \mathrm{tr}\left[
    \frac{I\otimes I + \mathrm{SWAP}}{d(d+1)}
    \right]
    +
    9\epsilon^2
    \prod_{k=1}^{R}
    \mathrm{tr}\left[
    \bar{Z}^{(k)} \otimes \bar{Z}^{(k)}
    \cdot
    \frac{I\otimes I + \mathrm{SWAP}}{d(d+1)}
    \right]
    \right)\\
    &=& p_\ell^2 \omega_{\ell,s}^2
    \left(
    1 + \frac{9\epsilon^2}{(d+1)^R}
    \right).
    \label{eq:denom_1}
\end{eqnarray}
where in the second equality we used $\mathbb{E}_{x}[\tau_x^{\otimes 2}]=(I\otimes I+9\epsilon^2 \bar{Z}\otimes \bar{Z})/d^{2R}$ by the definition of $\tau_x$, and in the third equality we applied the
standard Haar twirling identity for rank-one projectors,
\begin{equation}
    \mathbb{E}_{U_k}
    \left(
    U_k^{\dagger} \pure{v^k_{\ell,s}} U_k
    \right)^{\otimes 2}
    =
    \frac{I \otimes I + \mathrm{SWAP}}{d(d+1)}
\end{equation}
valid for any fixed $\ket{v^k_{\ell,s}}$ in dimension $d$.
In the fourth equality, we use
\begin{equation}
    \mathrm{tr}\left[
    \frac{I\otimes I+\mathrm{SWAP}}{d(d+1)}
    \right]=1,\qquad 
    \mathrm{tr}\left[\bar{Z}^{(k)}\otimes \bar{Z}^{(k)}\frac{I\otimes I+\mathrm{SWAP}}{d(d+1)}
    \right]=\frac{1}{d+1}.
\end{equation}
Inserting the calculation results Eqs.~\eqref{eq:nume_1} and
\eqref{eq:denom_1} into Eq.~\eqref{eq:denom_nume}, we have
\begin{eqnarray}
    I(x:y_i|U,\ell_i)
    &\leq&
    \sum_{\ell,s}
    \frac{
    p_{\ell}^{2}\omega_{\ell,s}^{2}
    \left(
    1+\frac{9\epsilon^2}{(d+1)^R}
    \right)
    -
    p_{\ell}^{2}\omega_{\ell,s}^{2}
    }{
    p_{\ell}\omega_{\ell,s}
    }\\
    &=&
    \frac{9\epsilon^2}{(d+1)^R}
    \sum_{\ell,s}p_{\ell}\omega_{\ell,s}\\
    &=&
    \frac{9\epsilon^2}{(d+1)^R}.
    \label{eq:single_shot_mi_bound}
\end{eqnarray}
Here, in the last equality, we used
$\sum_{\ell}p_\ell=1$ and $\sum_s\omega_{\ell,s}=1$ for each
$\ell$.

The above result, together with Eqs.~\eqref{eq:MI-sum} and
\eqref{eq:fano-lower-bound}, yields
\begin{equation}
    c_0
    \leq
    I(x:T_N)
    \leq
    \frac{9N\epsilon^2}{(d+1)^R}.
\end{equation}
Thus, we obtain
\begin{equation}
    N
    \geq
    \frac{c_0}{9}
    \frac{(d+1)^R}{\epsilon^2}
    =
    \Omega\!\left(
    \frac{(d+1)^R}{\epsilon^2}
    \right).
\end{equation}
This proves the lower-bound part of Theorem~\ref{thm:learning-free-wire-cut-lower-bound}.
\end{proof}

\section{Relation to existing quantum circuit-cutting methods}\label{apsec:relation_to_wire_cut}

We review several broad directions in prior work aimed at reducing the sampling overhead in quantum circuit cutting, and discuss how our results relate to these approaches.
Quantum circuit cutting was introduced in Ref.~\cite{peng2021simulating} as a method for partitioning a clustered quantum system.
The cutting approach employed in this work belongs to the category of wire cutting rather than gate cutting, and is based on mid-circuit Pauli measurements. 
Specifically, the authors use the following decomposition of the $n$-qubit identity channel $\mathrm{id}^n$ to partition a quantum circuit into subcircuits:
\begin{equation}\label{eq:pauli_decomp}
    \mathrm{id}^n(\rho) = \frac{1}{2^n} \sum_{\bm{i}\in[4]^n} \mathrm{tr}\left[P_{\bm i} \rho\right] P_{\bm i} := \sum_{\bm{i}\in[4]^n} \mathcal{E}_{P_{\bm i}}(\rho),
\end{equation}
where $[4]=\{1,\dots,4\}$, $\bm{i}=(i_1,\dots,i_n)$, and $\{P_1,P_2,P_3,P_4\}=\{I,X, Y, Z\}$.
Each term can be viewed as a process that measures the input state $\rho$ in the Pauli observable $P_{\bm i}$ and then prepares $P_{\bm i}$ as a new input. 
Note that $\mathcal{E}_{P_{\bm i}}$ is a linear map and is not CPTP. Thus, in the standard wire-cutting protocol, it is implemented via randomized eigenstate preparation together with classical post-processing to form an unbiased estimator.
Concretely, one rewrites the expansion into an implementable form using randomized eigenstate preparation and classical post-processing:
\begin{equation}\label{apeq:pauli_decomp}
    \mathrm{id}^n = \sum_{\bm{i}\in[4]^n} \mathcal{E}_{P_{\bm i}},
    \quad 
    \mathcal{E}_{P_{\bm i}}(\rho) = \sum_{\bm e\in [2]^n} c_{\bm i,\bm e} \mathrm{tr}[ \pure{v_{\bm i,\bm e}} \rho] \sum_{\bm{e'}\in[2]^n} p_{\bm e'} c_{\bm i,\bm e'} \pure{v_{\bm i,\bm e'}},
\end{equation}
where $p_{\bm e'}:=1/2^n$ is the probability distribution, and $c_{\bm i,\bm e}\in\{\pm1\}$ and $\ket{v_{\bm i, \bm e}}$ denote the eigenvalue and eigenstate of $P_{\bm i}$, respectively, following the notation introduced in Appendix~\ref{apsec:const_pau}. 
Operationally, $\mathcal{E}_{P_{\bm{i}}}(\rho)=(1/2^{n}) \mathrm{tr}[P_{\bm i}\rho]P_{\bm i}$ corresponds to performing a projective measurement in the eigenbasis $\{\ket{v_{\bm{i},\bm e}}\}_{\bm e}$ of $P_{\bm{i}}$, preparing an eigenstate randomly sampled from $\{\ket{v_{\bm{i},\bm e'}}\}_{\bm e'}$, and then multiplying the circuit output by $c_{\bm i, \bm e} c_{\bm{i},\bm e'}$ in classical post-processing.
Note that $\mathcal{E}_{P_{\bm i}}$ can also be written in the form of the post-processed MP map defined in Eq.~\eqref{def:virtual_mp}. 
While wire cutting separates a circuit along the time direction, gate cutting was later introduced in Ref.~\cite{Mitarai_2021} as an approach to separate a circuit along the spatial direction.

Both wire and gate cutting rely on the general framework of quasiprobability simulation (see Appendix~\ref{sec:QPD}).
In this framework, the sampling overhead is governed by the factor $\gamma^2:=(\sum_i |a_i|)^2$ of a given quasiprobability decomposition (QPD) $\sum_{i=1}^{m} a_i \mathcal{E}_i$:
for estimating an expectation value within a fixed additive error, a QPD with overhead $\gamma$ typically increases the sufficient number of circuit shots by a factor $\gamma^2$.
If a decomposition with overhead $\gamma_k$ is applied at locations $k=1,\dots,K$, the total sampling overhead scales as $\prod_{k=1}^K \gamma_k^2$. 
Moreover, the number $m$ of operations appearing in each decomposition affects the total execution time~\cite{harada2025doubly,PhysRevA.108.022615}.

To reduce this sampling overhead, many prior studies have pursued two broad approaches.
The first direction focuses on minimizing the number $K$ of cuts, using classical optimization techniques to select cutting locations so that the total number of cuts is as small as possible~\cite{10.1145/3445814.3446758,tomesh2023divide,brandhofer2023optimal}.
The second direction targets reducing the factor $\gamma$ and has mainly employed the following two types of approaches:
\begin{enumerate}
    \item Developing quasiprobability decompositions with smaller $\gamma$.
    \item Restricting the circuit structure to eliminate certain bases from the quasiprobability decomposition.
\end{enumerate}
In the following, we briefly review these two approaches and explain how our results relate to the existing methods.

\subsection{Developing quasiprobability decompositions with smaller sampling overhead}

One of the central directions in quantum circuit cutting is to design decompositions with small sampling overhead. In this subsection, we focus on wire cuts, which are most relevant to our work, and discuss how existing constructions relate to our results. 

\vspace{0.7em}

\noindent\textbf{Review of prior works} \hspace{0.3em}
As introduced above, the first wire-cutting protocol was proposed in Ref.~\cite{peng2021simulating}, using mid-circuit Pauli measurements as in Eq.~\eqref{apeq:pauli_decomp}. 
A key feature of this protocol is that no classical communication is used inside each MP channel: the state preparation is performed independently of the measurement outcome.  
Subsequent works reduced the factor $\gamma$ below the $4^n$ in Eq.~\eqref{apeq:pauli_decomp}, by allowing classical communication within each MP channel. 
This line of work on wire cuts began with Ref.~\cite{Lowe2023fastquantumcircuit}, which expressed the $n$-qubit identity channel as a linear combination of MP channels based on random Clifford measurements. 
This protocol achieves $\gamma=2^{n+1}+1$. 
Later, 
Ref.~\cite{brenner2023optimal} proved the lower bound $\gamma \ge 2^{n+1}-1$ when classical communication is allowed within each MP channel, and also provided a decomposition that achieves this bound using $n$ additional qubits.
This ancilla overhead motivated follow-up work developing alternative decompositions that preserve the optimal sampling overhead while reducing ancillary requirements~\cite{harada2025doubly,pednault2023alternative,PRXQuantum.6.010316}; see Table~\ref{table:comparison_of_sampling_overhead}.

\begin{table*}[t]
\renewcommand{\arraystretch}{1.7}
\begin{tabular}{cccccc}
\hline\hline
&\multirow{2}{*}{~Main primitive in MP procedures $\mathcal{M}_i$~}&\multirow{2}{*}{~LOCC~}& ~Number of~      & ~Multiplicative factor~&~\multirow{2}{*}{Optimality of $\gamma$} \\
&&      & ~ancilla qubits~ & ~$\gamma:=\sum_i|a_i|$~&\\
\hline
~~~\cite{peng2021simulating}~~~&~Pauli measurement~&~No~&~$0$~&~$4^n$~&~Optimal (LO)\\
~~~\cite{Lowe2023fastquantumcircuit}~~~&~Random Clifford measurement~&~Yes~&~$0$~&~$2^{n+1}+1$~&~--\\
~~~\cite{brenner2023optimal}~~~&~Quantum teleportation $\&$ Virtual Bell pairs~&~Yes~&~$n$~&~$2^{n+1}-1$~&~Optimal (LOCC)\\
~~~\cite{harada2025doubly}~~~&~Mutually unbiased bases (MUBs)~&~Yes~&~$0$~&~$2^{n+1}-1$~&~Optimal (LOCC)\\
~~~\cite{pednault2023alternative}~~~&~Random Clifford measurement~&~Yes~&~$0$~&~$2^{n+1}-1$~&~Optimal (LOCC)\\
~~~\cite{PRXQuantum.6.010316}~~~&~Random diagonal unitary $2$-designs~&~Yes~&~$0$~&~$2^{n+1}-1$~&~Optimal (LOCC)\\
\hline\hline
\end{tabular}
\caption{Comparison of existing quasiprobability decompositions for wire cuts of the form $\mathrm{id}^n=\sum_{i=1}^{m} a_i \mathcal{M}_i$, where $a_i\in \mathbb{R}$ and $\mathcal{M}_i$ represent post-processed MP maps.
Here ``LO'' denotes local operations without classical communication, whereas ``LOCC'' allows classical communication.
\label{table:comparison_of_sampling_overhead}
}
\end{table*}

\vspace{0.7em}

\noindent\textbf{Comparison with our result} \hspace{0.3em}
Prior work has established the lower bound $\gamma \geq 2^{n+1}-1$ for quasiprobability decompositions for wire cuts under the standard assumptions.
By contrast, Theorem~\ref{thm:existence_of_no_cost_cut} yields a decomposition with $\gamma=1$, which looks incompatible with the bound at first glance.
This arises from a difference in the formulation of the quasiprobability decomposition.
In the standard formulation adopted in earlier works~\cite{peng2021simulating,Lowe2023fastquantumcircuit,harada2025doubly,brenner2023optimal,pednault2023alternative,PRXQuantum.6.010316}, one demands an exact representation of the entire target operation.

\begin{dfn}[Quasiprobability decomposition]\label{dfn:QPd}
Let $\mathcal{T}$ be a target operation and let $S$ be a set of implementable operations.
A quasiprobability decomposition of $\mathcal{T}$ over $S$ is a representation 
\begin{equation}
     \mathcal{T}=\sum_i a_i \mathcal{E}_i,
\end{equation}
where $a_{i}\in \mathbb{R}$ and $\mathcal{E}_i\in S$.
We define $\gamma:=\sum_{i}|a_i|$ and refer to $\gamma^2$
as the sampling overhead.
\end{dfn}

In contrast, our approach only requires consistency at the level of observable expectation values. 
That is, we consider the expectation-value-level quasiprobability decomposition, defined as follows.

\begin{dfn}[Expectation-value-level quasiprobability decomposition]\label{dfn:exp_level_QPD}
Let $\mathcal{T}$ be a target operation and let $S$ be a set of implementable operations.
Fix a CPTP map $\Phi$, an observable $O$, and a quantum state $\rho$. 
For the task of estimating $\mathrm{tr}[O(\Phi\circ\mathcal{T})(\rho)]$, we call a representation an expectation-value-level quasiprobability decomposition if it satisfies
\begin{equation} 
    \mathrm{tr}[O (\Phi \circ \mathcal{T})(\rho)]=\sum_{i} a_i \mathrm{tr}[O (\Phi \circ \mathcal{E}_i)(\rho)],
\end{equation}
where $a_{i}\in \mathbb{R}$ and $\mathcal{E}_i\in S$. 
For such a decomposition, we define $\gamma_{\rm eff}:=\sum_{i}|a_i|$ and refer to $\gamma_{\rm eff}^2$ as the effective sampling overhead.
\end{dfn}
We emphasize that in many settings of interest, $\rho$ and/or $\Phi$ are not known classically. Thus, the expectation-value-level QPD should often be understood as an existence result about a decomposition that avoids the additional rescaling factor $\gamma_{\rm eff}$ in the evaluation phase.
Finding such a decomposition in practice may require an additional estimation step for $\rho$ and/or $\Phi$, whose sample complexity is not captured by $\gamma_{\rm eff}$.

In the context of the wire-cutting problem, previous work considers decompositions of the form
\begin{equation}
     \mathrm{id}^n=\sum_i a_i \mathcal{M}_i,
\end{equation}
where $a_i\in\mathbb R$ and $\mathcal M_i$ are MP procedures, possibly with classical post-processing weights absorbed as in Eq.~\eqref{def:virtual_mp}.
Within this standard notion of the quasiprobability decomposition (Definition~\ref{dfn:QPd}), the $\ell_1$-norm of the coefficients $a_i$ obey the lower bound $\gamma \geq 2^{n+1}-1$.
In contrast, Theorem~\ref{thm:existence_of_no_cost_cut} provides a decomposition in the sense of Definition~\ref{dfn:exp_level_QPD}:
\begin{equation} 
    \mathrm{tr}[O (\Phi \circ \mathrm{id}_n) (\rho)]=\sum_{i} a_i \mathrm{tr}[O (\Phi \circ \mathcal{M}_i)(\rho)],
\end{equation}
where $a_i\in\mathbb R$ and $\mathcal M_i$ are MP procedures, possibly with classical post-processing.
This theorem shows that there exists an
expectation-value-level decomposition with $\gamma_{\rm eff}=1$.
In words, this theorem gives a stronger form of this expectation-value-level
decomposition: once $\Phi$ and $O$ are fixed, the same MP channel works
for all input operators $X$, rather than only for a single fixed input
state $\rho$.

\subsection{Reduction of sampling overhead via structure-aware methods}

Another major line of research reduces the sampling overhead by leveraging the fact that, for certain fixed quantum circuits, many terms in the mid-circuit Pauli expansion contribute exactly zero to the target expectation value.
Below, we summarize the main prior works~\cite{10025537,10196555,10313822}, and show that these works can be reformulated in the language of expectation-value-level QPDs.

\vspace{0.7em}

\noindent\textbf{Review of prior works} \hspace{0.3em}
These works consider quantum circuits of the form shown in Fig.~\ref{fig:simple_circuit}.
The quantum circuit before the decomposition (Fig.~\ref{fig:simple_circuit}(a)) starts from the $n_{\rm tot}:=(n_A+n_{B}+n_{C})$-qubit initial state $\ket{0}^{\otimes n_{\rm tot}}$, followed by two unitary channels $\mathcal{U}_{1}(\cdot)=U_1\cdot U_1^{\dagger}$ and $\mathcal{U}_{2}(\cdot):=U_2 \cdot U_2^{\dagger}$ acting on registers $(A,B)$ and $(B,C)$, respectively.
Measurement is performed using a diagonal observable $O:=O_1 \otimes O_2$ where $O_{1}$ acts on $A$, $O_2$ acts on $(B,C)$, and $\| O_{1} \|_{\infty},\|O_2\|_{\infty} \leq 1$. 
The decomposed quantum circuit (Fig.~\ref{fig:simple_circuit}(b)) consists of two subcircuits obtained by inserting the Pauli expansion~\eqref{apeq:pauli_decomp} on register $B$ between $\mathcal{U}_{1}$ and $\mathcal{U}_{2}$. 
Using Eq.~\eqref{eq:pauli_decomp} for $\mathrm{id}^{n_B}$, the original expectation value
$\braket{O}:=\mathrm{tr}[O \mathcal{U}_{2} \circ \mathcal{U}_1 (\pure{0}^{\otimes n_{\rm tot}})]$ can be written as
\begin{equation}
    \braket{O} = \frac{1}{2^{n_{B}}} \sum_{P \in \mathcal{P}_{B}} \mathrm{tr}[(O_{1} \otimes P) \mathcal{U}_1(\pure{0}^{\otimes (n_{A}+n_{B})}) ] \, \mathrm{tr}[ O_{2} \mathcal{U}_2( P \otimes \pure{0}^{\otimes n_{C}})].
\end{equation}
where $\mathcal{P}_{B}:=\{I,\mathrm{X},\mathrm{Y},\mathrm{Z}\}^{\otimes n_{B}}$ is the set of $n_{B}$-qubit Pauli strings. 

\begin{figure*}[t]
\centering
\begin{center}
 \includegraphics[width=160mm]{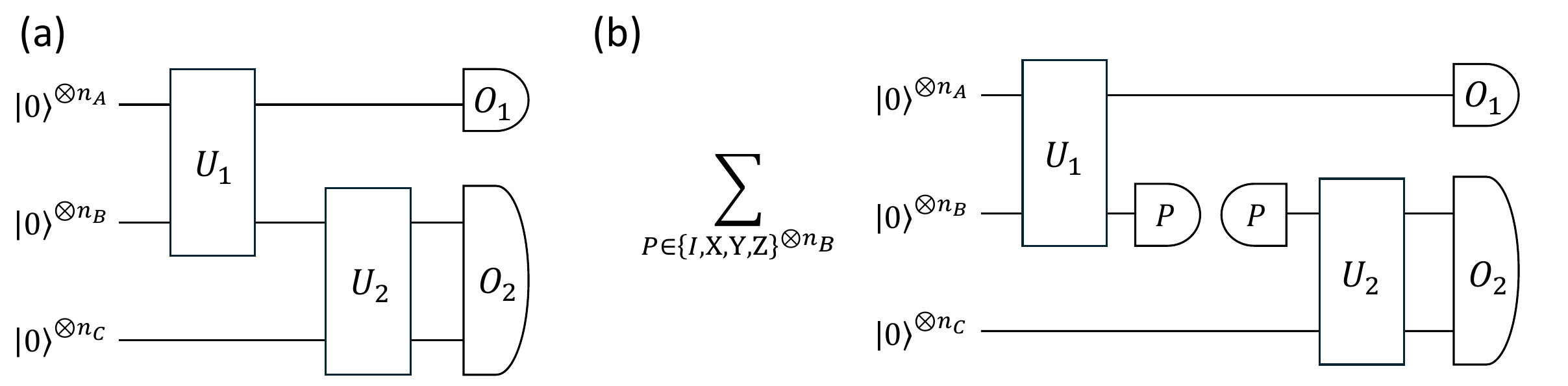}
\end{center}
\vspace{-10pt}
\caption{(a) Quantum circuit before the decomposition. The circuit consists of the $(n_A+n_{B}+n_{C})$-qubit initial state $\ket{0}^{\otimes (n_{A}+n_{B}+n_{C})}$, two unitary gates $U_{1}$ and $U_2$, and the diagonal observable $O=O_1 \otimes O_2$ where $\|O_1\|_{\infty},\|O_2\|_{\infty}\leq 1$. (b) Quantum circuit after the decomposition. By inserting the Pauli expansion~\eqref{eq:pauli_decomp} on register $B$ between $U_1$ and $U_2$, the circuit is separated into two subcircuits.}
\label{fig:simple_circuit}
\end{figure*}

The key idea in Ref.~\cite{10196555} is to identify cut locations for which the upstream contribution vanishes for some Pauli strings:
\begin{equation}\label{eq:golden_cutting}
    \mathrm{tr}\left[(O_{1} \otimes P) \mathcal{U}_1(\pure{0}^{\otimes (n_{A}+n_{B})}) \right] = 0
\end{equation}
for some $P\in \mathcal{P}_B$.
Whenever this condition holds, the corresponding downstream term does not contribute to the final estimate, and one can omit the executions needed to estimate $\mathrm{tr}[ O_{2} \mathcal{U}_2( P \otimes \pure{0}^{\otimes n_{C}})]$. This directly reduces the total number of circuit executions.
In Ref.~\cite{10196555}, such a location is called a \textit{golden cutting point} if Eq.~\eqref{eq:golden_cutting} holds for at least one $P\in\mathcal{P}_B$.
Note that such points do not necessarily exist and are generally not unique.
Moreover, one cannot know \textit{a priori} whether a chosen cut location is a golden cutting point. 
Therefore, the follow-up work~\cite{10313822} introduces a scheme to detect such Pauli strings, if they exist, during the execution of the upstream subcircuits.

A related idea is also adopted in Ref.~\cite{10025537}.
They consider the same circuit structure (Fig.~\ref{fig:simple_circuit}), but insert unitary operators $U_{\rm rco}$ and $U_{\rm rco}^{\dagger}$ immediately before and after the cutting location, and then apply the Pauli expansion.
By choosing $U_{\rm rco}$ such that the reduced operator on $B$ becomes diagonal, the upstream contributions vanish for all Pauli strings except those in $\{I,\mathrm{Z}\}^{\otimes n_{B}}$:
\begin{equation}\label{eq:ricco_original}
    \mathrm{tr}\!\left[(O_{1} \otimes P)\, (\mathcal{U}_{\rm rco}\circ \mathcal{U}_1)(\pure{0}^{\otimes (n_{A}+n_{B})}) \right] = 0,
    \quad \forall P\in \mathcal{P}_{B} \setminus \{I,\mathrm{Z}\}^{\otimes n_{B}},
\end{equation}
where $\mathcal{U}_{\rm rco}(\cdot):=U_{\rm rco}\cdot U_{\rm rco}^{\dagger}$.
Ref.~\cite{10025537} variationally optimizes such a $U_{\rm rco}$ and numerically demonstrates a reduction in the computational cost associated with the Pauli expansion.

\vspace{0.7em}

\noindent\textbf{Reformulation of prior works} \hspace{0.3em} 
We now explain how these structure-aware approaches fit into Definition~\ref{dfn:exp_level_QPD}.
Let
\begin{equation}
S_{\mathcal{P}_B} := \left\{ \mathcal{E}_{P_{\bm i}}(\rho)=\frac{1}{2^{n_{B}}} \mathrm{tr}\left[P_{\bm i} \rho\right] P_{\bm i}: P_{\bm i} \in \mathcal{P}_B=\{I,\mathrm{X},\mathrm{Y},\mathrm{Z}\}^{\otimes n_{B}} \right\}
\end{equation}
denote the set of operations used in the mid-circuit Pauli decomposition~\eqref{eq:pauli_decomp}. As remarked around Eq.~\eqref{apeq:pauli_decomp}, each $\mathcal{E}_{P_{\bm i}}$ can be implemented using a measure-and-prepare procedure with classical post-processing.
Then, the decompositions considered in Refs.~\cite{10196555,10313822} can also be understood under Definition~\ref{dfn:exp_level_QPD} as follows.

\begin{remark}[Refs.~\cite{10196555,10313822}]\label{task:gcp}
Consider the quantum circuit in Fig.~\ref{fig:simple_circuit}, and suppose the implementable operations are restricted to $S_{\mathcal{P}_B}$.
Let $\mathcal{P}'_{B}\subset \mathcal{P}_{B}$ be a subset of $n_B$-qubit Pauli strings, and define $S_{\mathcal{P}_B'}:=\{\mathcal{E}_{P}: P\in\mathcal{P}_B'\}\subset S_{\mathcal{P}_B}$. 
Then, Refs.~\cite{10196555,10313822} consider decompositions that satisfy
\begin{equation}
    \braket{O} = \sum_{\mathcal{M} \in S_{\mathcal{P}_B'} } \mathrm{tr}[O \mathcal{U}_{2} \circ \mathcal{M} \circ \mathcal{U}_1 (\pure{0}^{\otimes n_{\rm tot}})].\label{eq:gcp_subset}
\end{equation}
\end{remark}

Under this formulation, a golden cutting point corresponds to a cutting location for which there exists a strict subset $S_{\mathcal{P}'_{B}}\subsetneq S_{\mathcal{P}_{B}}$ satisfying Eq.~\eqref{eq:gcp_subset}.
At such a cutting location, $\gamma_{\rm eff}=|S_{\mathcal{P}'_{B}}|$ ($1 \le |S_{\mathcal{P}'_{B}}| < 4^{n_{B}}$) and an effective sampling overhead becomes
\begin{equation}
    \gamma_{\rm eff}^2 = |S_{\mathcal{P}'_{B}}|^2. 
\end{equation}

Similarly, the decomposition explored in Ref.~\cite{10025537} can be formulated under Definition~\ref{dfn:exp_level_QPD} as follows.

\begin{remark}\label{task:ricco}
Consider the quantum circuit shown in Fig.~\ref{fig:simple_circuit}, and suppose the implementable operations are restricted to
\begin{equation}
    S_{\mathcal{P}_B,U} := \left\{ \mathcal{E}_{U^{\dagger}P_{\bm i} U}(\rho) := \frac{1}{2^{n_B}} \mathrm{tr}\left[ U^{\dagger}P_{\bm i} U \rho\right] U^{\dagger}P_{\bm i} U : P_{\bm i} \in \mathcal{P}_B,\, U \in \Unit{\mathbb{C}^{d_B}} \right\}.
\end{equation}
Let $\mathcal{Z}:=\{I,Z\}^{\otimes n_B}\subset \mathcal{P}_B$, and let $\mathcal{P}'_{B}\subset \mathcal{P}_{B}$ be a subset of $n_B$-qubit Pauli strings. 
Then, Ref.~\cite{10025537} considers the decompositions that satisfy
\begin{equation}\label{eq:ricco_reformulated} 
    \braket{O} = \sum_{\mathcal{M} \in S_{\mathcal{P}'_B,U} } \mathrm{tr}[O \mathcal{U}_{2} \circ \mathcal{M} \circ \mathcal{U}_1 (\pure{0}^{\otimes n_{\rm tot}})].
\end{equation}
\end{remark}

Under this formulation, the difference between tasks in Remarks~\ref{task:gcp} and ~\ref{task:ricco} can be understood as the choice of the implementable operations. 
The set of implementable operations in the latter case is broader than the former, since $S_{\mathcal{P}_B}$ is obtained by fixing $U=I$ in $S_{\mathcal{P}_B,U}$.
Due to this broader class of implementable operations, it is guaranteed that there exists a decomposition 
\begin{equation}\label{eq:ricco_condition}
    \braket{O} = \sum_{P_{\bm i} \in \{I,Z\}^{\otimes n_{B}}} \mathrm{tr}[O \mathcal{U}_{2} \circ \mathcal{E}_{U^{\dagger}P_{\bm i}U} \circ \mathcal{U}_1 (\pure{0}^{\otimes n_{\rm tot}})].
\end{equation}
We remark that the unitary operator $U$ satisfying Eq.~\eqref{eq:ricco_condition} corresponds to $U_{\rm rco}$ in Eq.~\eqref{eq:ricco_original}. 
In this case, $\gamma_{\rm eff}=2^{n_B}$, and thus the effective sampling overhead becomes 
\begin{equation}
    (\gamma_{\rm eff})^2=4^{n_B}.
\end{equation}

Finally, applying Theorem~\ref{thm:existence_of_no_cost_cut} to the same circuit implies that, if one allows general MP channels with classical communication at the cutting location, there exist decompositions with $\gamma_{\rm eff}=1$, as shown in Proposition~\ref{prop:gamma1_in_out} below.

\begin{proposition}\label{prop:gamma1_in_out}
Let $\{\ket{j}\}_{j\in\{0,1\}^{n_B}}$ denote the computational basis on register $B$. Define MP channels $\mathcal{M}_{\rm in}$ and $\mathcal{M}_{\rm out}$ as
\begin{eqnarray}
    \mathcal{M}_{\rm in}(\bullet) &:=& \sum_{j}  \mathrm{tr}\left[ V_{\rm in} \pure{j} V_{\rm in}^{\dagger} \bullet \right] V_{\rm in} \pure{j} V_{\rm in}^{\dagger},\\
    \mathcal{M}_{\rm out}(\bullet) &:=& \sum_{j}  \mathrm{tr}\left[ V_{\rm out} \pure{j} V_{\rm out}^{\dagger} \bullet \right] V_{\rm out} \pure{j} V_{\rm out}^{\dagger}.
\end{eqnarray}
Here, $V_{\rm in}$ is a unitary that diagonalizes $Q_1:=\mathrm{tr}_A\!\left[
(O_1\otimes I_B)\,\mathcal U_1
\left(|0\rangle\langle0|^{\otimes(n_A+n_B)}\right)\right]$ and $V_{\rm out}$ is a unitary that diagonalizes $Q_2
:=\langle 0^{n_C}|U_2^\dagger O_2 U_2
|0^{n_C}\rangle$, i.e., $Q_1 = V_{\rm in} D_{\rm in} V_{\rm in}^{\dagger}$ and $Q_2 = V_{\rm out} D_{\rm out} V_{\rm out}^{\dagger}$ with diagonal $D_{\rm in},D_{\rm out}$. 
Then, both MP channels satisfy
\begin{eqnarray}\label{eq:equality_mp}
    \braket{O} = \mathrm{tr}\left[O \mathcal{U}_{2} \circ \mathcal{M}_{\rm in} \circ \mathcal{U}_1 (\pure{0}^{\otimes n_{\rm tot}})\right]
    = \mathrm{tr}\left[O \mathcal{U}_{2} \circ \mathcal{M}_{\rm out} \circ \mathcal{U}_1 (\pure{0}^{\otimes n_{\rm tot}})\right].
\end{eqnarray}
\end{proposition}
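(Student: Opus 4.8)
The plan is to reduce both claimed identities to one elementary fact about pinching channels that already underlies Theorem~\ref{thm:existence_of_no_cost_cut}: a measure-and-prepare channel of the form $\mathcal{P}_{V}(\bullet)=\sum_{j}\mathrm{tr}[V\pure{j}V^{\dagger}\bullet]\,V\pure{j}V^{\dagger}$ is self-adjoint with respect to the Hilbert--Schmidt inner product and fixes every operator diagonal in the basis $\{V\ket{j}\}$ (cf.\ Eq.~\eqref{apeq:fixed_point}). For $\mathcal{M}_{\rm out}$ this will be an instance of Theorem~\ref{thm:existence_of_no_cost_cut} applied to the downstream CPTP map $\Phi_{\rm out}(X):=U_2\bigl(X\otimes\pure{0^{n_C}}\bigr)U_2^{\dagger}$ on register $(B,C)$ with observable $O_2$, whose effective observable is $\Phi_{\rm out}^{\dagger}(O_2)=Q_2$; for $\mathcal{M}_{\rm in}$ only the pinching/self-adjointness facts are needed directly.

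First I would rewrite the bare expectation value in a form that isolates register $B$. Since $U_1$ acts trivially on $C$ and $U_2$ trivially on $A$, the state after $\mathcal{U}_1$ equals $\sigma_{AB}\otimes\pure{0^{n_C}}$ with $\sigma_{AB}:=U_1\bigl(\pure{0}^{\otimes(n_A+n_B)}\bigr)U_1^{\dagger}$. Passing $O_2$ through $\mathcal{U}_2$ in the Heisenberg picture and using $\mathcal{U}_2^{\dagger}(O_1\otimes O_2)=O_1\otimes U_2^{\dagger}O_2U_2$, one gets
\begin{equation}
    \braket{O}=\mathrm{tr}_{AB}\!\bigl[(O_1\otimes Q_2)\,\sigma_{AB}\bigr]=\mathrm{tr}_{B}\!\bigl[Q_1\,Q_2\bigr],
\end{equation}
where the second equality traces out $A$ and uses $\mathrm{tr}_{A}[(O_1\otimes I_B)\sigma_{AB}]=Q_1$. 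A small point to verify here is that $Q_1$ and $Q_2$ are Hermitian (so that the diagonalizing unitaries $V_{\rm in},V_{\rm out}$ exist), which follows from Hermiticity of $\sigma_{AB}$ and of $U_2^{\dagger}O_2U_2$ together with the fact that $O_1$ and $\pure{0^{n_C}}$ sit on the traced-out registers.

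Next I would repeat the same bookkeeping with the cut inserted, i.e.\ evaluate $\mathrm{tr}\!\bigl[O\,\mathcal{U}_2\circ(\mathrm{id}_A\otimes\mathcal{M}\otimes\mathrm{id}_C)\circ\mathcal{U}_1(\pure{0}^{\otimes n_{\rm tot}})\bigr]$ for $\mathcal{M}\in\{\mathcal{M}_{\rm in},\mathcal{M}_{\rm out}\}$. Transferring $\mathcal{M}$ from the state onto $Q_2$ by self-adjointness yields $\mathrm{tr}_{AB}[(O_1\otimes\mathcal{M}(Q_2))\sigma_{AB}]=\mathrm{tr}_{B}[\mathcal{M}(Q_2)\,Q_1]$. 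For $\mathcal{M}=\mathcal{M}_{\rm out}$, the operator $Q_2$ is diagonal in $\{V_{\rm out}\ket{j}\}$, so $\mathcal{M}_{\rm out}(Q_2)=Q_2$ and the value is $\mathrm{tr}_{B}[Q_2Q_1]=\braket{O}$. For $\mathcal{M}=\mathcal{M}_{\rm in}$, I would use self-adjointness once more to write $\mathrm{tr}_{B}[\mathcal{M}_{\rm in}(Q_2)\,Q_1]=\mathrm{tr}_{B}[Q_2\,\mathcal{M}_{\rm in}(Q_1)]$, and then invoke that $Q_1$ is diagonal in $\{V_{\rm in}\ket{j}\}$, giving $\mathcal{M}_{\rm in}(Q_1)=Q_1$ and again $\braket{O}$. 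Since the single MP channel $\mathcal{M}$ enters with unit coefficient, the resulting expectation-value-level decomposition has $\gamma_{\rm eff}=1$, which is the conclusion of Proposition~\ref{prop:gamma1_in_out}.

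I do not expect a genuine analytical obstacle here; the only real care is the register bookkeeping — tracking which operator acts on $A$, $B$, or $C$ and the order of partial traces — and the Hermiticity check for $Q_1,Q_2$. The one conceptual point worth stating explicitly is the asymmetry between the two constructions: $\mathcal{M}_{\rm out}$ is diagonalized in the eigenbasis of the \emph{downstream} object $Q_2$ and the identity closes by a direct fixed-point argument, whereas $\mathcal{M}_{\rm in}$ uses the \emph{upstream} object $Q_1$ and closes only after an extra application of Hilbert--Schmidt self-adjointness to move the pinching from $Q_2$ onto $Q_1$.
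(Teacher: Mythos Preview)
Your proposal is correct and follows essentially the same approach as the paper: both arguments rest on the fixed-point property of the pinching channels ($\mathcal{M}_{\rm out}(Q_2)=Q_2$, $\mathcal{M}_{\rm in}(Q_1)=Q_1$) together with their Hilbert--Schmidt self-adjointness. The only cosmetic difference is that you first reduce everything to the symmetric form $\braket{O}=\mathrm{tr}_B[Q_1Q_2]$ and treat both cases uniformly from there, whereas the paper invokes Theorem~\ref{thm:existence_of_no_cost_cut} explicitly for $\mathcal{M}_{\rm out}$ (with $\Phi(X)=\mathcal{U}_2(X\otimes\pure{0}^{\otimes n_C})$ and $\Phi^{\dagger}(O_2)=Q_2$) and handles $\mathcal{M}_{\rm in}$ by the direct fixed-point computation $\mathcal{M}_{\rm in}(Q_1)=Q_1$.
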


\begin{proof}[Proof of Proposition~\ref{prop:gamma1_in_out}]
Define a CPTP map $\Phi$ from $B$ to $(B,C)$ by $\Phi(X) = \mathcal{U}_{2}( X \otimes \pure{0}^{\otimes n_{C}})$.
Then, $\Phi^{\dagger}(O_2)=Q_2$ by definition.
Theorem~\ref{thm:existence_of_no_cost_cut} guarantees that for any input $X \in \Linear{\mathbb{C}^{d_B}}$
\begin{equation}
    \mathrm{tr}\left[O_2 \Phi(X)\right]
    = \mathrm{tr}\left[O_2 \Phi \circ \mathcal{M}_{\rm out}(X)\right].
\end{equation}
Setting $X=Q_1$ gives the second equality in Eq.~\eqref{eq:equality_mp}.

For $\mathcal{M}_{\rm in}$, since it dephases in an eigenbasis of $Q_1$, we have $\mathcal{M}_{\rm in}(Q_1)=Q_1$. Therefore,
\begin{eqnarray}
    \mathrm{tr}\left[O \mathcal{U}_{2} \circ \mathcal{M}_{\rm in}\mathcal{U}_1 (\pure{0}^{\otimes n_{\rm tot}}) \right] 
    &=& \mathrm{tr}_{BC}\left[ O_2 \mathcal{U}_2 \left( \mathcal{M}_{\rm in}\left( Q_1 \right) \otimes \pure{0}^{\otimes n_{C}} \right) \right]\\
    &=& \mathrm{tr}_{BC}\left[ O_2 \mathcal{U}_2 \left( Q_1 \otimes \pure{0}^{\otimes n_{C}} \right) \right]= \braket{O},
\end{eqnarray}
which completes the proof.
\end{proof}

Lastly, we mention a useful connection between the previous decomposition~\eqref{eq:ricco_condition} and $\gamma_{\rm eff}=1$ decompositions. 
In the following remark, we show that the unitary operator $U^{\dagger}$ in Eq.~\eqref{eq:ricco_condition} can be directly applied to construct a decomposition achieving an effective sampling overhead of unity.

\begin{remark}\label{remark:ricco_gamma_1}
The sum of operations $\mathcal{E}_{U^{\dagger}P_{\bm i}U}$ in Eq.~\eqref{eq:ricco_condition} is equal to a measure-and-prepare channel, i.e.,
\begin{equation}
    \sum_{P_{\bm i} \in \{I,Z\}^{\otimes n_{B}}} \mathcal{E}_{U^{\dagger}P_{\bm i}U}(\bullet) = \sum_{j \in \{0,1\}^{n_B}} \mathrm{tr}\left[ U^{\dagger} \pure{j} U \bullet \right] U^{\dagger} \pure{j}U.
\end{equation}
\end{remark}
\begin{proof}[Proof of Remark~\ref{remark:ricco_gamma_1}]
\begin{eqnarray}
    \sum_{P_{\bm i} \in \{I,Z\}^{\otimes n_{B}}} \mathcal{E}_{U^{\dagger}P_{\bm i}U}(\bullet) 
    &=& \frac{1}{2^{n_B}} \sum_{P_{\bm i} \in \{I,Z\}^{\otimes n_{B}}} \mathrm{tr}_1 \left[(U^{\dagger})^{\otimes 2} (P_{\bm i} \otimes P_{\bm i}) U^{\otimes 2} (\bullet \otimes I) \right]\\
    &=&  \mathrm{tr}_1 \left[ (U^{\dagger})^{\otimes 2} (I_1 \otimes I_2 + Z_1 \otimes Z_2)^{\otimes n_{B}} U^{\otimes 2} (\bullet \otimes I) \right]\\
    &=&  \mathrm{tr}_1 \left[ (U^{\dagger})^{\otimes 2} (\pure{0}_1\otimes \pure{0}_2 + \pure{1}_1\otimes \pure{1}_2)^{\otimes n_{B}} U^{\otimes 2} (\bullet \otimes I) \right]\\
    &=& \sum_{j \in \{0,1\}^{n_B}} \mathrm{tr}_1 \left[ (U^{\dagger})^{\otimes 2} (\pure{j} \otimes \pure{j}) U^{\otimes 2} (\bullet \otimes I) \right]\\
    &=& \sum_{j \in \{0,1\}^{n_B}} \mathrm{tr}\left[ U^{\dagger} \pure{j} U \bullet \right] U^{\dagger} \pure{j}U,
\end{eqnarray}
which completes the proof.
\end{proof}
\end{document}